\documentclass[11pt]{article}
\usepackage{fullpage}

\usepackage{xcolor}
\usepackage[
CJKbookmarks=true,
bookmarksnumbered=true,
bookmarksopen=true,
colorlinks=true,
citecolor=red,
linkcolor=blue,
anchorcolor=red,
urlcolor=blue
]{hyperref}

\usepackage{authblk}
\title{A Policy Gradient Algorithm for the Risk-Sensitive Exponential Cost MDP}
\author[1]{M. Moharrami}
\author[1]{Y. Murthy}
\author[2]{A. Roy}
\author[1]{R. Srikant}
\affil[1]{University of Illinois at Urbana-Champaign}
\affil[2]{Indian Institute of Technology Guwahati}

\usepackage{enumitem}
\usepackage{comment}
\usepackage[utf8]{inputenc} 
\usepackage[T1]{fontenc}    
\usepackage{lmodern}
\usepackage{bbm}
\usepackage{graphicx}
\usepackage{amsmath,amsthm,amsfonts,amssymb,mathabx,mathtools}
\usepackage{enumerate,enumitem}
\usepackage{xspace}
\usepackage[capitalise]{cleveref}
\usepackage{autonum}
\usepackage{nicematrix}
\usepackage{tikz}


\newcommand{\real}{\mathbb{R}}
\newcommand{\nplus}{\mathbb{N}}
\newcommand{\nzero}{\mathbb{N}_0}

\newcommand{\e}{\mathrm{e}}
\newcommand{\U}{\mathcal{U}}
\newcommand{\ie}{i.e.\xspace}

\renewcommand{\check}{\widecheck}
\renewcommand{\hat}{\widehat}
\renewcommand{\tilde}{\widetilde}

\newcommand{\samplespace}{\mathcal{X}}
\newcommand{\actionspace}{\mathcal{A}}
\newcommand{\statespace}{\mathcal{S}}
\newcommand{\probspace}{\mathcal{P}}
\newcommand{\matspace}{\mathcal{M}}

\newcommand{\prob}[1]{\mathbb{P}{\left(#1\right)}}
\newcommand{\expect}[3]{\mathbb{E}{_{#1}^{#2}\left[#3\right]}}
\newcommand{\checkexpect}[3]{\check{\mathbb{E}}{_{#1}^{#2}\left[#3\right]}}
\newcommand{\expon}[1]{\exp\left(#1\right)}
\newcommand{\norm}[1]{\left\lVert#1\right\rVert}

\newcommand{\der}[1]{\mathrm{d}#1}
\newcommand{\dertwo}[1]{\mathrm{d}^2#1}
\newcommand{\derthree}[1]{\mathrm{d}^3#1}

\newcommand{\myquad}[1][1]{\hspace*{#1em}\ignorespaces}
\newcommand{\myeq}[1]{\stackrel{\mathclap{\normalfont\mbox{#1}}}{=}}

\makeatletter
\newcommand*{\xbar}{}%
\DeclareRobustCommand*{\xbar}{%
	\mathpalette\@xbar{}%
}
\newcommand*{\@xbar}[2]{%
	\sbox0{$#1\mathrm{X}\m@th$}%
	\sbox2{$#1\m@th$}%
	\rlap{%
		\hbox to\wd2{%
			\hfill
			$\overline{%
				\vrule width 0pt height\ht0 %
				\kern\wd0 %
			}$%
		}%
	}%
	\copy2 %
}
\newcommand*\xoverbar[1]{\hbox{\vbox{
			\hrule height 0.5pt 
			\kern0.35ex
			\hbox{\kern-0.05em
				{#1}%
				\kern-0.06em
	}}}
}
\newcommand*\xbarcheck[1]{\hbox{\vbox{
			\hrule height 0.5pt 
			\kern0.4ex
			\hbox{\kern-0.195em
				\ensuremath{\check{#1}}%
				\kern-0.065em
	}}}
}

\newcommand*\xcheckbar[1]{\ensuremath{\check{\hbox{\vbox{
					\hrule height 0.5pt 
					\kern0.4ex
					\hbox{\kern-0.195em
						\ensuremath{{#1}}%
						\kern-0.075em
	}}}}}
}
\makeatother

\newtheorem{theorem}{Theorem}
\newtheorem{lemma}{Lemma}
\newtheorem{proposition}{Proposition}
\newtheorem{corollary}{Corollary}
\newtheorem{assumption}{Assumption}
\newtheorem{claim}{Claim}
\newtheorem{remark}{Remark}

\Crefname{remark}{Remark}{Remarks}
\Crefname{claim}{Claim}{Claims}
\Crefname{assumption}{Assumption}{Assumptions}
\Crefname{lemma}{Lemma}{Lemmas}
\Crefname{theorem}{Theorem}{Theorems}
\Crefname{corollary}{Corollary}{Corollaries}
\Crefname{proposition}{Proposition}{Propositions}

\begin{document}
	\maketitle

	\begin{abstract}
		We study the risk-sensitive exponential cost MDP formulation and develop a trajectory-based gradient algorithm to find the stationary point of the cost associated with a set of parameterized policies. We derive a formula that can be used to compute the policy gradient from (state, action, cost) information collected from sample paths of the MDP for each fixed parameterized policy. Unlike the traditional average-cost problem, standard stochastic approximation theory cannot be used to exploit this formula. To address the issue, we introduce a truncated and smooth version of the risk-sensitive cost and show that this new cost criterion can be used to approximate the risk-sensitive cost and its gradient uniformly under some mild assumptions. We then develop a trajectory-based gradient algorithm to minimize the smooth truncated estimation of the risk-sensitive cost and derive conditions under which a sequence of truncations can be used to solve the original, untruncated cost problem.
	\end{abstract}

	\section{Introduction}\label{sec:intro}
Reinforcement Learning (RL) has been remarkably successful in a wide variety of applications including video gaming, robotics, communications networks, pricing, transportation and product management. RL algorithms are trained on data from an environment which is assumed to correspond to a Markov Decision Process (MDP) \cite{bertsekas1995neuro,sutton2018reinforcement}, and the trained algorithm is then deployed with the hope that it will perform well in environments that are similar to the ones on which it was trained. In practice, it is possible that the environment in which the algorithm is deployed may have different statistical properties than the training environment. Whether RL algorithms are robust to such distributional shifts is an important consideration in practice. In the control theory and MDP literature, distributional shift has been studied using the exponential-cost infinite-horizon formulation \cite{whittle1990risk,whittle1982optimization,dai1996connections} which is the focus of this paper. The notion of robustness is closely connected to the notion of risk-sensitivity. In particular, it can be shown that the solution to the risk-sensitive cost formulation provides robustness to models which are within a Kullbak-Leibler ball around a nominal distribution; see \cite{dai1996connections, osogami2012robustness, Anantharam2017,Follmer2008,Follmer2011} and \cref{sec_app:background} for details.

RL algorithms for the risk-sensitive exponential cost MDP have not been studied extensively in the literature as the risk-neutral counterpart. One reason for this is the multiplicative form of dynamic programming equation  in risk-sensitive exponential cost MDP \cite{borkar2002risk} unlike the classical MDP literature involving an additive Bellman equation \cite{bertsekas1995neuro}.
A variant of Q-learning for the risk-sensitive exponential cost is proposed in \cite{borkar2002q} and an actor-critic algorithm for the risk-sensitive exponential cost is proposed in \cite{Borkar2001}. However, these schemes work only deal with the tabular case, i.e., no approximation is performed either in the value function domain or in the policy domain.
In \cite{basu2008learning}, a linear function approximation based scheme for the problem of estimating the value function and cost under a given policy is studied. Interestingly, if one were just interested in just policy evaluation, i.e., estimating the cost alone, the algorithm still requires one to estimate the value function. See \cite{borkar2010learning} for an overview on
existing works in exponential cost risk-sensitive RL.
\par
In this paper, using a fixed point representation of the risk-sensitive exponential cost, we first discuss the problem of policy evaluation. We show that standard stochastic approximation for a fixed point equation may not apply as the stochastic noise may not be summable. We develop a policy evaluation algorithm by considering an increasing sequence of truncated approximations of the risk-sensitive cost. Using the implicit function theorem and the fixed point representation of the risk-sensitive cost, we then derive a policy gradient theorem based on visits to the recurrent state. As in the case of policy evaluation, this form of the gradient cannot be used in practice. Moreover, the truncated approximation of risk-sensitive cost cannot be used either as it may not be smooth enough with respect to the policy parameterization. Hence, we introduce a smooth truncated approximation of the risk-sensitive cost. We show that this new cost criterion approximates the risk-sensitive cost and its gradient uniformly. We then generalize the result of \cite{Marbach2001} to develop a trajectory-based algorithm to minimize the smooth and truncated approximation of the risk-sensitive cost; we note that  this generalization to the risk-sensitive setting was considered to be challenging~\cite{Bhatnagar2021}.
We also show that there exists a sequence of increasing truncations that recovers the same results for untruncated risk-sensitive exponential cost.

\noindent{\bf Related Work:}
Another notion of risk-sensitivity in the RL literature is to provide some trade-off between the mean and the variance (or higher order moments) of the long-term average cost \cite{prashanth2013actor,prashanth2016variance,tamar2013temporal,tamar2016learning,fu2018risk}. It is well-known that one can obtain the mean-variance optimization formulation from a Taylor's series expansion of the exponential cost formulation. 
Another robustness metric used in the MDP/RL literature is the notion of Conditional VaR (CVaR) \cite{chow2015risk,rockafellar2002conditional,chow2014algorithms,chow2017risk,tamar2015optimizing}. CVaR focuses on minimizing a conditional expected cost where the conditioning is over a fraction of sample paths which lead to high costs. For finite-horizon problems, CVaR is shown to have some robustness to modeling uncertainty \cite{chow2015risk} but not in the infinite-horizon case to the best of our knowledge. Recently, another approach known as distribution RL, has emerged, see \cite{bellemare2017distributional,singh2020improving}. In distributional RL, instead of learning the average value, one is interested in learning the value distribution. None of the above approaches fully capture the strong robustness notion in terms of the distributional shift (measured by KL distance) that the exponential cost formulation possesses.
A trajectory-based gradient algorithm has been studied in \cite{tamar2021derivative} for the special case of linear quadratic control problems; however, their analysis relies on the known explicit form of the optimal policy which is not the case for MDPs.

\noindent {\bf Notation:}
``$\textrm{Const}$'' is used to denote a positive constant independent of the parameters, and its value may change even in the same line. $\mathbb{R}_+$ and $\mathbb{R}_{++}$ denote the set of non-negative and strictly positive real numbers respectively.
$\nzero$ and $\nplus$ denote the set of non-negative and strictly positive integers respectively.

	\section{Markov Chains and the Risk-Sensitive Exponential Cost}\label{sec:review}
	Consider a discrete-time Markov chain $\{\Phi_i\}_{i\geq0}$ with a finite state space $\samplespace$. The transition probability of $\{\Phi_i\}$ is assumed to depend on a parameter vector $\theta \in \real^l$, and is denoted by
\begin{align}
	P_\theta(x,y) \coloneqq \prob{\Phi_{k+1} = y \vert \Phi_k = x, \theta} \qquad\forall x,y\in \samplespace \text{ and }k\in \nzero.
\end{align}
In an MDP, the parameter $\theta$ will parameterize the class of policies that one considers. Let $\probspace= \left\{ P_\theta:\theta\in\mathbb{R}^l \right\}$ denote the set of all transition probabilities, and let $\xbar{\probspace}$ denote its closure in the space of $|\samplespace|\times|\samplespace|$ matrices. Notice that elements of $\xbar{\probspace}$ are stochastic, and hence, they define a Markov chain on the same state space $\samplespace$.
\begin{assumption}\label[assumption]{ass:1}
	For each $P\in \xbar{\probspace}$, the Markov chain with transition probability $P$ is aperiodic and irreducible with a common recurrent state $x^*.$
\end{assumption}

Suppose for any parameter vector $\theta \in \real^l$, there is a one-step cost function $C_\theta:\samplespace\to \real$ which is the cost we incur at each state under the parameter $\theta$. The risk-sensitive cost of the Markov chain $\{\Phi_i\}_{i\geq 0}$ with probability transition kernel $P_\theta$ is defined as follows:
\begin{align}\label{eq:deflambda}
	\Lambda_\theta \coloneqq \lim_{n\to\infty}\frac{1}{n}\ln\expect{\Phi_0=x}{\theta}{\expon{\alpha \sum_{i=0}^{n-1}C_\theta(\Phi_i)}},\qquad \forall x\in \samplespace,
\end{align}
where $\alpha >0$ is called the risk factor and $\expect{\Phi_0=x}{\theta}{\cdot}$ denotes the expectation with respect to the probability transition kernel $P_\theta$ given $\Phi_0=x$.
By invoking the multiplicative ergodic theorem~\cite[Theorem 1.2]{Balaji2000}, which trivially holds for any aperiodic and irreducible finite-state Markov chain, it can be shown that the above limit exists and does not depend on $x\in\samplespace$. In particular, it can be shown that $\lambda_\theta\coloneqq\exp(\Lambda_\theta)$ is the largest eigenvalue of  $\hat{P}_\theta$ with multiplicity $1$, where $\hat{P}_\theta(x,y) \coloneqq \expon{\alpha C_\theta(x)} P_\theta(x,y)$ for all $x,y\in\samplespace$. In \cref{sec_app:back_risk}, we provide simple proofs of these results for the case of finite state-space Markov chains which is the focus of this paper.

\begin{assumption}\label[assumption]{ass:2}
	For each $x,y\in\samplespace$, the transition kernel $P_\theta(x,y)$ and the one-step cost function $C_\theta(x)$ are bounded, twice differentiable, and have bounded first and second derivatives.
\end{assumption}

Let $h_\theta$ denote the right-eigenvector corresponding to the eigenvalue $\lambda_\theta$ of $\hat{P}$. Notice that by the Perron-Frobenius theorem, $h_\theta$ is a strictly positive vector. Let us write the corresponding eigenequation as follows:
\begin{align}\label{eq:multpoiss}
	h_\theta(x) = \frac{\exp(\alpha C_\theta(x))}{\lambda_\theta} \sum_{y\in\samplespace}p_\theta(x,y)h_\theta(y)\qquad\forall x\in\samplespace,
\end{align}
The above equation is called the multiplicative Poisson equation, and it is the multiplicative analog of the Bellman equation for the average cost problem.  In particular, $h_\theta$ can be interpreted as the relative value function associated with the risk-sensitive cost problem. Notice that $h_\theta$ is the unique solution to the multiplicative Poisson equation up to a scaling factor.
Similar to the relative value function for the average cost problem, $h_\theta$ can be written in terms of visits to the recurrent state $x^*$. In particular,  $h_\theta$ is given by
\begin{align}
	h_\theta(x) \,\propto\, \expect{\Phi_0=x}{\theta}{\expon{\sum_{i=0}^{\tau_{x^*}-1}\left(\alpha C_\theta(\Phi_i)- \Lambda_\theta\right)}}\qquad\forall x\in\samplespace, \label{eq:recversion}
\end{align}
where $\tau_{x^*}$ is the first return time to the recurrent state $x^*\in \mathcal{X}$. Notice that by \cref{ass:1}, 
\begin{align}
	\expect{\Phi_0=x^*}{\theta}{\expon{\sum_{i=0}^{\tau_{x^*}-1}\left(\alpha C_\theta(\Phi_i)- \Lambda_\theta\right)}} = 1,\label{eq:fixedpoint}
\end{align}
which can be proved by twisting $\hat{P}$ and defining a twisted kernel $\check{P}_\theta(x,y) \coloneqq {\hat{P}_\theta(x,y) h_\theta(y)}\mathbin{/}{\lambda_\theta h_\theta(x)} = {\exp(\alpha C_\theta(x)) P_\theta(x,y) h_\theta(y)}\mathbin{/}{\lambda_\theta h_\theta(x)}.$
See \cref{sec_app:back_risk} for more details.

Our analysis is mostly based on the expression for $h_\theta$ given by \cref{eq:recversion} and the fact that $\Lambda_\theta$ is the unique fixed point of \cref{eq:fixedpoint}. In particular, we apply a stochastic approximation algorithm to estimate the cost of a policy $\theta\in\real^l$. We note that $h_\theta$ given by \cref{eq:recversion} is uniformly bounded (see \cref{lem:hbounded}).

Later, we use the twisted kernel to develop a sample-based approximation of $\nabla_\theta \Lambda_\theta$. Notice that $\check{P}_\theta$ is stochastic, and hence, it defines a Markov chain on the same state space $\samplespace$. Let $\check{\probspace}= \big\{ \check{P}_\theta:\theta\in\mathbb{R}^l \big\}$, and let $\,\xbarcheck{\probspace}$ denote its closure in the space of $|\samplespace|\times|\samplespace|$ matrices. It is easy to verify that by \cref{lem:hbounded}, $\check{\probspace}$ also satisfies \cref{ass:1} (see \cref{cor:Phat_commonrec}).

	\section{Sample-Based Approximation of the Risk-Sensitive Exponential Cost}\label{sec:costest}
	In this section, we present an algorithm for estimating the risk-sensitive cost from a single sample path and discuss the difficulties compared to the average cost problem.

Let us fix a parameter $\theta\in \real^l$. Suppose that we are given a single sample path $\{(\Phi_i,C_\theta(\Phi_i))\}_{i\geq 0}$ and the goal is to estimate the risk-sensitive cost $\Lambda_\theta$. In the case of average-cost problem, one can simply take the average of $\{C_\theta(\Phi_i)\}$ to get an accurate estimate of the average cost. This is due to the almost sure convergence of the sample path average cost to its limit. Notice that the same property does not hold for the risk-sensitive cost as the order of expectation and averaging cannot be interchanged in \cref{eq:deflambda}.

One approach for estimating the value of $\Lambda_\theta$ is to use the fixed point equation given by \cref{eq:fixedpoint}. Consider the function $g:\real^{l+1} \to \real_{++} \cup \{\infty\}$ defined as follows:
\begin{align}
	g(\theta,\Lambda) &\coloneqq \expect{\Phi_{0} = x^*}{\theta}{\expon{\sum_{i=0}^{\tau_{x^*}-1} \left(\alpha C_\theta({\Phi}_i) - \Lambda\right)}}= \checkexpect{\check{\Phi}_0=x^*}{\theta}{\e^{{\tau_{x^*}}\left(\Lambda_\theta - \Lambda\right)}}, \label{eq:defg}
\end{align}
where $\checkexpect{}{\theta}{\cdot}$ is the expectation with respected to the twisted kernel $\check{P}_\theta.$ Notice that $\Lambda_\theta$ is the unique fixed point of $g(\theta,\Lambda) = 1$. Hence, a natural stochastic approximation algorithm to estimate $\Lambda_\theta$ is
\begin{align}\label{eq:costest_vanilstoch}
	\tilde{\Lambda}_{m+1} = \tilde{\Lambda}_{m} + \gamma_m\left(\expon{\sum_{i=t_m}^{t_{m+1}-1} \left(\alpha C_\theta({\Phi}_i) - \tilde{\Lambda}_m\right)} - 1\right),
\end{align}
where $t_0 = 0,$ $t_m$ is the $m$th visit to $x^*,$ and $\gamma_m$ is chosen so that $\sum_m\gamma_m\to\infty$ and $\sum_{m} \gamma_m^2 < \infty$. However, the above algorithm may not converge for two reasons:
\begin{itemize}[leftmargin=*]
\item For any fixed $\theta\in\real^l$, $g(\theta,\Lambda)$ is $+\infty$ for all small enough values of $\Lambda$. Thus, one cannot approximate \cref{eq:costest_vanilstoch} by the ODE $\dot{\tilde{\Lambda}}(t) = (g(\theta,\tilde{\Lambda}(t)) - 1)$.
\item It is possible that if we start with a sufficiently large value of $\tilde{\Lambda}_0$ such that $g(\theta,\tilde{\Lambda}_0)<\infty,$ then $g(\theta,\tilde{\Lambda}_m)<\infty$ for all $m\geq 0$ for sufficiently small $\alpha.$ One condition under which this will hold is when $g(\theta,\underline{C} - \gamma_{\max}) < \infty$, where $\underline{C}\coloneqq \min_{x\in\samplespace}C_\theta(x)$ and $\gamma_{\max} \coloneqq \max_{m}\gamma_m$. But even in such a case, standard convergence proofs for stochastic approximation cannot be applied because the stochastic noise may be too large. To see this, let us rewrite the above update equation as $\tilde{\Lambda}_{m+1} = \tilde{\Lambda}_{m} + \gamma_m\left(g(\theta,\tilde{\Lambda}_m) - 1\right)  + \epsilon_m,$ 
where $\epsilon_m$ is the stochastic error. It is easy to see that $\left\{\epsilon_m \right\}$ is a martingale difference sequence. However, $\{\epsilon_m\}$ may not be summable as $\expect{}{}{\left|\epsilon_m\right|^2}$ might be infinity, depending on the value of $\alpha$. In particular, even if $\tilde{\Lambda}_{m} = \Lambda_\theta$, we may still have $\expect{}{}{\left|\epsilon_m\right|^2} = \infty$.

\end{itemize}

Hence, the vanilla form of the update equation given by \cref{eq:costest_vanilstoch} may not work for large values of $\alpha$, and we have no information to determine for which values of $\alpha$ the estimation converges to the correct value a priori. The natural solution is to use a truncation which results in the following update equation:
\begin{align}\label{eq:costest_fixtruncstoch}
	\tilde{\Lambda}_{m+1} = \tilde{\Lambda}_{m} + \gamma_m\left(\expon{\sum_{i=t_m}^{t_{m+1}-1} \left(\alpha C_\theta({\Phi}_i) - \tilde{\Lambda}_m\right)} \wedge M - 1\right),
\end{align}
where $M > 1$ is a fixed constant. Clearly, the above iteration will not converge to $\Lambda_\theta$. Consider the function $g^{(M)}:\real^{l+1} \to \real_{++}$ defined as follows:
\begin{align}
	g^{(M)}(\theta,\Lambda) \coloneqq \expect{\Phi_0=x^*}{\theta}{\expon{\sum_{i=0}^{\tau_{x^*}-1}\left(\alpha C_\theta(\Phi_i)- \Lambda\right)}\wedge M}.
\end{align}
It is easy to verify that \cref{eq:costest_fixtruncstoch} converges to $\Lambda_\theta^{(M)}$, where $\Lambda_\theta^{(M)}$ is the unique fixed point of $g^{(M)}(\theta,\Lambda) = 1$. 
In \cref{lem:unifapprox_truncost}, we show that $\Lambda_\theta^{(M)}$ approximates $\Lambda_\theta$ uniformly.
In \cref{prop:costest} we show that $\tilde{\Lambda}_{m} \to \Lambda_\theta$ for an increasing sequence of truncations; the proof is given in \cref{proof:prop:costest}.
\begin{proposition}\label[proposition]{prop:costest}
	Let \cref{ass:1,ass:2} hold. Suppose that the step-size $\{\gamma_m\}_{m\geq0}$ is given such that $\sum_{m}\gamma_m \to\infty$ and $\sum_{m}\gamma_m^{2(1-\beta)} <\infty$ for some fixed $\beta \in (0,0.5)$. Consider the following update equation
	\begin{align}\label{eq:prop:costest_vartrunc}
		\tilde{\Lambda}_{m+1} = \tilde{\Lambda}_{m} + \gamma_m\left(\expon{\sum_{i=t_m}^{t_{m+1}-1} \left(\alpha C_\theta({\Phi}_i) - \tilde{\Lambda}_m\right)} \wedge M_m - 1\right),
	\end{align}
	where $M_m = \gamma_m^{-\beta}$. Then, we have $\tilde{\Lambda}_{m}\to\Lambda_\theta$ almost surely.
\end{proposition}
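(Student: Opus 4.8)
The plan is to read \cref{eq:prop:costest_vartrunc} as a stochastic approximation recursion with a \emph{time-varying} mean field and to establish convergence through an almost-supermartingale (Robbins--Siegmund) argument anchored at $\Lambda_\theta$. Let $\mathcal{F}_m$ be the $\sigma$-algebra generated by the path up to the $m$th visit $t_m$ to $x^*$, and set
\[
Y_m \coloneqq \expon{\sum_{i=t_m}^{t_{m+1}-1}\left(\alpha C_\theta(\Phi_i)-\tilde{\Lambda}_m\right)}\wedge M_m .
\]
By the strong Markov property at $x^*$ and the definition of $g^{(M)}$, $\expect{}{}{Y_m \mid \mathcal{F}_m}=g^{(M_m)}(\theta,\tilde{\Lambda}_m)$, so the update becomes
\[
\tilde{\Lambda}_{m+1}=\tilde{\Lambda}_m+\gamma_m\left(g^{(M_m)}(\theta,\tilde{\Lambda}_m)-1\right)+\gamma_m\epsilon_m,\qquad \epsilon_m\coloneqq Y_m-g^{(M_m)}(\theta,\tilde{\Lambda}_m).
\]
Here $\{\epsilon_m\}$ is a martingale-difference sequence, and the mean field $\Lambda\mapsto g^{(M_m)}(\theta,\Lambda)-1$ is continuous, strictly decreasing (the integrand is decreasing in $\Lambda$), and vanishes at the unique point $\Lambda_\theta^{(M_m)}$; since truncation only lowers $g$, we have $g^{(M_m)}\le g$ and hence $\Lambda_\theta^{(M_m)}\nearrow\Lambda_\theta$, with the rate quantified by \cref{lem:unifapprox_truncost}.

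First I would control the noise, which is exactly what the choice $M_m=\gamma_m^{-\beta}$ is designed for. Since $0\le Y_m\le M_m$,
\[
\expect{}{}{\epsilon_m^2\mid\mathcal{F}_m}\le \expect{}{}{Y_m^2\mid\mathcal{F}_m}\le M_m^2=\gamma_m^{-2\beta},
\]
so that $\sum_m\expect{}{}{(\gamma_m\epsilon_m)^2}\le\sum_m\gamma_m^{2(1-\beta)}<\infty$ by hypothesis. Thus $N_m\coloneqq\sum_{k<m}\gamma_k\epsilon_k$ is an $L^2$-bounded martingale and converges almost surely; equivalently, the scaled noise is summable even though, as noted after \cref{eq:costest_vanilstoch}, the untruncated increment may have infinite conditional variance. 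The restriction $\beta\in(0,0.5)$ is precisely what keeps $2(1-\beta)>1$ while leaving room for $\sum_m\gamma_m=\infty$.

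Next I would run a Lyapunov argument with $V_m\coloneqq(\tilde{\Lambda}_m-\Lambda_\theta)^2$. Writing $g^{(M_m)}(\theta,\tilde{\Lambda}_m)-1=\big(g^{(M_m)}(\theta,\tilde{\Lambda}_m)-g^{(M_m)}(\theta,\Lambda_\theta)\big)+\big(g^{(M_m)}(\theta,\Lambda_\theta)-1\big)$, the first bracket has the opposite sign to $\tilde{\Lambda}_m-\Lambda_\theta$ by monotonicity and is bounded below in magnitude by $c\,|\tilde{\Lambda}_m-\Lambda_\theta|$ near $\Lambda_\theta$ with $c=-\partial_\Lambda g(\theta,\Lambda_\theta)>0$, giving a genuine restoring drift; the second bracket is the bias $-\delta_m$ with $\delta_m\coloneqq 1-g^{(M_m)}(\theta,\Lambda_\theta)=\expect{\Phi_0=x^*}{\theta}{\big(Z_{\Lambda_\theta}-M_m\big)^+}\ge 0$, where $Z_{\Lambda_\theta}$ is the untruncated excursion weight at $\Lambda_\theta$ appearing in \cref{eq:fixedpoint}. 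Taking conditional expectations, absorbing the cross term $2\gamma_m(\tilde{\Lambda}_m-\Lambda_\theta)\delta_m$ into the restoring term via Young's inequality, and using $\expect{}{}{(Y_m-1)^2\mid\mathcal{F}_m}\le M_m^2$ yields an inequality of the form
\[
\expect{}{}{V_{m+1}\mid\mathcal{F}_m}\le V_m-\gamma_m c\,V_m+\frac{\gamma_m}{c}\,\delta_m^2+\gamma_m^2M_m^2 .
\]
Since $\sum_m\gamma_m^2M_m^2=\sum_m\gamma_m^{2(1-\beta)}<\infty$ and $\sum_m\gamma_m c=\infty$, the Robbins--Siegmund theorem then gives $V_m\to 0$, i.e.\ $\tilde{\Lambda}_m\to\Lambda_\theta$ almost surely, provided the bias series $\sum_m\gamma_m\delta_m^2$ is finite.

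The main obstacle is exactly this last point: the bias vanishes but is not obviously summable against $\gamma_m$, so one must exhibit a quantitative rate for $\delta_m$. The key is that the twisted kernel $\check{P}_\theta$ also satisfies \cref{ass:1}, so $\tau_{x^*}$ has geometric tails under $\check{P}_\theta$; combined with the change of measure $Z_{\Lambda_\theta}\,\der{\mathbb{P}^\theta_{x^*}}=\der{\check{\mathbb{P}}^\theta_{x^*}}$ on excursions, this bounds $\delta_m\le\check{\mathbb{P}}^\theta_{x^*}(Z_{\Lambda_\theta}>M_m)$ and, since a large weight forces a long excursion, gives $\delta_m\le \mathrm{Const}\cdot M_m^{-c'}=\mathrm{Const}\cdot\gamma_m^{\beta c'}$ for some $c'>0$ --- fast enough to make $\sum_m\gamma_m\delta_m^2<\infty$. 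Two further points must be secured for the above inequality to hold for all $m$: that the iterates remain almost surely in a compact interval on which the restoring estimate and the slope $c$ are uniform --- which I expect to follow because the truncated mean field is bounded and restoring, the downward increment is at most $\gamma_m$ while the upward increment is at most $\gamma_m M_m=\gamma_m^{1-\beta}\to 0$ --- and that the restoring force dominates the bias outside a shrinking neighbourhood of $\Lambda_\theta$, so that no fixed-size gap can persist. Making the interplay between the moving equilibrium $\Lambda_\theta^{(M_m)}$, the vanishing bias $\delta_m$, and the growing truncation simultaneously rigorous is the crux of the argument.
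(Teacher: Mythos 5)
Your noise-control step is exactly the paper's (the truncation makes the martingale increments bounded by $2\gamma_m^{1-\beta}$, so $\sum_m\gamma_m^{2(1-\beta)}<\infty$ makes the noise series converge a.s.), but the core of your argument --- Robbins--Siegmund anchored at the fixed point $\Lambda_\theta$ --- has a genuine gap, and it is precisely the one you flag as the crux: the bias is not summable under the stated hypotheses. Writing $Z\coloneqq H(\theta,\Lambda_\theta)$, your own estimate gives $\delta_m=\expect{\Phi_0=x^*}{\theta}{(Z-M_m)^+}\le\check{\mathbb{P}}\left(Z>M_m\right)\le \mathrm{Const}\cdot\gamma_m^{\beta c'}$, where $c'$ is set by the geometric tail rate of $\tau_{x^*}$ under $\check{P}_\theta$ and the gap $\alpha\overline{C}-\Lambda_\theta$; it is a property of the chain, with no relation to $\beta$ or to the step sizes. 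The hypotheses only guarantee $\sum_m\gamma_m^{q}<\infty$ for $q\ge 2(1-\beta)$, so $\sum_m\gamma_m\delta_m^2\lesssim\sum_m\gamma_m^{1+2\beta c'}$ is finite only when $c'\ge(1-2\beta)/(2\beta)$. Take $\gamma_m=m^{-a}$ with $a$ slightly above $1/(2(1-\beta))$ and a chain with small $c'$: all hypotheses of \cref{prop:costest} hold while the bound on $\sum_m\gamma_m\delta_m^2$ diverges, and the supermartingale inequality delivers nothing. This is not merely a loose bound: when $\alpha$ is large, $Z$ genuinely has heavy polynomial tails (this is the very regime the paper is concerned with, where $\expect{}{}{Z^{1+\epsilon}}=\infty$ for small $\epsilon$), and then $\expect{\Phi_0=x^*}{\theta}{(Z-M)^+}$ really decays like $M^{-c'}$ with $c'$ arbitrarily small, so no sharper estimate can rescue this decomposition.

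The fix is to not demand any rate at all, which is what the paper does. Its proof never sums the bias: it uses only that the bias vanishes uniformly, i.e.\ $\Lambda_\theta^{(M_m)}\uparrow\Lambda_\theta$ (\cref{lem:unifapprox_truncost}), and then runs a pathwise argument on each sample path where the noise series converges: for any $\delta>0$, (i) the iterate must enter $[\Lambda_\theta-\delta/4,\Lambda_\theta+\delta/4]$ in finite time, since otherwise it eventually stays on one side and the drift $g^{(M_m)}(\theta,\tilde{\Lambda}_m)-1$, which by monotonicity of $g^{(M_m)}(\theta,\cdot)$ and $\Lambda_\theta-\Lambda_\theta^{(M_m)}<\delta/8$ is uniformly signed and bounded away from zero, contradicts boundedness because $\sum_m\gamma_m=\infty$; and (ii) once inside, the iterate can never leave $[\Lambda_\theta-\delta,\Lambda_\theta+\delta]$, because during any excursion the drift has the restoring sign while the noise tail-sums and the single-step increments (at most $\gamma_m^{1-\beta}$) are below $\delta/4$. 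Alternatively, your Lyapunov framework can be repaired by anchoring at the moving equilibrium, $V_m=\big(\tilde{\Lambda}_m-\Lambda_\theta^{(M_m)}\big)^2$: the bias then enters only through the increments $\Lambda_\theta^{(M_{m+1})}-\Lambda_\theta^{(M_m)}$, which are summable by telescoping (monotone and bounded when $M_m$ is nondecreasing), and the conclusion follows from \cref{lem:unifapprox_truncost}; this moving-anchor device is exactly what the paper formalizes as \cref{ass:4} for its main theorem.
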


	\section{Policy Gradient Theorem}\label{sec:policygrad}
	In this section, we present a simple approach to derive the policy gradient theorem for risk-sensitive cost problem.

Recall that for any fixed $\theta\in\real^l$, $\Lambda_\theta$ is the unique fixed point of $g(\theta,\Lambda) = 1$, where $g:\real^{l+1} \to \real_{++} \cup \{\infty\}$ is given as follows:
\begin{align}
	g(\theta,\Lambda) &= \expect{\Phi_0=x^*}{\theta}{\expon{\sum_{i=0}^{\tau_{x^*}-1}\left(\alpha C_\theta(\Phi_i)- \Lambda\right)}}= \checkexpect{\check{\Phi}_0=x^*}{\theta}{\e^{{\tau_{x^*}}\left(\Lambda_\theta - \Lambda\right)}}.
\end{align}
It is easy to verify that given \cref{ass:1,ass:2}, for any $\theta\in\real^l$ there exists a small neighborhood around $(\theta,\Lambda_\theta)$ for which $g$ is bounded. Now to get a policy gradient theorem, the idea is to apply the implicit function theorem to $g - 1$ at $(\theta,\Lambda_\theta)$. To do this, we need the following technical assumption.
\begin{assumption}\label[assumption]{ass:3}
	For any $x,y\in\samplespace$, there exist bounded functions $L_\theta(x,y)$ and $L^{(2)}_\theta(x,y)$ such that $({\normalfont a})~\nabla_\theta P_\theta(x,y) = P_\theta(x,y) L_\theta(x,y),$ and $({\normalfont b})~ \nabla^2_\theta P_\theta(x,y) = P_\theta(x,y) L^{(2)}_\theta(x,y).$
\end{assumption}

We note that Assumptions~\ref{ass:1}-\ref{ass:3} are also used in \cite{Marbach2001}.
Now, let us consider a fixed $\theta\in\real^{l}$. By \cref{ass:1,ass:2,ass:3}(a), $g$ is differentiable in a neighborhood of $(\theta,\Lambda_\theta)$. In particular, for any $(\theta,\Lambda)$ for which $g(\theta,\Lambda) < \infty$, by simple algebra we can derive a formula for $\nabla_\theta g(\theta,\Lambda)$ and ${\der{g(\theta,\Lambda)}}\mathbin{/}{\der{\Lambda}}$ (see \cref{lem:propgf}).

Invoking the implicit function theorem, we get the following risk-sensitive formula for $\nabla_\theta \Lambda_\theta$ in terms of visits to the recurrent state $x^*$:
\begin{align}
	\nabla_\theta \Lambda_\theta &=-\left(\frac{\der{g(\theta,\Lambda_\theta)}}{\der{\Lambda}}\right)^{-1}  \nabla_\theta g(\theta,\Lambda_\theta) = \frac{\checkexpect{\check{\Phi}_{0} = x^*}{\theta}{\sum_{i=0}^{\tau_{x^*}-1} \big(\alpha \nabla_\theta C_\theta(\check{\Phi}_i) + L_\theta(\check{\Phi}_i,\check{\Phi}_{i+1})}}{\checkexpect{\check{\Phi}_{0} = x^*}{\theta}{\tau_{x^*}}}\\
	&\propto\,\expect{\Phi_{0} = x^*}{\theta}{\sum_{i=0}^{\tau_{x^*}-1} \big(\alpha \nabla_\theta C_\theta({\Phi}_i) + L_\theta(\Phi_i,\Phi_{i+1}) \big)\expon{\sum_{i=0}^{\tau_{x^*}-1} \left(\alpha C_\theta({\Phi}_i) - \Lambda_\theta\right)}}\label{eq:policygrad_reg},
\end{align}
where we used the proportionality symbol somewhat loosely to denote that the vectors on the left and right hand sides are in the same direction. In later applications of this formula to obtain a gradient descent algorithm, we neglect the ${\checkexpect{\check{\Phi}_{0} = x^*}{\theta}{\tau_{x^*}}}$ term since it will not affect the convergence of the algorithm. We note that, under \cref{ass:1,ass:2,ass:3}(a), it is not difficult to see that this formula is the same as the policy gradient in \cite{Borkar2001}. However, we need \cref{ass:3}(a) for our derivation but it turns out that this assumption is critical to deriving a trajectory-based algorithm for the non-tabular case. We will see in the next section that the approach presented here can be used to estimate the gradient of a truncated and smooth approximation of the risk-sensitive cost. It is worth noting that the approximated risk-sensitive cost does not satisfy any multiplicative Poisson's equation, and hence, the approach in \cite{Borkar2001} cannot be used.

	\section{Intuition Behind the Trajectory-Based Algorithm}\label{sec:heuristic}
	Our goal is to find a stationary point of the risk-sensitive cost over $\theta\in\real^l$ by implementing a trajectory-based gradient-descent algorithm.  We present a series of idealized algorithms, which will lead to our policy gradient algorithm in the next section. 

\subsection{Idealized Gradient Algorithm}\label{subsec:heuristic_main}
Suppose that for any $\theta\in\real^l$, we have access to $\nabla_\theta \Lambda_\theta$. Starting from $\theta_0$, a natural way to minimize the risk-sensitive cost is to run the following system of ODEs:
\begin{align}\label{eq:accnabla_main}
	\dot{\theta}(t) = -\nabla_\theta\Lambda_{\theta(t)},\qquad\theta(0) = \theta_0.
\end{align}
By the chain rule, we have $\dot{\Lambda}_{\theta(t)} = -\norm{\nabla_\theta\Lambda_{\theta(t)}}^2$. By \cref{ass:2}, ${\Lambda}_\theta$ is bounded over $\theta\in\real^l$. Hence, $\Lambda_{\theta(t)}$ decreases and converges to some $\Lambda^*,$ and $\lim_{t\to\infty} \nabla_\theta\Lambda_{\theta(t)} = 0$. Notice that the value of $\Lambda^*$ depends on $\theta_0$. Also, notice that $\theta(t)$ may not converge.

Next, let us consider the discrete counterpart of \cref{eq:accnabla_main}, i.e.,
\begin{align}
	\theta_{m+1} = \theta_m - \gamma_m \nabla_\theta\Lambda_{\theta_m},
\end{align}
and apply the ideas in \cite{Marbach2001} to study its convergence.
Later, we show that given \cref{ass:1,ass:2,ass:3}, $\nabla_\theta\Lambda_{\theta}$ and $\nabla^2_\theta\Lambda_{\theta}$ are uniformly bounded (see \cref{cor:unfbound_lambdagrad}). Following the same idea as in the above ODE analysis, let us study the changes in $\Lambda_{\theta_m}$ based on the update rule of $\theta_m$. Using the Taylor expansion of $\Lambda_{\theta_m}$, we have
\begin{align}\label{eq:accnabla_Lambdatheta}
	\Lambda_{\theta_{m+1}} = \Lambda_{\theta_{m}} - \gamma_m \norm{\nabla_\theta\Lambda_{\theta_m}}^2 + O(\gamma_m^2).
\end{align}
Notice that $\norm{\theta_{m+1} - \theta_m}^2 = \gamma_m^2 \norm{\nabla_\theta\Lambda_{\theta_m}}^2 = O(\gamma_m^2)$ and that $\nabla_\theta\Lambda_{\theta}$ and $\nabla^2_\theta\Lambda_{\theta}$ are uniformly bounded. Suppose that $\sum_m \gamma_m^2 < \infty$. Hence, for any $\epsilon > 0$, there exists $m_0 > 0$ large enough so that $\Lambda_{\theta_{n}} < \Lambda_{\theta_{m}} + \epsilon$, for any $m > n > m_0$. Taking $\limsup$ from the left-hand side, and then $\liminf$ from the right-hand side of the previous equation, noticing that the choice of $\epsilon$ was arbitrary, we have $\Lambda_{\theta_m} \to \Lambda^*$. As before, notice that the value of $\Lambda^*$ depends on $\theta_0$ and $\{\gamma_m\}_{m\geq0}$. We want to show that $\nabla_\theta\Lambda_{\theta_m}\to 0$ as in the case of the ODE.

Assuming $\sum_{m}\gamma_m \to\infty$, $\cref{eq:accnabla_Lambdatheta}$ together with the fact that $\lim_{m\to\infty}\gamma_m = 0$ and the fact that $\Lambda_{\theta_m}$ converges, imply that $\liminf_{m\to\infty} \nabla_\theta\Lambda_{\theta_m} \to 0$. Next, we show that $\limsup_{m\to\infty} \nabla_\theta\Lambda_{\theta_m} \to 0$. Notice that for any $n > m > 0$, since $\nabla^2_\theta \Lambda_\theta$ is uniformly bounded,
\begin{align}\label{eq:accnabla_nablaLambdatheta}
	\norm{\nabla_\theta \Lambda_{\theta_n} - \nabla_\theta\Lambda_{\theta_m}} &\leq \text{Const}\times \norm{\theta_n - \theta_m} \leq \text{Const} \times \sum_{i=m}^{n-1} \gamma_i \norm{\nabla_\theta\Lambda_{\theta_i}}.
\end{align}
If $\limsup_{m\to\infty}\allowbreak \norm{\nabla_\theta\Lambda_{\theta_m} }= \epsilon > 0$, then there are infinity many $n>m$, such that $\norm{\nabla_\theta \Lambda_{\theta_m}} < \epsilon/3$, $\norm{\nabla_\theta \Lambda_{\theta_n}} > 2\epsilon/3$, and for all $i\in\{m+1,\cdots,n-1\}: \norm{\nabla_\theta\Lambda_{\theta_i}}\in(\epsilon/3,2\epsilon/3)$. For any such $m$ and $n$, by \cref{eq:accnabla_nablaLambdatheta} and the fact that $\gamma_i\to0$, we have $\sum_{i=m+1}^{n-1} \gamma_i \geq \text{Const}$, which implies
\begin{align}
	\Lambda_{\theta_n} &= \Lambda_{\theta_m} - \sum_{i=m}^{n-1} \gamma_i \norm{\nabla_\theta\Lambda_{\theta_i}}^2 + \sum_{i=m}^{n-1}O(\gamma_i^2)\leq \Lambda_{\theta_m} - \frac{\epsilon^2}{9} \text{Const} + \sum_{i=m}^{n-1}O(\gamma_i^2).
\end{align}
Contradiction follows by the fact that $\Lambda_{\theta_m}\to\Lambda^*$. Hence, $\nabla_\theta\Lambda_{\theta_m}\to 0$.
\begin{remark}\label[remark]{rem:summability}
	As noted in \cite{Marbach2001} for their algorithm, the above analysis is not based on the convergence of the discrete-time algorithm to its continuous counterpart, since $\theta_m$ may not converge at all. Starting from the next section, we need more tools beyond the ones in \cite{Marbach2001} to handle the risk-sensitive cost.
\end{remark}

\subsection{Stochastic Gradient Algorithm}\label{subsec:heuristic_approxnabla}
Next we will assume that we have access to a noisy version of $\nabla_\theta\Lambda_\theta$ for any $\theta\in\real^l$. Notice that by \cref{eq:policygrad_reg}, we have $\nabla_\theta \Lambda_\theta \checkexpect{\check{\Phi}_0=x^*}{\theta}{\tau_{x^*}} = \expect{\Phi_0=x^*}{\theta}{F(\theta,\Lambda_\theta)}$, where the random function $F(\theta,\Lambda)$ is defined as follows:
\begin{align}
	F(\theta,\Lambda)\coloneqq\sum_{i=0}^{\tau_{x^*}-1} \big(\alpha \nabla_\theta C_\theta({\Phi}_i) + L_\theta(\Phi_i,\Phi_{i+1}) \big)\expon{\sum_{i=0}^{\tau_{x^*}-1} \left(\alpha C_\theta({\Phi}_i) - \Lambda\right)}.
\end{align}
Let $f(\theta,\Lambda) \coloneqq \expect{\Phi_0=x^*}{\theta}{F(\theta,\Lambda)} = \nabla_\theta g(\theta,\Lambda)$. Notice that by \cref{lem:propgf}, for any $(\theta,\Lambda)\in\real^{l+1}$ at which $g(\theta,\Lambda)<\infty$, $f(\theta,\Lambda)$ is well-defined. Suppose that for any $\theta\in\real^l$ the oracle generates an independent random sample distributed as $F(\theta,\Lambda_\theta)$. The discrete counterpart of \cref{eq:accnabla_main}, using a stochastic approximation of the gradient is then given by
\begin{align}
	\theta_{m+1} = \theta_m - \gamma_m F_m(\theta_m,\Lambda_{\theta_m})
\end{align}
where $\{F_m(\theta_m,\Lambda_{\theta_m})\}_{m}$ are independent realizations provided by the oracle. Note that this algorithm is still not realistic since the oracle needs the knowledge of $\Lambda_{\theta_m}$ to generate $F_m(\theta_m,\Lambda_{\theta_m}).$ One may attempt to use the same argument as in \cref{subsec:heuristic_main}, writing
\begin{align}
	\Lambda_{\theta_{m+1}}&=\Lambda_{\theta_{m}} - \gamma_m \nabla_\theta\Lambda_{\theta_m}\cdot f(\theta_m,\Lambda_{\theta_m}) \\
	&\myquad[4]- \gamma_m\nabla_\theta\Lambda_{\theta_m}\cdot (F_m(\theta_m,\Lambda_{\theta_m}) - f(\theta_m,\Lambda_{\theta_m})) + O(\gamma_m^2 \norm{F_m(\theta_m,\Lambda_{\theta_m})}^2)
\end{align}
Unless $\alpha$ is small, $\expect{\Phi_0=x^*}{\theta_m}{\norm{F_m(\theta_m,\Lambda_{\theta_m})}^2}$ can be $\infty$, and the error term may not be summable as in \cref{sec:costest} for policy evaluation.

Similar to \cref{sec:costest}, a remedy to the above problem is to consider a truncated version of the stochastic approximation of the gradient. However, truncating $F_m(\theta_m,\Lambda_{\theta_m})$ can be problematic as the expected value of the truncated version is no longer proportional to $\nabla_\theta\Lambda_\theta$. Another approach is to follow the direction of $\nabla_\theta\Lambda_\theta^{(M)}$ where $\Lambda_\theta^{(M)}$ was defined in \cref{sec:costest}. However, the hard truncation in the definition of $\Lambda_\theta^{(M)}$ is problematic: (i) the gradient may not exist and (ii) even if it exists, it may not be sufficiently smooth to ensure convergence when used in a stochastic approximation algorithm. Hence, we need to consider a smooth truncation and then develop a stochastic approximation of the gradient using a similar argument as in \cref{sec:policygrad}.

Abusing notation, for any $M>1$ let us redefine the function $g^{(M)}:\real^{l+1}\to\real_{++}$ by $g^{(M)}(\theta,\Lambda)\coloneqq \expect{\Phi_0=x^*}{\theta}{G^{(M)}(\theta,\Lambda)}$ where the random function $G^{(M)}(\theta,\Lambda)$ is given by
\begin{align}
		&G^{(M)} (\theta,\Lambda) \coloneqq  H(\theta,\Lambda) \mathbbm{1}\left\{H(\theta,\Lambda) \leq M\right\}+M \left[\sum_{i=0}^4 \frac{1}{i!}\left(\ln\left(M^{-1}H(\theta,\Lambda)\right)\right)^i\right]\mathbbm{1}\left\{H(\theta,\Lambda) > M\right\},
\end{align}
and the random function $H(\theta,\Lambda)$ is defined as $H(\theta,\Lambda) \coloneqq \exp\big(\sum_{i=0}^{\tau_{x^*}-1} \left(\alpha C_\theta({\Phi}_i) - \Lambda\right)\big).$
In particular, $G^{(M)} (\theta,\Lambda)$ includes the first five terms in the Taylor expansion of $H(\theta,\Lambda)$ at $H(\theta,\Lambda) = M$. The reason why we need this many terms is somewhat technical and becomes clear later in the paper (see the discussion before \cref{cor:deltagmfm}).

Notice that $0 < G^{(M)} (\theta,\Lambda) < H(\theta,\Lambda)$, and for any fixed $\theta\in\real^l$ and $M > 1$, $G^{(M)} (\theta,\Lambda)$ is strictly decreasing in $\Lambda$ . Abusing the notation, let $\Lambda_\theta^{(M)}$ denote the unique solution of $g^{(M)} (\theta,\Lambda) =1$. Note that this new version of $\Lambda_\theta^{(M)}$ is larger than the one given in \cref{lem:unifapprox_truncost} and smaller than $\Lambda_\theta$. In particular, $\Lambda_\theta^{(M)}$ estimates $\Lambda_\theta$ uniformly (see \cref{cor:unifapprox_truncost}).

Next, we derive a sensitivity formula for $\Lambda_\theta^{(M)}$. Notice that $\Lambda_\theta^{(M)}$ does not have an eigenvalue interpretation, however, we can still derive a formula for $\nabla_\theta\Lambda_\theta^{(M)}$ following the same argument as in \cref{sec:policygrad}. For any $M>1$, define the function $f^{(M)}:\real^{l+1}\to\real^l$ by $f^{(M)} (\theta,\Lambda) \coloneqq \expect{\Phi_{0} = x^*}{\theta}{F^{(M)} (\theta,\Lambda)}$ where the random function $F^{(M)}(\theta,\Lambda)$ is given by
\begin{align}
	&F^{(M)} (\theta,\Lambda) \coloneqq  G^{(M)} (\theta,\Lambda) \sum_{i=0}^{\tau_{x^*}-1} L_\theta(\Phi_i,\Phi_{i+1})+\nabla_\theta G^{(M)} (\theta,\Lambda)\allowdisplaybreaks\\
	&\myquad[4]=G^{(M)} (\theta,\Lambda) \sum_{i=0}^{\tau_{x^*}-1} L_\theta(\Phi_i,\Phi_{i+1}) + \nabla_\theta H(\theta,\Lambda) \mathbbm{1}\left\{H(\theta,\Lambda) \leq M\right\}\allowdisplaybreaks\\
	&\myquad[6]  +M \frac{\nabla_\theta H(\theta,\Lambda)}{H(\theta,\Lambda)}\left[\sum_{i=0}^3 \frac{1}{i!}\left(\ln\left(M^{-1}H(\theta,\Lambda)\right)\right)^i\right]\mathbbm{1}\left\{H(\theta,\Lambda) > M\right\}.
\end{align}
Invoking the implicit function theorem and using a similar argument as in \cref{lem:propgf}, we get the following sensitivity formula: $f^{(M)} (\theta,\Lambda_\theta^{(M)}) = -\der{g^{(M)}(\theta,\Lambda)}\mathbin{/}\der{\Lambda}\big|_{\Lambda = \Lambda_\theta^{(M)}} \nabla_\theta \Lambda_\theta^{(M)}$  (see \cref{lem:propgmfm}). Notice that in order to use $f^{(M)} (\theta,\Lambda)$ to follow the direction of the gradient, we need to ensure that the term $-{\der{g^{(M)}(\theta,\Lambda)}}{\mathbin{/}}{\der{\Lambda}}$ is uniformly positive (see \cref{lem:uniflowerbd_dg}).

In order to follow the same steps as in \cref{subsec:heuristic_main}, we need to ensure that $\nabla_\theta \Lambda_\theta^{(M)}$ and $\nabla^2_\theta \Lambda_\theta^{(M)}$ are uniformly bounded. In \cref{lem:gradgfgmfm}, we first present a more general result which shows that many of the functions that we are interested in are uniformly bounded in a certain region. We then argue that $\nabla_\theta\Lambda_\theta^{(M)}$ and $\nabla^2_\theta\Lambda_\theta^{(M)}$ are uniformly bounded over $\theta\in\real^l$ and $M>1$ (see \cref{cor:unfbound_lambdagrad}).

Suppose that $M>1$ is fixed and the oracle for any $\theta\in\real^l$ generates independent random variables distributed according to $F^{(M)}(\theta,\Lambda_{\theta})$. Suppose that we update $\theta_m$ using
\begin{align}
{\theta_{m+1}} = {\theta_{m}} - \gamma_m F^{(M)}(\theta_m,\Lambda_{\theta_m}).    
\end{align}
Following the same argument as in \cref{subsec:heuristic_main}, using \cref{cor:unfbound_lambdagrad}, we can write
\begin{align}
	\Lambda^{(M)}_{\theta_{m+1}} &= \Lambda^{(M)}_{\theta_{m}} - \gamma_m \nabla_\theta\Lambda^{(M)}_{\theta_m}\cdot f^{(M)}(\theta_m,\Lambda^{(M)}_{\theta_m})\\
	&\myquad[4] - \gamma_m \nabla_\theta\Lambda^{(M)}_{\theta_m}\cdot \left(F^{(M)}(\theta_m,\Lambda^{(M)}_{\theta_m}) - f^{(M)}(\theta_m,\Lambda^{(M)}_{\theta_m})\right) \\
	&\myquad[4] + O\big(\gamma_m^2 \norm{F^{(M)}(\theta_m,\Lambda^{(M)}_{\theta_m})}^2\big).
\end{align}
Assuming $\sum_{m}\gamma_m\to \infty$ and $\sum_m \gamma_m^2 < \infty$, using the martingale convergence theorem, it is easy to show that both the stochastic noise as well as the error associated with the Taylor approximation are summable. Following the same argument as in \cref{subsec:heuristic_main}, we have $\Lambda^{(M)}_{\theta_m}$ converges to some $\Lambda^{*(M)}$ and that $\nabla_\theta\Lambda^{(M)}_{\theta_m}\to 0,$ both almost surely.

A natural question is whether $\nabla_\theta \Lambda_\theta^{(M)}$ is a good approximation to $\nabla_\theta \Lambda_\theta$. Same as in \cref{cor:unifapprox_truncost}, we show that $\nabla_\theta \Lambda_\theta^{(M)}$ approximates $\nabla_\theta \Lambda_\theta$ uniformly (see \cref{cor:unifapprox_nablatheta}). In particular, given $\epsilon>0,$ there exists a sufficiently large $M$ such that
$\lim_{m\rightarrow\infty}\nabla_\theta \Lambda^{(M)}_{\theta_m}=0$ and $\lim_{m\rightarrow\infty}\|\nabla_\theta \Lambda_{\theta_m}\|\leq \epsilon$ both almost surely.

As in the case of policy evaluation, we now consider the question of whether $\nabla_\theta\Lambda_{\theta_m}\rightarrow 0$ a.s. by considering an increasing sequence $\{M_m\}_{m\geq 0}$. Suppose that the oracle for any $\theta\in\real^l$ and $M>1$, generates independent random variables distributed according to $F^{(M)}(\theta,\Lambda_{\theta})$. Following the same steps as before, using the truncation sequence $\{M_m\}$, we have
\begin{align}
	\Lambda^{(M_{m+1})}_{\theta_{m+1}}&= \Lambda^{(M_{m})}_{\theta_{m}} - \gamma_m \nabla_\theta\Lambda^{(M_{m})}_{\theta_m}\cdot f^{(M_m)}(\theta_m,\Lambda^{(M_{m})}_{\theta_m})\\
	&\myquad[4] - \gamma_m \nabla_\theta\Lambda^{(M_{m})}_{\theta_m}\cdot \left(F^{(M_m)}(\theta_m,\Lambda^{(M_{m})}_{\theta_m}) - f^{(M_m)}(\theta_m,\Lambda^{(M_{m})}_{\theta_m})\right) \\
	&\myquad[4] + O\big(\gamma_m^2 \norm{F^{(M_m)}(\theta_m,\Lambda^{(M_{m})}_{\theta_m})}^2\big) +  \Lambda^{(M_{m+1})}_{\theta_{m+1}} - \Lambda^{(M_{m})}_{\theta_{m+1}}.
\end{align}
Hence, in order to ensure the error term is summable, we need to have $\sum_m\Lambda^{(M_{m+1})}_{\theta_{m+1}} - \Lambda^{(M_{m})}_{\theta_{m+1}} < \infty$ a.s. in addition to other assumptions on the step sizes. Since we do not know the sequence $\{\theta_m\}$ a priori, we need the following assumption on the truncation sequence.
\begin{assumption}\label[assumption]{ass:4}
	$\{M_m\}_{m\geq 0}$ satisfies $M_m\uparrow\infty$ and $\sum_{m=0}^\infty \sup_{\theta\in\real^l} \Lambda_{\theta}^{(M_{m+1})} - \Lambda_{\theta}^{(M_{m})} < \infty$.
\end{assumption}
Given \cref{ass:4} and assuming $\sum_m\gamma_m^2M_m^2<\infty$, it is easy to see that the exact same argument together with \cref{cor:unifapprox_truncost,cor:unifapprox_nablatheta}, yields $\Lambda_{\theta_m} \to \Lambda^*$ and that $\nabla_\theta\Lambda_{\theta_m}\to 0$. It is easy to see that there are truncation sequences for which \cref{ass:4} holds. In particular, let $M_m$ to be such that $\sup_\theta \Lambda_{\theta} - \Lambda_{\theta}^{(M_{m})} < {(m+1)}^{-2}$. The existence of such $M_m$ is guaranteed by \cref{cor:unifapprox_truncost}. Then, $\{M_m\}_{m}$ satisfies \cref{ass:4}. 

\subsection{An ODE Version of Our Trajectory-Based Algorithm}\label{subsec:heuristic_fg}

The main algorithm considered in this paper will be presented in the next section. Here we present an ODE version of the algorithm where we have to estimate the cost and optimize over $\theta\in\real^l$ simultaneously.
Suppose that we have access to an oracle that for any $\theta\in\real^l$ and $\Lambda\in\real$ returns the value of $f(\theta,\Lambda)$ and $g(\theta,\Lambda)$. Notice that in this scenario, we need to approximate $\Lambda_\theta$ as we update the value of $\theta$. Starting from $\theta_0$ and setting $\tilde{\Lambda}_0$ large enough so that $g(\theta_0,\tilde{\Lambda}_0)<\infty$, a natural way to optimize the risk-sensitive cost is to combine the ODEs given in \cref{sec:costest} and \cref{subsec:heuristic_main}, and run the following system of ODEs:
\begin{align}
	\begin{cases}
		\dot{\theta}(t) &= -f(\theta(t),\tilde{\Lambda}(t))\\
		\dot{\tilde{\Lambda}}(t) &= g(\theta(t),\tilde{\Lambda}(t)) -1
	\end{cases},\qquad (\theta(0),\tilde{\Lambda}(0)) = (\theta_0,\Lambda_0).
\end{align}
Following the same argument as in \cref{subsec:heuristic_main}, we want to study changes in $\Lambda_{\theta(t)}$ as we update $\theta(t)$. By the Chain rule, we have $\dot{\Lambda}_{\theta(t)} = -f(\theta(t),\tilde{\Lambda}(t)) \cdot \nabla_\theta\Lambda_{\theta(t)}$.
In particular, we can follow the exact same argument as in \cref{subsec:heuristic_main} if $\tilde{\Lambda}(t) = \Lambda_{\theta(t)}$. Hence, our strategy is to first show that $\tilde{\Lambda}(t) - \Lambda_{\theta(t)} \to 0$.

Notice that if at any point $\tilde{\Lambda}(t_0) = \Lambda_{\theta(t_0)}$, then $\dot{\Lambda}_{\theta(t_0)} = -\checkexpect{\check{\Phi}_0=x^*}{\theta}{\tau_{x^*}} \norm{\nabla_\theta\Lambda_{\theta(t_0)}}^2 \leq 0$ and $\dot{\tilde{\Lambda}}(t_0) = 0$. If $\norm{\nabla_\theta\Lambda_{\theta(t_0)}} = 0$, we have then reached a stationary point of the system. Otherwise, $\der{\big(\tilde{\Lambda}_{t} - \Lambda_{\theta(t)}\big)}\mathbin{/}\der{t}\big|_{t = t_0} > 0$,
and for all small enough $\epsilon > 0$, we have $\tilde{\Lambda}(t_0+\epsilon)> \Lambda_{\theta(t_0+\epsilon)}$. Hence, we have $\tilde{\Lambda}(t)\geq \Lambda_{\theta(t)} $ for all $t\geq t_0$.

In particular, either $\tilde{\Lambda}(t) < \Lambda_{\theta(t)}$ for all $t\geq 0$, or there exists $t_0 \geq 0$ such that $ \tilde{\Lambda}(t) \geq \Lambda_{\theta(t)} $ for all $t\geq t_0$. Either case, $\tilde{\Lambda}(t)$ is monotone after some time and since it is bounded, it converges. This implies that $\dot{\tilde{\Lambda}}(t) \to 0$, which in turn implies that $\tilde{\Lambda}(t)-\Lambda_{\theta(t)}\to0$.

While this ODE proof is very similar to the proof in \cite{Marbach2001}, the discrete time stochastic counterpart of the above algorithm faces the same difficulties (and more) that were discussed in \cref{sec:costest,subsec:heuristic_approxnabla}. 

	\section{A Trajectory-Based Gradient Algorithm}\label{sec:algorithm}
	In this section, we propose our trajectory-based gradient algorithm which is a discrete-time stochastic counterpart of the algorithm discussed in \cref{subsec:heuristic_fg}. Following the same logic as in \cref{subsec:heuristic_approxnabla}, we consider a smooth truncated approximation of the risk-sensitive cost and its gradient. We study both fixed and varying truncations.

Suppose we start with an initial parameter $\theta_0$ and an estimate of the risk-sensitive cost $\tilde{\Lambda}_0$. We sample the Markov chain $\{\Phi_i\}_{i\geq 0}$ according to the policy $\theta_0$ up to the time of the first visit to the recurrent state $x^*$. We then update our policy and our estimation of the risk-sensitive cost and then repeat this cycle. Let $t_0 = 0$ and $t_m$ denote $m$th visit to the recurrent state $x^*$. Let $(\theta_m,\widetilde{\Lambda}_m)$ for $m\in\nzero$ denote the pair of policy and estimated cost at time-step $t_m$. Notice that the Markov chain follows the policy $\theta_m$ during the time-interval $\{t_m,\cdots,t_{m+1}-1\}$. At time-step $t_{m+1}$ we update the policy and estimate the cost as follows:
\begin{align}
	\begin{aligned}
		&\theta_{m+1} = \theta_m - \gamma_m F_m \left(\theta_m,\tilde{\Lambda}_m\right),\\
		&\tilde{\Lambda}_{m+1} = \tilde{\Lambda}_m + \eta \gamma_m \left(G_m\left(\theta_m,\tilde{\Lambda}_m\right) - 1\right),
	\end{aligned}\label{eq:algorithm}
\end{align}
where $\eta > 0$ is a tunable parameter, $G_m (\theta,\Lambda)$ and $F_m (\theta,\Lambda)$ are trajectory-based versions of $G^{(M_m)} (\theta,\Lambda)$ and $F^{(M_m)} (\theta,\Lambda)$ (see \cref{subsec:heuristic_approxnabla}) respectively, obtained by observing the trajectory from time-step $t_m$ to time-step $t_{m+1}-1$, and $\{M_m\}$ is a truncation sequence.
Notice that $G_m (\theta,\Lambda) \myeq{d} G^{(M_m)} (\theta,\Lambda)$ and $F_m (\theta,\Lambda) \myeq{d} F^{(M_m)} (\theta,\Lambda)$. Let $g_m(\theta,\Lambda)$ and $f_m (\theta,\Lambda)$ denote the expected values of $G_m (\theta,\Lambda)$ and $F_m (\theta,\Lambda)$ respectively. We assume that the step-size sequence $\{\gamma_m\}$ satisfies the following standard assumption.
\begin{assumption}\label[assumption]{ass:5}
	$\{\gamma_m\}_{m\geq0}$ satisfies $\sum_m \gamma_m \to \infty$, $\sum_m \gamma_m^2 < \infty$, and $\gamma_{\max} \coloneqq \sup_m \gamma_m < 1$.
\end{assumption}

To avoid large jumps and improve the numerical stability of the algorithm, we may use the following update rule instead:
\begin{align}\label{eq:algorithm_project}
	\begin{aligned}
		&\theta_{m+1} = \theta_m - \Gamma_{\theta_{\text{thresh}}}\left(\gamma_m F_m \left(\theta_m,\tilde{\Lambda}_m\right)\right),\\
		&\tilde{\Lambda}_{m+1} = \tilde{\Lambda}_m + \Gamma_{\Lambda_{\text{thresh}}}\left(\eta \gamma_m \left(G_m\left(\theta_m,\tilde{\Lambda}_m\right) - 1\right)\right),
	\end{aligned}
\end{align}
where $\theta_{\text{thresh}}$ and $\Lambda_{\text{thresh}}$ are fixed positive constants, and $\Gamma_a(x)$ is the element-wise projection of $x$ to $[-a,a]$. Either case, based on whether the $\{M_m\}$ is a fixed sequence or an increasing one, we get a similar result as in \cref{subsec:heuristic_approxnabla}. The proof of \cref{thm:main} is presented in \cref{proof:thm:main}.
\begin{theorem}\label[theorem]{thm:main}
	Let \cref{ass:1,ass:2,ass:3,ass:5} hold. Let $\{\theta_m\}$ be the sequence generated by \cref{eq:algorithm} (or \cref{eq:algorithm_project}).\newline
	{\normalfont (i)} Suppose that $\{M_m\}_{m\geq0}$ is a fixed sequence, i.e., $M_m = M$ for all $m\geq0$. Then $\Lambda^{(M)}_{\theta_{m}}$ converges and $\nabla_\theta \Lambda^{(M)}_{\theta_{m}} \to 0$ with probability $1$.\newline
	{\normalfont (ii)}  Suppose that $\sum_m \gamma_m^{2(1-\beta)} < \infty$ for some fixed $\beta\in(0,0.5)$, and the truncation sequence is given by $M_m \coloneqq \sup\left\{N_i: N_i \leq \frac{1}{\gamma_m^{\beta}} \right\}$, where $\{N_i\}_{i\geq0}$ is a sequence for which \cref{ass:4} hold. Then $\Lambda_{\theta_{m}}$ converges and $\nabla_\theta \Lambda_{\theta_{m}} \to 0$ with probability $1$.
\end{theorem}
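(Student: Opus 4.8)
The plan is to treat \cref{eq:algorithm} as a single-timescale stochastic approximation that simultaneously performs gradient descent on $\theta$ and fixed-point tracking on $\tilde\Lambda$, and to port the deterministic argument of \cref{subsec:heuristic_main,subsec:heuristic_fg} to the discrete-time stochastic setting. I would organize everything around the scalar Lyapunov-type quantity $\Lambda^{(M_m)}_{\theta_m}$ together with the tracking gap $\delta_m \coloneqq \tilde\Lambda_m - \Lambda^{(M_m)}_{\theta_m}$.

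For part (i) (fixed $M_m=M$), I would Taylor-expand $\Lambda^{(M)}_{\theta_{m+1}}$ around $\theta_m$ using the update $\theta_{m+1}=\theta_m-\gamma_m F_m(\theta_m,\tilde\Lambda_m)$ and the uniform boundedness of $\nabla_\theta\Lambda^{(M)}_\theta$ and $\nabla^2_\theta\Lambda^{(M)}_\theta$ from \cref{cor:unfbound_lambdagrad}, obtaining
\[
\Lambda^{(M)}_{\theta_{m+1}}=\Lambda^{(M)}_{\theta_m}-\gamma_m\,\nabla_\theta\Lambda^{(M)}_{\theta_m}\cdot F_m(\theta_m,\tilde\Lambda_m)+O\!\big(\gamma_m^2\norm{F_m}^2\big).
\]
I would then split $F_m$ into its conditional mean $f^{(M)}(\theta_m,\tilde\Lambda_m)$ and a martingale difference, further split $f^{(M)}(\theta_m,\tilde\Lambda_m)$ as $f^{(M)}(\theta_m,\Lambda^{(M)}_{\theta_m})$ plus a remainder controlled by $\delta_m$ via the $\Lambda$-Lipschitzness of $f^{(M)}$, and invoke the sensitivity identity of \cref{lem:propgmfm} together with the uniform positivity of $-\der{g^{(M)}}/\der{\Lambda}$ from \cref{lem:uniflowerbd_dg} to rewrite the leading drift as $-\gamma_m\,c_m\norm{\nabla_\theta\Lambda^{(M)}_{\theta_m}}^2$ with $c_m\ge\text{Const}>0$. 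This reproduces exactly the descent structure of \cref{eq:accnabla_Lambdatheta}.

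The crux is controlling the three error contributions and the gap. The martingale-noise and $O(\gamma_m^2\norm{F_m}^2)$ terms are summable because the smooth truncation forces $\expect{}{}{\norm{F_m}^2}$ and $\expect{}{}{|G_m|^2}$ to be uniformly bounded for fixed $M$ — this is precisely why five Taylor terms are kept in $G^{(M)}$, cf. the discussion preceding \cref{cor:deltagmfm} and the uniform bounds of \cref{lem:gradgfgmfm} — so under $\sum_m\gamma_m^2<\infty$ (\cref{ass:5}) the martingale converges and the Taylor error is absolutely summable. The gap $\delta_m$ is handled by replaying the ODE argument of \cref{subsec:heuristic_fg} at the discrete level: the $\tilde\Lambda$-update has conditional drift $\eta\gamma_m(g^{(M)}(\theta_m,\tilde\Lambda_m)-1)$, and since $g^{(M)}(\theta_m,\cdot)$ is strictly decreasing through $1$ at $\Lambda^{(M)}_{\theta_m}$, this drives $\tilde\Lambda_m$ toward $\Lambda^{(M)}_{\theta_m}$ while $\Lambda^{(M)}_{\theta_m}$ itself moves only $O(\gamma_m)$. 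I expect this to be \emph{the main obstacle}: one must show the gap eventually becomes one-signed and monotone (the a.s.\ analog of the continuous argument) and hence $\delta_m\to0$ despite the coupled noise. Once $\delta_m\to0$ and the errors are summable, the $\liminf/\limsup$ contradiction argument of \cref{subsec:heuristic_main} applies verbatim to yield convergence of $\Lambda^{(M)}_{\theta_m}$ and $\nabla_\theta\Lambda^{(M)}_{\theta_m}\to0$.

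For part (ii) (varying $M_m\uparrow\infty$), I would repeat the expansion but carry the additional telescoping drift $\Lambda^{(M_{m+1})}_{\theta_{m+1}}-\Lambda^{(M_m)}_{\theta_{m+1}}$ produced when the truncation level advances; \cref{ass:4} guarantees $\sum_m\sup_\theta(\Lambda^{(M_{m+1})}_\theta-\Lambda^{(M_m)}_\theta)<\infty$, so this drift is absolutely summable. The only new difficulty is that the second moments of $F_m,G_m$ now grow with $M_m$; the prescribed choice $M_m=\sup\{N_i:N_i\le\gamma_m^{-\beta}\}$ gives $\gamma_m^2M_m^2\le\gamma_m^{2(1-\beta)}$, whence $\sum_m\gamma_m^2M_m^2<\infty$ under the hypothesis $\sum_m\gamma_m^{2(1-\beta)}<\infty$, exactly as in \cref{prop:costest}, keeping all noise and Taylor terms summable. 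Running the identical descent-plus-gap argument yields convergence of $\Lambda^{(M_m)}_{\theta_m}$ and $\nabla_\theta\Lambda^{(M_m)}_{\theta_m}\to0$; finally the uniform approximation results \cref{cor:unifapprox_truncost,cor:unifapprox_nablatheta} transfer these to the untruncated quantities, giving $\Lambda_{\theta_m}\to\Lambda^*$ and $\nabla_\theta\Lambda_{\theta_m}\to0$ almost surely. For the projected recursion \cref{eq:algorithm_project} the same analysis carries through, since for the small step sizes of \cref{ass:5} the projection is eventually inactive on the scale of the summable error terms.
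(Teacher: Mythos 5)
Your skeleton reproduces the paper's outer structure faithfully: Taylor-expanding the truncated cost along the iterates, splitting $F_m$ into its conditional mean plus a martingale difference, invoking \cref{lem:propgmfm} and \cref{lem:uniflowerbd_dg} to extract a drift term $-\gamma_m c_m\norm{\nabla_\theta\Lambda^{(M_m)}_{\theta_m}}^2$ with $c_m$ uniformly positive, using the polynomial-in-$\tau_{x^*}$ growth of the smooth truncation together with $\gamma_m^2M_m^2\le\gamma_m^{2(1-\beta)}$ and \cref{ass:4} to make all error terms summable, and closing with the $\liminf/\limsup$ contradiction of \cref{subsec:heuristic_main} plus the transfer results \cref{cor:unifapprox_truncost,cor:unifapprox_nablatheta}. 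This matches the paper's Step 0, Step II, and the outer shell of Step I.

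The genuine gap sits exactly where you flag ``the main obstacle'': proving that the tracking gap $\delta_m=\tilde\Lambda_m-\Lambda^{(M_m)}_{\theta_m}$ vanishes. You propose to show that the gap ``eventually becomes one-signed and monotone,'' i.e., the discrete almost-sure analog of the ODE argument in \cref{subsec:heuristic_fg}. The paper states explicitly that this is the one step that does \emph{not} survive the passage to the stochastic setting: the martingale increments $\epsilon_m(\Lambda)$ are two-sided, so neither $\tilde\Lambda_m$ nor $\delta_m$ is eventually monotone, and the forward-invariance property of the ODE (once $\tilde\Lambda(t_0)\ge\Lambda_{\theta(t_0)}$, it holds for all $t\ge t_0$) has no discrete stochastic counterpart. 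What replaces it is the heart of the paper's proof: two families of Lyapunov functions, $\phi_m(\theta,\Lambda)=\Lambda-\Lambda^{(M_m)}_\theta$, which has positive drift when $0<\Lambda^{(M_m)}_\theta-\Lambda<\zeta_1$ and yields only the one-sided bound $\liminf_m\delta_m\ge0$ (\cref{lem:posdrift_phi,lem:liminf_phi}), and $\psi_m(\theta,\Lambda)=\Lambda^{(M_m)}_\theta+(6\zeta_2)^{-1}\big(\Lambda-\Lambda^{(M_m)}_\theta\big)^2$, which has negative drift when $\big|\Lambda-\Lambda^{(M_m)}_\theta\big|<\zeta_2$ (\cref{lem:negdrift_psi}); these are combined in a cycle decomposition for the bound $\limsup_m\delta_m\le0$: assuming the contrary, time is split into `even' cycles (gap small, where $\psi_m$ shows $\tilde\Lambda_m$ cannot increase by much) and `odd' cycles (gap bounded away from zero, where $\phi_m$ forces $\sum\gamma_m\ge\textrm{Const}$ over the cycle and hence $\tilde\Lambda_m$ drops by a fixed constant), so that $\tilde\Lambda_m\to-\infty$, contradicting \cref{lem:lambtild_lowerbound} (\cref{lem:limsup_psi}). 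Without this machinery, or an equivalent substitute, your Step I does not go through, and part (ii) inherits the same hole. A secondary omission: you never establish that $\{\tilde\Lambda_m\}$ is almost surely bounded (the paper's \cref{lem:lambtild_lowerbound,cor:lambtild_upperbound}), which is needed both for your moment bounds on $G_m,F_m$ (their conditional moments depend on $\tilde\Lambda_m$ through $H$) and to confine the iterates to a compact interval on which the Lyapunov estimates are uniform.
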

The results in the paper so far have been presented for a Markov chain with a parameter $\theta.$ It is straightforward to add the formalism of an MDP and derive an algorithm for finding the stationary point of a parameterized set of control policies. This is presented in \cref{sec:riskmdp}.

	\section{Conclusion}
	We have derived a policy gradient algorithm for the exponential-cost infinite horizon risk-sensitive MDP. A key challenge in the risk-sensitive case is that the stochastic noise may not be summable and hence, standard stochastic approximation theory does not apply. Therefore, we consider a truncated version of the risk-sensitive cost. This alone does not solve the problem either because again standard stochastic approximation theory is not applicable due to the fact that the truncated version is not sufficiently smooth. So we define a truncated and smooth approximation to the cost and show that this version of the cost provides a uniform bound on the original risk-sensitive cost and further, one can use a variant of stochastic approximation presented in \cite{Marbach2001} to prove convergence of the policy gradient algorithm. An additional interesting question that we consider is whether we can obtain a solution to the untruncated version of the problem: we show that there exists a sequence of truncations such that the policy gradient algorithm will achieve a stationary point of the original risk-sensitive cost. In preliminary simulations, we have observed an increasing sequence of truncations of the form $\gamma_m^{-\beta}$ always works. An interesting open question is to show that this sequence satisfies the conditions provided in our main theorem. We also note that this sequence works for the problem of policy evaluation as shown in \cref{prop:costest}.

	\bibliographystyle{plain}
	\bibliography{bibliography}

	\appendix
	\section{Background}\label{sec_app:background}
	In this section, we present the necessary background to ensure that the paper is self-contained. Most of these results hold in a more general setting; however, we present a more straightforward and intuitive proofs by restricting our attention to finite-state Markov chains. In \cref{sec_app:back_risk}, we define the risk-sensitive exponential cost and derive a multiplicative counterpart of the traditional Bellman equation. We also define the risk-sensitive value function and provide a fixed point equation based on visits to a recurrent state. This fixed point equation is the basis of our approximation in the paper. The results presented in this section holds for countable state-space Markov chains \cite{Balaji2000,borkar2002risk} as well as continuous state-space Markov processes \cite{Kontoyiannis2003}. In \cref{sec_app:back_robust}, we derive a variational formula for the risk-sensitive Markov chain and discuss the connection between different risk measures. In particular, we show that the risk-sensitive cost takes into account model uncertainties. A similar formula holds in the case of continuous state-space \cite{Anantharam2017}, under some restrictions. For the sake of notational simplicity, we focus on Markov chains only; generalization to Markov decision processes follows readily.

In the following subsections, $v(x)$ and $[v]_x$ are used interchangeably to denote the $x$'th element of a vector $v$. For a matrix $P$, we use $P(x,:)$ to denote its $x$'th row, and $P(:,y)$ to denote its $y$'th column. Real-valued functions are sometimes treated as vectors. The set of all recurrent and aperiodic transition matrices on the state-space $\samplespace$ are denoted by $\matspace(\samplespace)$. For any $P\in\matspace(\samplespace)$, we use $\pi_P$ to denote its unique stationary distribution.
	\subsection{Risk-Sensitive Exponential Cost}\label{sec_app:back_risk}
	Let $\{\Phi_i\}_{i\geq 0}$ denote a discrete-time Markov chain on a finite space $\samplespace = \{1,2,\cdots,\left|\samplespace\right|\}$ that is recurrent and aperiodic. Suppose that the transition probability of $\{\Phi_i\}_{i\geq 0}$ is given by
\begin{align}
	P(x,y)\coloneq\prob{\Phi_{i+1} = y | \Phi_i = x}\qquad\forall x,y\in\samplespace \text{ and }k\in\nplus.
\end{align}
Let $C:\samplespace \to \real$ denote the one-step cost function, \ie, $C(x)$ is the cost we incur each time the chain visits the state $x\in\samplespace$.
The risk-sensitive cost of the Markov chain $\{\Phi_i\}_{i\geq 0}$ with risk factor $\alpha > 0$ is defined as follows:
\begin{align}\label{eq_app:riskcost}
	\Lambda(\alpha) \coloneq \lim_{n\to\infty} \frac{1}{n} \ln\expect{\Phi_0 = x}{P}{\exp\left(\alpha \sum_{i=0}^{n-1} C(\Phi_i)\right)},\qquad \forall x\in\samplespace
\end{align}
where $\expect{\Phi_0 = x}{P}{\cdot}$ denote the expectation with respect to the transition probability $P$ given $\Phi_0 = x$. Notice that the risk-sensitive cost penalizes sample paths with high costs.
\begin{proposition}\label[proposition]{prop_app:riskcost}
	For any $\alpha > 0$, $\expon{\Lambda(\alpha)}$ is the largest eigenvalue of $\hat{P} \coloneqq \text{diag}\left(\e^{\alpha C}\right)  P$. In particular, the right-hand side of \cref{eq_app:riskcost} converges to the same value, irrespective of initial state.
\end{proposition}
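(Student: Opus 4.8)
The plan is to reduce the statement to the Perron--Frobenius theorem by rewriting the exponential sum as a matrix power. First I would expand the expectation in \cref{eq_app:riskcost} as a sum over sample paths: conditioning on $\Phi_0 = x$ and summing over $\Phi_1,\ldots,\Phi_{n-1}$ gives
\[
	\expect{\Phi_0=x}{P}{\expon{\alpha \textstyle\sum_{i=0}^{n-1} C(\Phi_i)}} = \sum_{x_1,\ldots,x_{n-1}} \Big(\prod_{i=0}^{n-1} \e^{\alpha C(x_i)}\Big)\prod_{i=0}^{n-2} P(x_i,x_{i+1}),
\]
with $x_0 = x$. Since $\hat{P}(x,y) = \e^{\alpha C(x)} P(x,y)$, the path weight equals $\prod_{i=0}^{n-1}\hat{P}(x_i,x_{i+1})$ divided by the missing final cost factor; equivalently, summing out the last state and using $\sum_{y}\hat{P}(z,y) = \e^{\alpha C(z)}\sum_y P(z,y) = \e^{\alpha C(z)}$ yields the clean identity
\[
	\expect{\Phi_0=x}{P}{\expon{\alpha \textstyle\sum_{i=0}^{n-1} C(\Phi_i)}} = \big[\hat{P}^{\,n}\mathbf{1}\big]_x,
\]
where $\mathbf{1}$ is the all-ones vector. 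This converts the problem into studying the exponential growth rate of $[\hat{P}^{\,n}\mathbf{1}]_x$ in $n$.

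Next I would invoke Perron--Frobenius. Because each row of $P$ is merely scaled by the strictly positive factor $\e^{\alpha C(x)}$, the nonnegative matrix $\hat{P}$ has exactly the same zero pattern as $P$; hence $\hat{P}$ inherits irreducibility from $P$, and aperiodicity of $P$ makes $\hat{P}$ primitive. The Perron--Frobenius theorem then furnishes a simple, strictly dominant eigenvalue $\lambda > 0$ with strictly positive right and left eigenvectors $u,w$ (normalized so $w^\top u = 1$), and primitivity gives the entrywise asymptotics $\lambda^{-n}\hat{P}^{\,n} \to u\,w^\top$ as $n\to\infty$.

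Combining the two ingredients, $\lambda^{-n}[\hat{P}^{\,n}\mathbf{1}]_x \to u(x)\,(w^\top \mathbf{1})$, a strictly positive constant independent of $n$. Taking logarithms,
\[
	\frac{1}{n}\ln \expect{\Phi_0=x}{P}{\expon{\alpha \textstyle\sum_{i=0}^{n-1} C(\Phi_i)}} = \ln\lambda + \frac{1}{n}\ln\!\big(\lambda^{-n}[\hat{P}^{\,n}\mathbf{1}]_x\big) \longrightarrow \ln\lambda,
\]
so the limit exists and equals $\ln\lambda$. Since $\ln\lambda$ carries no dependence on $x$ (all the $x$-dependence lives in the vanishing correction term), the limit is the same for every initial state, giving $\expon{\Lambda(\alpha)} = \lambda$, the largest eigenvalue of $\hat{P}$.

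The only genuine subtlety is the passage from irreducibility and aperiodicity of $P$ to primitivity of $\hat{P}$ together with the entrywise convergence $\lambda^{-n}\hat{P}^{\,n}\to u\,w^\top$; once this is established the remainder is bookkeeping. In particular I would be careful to record that strict positivity of $u(x)$ and of $w^\top\mathbf{1} = \sum_y w(y)$ keeps the prefactor bounded away from $0$, so that its logarithm divided by $n$ truly vanishes and does not perturb the growth rate.
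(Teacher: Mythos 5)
Your proof is correct and takes essentially the same approach as the paper: you rewrite the expectation as a power of the tilted matrix $\hat{P}$ applied to a strictly positive vector (your identity $[\hat{P}^{\,n}\mathbf{1}]_x$ agrees with the paper's $[\hat{P}^{\,n-1}\e^{\alpha C}]_x$ because $\hat{P}\mathbf{1}=\e^{\alpha C}$), note that $\hat{P}$ is primitive, and read off the growth rate $\ln\lambda$ via Perron--Frobenius. The only difference is the final asymptotic step: you cite the standard entrywise convergence $\lambda^{-n}\hat{P}^{\,n}\to u\,w^\top$ for primitive matrices, whereas the paper establishes the same subdominant decay by an explicit Jordan normal form computation; both are valid.
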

\begin{proof}
	Let us rewrite the right-hand side of \cref{eq_app:riskcost} in the matrix form:
	\begin{align}\label{eq_app:thm_risk_tmp}
		\frac{1}{n}\ln\expect{\Phi_0 = x}{P}{\exp\left(\alpha \sum_{i=0}^{n-1} C(\Phi_i)\right)} &= \frac{1}{n}\ln\left[\left(\text{diag}\left(\e^{\alpha C}\right)  P\right)^{n-1}  \e^{\alpha C} \right]_{x}\\
		&= \frac{1}{n}\ln\left[\hat{P}^{n-1}  \e^{\alpha C} \right]_{x}
	\end{align}
	where $P$ is the transition probability matrix and $\e^{\alpha C}$ is a vector with elements $\left[\e^{\alpha C}\right]_{x} = \e^{\alpha C(x)}$ for any $x\in\samplespace$.  Notice that $\hat{P}$ is non-negative, irreducible and aperiodic, i.e., primitive. Hence, by Perron–Frobenius theorem, $\hat{P}$ has a simple positive eigenvalue $\hat{\lambda}_1$ which is larger than all the other eigenvalues.

	Let $J$ denote the Jordan normal form of $\hat{P}$. In particular, there exists a matrix $Q$ such that $\hat{P} = Q J Q^{-1}$. By the Perron–Frobenius theorem, $J$ is given as follows:
	\begin{equation}
		J=
		\begin{bNiceArray}{>{\strut}cccccccc}[margin,rules/color=black]
			\hat{\lambda}_1     		&                       	 && &\Block{3-3}<\LARGE>{\boldsymbol{0}}&      					&& \\
										& \Block[draw]{2-3}<\large>{B_2}     && &                        	&                    	&& \\
										&                            && &				   	 	   	&				       	&& \\
			\Block{3-3}<\LARGE>{\boldsymbol{0}}	&                            && &\ddots        		   		&				       	&& \\
									    &                            && &                        	& \Block[draw]{2-3}<\large>{B_k}&& \\
										&                            && &                        	&                     	&&
		\end{bNiceArray},~
		B_l =
		\begin{bNiceMatrix}[nullify-dots,xdots/line-style=loosely dotted]
			\hat{\lambda}_l & 1               & 0              &                 & \Cdots          & 0 \\
			0               & \hat{\lambda}_l & 1              & \Ddots          &                 & \Vdots \\
			\Vdots          & \Ddots          & \Ddots         &\Ddots           &                 &  \\
			                &                 &                &                 &                 & 0\\
   			                &                 &                &                 &                 & 1\\
 			0               & \Cdots          &                &                 & 0               & \hat{\lambda}_l
		\end{bNiceMatrix}
	\end{equation}
	where for any $l$, $B_l$ is the Jordan block matrix corresponding to eigenvalue $\hat{\lambda}_l$ of $\hat{P}$. Notice that $\left|\hat{\lambda}_l\right| < \hat{\lambda}_1$ for any $l>1$. Rewriting \cref{eq_app:thm_risk_tmp} using the Jordan normal form, we have
	\begin{align}
		\frac{1}{n}\ln\expect{\Phi_0 = x}{P}{\exp\left(\alpha \sum_{i=0}^{n-1} C(\Phi_i)\right)}
		&= \frac{1}{n}\ln\left[Q J^{n-1} Q^{-1}  \e^{\alpha C} \right]_{x}\\
		&= \frac{n-1}{n}\ln\hat{\lambda}_1   + \frac{1}{n}\ln\left[Q  \left( \frac{J}{\hat{\lambda}_1} \right)^{n-1} Q^{-1}  \e^{\alpha C} \right]_{x},
	\end{align}
	It is easy to verify that for all large enough $n$, all elements of $\left( {J}\middle/{\hat{\lambda}_1} \right)^{n-1}$ are bounded by $1$. In particular, for all $x,y\in\samplespace$ with $y\geq x$ we have
	\begin{align}
		\left[\left( \frac{B_l}{\hat{\lambda}_1} \right)^{m}\right]_{x,y} &= {m \choose y-x }  \frac{\hat{\lambda}_l^{m - y + x}}{\hat{\lambda}_1^m}\\
		&\leq  \left(\frac{\e m}{y-x}\right)^{y-x}   \frac{\hat{\lambda}_l^{m - y + x}}{\hat{\lambda}_1^m} \xrightarrow{m\to\infty} 0,\qquad\forall l > 1.
	\end{align}
	Hence, we have
	\begin{align}
		\lim_{n\to\infty}\frac{1}{n}\ln\expect{\Phi_0 = x}{P}{\exp\left(\alpha \sum_{i=0}^{n-1} C(\Phi_i)\right)} =\ln \hat{\lambda}_1.
	\end{align}
\end{proof}
Notice that by Perron–Frobenius theorem, the eigenvectors corresponding to the eigenvalue $\expon{\Lambda(\alpha)}$ of $\hat{P}$ are strictly positive and unique up to a scaling factor. Let us denote this eigenvector with $h$. We have
\begin{align}\label{eq_app:risk_valuefunc}
	h(x) &= \frac{\expon{\alpha C(x)}}{\expon{\Lambda(\alpha)}}   \sum_{y\in\samplespace} P(x,y)h(y)\\
	&= \frac{\expon{\alpha C(x)}}{\expon{\Lambda(\alpha)}}   \expect{\Phi_0 = x}{P}{h(\Phi_1)},\qquad\forall x\in\samplespace.
\end{align}
The above equation is the multiplicative Poisson equation, and $h(\cdot)$ is its unique non-zero solution up to a scaling factor. Notice that the multiplication Poisson equation resembles the Bellman equation for the average cost problem, replacing `$\times$' with `$+$' and `$\div$' with `$-$'. Next, we show that $h(\cdot)$ is indeed the relative value function associated with the risk-sensitive cost problem, defined similarly to the relative value function in the average cost problem.
\begin{proposition}\label[proposition]{prop_app:valuefunc}
	Any solution of the multiplicative Poisson equation satisfies
	\begin{align}\label{eq_app:risk_valuefunc_limit}
		h(x) \,\propto\, \lim_{n\to\infty} \expect{\Phi_0 = x}{P}{\exp\left( \sum_{i=0}^{n-1} \left(\alpha C(\Phi_i) - \Lambda(\alpha)\right)\right)}
	\end{align}
\end{proposition}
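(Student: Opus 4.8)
The plan is to reduce the statement to the spectral asymptotics of $\hat P$ that were already extracted in the proof of \cref{prop_app:riskcost}. Write $\hat\lambda_1 \coloneqq \expon{\Lambda(\alpha)}$ for the simple dominant eigenvalue of $\hat P = \text{diag}(\e^{\alpha C})P$, and recall that $h$ is precisely its right Perron eigenvector. Since any two solutions of the multiplicative Poisson equation differ only by a scalar while the right-hand side of \cref{eq_app:risk_valuefunc_limit} is a fixed vector, it suffices to prove the proportionality for this particular $h$. First I would rewrite the summand as $\expon{\sum_{i=0}^{n-1}(\alpha C(\Phi_i)-\Lambda(\alpha))} = \hat\lambda_1^{-n}\expon{\alpha\sum_{i=0}^{n-1}C(\Phi_i)}$, pull the deterministic factor $\hat\lambda_1^{-n}$ out of the expectation, and invoke the matrix identity established inside the proof of \cref{prop_app:riskcost}, namely $\expect{\Phi_0=x}{P}{\expon{\alpha\sum_{i=0}^{n-1}C(\Phi_i)}} = [\hat P^{n-1}\e^{\alpha C}]_x$. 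This turns the target into $\hat\lambda_1^{-1}\lim_{n\to\infty}\big[(\hat\lambda_1^{-1}\hat P)^{n-1}\e^{\alpha C}\big]_x$.

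Next I would feed in the Jordan decomposition $\hat P = QJQ^{-1}$ from the previous proof, so that $(\hat\lambda_1^{-1}\hat P)^{n-1} = Q(\hat\lambda_1^{-1}J)^{n-1}Q^{-1}$. Because $\hat\lambda_1$ is simple and $|\hat\lambda_l| < \hat\lambda_1$ for $l>1$, the very block estimates already carried out there show that every Jordan block other than the $1\times 1$ block for $\hat\lambda_1$ is annihilated in the limit, so $(\hat\lambda_1^{-1}J)^{n-1}\to e_1 e_1^\top$. Hence $(\hat\lambda_1^{-1}\hat P)^{n-1}\to Q e_1 e_1^\top Q^{-1} = u\,\check v^\top$, a rank-one matrix whose column factor $u$ (the first column of $Q$) satisfies $\hat P u = \hat\lambda_1 u$ and whose row factor $\check v^\top$ (the first row of $Q^{-1}$) is the left Perron eigenvector. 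By Perron--Frobenius uniqueness, $u$ is a scalar multiple of $h$.

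Combining the two steps, the limit equals $\hat\lambda_1^{-1}(\check v^\top\e^{\alpha C})\,u(x)$, i.e.\ a fixed scalar times $h(x)$; the scalar $\hat\lambda_1^{-1}\check v^\top\e^{\alpha C}$ is strictly positive and finite because $\check v$ and $\e^{\alpha C}$ are both strictly positive, which makes the proportionality nontrivial. The only genuinely substantive step is the middle one --- identifying the $n\to\infty$ limit of $(\hat\lambda_1^{-1}\hat P)^{n-1}$ as the rank-one spectral projection onto the eigenspace $\mathrm{span}(h)$ --- but this is essentially a bookkeeping refinement of \cref{prop_app:riskcost}, where one now retains the limiting constant rather than discarding it inside $\tfrac1n\ln(\cdot)$.

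Alternatively, one may avoid Jordan forms by passing to the twisted kernel $\check P(x,y)\coloneqq \e^{\alpha C(x)}P(x,y)h(y)\mathbin{/}(\hat\lambda_1 h(x))$, which is stochastic precisely by the eigenequation \cref{eq_app:risk_valuefunc} and is irreducible and aperiodic since $h>0$ preserves the support pattern of $P$. Inverting this relation and telescoping the resulting $h$-ratios along a path gives $\expect{\Phi_0=x}{P}{\expon{\sum_{i=0}^{n-1}(\alpha C(\Phi_i)-\Lambda(\alpha))}} = \hat\lambda_1^{-1}h(x)\,\checkexpect{\check{\Phi}_0=x}{}{\e^{\alpha C(\check{\Phi}_{n-1})}/h(\check{\Phi}_{n-1})}$. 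Since $\check P$ is irreducible and aperiodic, $\check{\Phi}_{n-1}$ converges in distribution to $\pi_{\check P}$, so the bracketed expectation tends to the $x$-independent positive constant $\sum_y \pi_{\check P}(y)\e^{\alpha C(y)}/h(y)$, once again yielding proportionality to $h(x)$.
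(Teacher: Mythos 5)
Your proposal is correct, and your primary argument takes a genuinely different route from the paper. The paper's own proof passes immediately to the twisted kernel $\check{P}(x,y) = \e^{\alpha C(x)}P(x,y)h(y)\mathbin{/}\big(\e^{\Lambda(\alpha)}h(x)\big)$: the eigenequation makes $\check{P}$ stochastic, the expectation in \cref{eq_app:risk_valuefunc_limit} factors as $h(x)\,\expect{\check{\Phi}_0=x}{\check{P}}{1/h(\check{\Phi}_n)}$, and ergodicity of the finite, irreducible, aperiodic chain $\check{P}$ (equivalently, power iteration applied to $\check{P}$) shows that the second factor converges to the $x$-independent constant $\sum_{y}\check{\pi}(y)/h(y)$. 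Your main route instead stays with $\hat{P}$ itself and upgrades the Jordan-form analysis of \cref{prop_app:riskcost} from a statement about $\tfrac{1}{n}\ln(\cdot)$ to a statement about the limit itself: $(\hat{\lambda}_1^{-1}\hat{P})^{n-1}\to u\check{v}^\top$, the rank-one spectral projection, with $u\propto h$ by Perron--Frobenius, so the limit is a fixed positive scalar times $h(x)$. This is valid and purely linear-algebraic; the one point worth making explicit is the sign normalization of the left eigenvector $\check{v}$ (the first row of $Q^{-1}$): Perron--Frobenius fixes it only up to a scalar, and $\check{v}>0$ follows from $\check{v}^\top u = 1$ once $u$ is chosen positive. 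What the paper's route buys is that the hard analytic step (identifying the limit) reduces to the elementary fact that a finite ergodic chain equilibrates, and the twisted kernel it constructs is reused immediately afterwards (\cref{cor_app:fixedpoint} and throughout the main text); what your route buys is economy, since it recycles verbatim the Jordan decomposition and block estimates already displayed in \cref{prop_app:riskcost}. Finally, note that your ``alternative'' paragraph is essentially the paper's own proof: performing the change of measure over $n-1$ transitions rather than $n$ leaves the residual factor $\e^{\alpha C(\check{\Phi}_{n-1})}/\big(\hat{\lambda}_1 h(\check{\Phi}_{n-1})\big)$ in place of $1/h(\check{\Phi}_n)$, and by stationarity of $\check{\pi}$ the two limiting constants coincide.
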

\begin{proof}
	Let us ``twist'' the matrix $\hat{P}$, and define a ``twisted kernel'' as follows:
	\begin{align}
		\check{P}(x,y)\coloneqq \frac{\hat{P}(x,y) h(y)}{\hat{\lambda}_1 h(x)} = \frac{\e^{\alpha C(x)}P(x,y) h(y)}{\e^{\Lambda(\alpha)}h(x)}
	\end{align}
	where $\hat{\lambda}_1 = \expon{\Lambda(\alpha)}$ is the largest eigenvalue of $\hat{P}$, with all positive right eigenvector $h$. Notice that by \cref{eq_app:risk_valuefunc}, $\check{P}$ is a stochastic matrix. Let us rewrite \cref{eq_app:risk_valuefunc_limit} in terms of the twisted kernel $\check{P}$. We have
	\begin{align}
		\expect{\Phi_0 = x}{P}{\exp\left( \sum_{i=0}^{n-1} \left(\alpha C(\Phi_i) - \Lambda(\alpha)\right)\right)} = h(x) \expect{\check{\Phi}_0 = x}{\check{P}}{\frac{1}{h(\check{\Phi}_n)}},
	\end{align}
	where $\{\check{\Phi}_i\}_{i\geq 0}$ denotes a Markov chain with transition probability $\check{P}$, and $\expect{\check{\Phi}_0 = x}{\check{P}}{\cdot}$ denote the expectation with respect to $\check{P}$ given $\check{\Phi}_0 = x$. Notice that $\check{P}$ is recurrent and aperiodic. Hence, $\{\check{\Phi}_i\}_{i\geq 0}$ is ergodic and we have $\expect{\check{\Phi}_0 = x}{\check{P}}{{1}/{h(\check{\Phi}_n)}} \to \sum_{y\in\samplespace} \check{\pi}(y)/h(y)$, where $\check{\pi}$ is the stationary distribution of $\{\check{\Phi}_i\}_{i\geq 0}$. Alternatively, we can write
	\begin{align}
		\expect{\check{\Phi}_0 = x}{\check{P}}{\frac{1}{h(\check{\Phi}_n)}} = \left[\check{P}^n  v\right]_x
	\end{align}
	where $v(y) \coloneqq 1/h(y)$ for any $y\in\samplespace$. Notice that $v\cdot \check{\pi}\neq 0 $, and that $\check{\pi}$ is the left eigenvector corresponding to the eigenvalue $1$ of $\check{P}$. Invoking Perron–Frobenius theorem for $\check{P}$ and applying the power iteration analysis, we have $\check{P}^n  v \to \sum_{y\in\samplespace} \check{\pi}(y)/h(y) \boldsymbol{1}$, where $\boldsymbol{1}$ is the all-ones vector.
\end{proof}
Notice that \cref{prop_app:riskcost,prop_app:valuefunc} do not suggest any practical way to estimate the cost and the relative value function associated with the risk-sensitive cost problem. Similar to the average cost problem, one may hope that visits to a recurrent state might be useful for estimating these quantities. Next, using the twisted kernel $\check{P}(x,y) = {\hat{P}(x,y) h(y)}\mathbin{/}{\hat{\lambda}_1 h(x)}$, we show that this intuition is indeed correct.
\begin{corollary}\label[corollary]{cor_app:fixedpoint}
	Let $x^*\in\samplespace$ denote a recurrent state. The risk-sensitive cost $\Lambda(\alpha)$ is the unique fixed point of the following equation:
	\begin{align}
		\expect{\Phi_0 = x^*}{P}{\exp\left( \sum_{i=0}^{\tau_{x^*}-1} \left(\alpha C(\Phi_i) - \Lambda\right)\right)} = 1,
	\end{align}
	where $\tau_{x^*}\coloneqq \inf \{i\in\nplus:\Phi_i = x^* \}$ is the first return time to $x^*$. Moreover, for any $x\in\samplespace$, the risk-sensitive relative value function is given by
	\begin{align}
		h(x) \,=\, \expect{\Phi_0 = x}{P}{\exp\left( \sum_{i=0}^{\tau_{x^*}-1} \left(\alpha C(\Phi_i) - \Lambda(\alpha)\right)\right)}.
	\end{align}
\end{corollary}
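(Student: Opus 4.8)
The plan is to derive both claims from a single change-of-measure identity between $P$ and the twisted kernel $\check{P}(x,y) = \e^{\alpha C(x)}P(x,y)h(y)/(\hat{\lambda}_1 h(x))$, and then invoke recurrence of $\check{P}$. Concretely, I would compute $\expect{\Phi_0=x}{P}{\expon{\sum_{i=0}^{\tau_{x^*}-1}(\alpha C(\Phi_i) - \Lambda(\alpha))}}$ in closed form for an arbitrary starting state $x$; the value-function formula is then this computation, and the fixed-point identity is the special case $x = x^*$.

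First I would fix $x$ and decompose the expectation over the disjoint events $\{\tau_{x^*} = n\}$, $n \geq 1$. For any first-return path $x = \phi_0, \phi_1, \ldots, \phi_n = x^*$ with $\phi_1,\ldots,\phi_{n-1} \neq x^*$, the crucial observation is the telescoping identity
\[
\expon{\sum_{i=0}^{n-1}\left(\alpha C(\phi_i) - \Lambda(\alpha)\right)}\prod_{i=0}^{n-1}P(\phi_i,\phi_{i+1}) = \frac{h(x)}{h(\phi_n)}\prod_{i=0}^{n-1}\check{P}(\phi_i,\phi_{i+1}),
\]
which follows directly from the definition of $\check{P}$ together with $\hat{\lambda}_1 = \expon{\Lambda(\alpha)}$: the one-step cost factors reproduce exactly $\expon{\alpha C(\phi_i) - \Lambda(\alpha)}$, while the ratios $h(\phi_{i+1})/h(\phi_i)$ collapse telescopically to $h(\phi_0)/h(\phi_n) = h(x)/h(x^*)$ because $\phi_n = x^*$.

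Next, since every summand is nonnegative, I can sum over all first-return paths with no integrability concern (Tonelli/monotone convergence), obtaining
\[
\expect{\Phi_0=x}{P}{\expon{\sum_{i=0}^{\tau_{x^*}-1}\left(\alpha C(\Phi_i) - \Lambda(\alpha)\right)}} = \frac{h(x)}{h(x^*)}\sum_{n\geq1}\check{\mathbb{P}}_{\check{\Phi}_0=x}\!\left(\check{\tau}_{x^*}=n\right) = \frac{h(x)}{h(x^*)},
\]
where the last equality uses that $\check{P}$ is stochastic (by the multiplicative Poisson equation) and irreducible on the same finite state space, hence recurrent, so the return probability to $x^*$ is $1$. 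Taking $x = x^*$ gives that the left-hand side equals $1$, which shows $\Lambda(\alpha)$ is a fixed point; and normalizing the Perron eigenvector so that $h(x^*) = 1$ (permissible since $h$ is unique up to scale) turns the displayed identity into the stated value-function formula $h(x) = \expect{\Phi_0=x}{P}{\cdots}$.

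Finally, for uniqueness of the fixed point I would argue monotonicity: rewriting the left-hand side as $\expect{\Phi_0=x^*}{P}{\expon{\alpha\sum_{i=0}^{\tau_{x^*}-1}C(\Phi_i)}\e^{-\tau_{x^*}\Lambda}}$ and using $\tau_{x^*}\geq1$, the map $\Lambda \mapsto g(x^*,\Lambda)$ is strictly decreasing wherever finite, so it crosses the value $1$ at most once; thus $\Lambda(\alpha)$ is the unique solution. The main obstacle is purely one of bookkeeping, namely making the path decomposition and telescoping precise for the random, a.s.-finite but unbounded return time $\tau_{x^*}$; fortunately the nonnegativity of all terms removes any interchange-of-limits difficulty, so once the telescoping identity is in place the argument is routine.
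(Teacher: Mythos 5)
Your proposal is correct and follows essentially the same route as the paper's proof: both rest on the change of measure to the twisted kernel $\check{P}$, under which the exponentiated cost up to $\tau_{x^*}$ becomes $\e^{\tau_{x^*}(\Lambda(\alpha)-\Lambda)}$ times an $h$-ratio that collapses because $\check{\Phi}_{\tau_{x^*}} = x^*$, combined with recurrence of $\check{P}$ and strict monotonicity in $\Lambda$ for uniqueness. The paper simply states this change-of-measure identity without your explicit path-by-path telescoping and Tonelli justification, so your write-up is the same argument carried out in more detail.
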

\begin{proof}
	Notice that
	\begin{align}
		\expect{\Phi_0 = x^*}{P}{\exp\left( \sum_{i=0}^{\tau_{x^*}-1} \left(\alpha C(\Phi_i) - \Lambda\right)\right)} = \expect{\check{\Phi}_0 = x^*}{\check{P}}{\exp \left(\Lambda(\alpha) - \Lambda\right)},
	\end{align}
	which proves the first part. Similarly, by the definition of the twisted kernel, we have
	\begin{align}
		\expect{\Phi_0 = x}{P}{\exp\left( \sum_{i=0}^{\tau_{x^*}-1} \left(\alpha C(\Phi_i) - \Lambda(\alpha)\right)\right)} = h(x) \expect{\check{\Phi}_0 = x}{\check{P}}{1/h(\check{\Phi}_{\tau_{x^*}})}.
	\end{align}
\end{proof}
We wrap up this section by characterizing the limiting behavior of the risk-sensitive cost, as the risk factor $\alpha$ approaches $0$ and $\infty$.
\begin{corollary}
	$\lim_{\alpha\to0} \Lambda(\alpha)/\alpha$ and $\lim_{\alpha\to\infty} \Lambda(\alpha)/\alpha$, respectively, are the average cost and the maximum cost associated with the Markov chain $\{\Phi_i\}_{i\geq0}$. Moreover, $\Lambda(\alpha)/\alpha$ is an increasing function of $\alpha$.
\end{corollary}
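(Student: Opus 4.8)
Write $S_n := \sum_{i=0}^{n-1} C(\Phi_i)$ and $\psi_n(\alpha) := \tfrac1n \ln\expect{\Phi_0=x}{P}{\expon{\alpha S_n}}$, so that $\Lambda(\alpha) = \lim_n \psi_n(\alpha)$ and, by \cref{prop_app:riskcost}, $\expon{\Lambda(\alpha)} = \hat{\lambda}_1(\alpha)$ is the Perron root of $\hat{P}(\alpha) = \mathrm{diag}(\e^{\alpha C})P$. The plan is to reduce the whole statement to one structural fact, convexity of $\Lambda$, plus two endpoint computations. I would first record the normalization $\Lambda(0)=0$: at $\alpha=0$ we have $\hat{P}(0)=P$, which is stochastic and hence has Perron root $1$, so $\Lambda(0)=\ln 1 = 0$.

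\textbf{Monotonicity.} For each fixed $n$, the map $\alpha\mapsto\psi_n(\alpha)$ is the scaled cumulant generating function of the bounded variable $S_n$, hence convex in $\alpha$ (its second derivative is $\tfrac1n$ times the variance of $S_n$ under the exponentially tilted law). A pointwise limit of convex functions is convex, so $\Lambda$ is convex on $\real$. Since $\Lambda(0)=0$, the quantity $\Lambda(\alpha)/\alpha = (\Lambda(\alpha)-\Lambda(0))/(\alpha-0)$ is exactly the slope of the chord of $\Lambda$ from the origin, and secant slopes of a convex function are nondecreasing; this gives the claimed monotonicity and, as a byproduct, guarantees that both limits $\lim_{\alpha\downarrow0}\Lambda(\alpha)/\alpha$ and $\lim_{\alpha\to\infty}\Lambda(\alpha)/\alpha$ exist. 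It then remains only to identify them.

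\textbf{The limit as $\alpha\to0$.} Because $\hat{\lambda}_1(\alpha)$ is a simple eigenvalue (Perron–Frobenius) of the real-analytic family $\hat{P}(\alpha)$, it is real-analytic, and so is $\Lambda=\ln\hat{\lambda}_1$; hence $\lim_{\alpha\downarrow0}\Lambda(\alpha)/\alpha = \Lambda'(0)$. Differentiating the eigen-relation and using $\partial_\alpha\hat{P}(\alpha)=\mathrm{diag}(C)\hat{P}(\alpha)$ gives the first-order perturbation formula $\Lambda'(\alpha) = \ell_\alpha^\top\mathrm{diag}(C)h_\alpha / (\ell_\alpha^\top h_\alpha)$ with $\ell_\alpha,h_\alpha$ the left/right Perron eigenvectors, equivalently $\Lambda'(\alpha)=\sum_x\check{\pi}(x)C(x)$, the mean one-step cost under the twisted chain $\check{P}$ of \cref{prop_app:valuefunc}. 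At $\alpha=0$ the twisted chain is $P$ itself (there $h=\mathbf{1}$ and $\hat{\lambda}_1=1$), so $\check{\pi}=\pi_P$ and $\Lambda'(0)=\sum_x\pi_P(x)C(x)$, the average cost. I would carry out the derivative inside this eigenvector framework, which avoids interchanging $\lim_n$ with $d/d\alpha$.

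\textbf{The limit as $\alpha\to\infty$ (the main obstacle).} The subtle point, which I expect to be the hardest step, is that this limit is \emph{not} $\max_x C(x)$ in general but the largest mean cost sustainable along the chain, $\rho^* := \max_\sigma \tfrac{1}{|\sigma|}\sum_{x\in\sigma}C(x)$ over cycles $\sigma$ of the transition graph of $P$ (the two coincide precisely when a maximal-cost state carries a self-loop); I would state the result in this sharper form. For the upper bound, I would decompose any length-$n$ walk into cycles plus a residual simple path of at most $|\samplespace|$ edges; since every cycle has mean cost at most $\rho^*$, this yields the deterministic bound $S_n \le n\rho^* + |\samplespace|\max_x C(x)$, whence $\psi_n(\alpha)\le\alpha\rho^* + \alpha|\samplespace|\max_x C(x)/n$ and, letting $n\to\infty$, $\Lambda(\alpha)\le\alpha\rho^*$ for every $\alpha$. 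For the matching lower bound, fix a cycle $\sigma=(y_0,\dots,y_{k-1},y_0)$ attaining $\rho^*$ with $p_\sigma := \prod_j P(y_j,y_{j+1})>0$; restricting the expectation to the single trajectory that loops $\sigma$ exactly $r$ times gives $\expect{\Phi_0=y_0}{P}{\expon{\alpha S_{rk}}}\ge p_\sigma^{\,r}\,\e^{\alpha rk\rho^*}$, so $\Lambda(\alpha)\ge\alpha\rho^* + \tfrac1k\ln p_\sigma$ and $\Lambda(\alpha)/\alpha\ge\rho^* + \tfrac{1}{\alpha k}\ln p_\sigma\to\rho^*$. Combining the two bounds identifies the limit as $\rho^*$. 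The difficulty is exactly the identification of this constant: the naive value $\max_x C(x)$ is only an upper bound, and the matching lower bound requires the max-mean-cycle structure rather than a single state. Note also that the two-sided bound is what lets me avoid interchanging $n\to\infty$ with $\alpha\to\infty$, since $\Lambda(\alpha)\le\alpha\rho^*$ already holds at every finite $\alpha$.
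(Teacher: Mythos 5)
Your proofs of the monotonicity claim and of the $\alpha\to 0$ limit are correct but follow somewhat different routes from the paper's. For monotonicity, the paper applies Jensen's inequality at finite $n$ to $\big(\expect{\Phi_0=x}{P}{\e^{\alpha_1 S_n}}\big)^{\alpha_2/\alpha_1}$ and passes to the limit (note the paper's displayed inequality sign is reversed, a typo: Jensen gives $\le$ there, i.e.\ exactly the increasing conclusion you obtain); you package the same fact as convexity of the limit function together with monotonicity of secant slopes through $\Lambda(0)=0$, which also gives existence of both limits for free. For $\alpha\to 0$, the paper stays inside its regeneration framework: it differentiates the fixed-point identity of \cref{cor_app:fixedpoint} in $\alpha$ and identifies $\der{\Lambda(\alpha)}/\der{\alpha}\big|_{\alpha=0}$ with the renewal-reward ratio $\expect{\Phi_0=x^*}{P}{\sum_{i=0}^{\tau_{x^*}-1}C(\Phi_i)}\big/\expect{\Phi_0=x^*}{P}{\tau_{x^*}}$, whereas you use analytic perturbation of the simple Perron root; both are valid.

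The genuine divergence is at $\alpha\to\infty$, where you prove a different statement from the paper --- and yours is the correct one. The paper's proof rests on the claim that the largest eigenvalue of $\mathrm{diag}(u)P$ equals $1$, where $u$ is the indicator of $A\coloneqq\{x:C(x)=C_{\max}\}$. This is false in general: permuting states so that $A$ comes first, $\mathrm{diag}(u)P$ is block upper triangular with diagonal blocks $P[A,A]$ (the principal submatrix of $P$ on $A$) and $0$, so its spectral radius equals $\rho(P[A,A])$, which by irreducibility is strictly less than $1$ whenever $A\neq\samplespace$, and equals $0$ precisely when the transition graph has no cycle inside $A$. In that degenerate case the paper's argument only yields $\Lambda(\alpha)-\alpha C_{\max}\to-\infty$, and the asserted limit $C_{\max}$ genuinely fails. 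A concrete instance satisfying \cref{ass:1}: $\samplespace=\{1,2,3\}$, $P(1,2)=1$, $P(2,1)=P(2,3)=\tfrac12$, $P(3,1)=1$, with costs $C(1)=1$, $C(2)=C(3)=0$; the chain is irreducible and aperiodic, the Perron root of $\hat{P}$ solves $\lambda^3=\tfrac12\e^{\alpha}(\lambda+1)$, hence $\Lambda(\alpha)=\alpha/2+O(1)$ and $\Lambda(\alpha)/\alpha\to 1/2$, while $C_{\max}=1$; the value $1/2$ is exactly your $\rho^*$, attained by the cycle $1\to2\to1$. Your two-sided argument --- cycle decomposition of walks for $\Lambda(\alpha)\le\alpha\rho^*$, and restriction to $r$ traversals of an optimal cycle for the matching lower bound --- is sound, and it recovers the paper's statement precisely when some cycle consists entirely of maximum-cost states (e.g.\ a self-loop at an argmax state), which is the implicit hypothesis under which the paper's eigenvalue claim is true. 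One cosmetic flaw only: with signed costs the additive constant in your deterministic bound should be, say, $|\samplespace|\left(|C_{\max}|+|\rho^*|\right)$ rather than $|\samplespace|\max_x C(x)$; this does not affect the limit.
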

\begin{proof}
	Notice that $\expon{\Lambda(\alpha)}$ is the largest eigenvalue of $\hat{P}$, and it is a differentiable function of $\alpha$. Moreover, $\lim_{\alpha\to 0} \expon{\Lambda(\alpha)} = 1$. Hence, $\lim_{\alpha\to0} \Lambda(\alpha)/\alpha = \frac{\der{\Lambda(\alpha)}}{\der{\alpha}}\big|_{\alpha=0}$. By \cref{cor_app:fixedpoint}, we have
	\begin{align}
		0 &= \frac{\der{}}{\der{\alpha}}\expect{\Phi_0 = x^*}{P}{\exp\left( \sum_{i=0}^{\tau_{x^*}-1} \left(\alpha C(\Phi_i) - \Lambda(\alpha)\right)\right)}\\
		&= \expect{\Phi_0 = x^*}{P}{\sum_{i=0}^{\tau_{x^*}-1} \left( C(\Phi_i) - \frac{\der{\Lambda(\alpha)}}{\der{\alpha}}\right)\exp\left( \sum_{i=0}^{\tau_{x^*}-1} \left(\alpha C(\Phi_i) - \Lambda(\alpha)\right)\right)} \\
		&\xrightarrow{\alpha\to0} \expect{\Phi_0 = x^*}{P}{\sum_{i=0}^{\tau_{x^*}-1} \left( C(\Phi_i) - \frac{\der{\Lambda(\alpha)}}{\der{\alpha}}\bigg|_{\alpha=0}\right)}
	\end{align}
	Hence,
	\begin{align}
		\frac{\der{\Lambda(\alpha)}}{\der{\alpha}}\bigg|_{\alpha=0} =  \expect{\Phi_0 = x^*}{P}{\sum_{i=0}^{\tau_{x^*}-1} C(\Phi_i)}\mathbin{\bigg/}\expect{\Phi_0 = x^*}{P}{\tau_{x^*}}
	\end{align}
	which is the the average cost associated with the Markov chain $\{\Phi_i\}_{i\geq 0}$. For the other limit, notice that
	\begin{align}
		\lim_{\alpha\to\infty} \hat{P}/\e^{\alpha C_{\max}} = \text{diag}(u)P,\qquad
		u(x) =
		\begin{cases}
			1&\text{if } C(x) = C_{\max}\\
			0&\text{o.w.}
		\end{cases}
	\end{align}
	where $C_{\max} \coloneqq \max_{x\in\samplespace} C(x)$. It is easy to verify that the largest eigenvalue of $\text{diag}(u)P$ is $1$. Hence, $\lim_{\alpha\to\infty} \expon{\Lambda(\alpha) - \alpha C_{\max}} = 1$. Finally, for any $0 < \alpha_1 < \alpha_2$, we have
	\begin{align}
		\frac{\Lambda(\alpha_1)}{\alpha_1}  &= \lim_{n\to\infty} \frac{1}{\alpha_1 n} \ln\expect{\Phi_0 = x}{P}{\exp\left(\alpha_1 \sum_{i=0}^{n-1} C(\Phi_i)\right)}\\
		&= \lim_{n\to\infty} \frac{1}{\alpha_2 n} \ln\left(\expect{\Phi_0 = x}{P}{\exp\left(\alpha_1 \sum_{i=0}^{n-1} C(\Phi_i)\right)}\right)^{\frac{\alpha_2}{\alpha_1}}\\
		&\geq \lim_{n\to\infty} \frac{1}{\alpha_2 n} \ln\expect{\Phi_0 = x}{P}{\exp\left(\alpha_2 \sum_{i=0}^{n-1} C(\Phi_i)\right)} = \frac{\Lambda(\alpha_2)}{\alpha_2}\\
	\end{align}
	where the inequality follows by Jensen's inequality.
\end{proof}

	\subsection{Robustness and Risk-Sensitive Exponential Cost}\label{sec_app:back_robust}
	As we pointed out, $\Lambda(\alpha)/\alpha$ is a bounded continuous increasing function of $\alpha$. This quantity is called the entropic risk measure \cite{Follmer2008,Follmer2011}, and is denoted by $e(\alpha)$:
\begin{align}\label{eq_app:def_entropicrisk}
	&e(\alpha) \coloneqq \frac{1}{\alpha} \Lambda(\alpha) = \lim_{n\to\infty} \frac{1}{\alpha n} \ln\expect{\Phi_0 = x}{P}{\exp\left(\alpha \sum_{i=0}^{n-1} C(\Phi_i)\right)},\qquad \forall \alpha > 0\\
	&e(0) \coloneqq \lim_{\alpha\to 0} \frac{1}{\alpha} \Lambda(\alpha).
\end{align}
A related risk measure is called the coherent risk measure. For any $\beta \geq 0$, the coherent risk measure is denoted by $\rho(\beta)$ and is defined as follows:
\begin{align}\label{eq_app:def_coherentrisk}
	&\rho(\beta) \coloneqq \sup_{\substack{Q\in\matspace(\samplespace), Q\ll P:\\\expect{}{\pi_{Q}}{D_{KL}(Q(X,:)\Vert P(X,:))} \leq \beta}}  \expect{}{\pi_Q}{C(X)},
\end{align}
where $\matspace(\samplespace)$ is the set of aperiodic and recurrent transition matrices on $\samplespace$, and $\pi_Q$ is the unique stationary distribution of a Markov chain with transition matrix $Q\in\matspace(\samplespace)$. In particular, $\rho(\beta)$ takes into account model uncertainties in terms of the transition probability, i.e., $\rho(\beta)$ is the worst case average cost associated with the Markov chain $\{\Phi_i\}_{i\geq0}$, assuming its transition probability is $Q$ instead of $P$ and that $\expect{}{\pi_{Q}}{D_{KL}(Q(X,:)\Vert P(X,:))} \leq \beta$. We will establish the connection between these two risk measures. We begin with proving the Donsker-Varadhan variational formula, also known as,
Gibbs variational formula.
\begin{proposition}[Donsker-Varadhan variational formula]\label[proposition]{prop_app:donskervaradhan}
	{\normalfont (i)} For any distributions $p$ and $q$ on $\samplespace$ with $q\ll p$, we have
	\begin{align}
		D_{KL}(q\Vert p) = \sup_{\phi:\samplespace\to\real} \expect{}{q}{\phi(X)} - \ln\expect{}{p}{\e^{\phi(X)}}
	\end{align}
	where equality is obtained by $\phi^*(x) \coloneqq \ln \left(q(x)/p(x)\right)$ for all $x\in\samplespace$.\newline
	{\normalfont (ii)} For any $\alpha > 0$, any function $\phi:\samplespace\to\real$, and any distribution $p$ on $\samplespace$, we have
	\begin{align}
		\frac{1}{\alpha} \ln\expect{}{p}{\e^{\alpha \phi(X)}} = \sup_{q\ll p} \expect{}{q}{\phi(X)} - \frac{1}{\alpha} D_{KL}(q\Vert p)
	\end{align}
	where equality is obtained by $q^*(x) \coloneqq p(x)\e^{\alpha\phi(x)}\mathbin{/}\sum_{y\in\samplespace}p(y)\e^{\alpha\phi(y)}$ for all $x\in\samplespace$.
\end{proposition}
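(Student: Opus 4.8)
The plan is to derive both parts from a single change-of-measure (``tilting'') identity together with the non-negativity of the Kullback--Leibler divergence, which in the finite-state setting reduces to Gibbs' inequality. The whole argument is essentially bookkeeping once the tilting identity is isolated; no limiting or approximation procedure is needed because $\samplespace$ is finite.

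For part (i), I would first introduce, for a fixed $\phi$, the tilted probability measure $p_\phi$ on $\samplespace$ defined by $p_\phi(x) \coloneqq p(x)\e^{\phi(x)}/Z_\phi$ with normalizer $Z_\phi \coloneqq \expect{}{p}{\e^{\phi(X)}}$. The key step is the algebraic identity
\begin{align*}
	D_{KL}(q\Vert p_\phi) = D_{KL}(q\Vert p) - \expect{}{q}{\phi(X)} + \ln Z_\phi,
\end{align*}
obtained by expanding $\ln\bigl(q(x)/p_\phi(x)\bigr) = \ln\bigl(q(x)/p(x)\bigr) - \phi(x) + \ln Z_\phi$ and taking $\expect{}{q}{\cdot}$ of both sides; the absolute continuity $q\ll p$ guarantees every term is finite. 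Rearranging gives $\expect{}{q}{\phi(X)} - \ln Z_\phi = D_{KL}(q\Vert p) - D_{KL}(q\Vert p_\phi)$. Since $D_{KL}(q\Vert p_\phi)\ge 0$ by Gibbs' inequality (Jensen applied to the concave logarithm), the right-hand side is at most $D_{KL}(q\Vert p)$ for every $\phi$, with equality exactly when $q = p_\phi$. Finally I would verify that $\phi^*(x)=\ln\bigl(q(x)/p(x)\bigr)$ makes $p_{\phi^*}=q$ (its normalizer equals $\sum_x q(x)=1$), so the supremum is attained and equals $D_{KL}(q\Vert p)$.

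For part (ii), I would read off the result from part (i) by rescaling. Applying the inequality of part (i) to the test function $\alpha\phi$ gives $\alpha\expect{}{q}{\phi(X)} - \ln\expect{}{p}{\e^{\alpha\phi(X)}} \le D_{KL}(q\Vert p)$; dividing by $\alpha>0$ yields
\begin{align*}
	\expect{}{q}{\phi(X)} - \frac{1}{\alpha}D_{KL}(q\Vert p) \le \frac{1}{\alpha}\ln\expect{}{p}{\e^{\alpha\phi(X)}}
\end{align*}
for every $q\ll p$, which after taking the supremum over $q$ establishes the ``$\le$'' direction. For the matching lower bound I would verify directly that the proposed maximizer $q^*(x) = p(x)\e^{\alpha\phi(x)}/\expect{}{p}{\e^{\alpha\phi(X)}}$ attains equality: a short computation gives $D_{KL}(q^*\Vert p) = \alpha\expect{}{q^*}{\phi(X)} - \ln\expect{}{p}{\e^{\alpha\phi(X)}}$, whence $\expect{}{q^*}{\phi(X)} - \tfrac{1}{\alpha}D_{KL}(q^*\Vert p) = \tfrac{1}{\alpha}\ln\expect{}{p}{\e^{\alpha\phi(X)}}$, closing the gap.

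I do not expect a genuine obstacle, since finiteness of $\samplespace$ removes all integrability and attainment concerns: the divergences are finite sums, $q\ll p$ rules out division by zero, and the supremum is actually achieved. The one point requiring care is the non-negativity of $D_{KL}$ together with its equality case, as this is precisely what simultaneously pins down the value of the supremum and identifies the unique maximizer; every other step follows mechanically from the tilting identity and the substitution $\phi\mapsto\alpha\phi$.
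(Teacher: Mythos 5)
Your proof is correct and takes essentially the same route as the paper: both arguments introduce the tilted measure $p(x)\e^{\phi(x)}\big/\sum_y p(y)\e^{\phi(y)}$, use the identical KL decomposition identity, and conclude from the non-negativity of the divergence, with part (ii) following from part (i) via the substitution $\phi \mapsto \alpha\phi$. The only difference is presentational: you rearrange the identity and verify the equality cases ($p_{\phi^*}=q$ and $q^*=p_{\alpha\phi}$) explicitly, whereas the paper states the decomposition once and notes that it yields both parts.
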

\begin{proof}
	For any $\phi:\samplespace\to\real$, define a distribution $r_\phi$ on $\samplespace$ as follows:
	\begin{align}
		r_\phi(x) = \frac{p(x) \e^{\phi(x)}}{\sum_y p(y)\e^{\phi(y)}},\qquad\forall x\in\samplespace.
	\end{align}
	We have
	\begin{align}
		D_{KL}(q\Vert p) &= D_{KL}(q\Vert r_\phi) + \expect{}{q}{\ln r_\phi(X)} - \expect{}{q}{\ln p(X)}\\
		&= D_{KL}(q\Vert r_\phi) + \expect{}{q}{\phi(X)} - \ln\expect{}{p}{\e^{\phi(X)}}\\
		&\geq \expect{}{q}{\phi(X)} - \ln\expect{}{p}{\e^{\phi(X)}},
	\end{align}
	which proves both {\normalfont (i)} and {\normalfont (ii)}.
\end{proof}
By the Donsker-Varadhan variational formula, for any fixed $n\in\nplus$, any initial state $x\in\samplespace$, and any transition matrix $Q\in\matspace(\samplespace)$ for which $Q\ll P$, we have
\begin{align}\label{eq_app:proof_varform}
	\frac{1}{\alpha} \ln\expect{\Phi_0=x}{P}{\exp\left(\alpha \sum_{i=1}^{n} C(\Phi_i)\right)}  \geq \expect{\Phi_0=x}{Q}{\sum_{i=1}^{n} C(\Phi_i)} - \frac{1}{\alpha}D_{KL}(Q^{1:n}_x\Vert P^{1:n}_x),
\end{align}
where $Q^{1:n}_x$ denotes the distribution of $(\Phi_{1},\Phi_{1},\cdots,\Phi_{n})$ with transition probability matrix $Q$, given $\Phi_0 = x$. Notice that
\begin{align}
	D_{KL}(Q^{1:n}_x\Vert P^{1:n}_x) = D_{KL}(Q^{1:n-1}_x\Vert P^{1:n-1}_x) + \expect{\Phi_0=x}{Q}{D_{KL}(Q(\Phi_{n-1},:)\Vert P(\Phi_{n-1},:))}.
\end{align}
Since the distribution of $\Phi_{n-1}$ converges to $\pi_Q$, we have
\begin{align}
	\lim_{n\to\infty} \expect{\Phi_0=x}{Q}{D_{KL}(Q(\Phi_{n-1},:)\Vert P(\Phi_{n-1},:))} = \expect{}{\pi_Q}{D_{KL}(Q(X,:)\Vert P(X,:))}.
\end{align}
Dividing both sides of \cref{eq_app:proof_varform} by $n$ and letting $n\to\infty$, yields
\begin{align}\label{eq_app:varineq_lowerbound}
	e(\alpha) \geq \expect{}{\pi_Q}{C(X)} - \frac{1}{\alpha} \expect{}{\pi_Q}{D_{KL}(Q(X,:)\Vert P(X,:))}.
\end{align}
Notice that by \cref{prop_app:donskervaradhan}, equality in \cref{eq_app:proof_varform} cannot be achieved by $Q^{1:n}_x$ for any $Q\in\matspace(\samplespace)$. However, one may hope that in the limit, equality in \cref{eq_app:varineq_lowerbound} can be achieved for some $Q\in\matspace(\samplespace)$. This results in the following variational formula for the entropic risk measure.
\begin{proposition}\label[proposition]{prop_app:varform} We have
	\begin{align}\label{eq_app:varform}
		e(\alpha) = \sup_{Q\in\matspace(X), Q\ll P} \expect{}{\pi_Q}{C(X)} - \frac{1}{\alpha} \expect{}{\pi_Q}{D_{KL}(Q(X,:)\Vert P(X,:))}.
	\end{align}
\end{proposition}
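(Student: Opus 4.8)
The plan is to combine the lower bound already in hand with an explicit maximizer. Inequality~\cref{eq_app:varineq_lowerbound} holds for every $Q\in\matspace(\samplespace)$ with $Q\ll P$, so taking the supremum over all such $Q$ immediately gives $e(\alpha)\geq$ the right-hand side of~\cref{eq_app:varform}. It therefore remains only to produce a single $Q$ at which this bound is tight, and the natural candidate is the twisted kernel $\check{P}(x,y)=\e^{\alpha C(x)}P(x,y)h(y)\mathbin{/}(\e^{\Lambda(\alpha)}h(x))$ introduced in the proof of \cref{prop_app:valuefunc}.

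First I would record that $\check{P}$ is admissible: it is stochastic by the multiplicative Poisson equation~\cref{eq_app:risk_valuefunc}, and since $h>0$ by Perron--Frobenius we have $\check{P}(x,y)>0$ exactly when $P(x,y)>0$. Hence $\check{P}$ is irreducible and aperiodic, so $\check{P}\in\matspace(\samplespace)$, and the matching supports give $\check{P}\ll P$. Let $\check{\pi}$ denote its stationary distribution.

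The core computation is the per-row relative entropy. Taking logarithms in the definition of $\check{P}$ gives the pointwise identity $\ln\big(\check{P}(x,y)/P(x,y)\big)=\alpha C(x)-\Lambda(\alpha)+\ln h(y)-\ln h(x)$, and summing against $\check{P}(x,:)$ yields
\begin{align}
D_{KL}(\check{P}(x,:)\Vert P(x,:)) = \alpha C(x)-\Lambda(\alpha)-\ln h(x)+\sum_{y\in\samplespace}\check{P}(x,y)\ln h(y).
\end{align}
I would then average over $\check{\pi}$ and use stationarity, $\sum_{x}\check{\pi}(x)\check{P}(x,y)=\check{\pi}(y)$, so that the two $\ln h$ terms telescope and cancel, leaving
\begin{align}
\expect{}{\check{\pi}}{D_{KL}(\check{P}(X,:)\Vert P(X,:))} = \alpha\,\expect{}{\check{\pi}}{C(X)}-\Lambda(\alpha).
\end{align}
Substituting this into the objective of~\cref{eq_app:varform} evaluated at $Q=\check{P}$ cancels the $\expect{}{\check{\pi}}{C(X)}$ terms and leaves exactly $\Lambda(\alpha)/\alpha=e(\alpha)$; thus the supremum is attained, and together with the first paragraph this gives equality.

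The only delicate point---really the crux of the argument---is the cancellation of the $\ln h$ terms, which relies on averaging with the stationary distribution of $\check{P}$ rather than that of $P$; everything else is direct substitution. Because $\samplespace$ is finite and $h$ is strictly positive, no integrability or measurability issues arise, and the supremum is genuinely achieved (by $\check{P}$) rather than merely approached in the limit.
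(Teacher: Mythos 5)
Your proof is correct, but it reaches equality by a different route than the paper. Both arguments share the lower bound: you and the paper each take the supremum of \cref{eq_app:varineq_lowerbound} over admissible $Q$. For the reverse direction, the paper never exhibits a maximizer; instead it proves the upper bound $\Lambda(\alpha)\leq \sup_{Q\ll P}\,\alpha\expect{}{\pi_Q}{C(X)}-\expect{}{\pi_Q}{D_{KL}(Q(X,:)\Vert P(X,:))}$ abstractly, by writing $\Lambda(\alpha)$ via the Collatz--Wielandt formula as a max--min over positive vectors, applying the Donsker--Varadhan formula row by row to build a kernel $Q_w$ from the per-row optimizers, and then bounding the min over states by the average under $\pi_{Q_w}$. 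You instead verify attainment directly at the twisted kernel $\check{P}$: the pointwise identity $\ln\big(\check{P}(x,y)/P(x,y)\big)=\alpha C(x)-\Lambda(\alpha)+\ln h(y)-\ln h(x)$ plus stationarity of $\check{\pi}$ makes the $\ln h$ terms cancel, giving $\expect{}{\check{\pi}}{D_{KL}(\check{P}(X,:)\Vert P(X,:))}=\alpha\expect{}{\check{\pi}}{C(X)}-\Lambda(\alpha)$, and hence the objective equals $e(\alpha)$ at $Q=\check{P}$. Your computation checks out, and your argument is both more elementary (no Collatz--Wielandt, no max--min manipulation) and strictly more informative: it shows the supremum is \emph{attained}, which the paper's proof does not establish even though the subsequent corollary (on $Q_\alpha$ maximizing \cref{eq_app:varform}) implicitly presupposes a maximizer exists. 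What the paper's route buys in exchange is that it does not require identifying the optimizer in advance; it would generalize to settings where the twisted kernel or its stationary distribution is harder to write down explicitly.
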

\begin{proof}
	It is easy to verify that all eigenvalues of $ P\, \text{diag}\left(\e^{\alpha C}\right) $ and $ \text{diag}\left(\e^{\alpha C}\right) P$ are the same. Hence, by \cref{prop_app:riskcost} and Collatz-Wielandt formula, we have
	\begin{align}
		\Lambda(\alpha) &= \ln\left(\max_{v\in\real_+^{\left|\samplespace\right|}} \min_{x\in\samplespace} \frac{\left[P\, \text{diag}\left(\e^{\alpha C}\right) v\right]_x}{\left[v\right]_x} \right) \\
		&= \max_{v\in\real_+^{\left|\samplespace\right|}} \min_{x\in\samplespace} \ln\left(\frac{\left[P\, \text{diag}\left(\e^{\alpha C}\right) v\right]_x}{\left[v\right]_x} \right),
	\end{align}
	where the last equality follows by the fact that $\ln(\cdot)$ is a strictly increasing function. Since all element of $P\, \text{diag}\left(\e^{\alpha C}\right)$ are non-negative, it is easy to verify that the maximum in above is achieved by a strictly positive vector $v\in\real_+^{\left|\samplespace\right|}$. Changing the maximization variable to $w=\ln(v)$, we get
	\begin{align}
		&= \max_{w\in\real^{\left|\samplespace\right|}} \min_{x\in\samplespace} \ln\left({\left[P\, \text{diag}\left(\e^{\alpha C+w}\right)\right]_x}\right) - w(x)\\
		&=\max_{w\in\real^{\left|\samplespace\right|}} \min_{x\in\samplespace} \ln\expect{\Phi_0=x}{P}{\e^{\alpha C(\Phi_1)+w(\Phi_1)}} - w(x).
	\end{align}
	where $\left[\e^{\alpha C+w}\right]_x = \e^{\alpha C(x) + w(x)}$. Using Donsker-Varadhan variational formula, we get
	\begin{align}
		&=\max_{w\in\real^{\left|\samplespace\right|}} \min_{x\in\samplespace} \sup_{q\ll P(x,:)}\expect{}{q}{{\alpha C(X)+w(X)}} - D_{KL}(q\Vert P(x,:)) -  w(x).
	\end{align}
	Suppose that for any $x\in\samplespace$, the supremum in the above equality is obtained by $q^*_x$. Notice that $q^*_x$ is equivalent to $P(x,:)$, i.e., $P(x,y) = 0$ iff $q^*_x(y) = 0$ for any $y\in\samplespace$. Also, notice that $q^*_x$ depends on $w$. Define a transition matrix $Q_w\in\matspace(\samplespace)$ using $\{q^*_x\}_{x\in\samplespace}$ as its rows, i.e., $Q_w(x,:) = q^*_x$ for any $x\in\samplespace$. We have
	\begin{align}
			&=\max_{w\in\real^{\left|\samplespace\right|}} \min_{x\in\samplespace} \expect{}{Q_w(x,:)}{{\alpha C(X)+w(X)}} - D_{KL}(Q_w(x,:)\Vert P(x,:)) -  w(x)\\
			&\leq \max_{w\in\real^{\left|\samplespace\right|}} \sum_{x\in\samplespace}\pi_{Q_w}(x) \left(\expect{}{Q_w(x,:)}{{\alpha C(X)+w(X)}} - D_{KL}(Q_w(x,:)\Vert P(x,:)) -  w(x)\right)
	\end{align}
	where $\pi_{Q_w}$ is the stationary distribution of a Markov chain with transition matrix $Q_w\in\matspace(\samplespace)$. Simplifying the above, we have
	\begin{align}
		&= \max_{w\in\real^{\left|\samplespace\right|}} \expect{}{\pi_{Q_w}}{\alpha C(X)} - \expect{}{\pi_{Q_w}}{D_{KL}(Q_w(X,:)\Vert P(X,:))} \\
		&\leq \sup_{Q\in\matspace(X), Q\ll P} \expect{}{\pi_{Q}}{\alpha C(X)} - \expect{}{\pi_{Q}}{D_{KL}(Q(X,:)\Vert P(X,:))}.
	\end{align}
	The other inequality follows by \cref{eq_app:varineq_lowerbound}, which holds for any transition matrix $Q\in\matspace(\samplespace)$.
\end{proof}
The above variational formula gives a relation between coherent and entropic risk measures. This confirms our claim that the risk-sensitive cost is stable against model uncertainties.
\begin{corollary}
	For any $\alpha > 0$, suppose that $Q_\alpha$ maximizes the right-hand side of \cref{eq_app:varform}. We have
	\begin{align}
		e(\alpha) = \rho(\beta_\alpha) - \beta_\alpha/\alpha,
	\end{align}
	where $\beta_\alpha \coloneqq  \expect{}{\pi_{Q_\alpha}}{D_{KL}(Q_\alpha(X,:)\Vert P(X,:))}$.
\end{corollary}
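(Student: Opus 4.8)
The plan is to recognize this identity as the standard Lagrangian-duality relationship between the penalized (entropic) problem and the constrained (coherent) problem, and to exploit the fact that $Q_\alpha$ is simultaneously the maximizer of the penalized objective and a feasible point of the constrained problem at the level $\beta_\alpha$. To set up, I would abbreviate the average cost and the average KL rate of a matrix $Q$ by $A(Q) \coloneqq \expect{}{\pi_Q}{C(X)}$ and $K(Q)\coloneqq\expect{}{\pi_Q}{D_{KL}(Q(X,:)\Vert P(X,:))}$, so that by \cref{prop_app:varform} we have $e(\alpha) = \sup_Q \big(A(Q) - \alpha^{-1} K(Q)\big)$ while by definition $\rho(\beta) = \sup_{Q: K(Q)\le\beta} A(Q)$, both suprema being over $Q\in\matspace(\samplespace)$ with $Q\ll P$. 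By hypothesis the first supremum is attained at $Q_\alpha$ and $\beta_\alpha = K(Q_\alpha)$, so evaluating the penalized objective at $Q_\alpha$ gives immediately $e(\alpha) = A(Q_\alpha) - \beta_\alpha/\alpha$. Thus the entire corollary reduces to the single claim $\rho(\beta_\alpha) = A(Q_\alpha)$.

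First I would establish the easy inequality $\rho(\beta_\alpha)\ge A(Q_\alpha)$: since $K(Q_\alpha)=\beta_\alpha$, the matrix $Q_\alpha$ is itself feasible for the constrained problem defining $\rho(\beta_\alpha)$, so its average cost $A(Q_\alpha)$ is a lower bound on that supremum. For the reverse inequality $\rho(\beta_\alpha)\le A(Q_\alpha)$, I would take an arbitrary feasible $Q$, i.e. one with $K(Q)\le\beta_\alpha$, and invoke the penalized optimality of $Q_\alpha$: the definition of $e(\alpha)$ as a supremum gives $A(Q) - \alpha^{-1}K(Q) \le e(\alpha) = A(Q_\alpha) - \alpha^{-1}\beta_\alpha$. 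Rearranging and then using feasibility $K(Q)\le\beta_\alpha$ yields $A(Q) \le A(Q_\alpha) - \alpha^{-1}\beta_\alpha + \alpha^{-1} K(Q) \le A(Q_\alpha)$. Taking the supremum over all feasible $Q$ gives $\rho(\beta_\alpha)\le A(Q_\alpha)$, and combining the two bounds finishes $\rho(\beta_\alpha)=A(Q_\alpha)$, hence the stated identity.

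The argument is short, and the only point requiring care is the existence of the maximizer $Q_\alpha$, which is granted in the hypothesis; everything else is elementary manipulation of the two suprema together with the observation that the penalty and the constraint are both measured by the same functional $K$. If one wished to drop the attainment assumption, the main obstacle would shift to proving that the supremum in \cref{eq_app:varform} is achieved, e.g. by compactness of $\{Q\in\matspace(\samplespace): Q\ll P\}$ together with continuity of $Q\mapsto A(Q) - \alpha^{-1}K(Q)$ (the penalty term keeps any maximizing sequence away from the boundary where $K$ blows up); but as the statement is phrased this is not needed.
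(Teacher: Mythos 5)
Your proposal is correct and follows essentially the same argument as the paper: both use \cref{prop_app:varform} to compare the penalized and constrained problems, exploit feasibility of $Q_\alpha$ at level $\beta_\alpha$, and bound the penalty term by the constraint. The only difference is organizational—you isolate the intermediate identity $\rho(\beta_\alpha) = \expect{}{\pi_{Q_\alpha}}{C(X)}$ explicitly, whereas the paper writes the same two inequalities as a single chain with equality noted at $Q = Q_\alpha$.
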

\begin{proof}
	By \cref{prop_app:varform}, for any $Q\in\matspace(\samplespace)$ with $\expect{}{\pi_{Q}}{D_{KL}(Q(X,:)\Vert P(X,:))}\leq \beta_\alpha$, we have
	\begin{align}
		e(\alpha) \geq \expect{}{\pi_Q}{C(X)} - \frac{1}{\alpha} \expect{}{\pi_Q}{D_{KL}(Q(X,:)\Vert P(X,:))} \geq \expect{}{\pi_Q}{C(X)} - \frac{\beta_\alpha}{\alpha}.
	\end{align}
	Notice that equality holds for $Q = Q_\alpha$. The result follows by taking supremum over $Q$ from the right-hand side of the above inequality.
\end{proof}


	\section{Proof of \cref{thm:main}}\label{proof:thm:main}
	The idea behind the proof is motivated by the argument we presented for the deterministic ODE in \cref{subsec:heuristic_fg}, but with additional work building upon \cref{subsec:heuristic_approxnabla} to deal with the stochastic noise. We first present the sketch of the proof. We focus on \cref{thm:main}{\normalfont (ii)} as the proof of \cref{thm:main}{\normalfont (i)} follows by the exact same argument, replacing $\Lambda_\theta$ with $\Lambda_\theta^{(M)}.$

{\bf Step 0, Stochastic noise is negligible}: Let us rewrite the update equations as follows:
\begin{align}
	&\theta_{m+1} = \theta_m - \gamma_m f_m \left(\theta_m,\tilde{\Lambda}_m\right) - \gamma_m \left(F_m \left(\theta_m,\tilde{\Lambda}_m\right)-f_m \left(\theta_m,\tilde{\Lambda}_m\right)\right),\\
	&\tilde{\Lambda}_{m+1} = \tilde{\Lambda}_m + \eta \gamma_m \left(g_m\left(\theta_m,\tilde{\Lambda}_m\right) - 1\right)  + \eta \gamma_m \left(G_m \left(\theta_m,\tilde{\Lambda}_m\right) - g_m \left(\theta_m,\tilde{\Lambda}_m\right)\right).
\end{align}
Defining $r_m \coloneqq \left[\theta_m,\tilde{\Lambda}_m\right]^T$ and $k_m(r_m) \coloneqq \left[-f_m \left(\theta_m,\tilde{\Lambda}_m\right),\eta \left(g_m\left(\theta_m,\tilde{\Lambda}_m\right) - 1\right)\right]^T$, we may rewrite the above equations as $r_{m+1} = r_m + \gamma_m k_m(r_m) + \epsilon_m$, where $\epsilon_m$ is the stochastic noise. It then follows that $\sum_m \epsilon_m < \infty$ with probability $1$ and that $\{\tilde{\Lambda}_m\}$ is bounded almost surely. Notice that this approximation generalizes to any family of functions $\{\psi_m(\cdot)\}$ for which
\begin{align}
	\sum_{m=0}^\infty \sup_{r\in\real^l \times I} \psi_{m+1}(r) - \psi_m(r) < \infty,
\end{align}
and $\{\psi_m(\cdot)\}$, $\{\nabla \psi_m(\cdot)\}$, and $\{\nabla^2 \psi_m(\cdot)\}$ are uniformly bounded over $(\theta,\Lambda)\in \real^l \times I$ for any interval $I = [a,b]$. In particular, for any such family we have
\begin{align}
	\psi_{m+1}\left(r_{m+1}\right) = \psi_m\left(r_{m}\right) + \gamma_m \nabla_r \psi_m(r_m) \cdot k_m(r_m) + \epsilon_m(\psi_m) + \psi_{m+1}\left(r_{m+1}\right) - \psi_{m}\left(r_{m+1}\right),
\end{align}
where $\sum_m \epsilon_m(\psi) < \infty$.

{\bf Step I, Convergence of $\Lambda_{\theta_m}$:} This is the main step of the proof. To show the convergence of $\Lambda_{\theta_m}$ the first step is to show that $\tilde{\Lambda}_m - \Lambda_{\theta_m} \to 0$. We begin with showing that $\big|\tilde{\Lambda}_m - \Lambda_{\theta_m}\big|$ gets close to $0$ infinitely often.

The argument in \cref{subsec:heuristic_fg} suggests that $\tilde{\Lambda}_m - \Lambda_{\theta_m} \geq \epsilon$ for any fixed $\epsilon > 0$ after some iterations. Notice that the choice of $\epsilon$ cannot be $0$ due to the stochastic noise. In particular, we show that $\liminf_{m\to\infty} \tilde{\Lambda}_m - \Lambda_{\theta_m} \geq 0$. To prove this result, we will study the drift of a family of Lyapunov functions (one for each $M_m$) $\phi_m(\theta,\Lambda) = \Lambda - \Lambda^{(M_m)}_{\theta}$, which is the same function that we used in \cref{subsec:heuristic_fg}.

Next, we may try to show the convergence of $\tilde{\Lambda}_m - \Lambda_{\theta_m}$ using a monotonicity argument similar to \cref{subsec:heuristic_fg}. However, due to the stochastic noise, this approach fails. Instead, we study the convergence of $\Lambda_{\theta_m}$ and $\tilde{\Lambda}_m - \Lambda_{\theta_m}$ together. Notice that when $\big|\tilde{\Lambda}_m - \Lambda_{\theta_m}\big|$ is small enough, then $\tilde{\Lambda}_m $ is a good approximation of $\Lambda_{\theta_m}$ and one may expect that similar to the argument in \cref{subsec:heuristic_approxnabla}, the value of $\Lambda_{\theta_m}$ (and hence $\tilde{\Lambda}_m$) should not increase by much. On the other hand, if $\tilde{\Lambda}_m - \Lambda_{\theta_m}$ is bounded away from zero, the value of $\tilde{\Lambda}_m$ should decrease.

More precisely, using a contradiction based argument, we show that if $\limsup_{m\to\infty} \tilde{\Lambda}_m - \Lambda_{\theta_m} > 0$, then the value of $\tilde{\Lambda}_m$ goes to $-\infty$. This is done by breaking down the iterations of the algorithm into different cycles, where during an `even' cycle the value of $\big|\tilde{\Lambda}_m - \Lambda_{\theta_m}\big|$ remains small, and during an `odd' cycles, the value of $\tilde{\Lambda}_m - \Lambda_{\theta_m}$ is strictly bounded away from $0$. We show that the value of $\tilde{\Lambda}_m$ cannot increase by much during an even cycle, and it decreases by a constant during an odd cycle, implying that $\tilde{\Lambda}_m\to-\infty$.

The natural choice of Lyapunov function family to be applied during an odd cycle is $\phi_m(\theta,\Lambda) = \Lambda - \Lambda^{(M_m)}_{\theta}$. Based on the argument in \cref{subsec:heuristic_approxnabla}, during an odd cycle one may attempt to use $\psi_m (\theta,\Lambda)= \Lambda^{(M_m)}_{\theta}$ as the Lyapunov function family. However, we need to incorporate the fact that $\big|\tilde{\Lambda}_m - \Lambda^{(M_m)}_{\theta_m}\big|$ is small and thus, a the natural choice of Lyapunov function family to be applied during an even cycle is $\psi_m (\theta,\Lambda)= \Lambda^{(M_m)}_{\theta} + a\left(\tilde{\Lambda}_m - \Lambda^{(M_m)}_{\theta_m}\right)^2$ where $a > 0$ is a fixed constant to be determined.

{\bf Step II, Convergence of $\nabla_\theta{\Lambda}_{\theta_m}$:} The argument in here is almost identical to the one in \cref{subsec:heuristic_main}. Since $\tilde{\Lambda}_{m} - {\Lambda}_{\theta_m}\to 0$, it follows that $f_m(r_m) - f_m(\theta_m,\Lambda_{\theta_m}) \to 0$. Notice that $f_m(\theta_m,\Lambda_{\theta_m})\ \propto \ \nabla_\theta{\Lambda}_\theta$.

The outline provided above primarily presents the intuition behind the proof in \cite{Marbach2001}. The additional work in here is to deal with the nature of the stochastic noise in the risk-sensitive problem.

We start with showing some preliminary results and then fill in the details that we skipped in the sketch of the proof. Some of the steps that we take here are similar to the work of \cite{Marbach2001}. For the sake of completeness and self-sufficiency, we present all the details. 

\begin{lemma}\label[lemma]{lem:prelim}
	Suppose that \cref{ass:1} holds. Then, there exits a constant $\overline{R}_{x^*} > 1$ such that for any $R\in \left(1,\overline{R}_{x^*}\right)$,  there exists $M(R),m(R)\in (1,\infty)$ that satisfies
	\begin{align}
		m(R) < \expect{\Phi_0=x}{P}{R^{\tau_{x^*}}} < M(R)
	\end{align}
	for all $P\in\xbar{\probspace}$ and $x\in\samplespace$. Moreover, $M(R)\downarrow 1$ as $R\downarrow 1$.
\end{lemma}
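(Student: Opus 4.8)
The plan is to read $\expect{\Phi_0=x}{P}{R^{\tau_{x^*}}}$ as (essentially) the probability generating function of the first return time $\tau_{x^*}$, and to control it uniformly over $P\in\xbar{\probspace}$ and $x\in\samplespace$ by combining compactness of $\xbar{\probspace}$ with the geometric tail of $\tau_{x^*}$. The lower bound is immediate and needs no uniformity machinery: since $\tau_{x^*}\ge 1$ almost surely, $R^{\tau_{x^*}}\ge R$ and hence $\expect{\Phi_0=x}{P}{R^{\tau_{x^*}}}\ge R>1$ for every $P$ and $x$, so $m(R)\coloneqq(1+R)/2\in(1,R)$ works. All the real work is in the uniform upper bound and in the refinement $M(R)\downarrow 1$.

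The first and hardest step is a uniform hitting estimate: I claim there exist an integer $n_0$ and $\delta_0>0$, independent of $P$ and $x$, with $\expect{\Phi_0=x}{P}{\mathbbm{1}\{\tau_{x^*}\le n_0\}}\ge\delta_0$ for all $P\in\xbar{\probspace}$ and $x\in\samplespace$. Taking $n_0=\left|\samplespace\right|$, for each fixed $x$ the map $P\mapsto\expect{\Phi_0=x}{P}{\mathbbm{1}\{\tau_{x^*}\le n_0\}}$ is a polynomial in the entries of $P$ (a sum over first-hitting paths of length at most $n_0$), hence continuous, and $\xbar{\probspace}$ is compact as a closed subset of the compact set of stochastic matrices. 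The minimum over $\xbar{\probspace}$ is therefore attained. This is the step I expect to be the main obstacle, because $\xbar{\probspace}$ may contain limit matrices with vanishing entries and one must rule out the hitting probability degenerating to $0$; the key point is that \cref{ass:1} guarantees every element of $\xbar{\probspace}$ is irreducible, so at the minimizer there is a positive-probability path from $x$ to $x^*$ of length at most $\left|\samplespace\right|$, forcing the minimum to be strictly positive. Minimizing further over the finitely many $x$ yields $\delta_0>0$.

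Given this, the Markov property applied at the epochs $n_0,2n_0,\dots$ gives the uniform geometric tail $\expect{\Phi_0=x}{P}{\mathbbm{1}\{\tau_{x^*}>kn_0\}}\le\rho^{k}$ with $\rho\coloneqq 1-\delta_0<1$, for all $P$ and $x$. Summing over blocks of length $n_0$ then yields, for every $R$ with $R^{n_0}\rho<1$,
\[
\expect{\Phi_0=x}{P}{R^{\tau_{x^*}}}\le\sum_{k=0}^\infty R^{(k+1)n_0}\rho^{k}=\frac{R^{n_0}}{1-R^{n_0}\rho}<\infty,
\]
uniformly in $P$ and $x$, so the generating function has a uniform radius of convergence $\rho^{-1/n_0}>1$. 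The only routine pieces left here are the block summation and the straightforward induction establishing the tail bound.

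This crude bound tends to $1/\delta_0$, not $1$, as $R\downarrow 1$, so to obtain $M(R)\downarrow 1$ I would interpolate. Fix a reference point $R_1\in(1,\rho^{-1/n_0})$ and set $M_1\coloneqq\sup_{P,x}\expect{\Phi_0=x}{P}{R_1^{\tau_{x^*}}}<\infty$ from the display above. Declaring $\overline{R}_{x^*}\coloneqq R_1$, for any $R\in(1,\overline{R}_{x^*})$ write $R=R_1^{\theta}$ with $\theta=\ln R/\ln R_1\in(0,1)$; then concavity of $t\mapsto t^{\theta}$ and Jensen's inequality give
\[
\expect{\Phi_0=x}{P}{R^{\tau_{x^*}}}=\expect{\Phi_0=x}{P}{\big(R_1^{\tau_{x^*}}\big)^{\theta}}\le\Big(\expect{\Phi_0=x}{P}{R_1^{\tau_{x^*}}}\Big)^{\theta}\le M_1^{\theta}.
\]
Hence $M(R)\coloneqq M_1^{\ln R/\ln R_1}$ is a uniform upper bound lying in $(1,\infty)$, and since $\theta\to 0$ as $R\downarrow 1$ we obtain $M(R)\downarrow M_1^{0}=1$, which completes the plan.
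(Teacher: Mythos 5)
Your proof is correct, and its core---the uniform upper bound---follows the same skeleton as the paper's: a uniform hitting-time estimate obtained from compactness of $\xbar{\probspace}$ together with \cref{ass:1}, then a geometric tail bound via the Markov property, then summation over blocks of length $n_0$. (The paper obtains the hitting estimate by a contradiction/subsequence argument rather than by minimizing a continuous function over a compact set and invoking irreducibility at the minimizer, but these two uses of compactness are interchangeable.) Where you genuinely depart from the paper is in the other two claims, and in both cases your route is more elementary. For the lower bound, the paper runs a second compactness/contradiction argument, whereas you observe that $\tau_{x^*}\geq 1$ almost surely forces $\expect{\Phi_0=x}{P}{R^{\tau_{x^*}}}\geq R$, so $m(R)=(1+R)/2$ works with no uniformity machinery at all; this is a real simplification. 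For the refinement $M(R)\downarrow 1$, the paper argues by contradiction a third time (subsequences plus monotone convergence), which only establishes that a suitable choice of $M(R)$ exists, while your Jensen interpolation $\expect{\Phi_0=x}{P}{R^{\tau_{x^*}}}\leq M_1^{\ln R/\ln R_1}$ is constructive, is monotone in $R$ by inspection, and even yields an explicit rate of convergence to $1$. One cosmetic point: the lemma asserts the strict inequality $\expect{\Phi_0=x}{P}{R^{\tau_{x^*}}}<M(R)$, and Jensen's inequality can hold with equality in degenerate cases (e.g.\ when $\tau_{x^*}$ is a.s.\ constant and the supremum defining $M_1$ is attained, such as $|\samplespace|=1$); replacing your bound by, say, $M(R)\coloneqq 2M_1^{\ln R/\ln R_1}-1$ restores strictness while preserving both monotonicity and $M(R)\downarrow 1$ as $R\downarrow 1$.
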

\begin{proof}
		Let us first show the existence of $\overline{R}_{x^*}$ and a uniform upper bound $M(R)$. Fix a constant $\hat{R} > 1$. We claim that $\exists \hat{N}\in\nplus$ such that $P({\tau_{x^*}} > \hat{N} | \Phi_0 = x) \leq {\hat{R}}^{-1}$ for any $x\in\samplespace$ and any $P\in\xbar{\probspace}$. Suppose the contrary, i.e., there exists a sequence $\{P_i\}\subset\xbar{\probspace}$ such that $N_i\coloneqq \inf\{N: P_i(\tau_{x^*} > N| \Phi_0 = x)\leq \hat{R}^{-1},~\forall x\in\samplespace  \}\uparrow\infty$ as $i\to\infty$. Using a compactness argument, there exists a subsequence $\{P_j\}$ such that $P_j \to P \in\xbar{\probspace}$. Contradiction follows by \cref{ass:1} and the fact that $P_j(\tau_{x^*} > N| \Phi_0 = x) \to P(\tau_{x^*} > N| \Phi_0 = x)$.

Given the choice of $\hat{N}$, for any $k\in\nplus$ and all $x\in\samplespace$ we have
\begin{align}
	&P({\tau_{x^*}} > \hat{N} + k| \Phi_0 = x)\\
	&\myquad[1]=\sum_{(x_1,\cdots,x_{k-1},y)\in\left(\samplespace\setminus \{x^*\}\right)^{k}} P(\Phi_1=x_1,\cdots,\Phi_{k-1}=x_{k-1},\Phi_{k}=y| \Phi_0 = x) P({\tau_{x^*}} > \hat{N}| \Phi_0 = y) \\
	&\myquad[1] \leq {\hat{R}}^{-1} P({\tau_{x^*}} > k| \Phi_0 = x).
\end{align}
In particular, $P({\tau_{x^*}} > k \hat{N}| \Phi_0 = x) \leq {\hat{R}}^{-k}$ for any $k\in\nplus$ and all $x\in\samplespace$. Hence, for any $x\in\samplespace$, $P\in\xbar{\probspace}$, and $1< R < \sqrt[{\hat{N}}]{\hat{R}}$ we have
\begin{align}
	\expect{\Phi_0 = x}{P}{R^{\tau_{x^*}}} &\le R\sum_{k=0}^{\infty} R^{k} P(\tau_{x^*} > k) \\
	&\leq R \sum_{k=0}^{\infty} \sum_{l=0}^{\hat{N}-1}  R^{k \hat{N} + l} P(\tau_{x^*} > k \hat{N}),\\
	&\leq \frac{\hat{R}^{\hat{N}+1}}{\hat{R}-1} \sum_{k=0}^{\infty} R^{k\hat{N}} {\hat{R}}^{-k} < \infty,
\end{align}
which proves the existence of $\overline{R}_{x^*}$ and a uniform upper bound $M(R)$ for $1<R<\overline{R}_{x^*}$.

Next, we prove the existence of a uniform lower bound $m(R) > 1$. For the sake of contradiction, suppose that there exists a sequence $\{P_i\} \subset \xbar{\probspace}$ such that $\expect{\Phi_0=x}{P_i}{R^{\tau_{x^*}}} \to 1$. Using a compactness argument, we can pick a subsequence $\{P_j\}$ such that $P_j \to P \in\xbar{\probspace}$. This together with the fact that $\expect{\Phi_0=x^*}{P}{R^{\tau_{x^*}}} < M(R)$ for all $P\in\xbar{\probspace}$ implies that $\expect{\Phi_0=x^*}{P_i}{R^{\tau_{x^*}}} \to \expect{\Phi_0=x^*}{P}{R^{\tau_{x^*}}}$. Contradiction follows by the fact that $\expect{\Phi_0=x^*}{P}{R^{\tau_{x^*}}} > 1$.

Finally, we show that we can choose $M(R)$ such that $M(R)\downarrow 1$ as $R\downarrow 1$. For the sake of contradiction, suppose there exist sequences $\{R_i\}$ and $\{P_i\}$ such that $R_i\downarrow 1$ and for all $i\in\nplus$, $\expect{\Phi_0 = x}{P_i}{{R_i}^{\tau_{x^*}}} > 1+\epsilon$ for some constant $\epsilon > 0$. Following the same argument as before, pick a subsequence $\{P_j\}$ such that $P_j\to P\in\xbar{\probspace}$. We have $\expect{\Phi_0 = x}{P_j}{{R_k}^{\tau_{x^*}}} \to \expect{\Phi_0 = x}{P}{{R_k}^{\tau_{x^*}}}$ for any $k\in\nplus$, as $j\to\infty$. Moreover, by the monotone convergence theorem $\expect{\Phi_0 = x}{P}{{R_k}^{\tau_{x^*}}} \downarrow 1$ as $k\to\infty$. Hence, we can pick $k_0$ large enough such that $\expect{\Phi_0 = x}{P}{{R_{k_0}}^{\tau_{x^*}}} < 1 + \epsilon/4$, and then $j_0$ large enough such that $\left|\expect{\Phi_0 = x}{P_j}{{R_{k_0}}^{\tau_{x^*}}} - \expect{\Phi_0 = x}{P}{{R_{k_0}}^{\tau_{x^*}}}\right| < \epsilon/4$ for all $j \geq j_0$. Hence, $\expect{\Phi_0 = x}{P_{j_0\vee k_0}}{{R_{k_0}}^{\tau_{x^*}}} < 1+\epsilon/2$. Contradiction follows by the fact that $R_{k_0} \geq R_{j_0\vee k_0}$.
\end{proof}
\begin{corollary}\label[corollary]{cor:prelim}
	Let \cref{ass:1,ass:2} hold. Then, \cref{lem:prelim} holds for $\xbarcheck{\probspace}$.
\end{corollary}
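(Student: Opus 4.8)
The plan is to observe that the proof of \cref{lem:prelim} never uses the specific parametric form of $\xbar{\probspace}$; it rests on only two structural facts: (i) $\xbar{\probspace}$ is a compact set of stochastic matrices, so every sequence has a convergent subsequence with a stochastic limit, and (ii) every $P\in\xbar{\probspace}$ is aperiodic and irreducible with common recurrent state $x^*$, which is what forces the contradictions in the three compactness arguments (finiteness of $\hat{N}$, the uniform lower bound $m(R)>1$, and $M(R)\downarrow 1$). Together with the elementary observations that $P\mapsto P(\tau_{x^*}>N\mid\Phi_0=x)$ is a polynomial in the matrix entries (hence continuous) and that $\tau_{x^*}\geq 1$ pointwise (so $R^{\tau_{x^*}}\geq R>1$), these two facts constitute the entire content of the proof. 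It therefore suffices to verify (i) and (ii) with $\xbar{\probspace}$ replaced by $\xbarcheck{\probspace}$; the identical argument then goes through verbatim after substituting $\check{P}$ for $P$ and $\checkexpect{}{\theta}{\cdot}$ for $\expect{}{\theta}{\cdot}$.

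For (i), each twisted kernel is stochastic by construction, and the set of $|\samplespace|\times|\samplespace|$ stochastic matrices is closed and bounded, hence compact; its closed subset $\xbarcheck{\probspace}$ is therefore compact and consists of stochastic matrices. For (ii), this is exactly the statement that $\xbarcheck{\probspace}$ satisfies \cref{ass:1}, established in \cref{cor:Phat_commonrec}. The underlying mechanism is that the twist preserves the support pattern: since
\[
\check{P}_\theta(x,y)=\frac{\exp(\alpha C_\theta(x))\,P_\theta(x,y)\,h_\theta(y)}{\lambda_\theta\,h_\theta(x)},
\]
and each of $\exp(\alpha C_\theta(x))$, $h_\theta(y)$, $\lambda_\theta$, $h_\theta(x)$ is strictly positive, we have $\check{P}_\theta(x,y)>0$ iff $P_\theta(x,y)>0$, so irreducibility, aperiodicity, and recurrence of $x^*$ transfer directly from $P_\theta$ to $\check{P}_\theta$ for every $\theta\in\real^l$.

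The only delicate point, and where \cref{ass:2} enters through \cref{lem:hbounded}, is passing to the closure: a limit of twisted kernels must remain irreducible with $x^*$ recurrent, even though entries can only vanish in the limit. Here \cref{lem:hbounded} supplies a uniform two-sided bound on $h_\theta$ (and hence on $\lambda_\theta$), so the twist ratio $h_\theta(y)/(\lambda_\theta h_\theta(x))$ stays bounded away from $0$ and $\infty$ uniformly in $\theta$; this keeps $\check{P}_\theta(x,y)$ uniformly comparable to $P_\theta(x,y)$, so that along any convergent sequence the limiting support of the twisted kernels coincides with that of the corresponding originals, which by \cref{ass:1} for $\xbar{\probspace}$ is aperiodic and irreducible with $x^*$ recurrent. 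I would simply invoke \cref{cor:Phat_commonrec} for this step. With (i) and (ii) secured, re-running the proof of \cref{lem:prelim} line by line—the same choice of $\hat{N}$, the geometric decay $\check{P}(\tau_{x^*}>k\hat{N}\mid\Phi_0=x)\leq\hat{R}^{-k}$, and the two subsequent compactness contradictions—yields $\overline{R}_{x^*}>1$ and uniform bounds $m(R)<\checkexpect{\check{\Phi}_0=x}{\theta}{R^{\tau_{x^*}}}<M(R)$ with $M(R)\downarrow 1$ as $R\downarrow 1$, for all $\check{P}\in\xbarcheck{\probspace}$ and $x\in\samplespace$. I expect the closure property (ii) to be the only genuine obstacle; everything else is a mechanical transcription of the preceding proof.
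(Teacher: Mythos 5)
Your proposal is correct and takes essentially the same route as the paper: the paper's entire proof is to invoke \cref{cor:Phat_commonrec} (which gives \cref{ass:1} for $\xbarcheck{\probspace}$) and note that the proof of \cref{lem:prelim} depends only on \cref{ass:1} together with compactness of the closure. Your write-up simply makes explicit the two structural facts and the support-preservation mechanism that the paper leaves implicit.
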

\begin{proof}
	The proof follows by \cref{cor:Phat_commonrec}.
\end{proof}

\subsection*{Step 0, Stochastic noise is negligible}
Let us rewrite \cref{eq:algorithm} as $r_{m+1} = r_m + \gamma_m k_m(r_m) + \epsilon_m$ where
$r_m \coloneqq
\begin{bmatrix}
	\theta_m\\
	\tilde{\Lambda}_m
\end{bmatrix}\in\real^{l+1}$,
\begin{align}
	&k_m(r_m) =
	\begin{bmatrix}
		-f_m \left(\theta_m,\tilde{\Lambda}_m\right)\\
		\eta \left(g_m\left(\theta_m,\tilde{\Lambda}_m\right) - 1\right)
	\end{bmatrix}
	,\allowdisplaybreaks\\
	&\epsilon_m =
	\begin{bmatrix}
		\epsilon_m(\theta)\\
		\epsilon_m(\Lambda)
	\end{bmatrix} = \gamma_m
	\begin{bmatrix}
		-F_m \left(\theta_m,\tilde{\Lambda}_m\right)+f_m \left(\theta_m,\tilde{\Lambda}_m\right)\\
		\eta \left(G_m \left(\theta_m,\tilde{\Lambda}_m\right) - g_m \left(\theta_m,\tilde{\Lambda}_m\right)\right)
	\end{bmatrix},
\end{align}
$g_m \left(\theta_m,\tilde{\Lambda}_m\right) = \expect{\Phi_{t_m}=x^*}{\theta_m}{G_m \left(\theta_m,\tilde{\Lambda}_m\right)}$, and $f_m \left(\theta_m,\tilde{\Lambda}_m\right) = \expect{\Phi_{t_m}=x^*}{\theta_m}{F_m \left(\theta_m,\tilde{\Lambda}_m\right)}$.
Notice that for any $m$, $\norm{k_m(r_m)} <\infty$ and $\expect{}{\theta_m}{\epsilon_m} = 0$; however, the bound over $\norm{k_m(r_m)}$ is not uniform since $\Lambda_{\theta_m} - \tilde{\Lambda}_m$ might be large.
\begin{lemma}\label[lemma]{lem:lambtild_lowerbound}
	The sequence $\{\tilde{\Lambda}_m\}$ is bounded from below by $\underline{\Lambda} \coloneqq \min(\tilde{\Lambda}_{0},\alpha \underline{C} -\eta \gamma_{\max})$ where $\underline{C}$ is a uniform lower bound on the cost function $C_\theta(\cdot)$.
\end{lemma}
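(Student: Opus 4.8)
The plan is to prove the bound by induction on $m$, exploiting a reflecting-barrier mechanism centered at the threshold $\Lambda = \alpha\underline{C}$. The intuition is that the $\tilde{\Lambda}$-update acquires nonnegative drift as soon as $\tilde{\Lambda}_m$ falls to or below $\alpha\underline{C}$, so the iterate cannot be driven much further below this level; the only way to dip below $\alpha\underline{C}$ at all is through a single downward step starting from just above the threshold, and such a step is bounded by $\eta\gamma_{\max}$. The constant $\underline{\Lambda} = \min(\tilde{\Lambda}_0,\alpha\underline{C}-\eta\gamma_{\max})$ is exactly designed to absorb both the initial condition and this one worst-case excursion.

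First I would establish the pathwise inequality that powers the barrier. Since the return time satisfies $\tau_{x^*}\geq 1$ and the one-step cost obeys $C_\theta(\cdot)\geq\underline{C}$ uniformly, for any realization of the trajectory and any $\Lambda\leq\alpha\underline{C}$ every summand in the exponent of $H(\theta,\Lambda)=\expon{\sum_{i=0}^{\tau_{x^*}-1}(\alpha C_\theta(\Phi_i)-\Lambda)}$ is nonnegative, so $H(\theta,\Lambda)\geq 1$ holds pathwise. The key step --- and the only place requiring real care --- is to show that this inequality survives the smooth truncation, i.e.\ that $H\geq 1$ forces $G^{(M)}\geq 1$ for every $M>1$. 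On the event $\{H\leq M\}$ this is immediate, since $G^{(M)}=H$ there. On $\{H>M\}$ one has $\ln(M^{-1}H)>0$, so every term of $\sum_{i=0}^4\frac{1}{i!}(\ln(M^{-1}H))^i$ is nonnegative while the $i=0$ term equals $1$; hence $G^{(M)}\geq M>1$. Combining the branches yields $G^{(M)}\geq 1$ whenever $H\geq 1$, and therefore $G_m\geq 1$ pathwise whenever $\tilde{\Lambda}_m\leq\alpha\underline{C}$.

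With these facts the induction closes by a two-case analysis. The base case holds because $\underline{\Lambda}\leq\tilde{\Lambda}_0$. For the inductive step, suppose $\tilde{\Lambda}_m\geq\underline{\Lambda}$. If $\tilde{\Lambda}_m\leq\alpha\underline{C}$, then $G_m\geq 1$, so the increment $\eta\gamma_m(G_m-1)\geq 0$ and $\tilde{\Lambda}_{m+1}\geq\tilde{\Lambda}_m\geq\underline{\Lambda}$. If instead $\tilde{\Lambda}_m>\alpha\underline{C}$, then using only the elementary bound $G_m>0$ (from $0<G^{(M)}<H$) we get $\eta\gamma_m(G_m-1)>-\eta\gamma_m\geq-\eta\gamma_{\max}$, whence $\tilde{\Lambda}_{m+1}>\tilde{\Lambda}_m-\eta\gamma_{\max}>\alpha\underline{C}-\eta\gamma_{\max}\geq\underline{\Lambda}$. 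For the projected update \cref{eq:algorithm_project} the same conclusion follows once I note that the clipping operator satisfies $\Gamma_{\Lambda_{\text{thresh}}}(x)\geq 0$ for $x\geq 0$ and $\Gamma_{\Lambda_{\text{thresh}}}(x)\geq x$ for $x<0$, so the increment is never made more negative by the projection and neither case is worsened.

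I expect the truncation-preservation step to be the main (though ultimately mild) obstacle, because it is the one point where the precise five-term structure of $G^{(M)}$ must be invoked rather than merely the two-sided bound $0<G^{(M)}<H$; everything else is a clean drift/case argument. It is worth flagging that the entire proof is deterministic and pathwise --- no expectation or martingale machinery is needed --- which is why it applies verbatim to both \cref{eq:algorithm} and \cref{eq:algorithm_project}.
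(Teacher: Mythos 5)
Your proof is correct and follows essentially the same argument as the paper's: below the threshold $\alpha\underline{C}$ every exponent term is nonnegative so $G_m(\theta_m,\tilde{\Lambda}_m)\geq 1$ and the increment is nonnegative, while in all cases the step can decrease $\tilde{\Lambda}_m$ by at most $\eta\gamma_{\max}$ since $G_m>0$. Your additional details --- explicitly verifying that $H\geq 1$ forces $G^{(M)}\geq 1$ through both branches of the smooth truncation, and checking that the projection $\Gamma_{\Lambda_{\text{thresh}}}$ never worsens either case --- simply spell out steps the paper leaves implicit.
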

\begin{proof}
	Recall that the update equation for $\tilde{\Lambda}_m$ is given as follows:
	\begin{align}
		\tilde{\Lambda}_{m+1} = \tilde{\Lambda}_m + \eta \gamma_m \left(G_m\left(\theta_m,\tilde{\Lambda}_m\right) - 1\right).
	\end{align}
	If $\tilde{\Lambda}_{m} \leq \alpha \underline{C}$ for some $m$, then $\alpha C_{\theta}(x) - \tilde{\Lambda}_{m} \geq 0$ for all $x\in\samplespace$ and $\theta\in\real^l$, which implies that $G_m\left(\theta_m,\tilde{\Lambda}_m\right) \geq 1$. Hence, $\tilde{\Lambda}_{m+1} \geq \tilde{\Lambda}_{m}$. The result follows by the fact that $\tilde{\Lambda}_{m+1} - \tilde{\Lambda}_{m} \geq -\eta \gamma_m$ for all $m$.
\end{proof}
\begin{corollary}\label[corollary]{cor:lambtild_lowerbound}
	$\{\|k_m(\theta_m,\tilde{\Lambda}_m)\|\}$ is uniformly bounded by $\textrm{Const}\times \gamma_m^{-\beta}$ for some $\textrm{Const} > 0$ independent of $m$.
\end{corollary}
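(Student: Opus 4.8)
The plan is to reduce the claim to uniform bounds on the two components of $k_m$, namely $f_m(\theta_m,\tilde{\Lambda}_m)$ and $g_m(\theta_m,\tilde{\Lambda}_m)-1$. Since $G_m(\theta,\Lambda) \myeq{d} G^{(M_m)}(\theta,\Lambda)$ and $F_m(\theta,\Lambda)\myeq{d} F^{(M_m)}(\theta,\Lambda)$, restarting the chain at $x^*$ gives $g_m = g^{(M_m)}$ and $f_m = f^{(M_m)}$, so it suffices to control the expectations of $G^{(M_m)}(\theta_m,\tilde{\Lambda}_m)$ and $F^{(M_m)}(\theta_m,\tilde{\Lambda}_m)$. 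The engine of the proof is a trajectory-wise bound of these random functions by $M_m$ times a fixed polynomial in the return time $\tau_{x^*}$, after which the expectation is dispatched using the uniform exponential-moment estimate of \cref{lem:prelim}.

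First I would invoke \cref{lem:lambtild_lowerbound} to get $\tilde{\Lambda}_m \geq \underline{\Lambda}$ for all $m$, so that each summand satisfies $\alpha C_{\theta_m}(\Phi_i) - \tilde{\Lambda}_m \leq \alpha\overline{C} - \underline{\Lambda} =: c_0$ uniformly in $m$ and in the trajectory, where $\overline{C}$ bounds $C_\theta$ by \cref{ass:2}; consequently $\ln H(\theta_m,\tilde{\Lambda}_m) \leq c_0\tau_{x^*}$. On $\{H \leq M_m\}$ one has $G^{(M_m)} = H \leq M_m$, while on $\{H > M_m\}$ the argument $u := \ln(M_m^{-1}H)$ is positive and bounded by $c_0\tau_{x^*}$ (since $\ln M_m \geq 0$); as each term $u^i/i!$ is increasing on $u \geq 0$, the truncated value $M_m\sum_{i=0}^4 u^i/i!$ is at most $M_m\sum_{i=0}^4 (c_0\tau_{x^*})^i/i!$. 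This yields $G^{(M_m)}(\theta_m,\tilde{\Lambda}_m) \leq M_m\,\mathrm{poly}_4(\tau_{x^*})$ for an explicit degree-four polynomial with $m$-independent coefficients. The same accounting applied to $F^{(M_m)}$ — using $\nabla_\theta H / H = \alpha\sum_i \nabla_\theta C_\theta(\Phi_i)$, the boundedness of $\nabla_\theta C_\theta$ (\cref{ass:2}) and of $L_\theta$ (\cref{ass:3}), and the extra factor $\tau_{x^*}$ contributed by the sums $\sum_i L_\theta(\Phi_i,\Phi_{i+1})$ and $\sum_i \nabla_\theta C_\theta(\Phi_i)$ — gives $\norm{F^{(M_m)}(\theta_m,\tilde{\Lambda}_m)} \leq M_m\,\mathrm{poly}_5(\tau_{x^*})$.

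Taking expectations under $P_{\theta_m}$ with $\Phi_{t_m} = x^*$ and using that, by \cref{lem:prelim}, $\expect{\Phi_0=x^*}{P}{R^{\tau_{x^*}}}$ is finite uniformly over $P\in\xbar{\probspace}$ for some $R>1$ — which forces every moment $\expect{\Phi_0=x^*}{P}{\tau_{x^*}^i}$ to be bounded by an $m$-independent constant — I obtain $\norm{f_m}\vee|g_m| \leq \textrm{Const}\times M_m$. Since $M_m \leq \gamma_m^{-\beta}$ by the definition of the truncation sequence in \cref{thm:main}, and $\gamma_m^{-\beta} > 1$ because $\gamma_m \leq \gamma_{\max} < 1$ (\cref{ass:5}), the additive $1$ in $g_m - 1$ is absorbed and $\norm{k_m} \leq \textrm{Const}\times\gamma_m^{-\beta}$ follows; the fixed-truncation case $M_m = M$ is immediate from the same estimate. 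The one point requiring care is precisely that $G^{(M_m)}$ and $F^{(M_m)}$ are \emph{not} pointwise bounded by $M_m$ (they exceed $M_m$ on the tail event $\{H > M_m\}$, growing like $M_m$ times a polynomial in $\tau_{x^*}$); the lower bound on $\tilde{\Lambda}_m$ is what keeps the per-step exponent bounded above, and the finiteness of all moments of $\tau_{x^*}$ converts the pointwise estimate into the desired bound in expectation, so this is bookkeeping rather than a genuine analytic obstacle.
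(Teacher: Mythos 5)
Your proof is correct and takes essentially the same route as the paper's: the lower bound on $\tilde{\Lambda}_m$ from \cref{lem:lambtild_lowerbound} together with \cref{ass:2,ass:3} gives a pointwise bound of $G_m$ and $F_m$ by the truncation level times a fixed polynomial in $\tau_{x^*}$, and \cref{lem:prelim} (uniform moments of $\tau_{x^*}$) converts this into the stated bound on the expectations. The only differences are that you spell out the $F^{(M_m)}$ accounting and the absorption of the additive $1$ in $g_m-1$ explicitly, which the paper leaves as a ``similar argument.''
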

\begin{proof}
	By \cref{lem:lambtild_lowerbound} and \cref{ass:2}, we have
	\begin{align}
		0 < G_m\left(\theta_m,\tilde{\Lambda}_m\right) &< \gamma_m^{-\beta}\left(1+\left[\sum_{i=0}^4 \frac{1}{i!}\left(\ln\left(\gamma_m^{\beta}H_m(\theta,\tilde{\Lambda}_m)\right)\right)^i\right]\mathbbm{1}\left\{H_m(\theta,\tilde{\Lambda}_m) > \gamma_m^{-\beta}\right\}\right)\\
		&< \gamma_m^{-\beta}\left(1+\left[\sum_{i=0}^4 \frac{1}{i!}\left(\left(\alpha \overline{C} - \underline{\Lambda}\right)\tau_{x^*}\right)^i\right]\right)
	\end{align}
	where $\overline{C}$ is a uniform upper bound on the cost function $C_\theta(\cdot)$. By \cref{lem:prelim}, $\expect{}{\theta}{(\tau_{x^*})^i}$ is uniformly bounded over $\theta\in\real^l$ for any fixed $i$. Hence, $\expect{\Phi_{t_m}=x^*}{\theta_m}{G_m\left(\theta_m,\tilde{\Lambda}_m\right)} < \textrm{Const}\times \gamma_m^{-\beta}$. Similar argument together with \cref{ass:3}, implies that $\norm{\expect{\Phi_{t_m}=x^*}{\theta_m}{F_m\left(\theta_m,\tilde{\Lambda}_m\right)}}\allowbreak < \textrm{Const}\times \gamma_m^{-\beta}$.
\end{proof}
Next, using a standard martingale argument, we show that the stochastic noise is negligible.
\begin{lemma}\label[lemma]{lem:sum_epsm}
	$\sum_m \epsilon_m$ converges with probability $1$.
\end{lemma}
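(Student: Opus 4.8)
The plan is to recognize $\{\epsilon_m\}$ as a square-integrable martingale difference sequence and then invoke the $L^2$ martingale convergence theorem. Let $\mathcal{F}_m \coloneqq \sigma\left(\Phi_i : i \leq t_m\right)$ denote the history up to the $m$th visit to $x^*$, so that $r_m = [\theta_m,\tilde{\Lambda}_m]^T$ is $\mathcal{F}_m$-measurable, while $F_m(\theta_m,\tilde{\Lambda}_m)$ and $G_m(\theta_m,\tilde{\Lambda}_m)$, being functions of the trajectory over $\{t_m,\dots,t_{m+1}-1\}$, are $\mathcal{F}_{m+1}$-measurable. Since $f_m$ and $g_m$ are exactly the conditional means $\expect{\Phi_{t_m}=x^*}{\theta_m}{F_m}$ and $\expect{\Phi_{t_m}=x^*}{\theta_m}{G_m}$, we have $\expect{}{}{\epsilon_m \mid \mathcal{F}_m} = 0$. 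Hence $S_n \coloneqq \sum_{m=0}^{n-1}\epsilon_m$ is an $\mathcal{F}_n$-martingale, and it suffices to show $\sum_m \expect{}{}{\norm{\epsilon_m}^2} < \infty$, which gives $\sup_n \expect{}{}{\norm{S_n}^2} < \infty$.

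First I would bound the conditional second moment. Writing $\epsilon_m = \gamma_m \delta_m$, where $\delta_m$ is the centered vector in the definition of $\epsilon_m$, we get
\begin{align}
	\expect{}{}{\norm{\epsilon_m}^2 \mid \mathcal{F}_m} = \gamma_m^2\left(\mathrm{Var}(F_m \mid \mathcal{F}_m) + \eta^2\,\mathrm{Var}(G_m \mid \mathcal{F}_m)\right) \leq \gamma_m^2\left(\expect{\Phi_{t_m}=x^*}{\theta_m}{\norm{F_m}^2} + \eta^2\expect{\Phi_{t_m}=x^*}{\theta_m}{G_m^2}\right).
\end{align}
By the same estimate as in the proof of \cref{cor:lambtild_lowerbound}---using the lower bound $\tilde{\Lambda}_m \geq \underline{\Lambda}$ of \cref{lem:lambtild_lowerbound} and \cref{ass:3}(a) to bound $L_\theta$---both $G_m$ and $\norm{F_m}$ are dominated by $\gamma_m^{-\beta}$ times a fixed polynomial (of degree at most $4$) in $\tau_{x^*}$. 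Squaring and using the uniform moment bounds $\expect{}{\theta}{\tau_{x^*}^k} \leq \textrm{Const}$ for each fixed $k$, which follow from \cref{lem:prelim} since a uniform bound on a geometric moment forces uniform bounds on all polynomial moments, yields
\begin{align}
	\expect{}{}{\norm{\epsilon_m}^2 \mid \mathcal{F}_m} \leq \textrm{Const}\times \gamma_m^{2(1-\beta)}.
\end{align}
Note that this bound is deterministic, as $\underline{\Lambda}$, the moment bounds, and the step sizes are all uniform, so taking total expectation gives $\expect{}{}{\norm{\epsilon_m}^2} \leq \textrm{Const}\times \gamma_m^{2(1-\beta)}$ directly.

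Summing and using the step-size condition $\sum_m \gamma_m^{2(1-\beta)} < \infty$ from \cref{thm:main}(ii), we obtain $\sum_m \expect{}{}{\norm{\epsilon_m}^2} < \infty$ (for the fixed-truncation case of \cref{thm:main}(i), the same argument with $M_m \equiv M$ gives $\expect{}{}{\norm{\epsilon_m}^2} \leq \textrm{Const}\times M^2\gamma_m^2$, which is summable by \cref{ass:5}). Since $\sup_n \expect{}{}{\norm{S_n}^2} = \sum_m \expect{}{}{\norm{\epsilon_m}^2} < \infty$, the $L^2$ martingale convergence theorem implies that $S_n = \sum_m \epsilon_m$ converges almost surely (and in $L^2$). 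The one genuinely substantive point is the second-moment estimate: the untruncated increments have infinite variance for large $\alpha$, so it is precisely the truncation at $M_m = \gamma_m^{-\beta}$, combined with $\beta < 1/2$ making $\gamma_m^2 M_m^2 = \gamma_m^{2(1-\beta)}$ summable, that restores square-integrability; the remainder is the routine martingale convergence argument.
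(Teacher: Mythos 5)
Your proof is correct and follows essentially the same route as the paper's: both identify $\{\epsilon_m\}$ as a martingale difference sequence with respect to the history up to the $m$th visit to $x^*$, bound the conditional second moment by $\textrm{Const}\times\gamma_m^{2(1-\beta)}$ via the truncation-based estimate of \cref{cor:lambtild_lowerbound} (with moments of $\tau_{x^*}$ controlled by \cref{lem:prelim}), and conclude by the $L^2$ martingale convergence theorem. Your write-up is in fact slightly more explicit than the paper's on the second-moment estimate and on the fixed-truncation case of \cref{thm:main}(i), but the underlying argument is identical.
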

\begin{proof}
	Let us write $\epsilon_m = \gamma_m(K_m(r_m)-k_m(r_m))$ where $K_m(r_m)$ is the stochastic version of $k_m(r_m)$. Notice that $\expect{}{\theta_m}{\epsilon_m|\mathcal{F}_m} = 0$ where $\mathcal{F}_m = \{\theta_0,\tilde{\Lambda}_0,\Phi_0,\Phi_1,\cdots,\Phi_{t_m}\}$ is the history of the Markov chain up to time-step $t_m$. Also, notice that by a similar argument as in \cref{cor:lambtild_lowerbound}, we have $\expect{}{\theta_m}{\norm{K_m(r_m)-k_m(r_m)}_2^2 | \mathcal{F}_m} \leq \textrm{Const}\times \gamma_m^{-2\beta}$, for some fixed $\textrm{Const} > 0$ independent of $m$.

	Let $w_n \coloneqq \sum_{m=1}^{n} \epsilon_m$. Since $\{\epsilon_m\}$ is a martingale difference sequence, $\{w_n\}_{n\geq0}$ is a martingale. Notice that
	\begin{align}
		\mathbb{E}\left[ \norm{w_n}^2 \right] = \mathbb{E}\left[\sum_{m=1}^{n} \norm{\epsilon_m}^2 \right] \leq \textrm{Const}\times \sum_{m=1}^{\infty} \gamma_m^{2(1-\beta)}.
	\end{align}
	which is bounded. Using the martingale convergence theorem, $w_n\to w_\infty$ with probability $1$, where $\mathbb{E}[w_\infty] < \infty$.
\end{proof}
\begin{corollary}\label[corollary]{cor:lambtild_upperbound}
	The sequence $\{\tilde{\Lambda}_m\}$ is bounded from above almost surely.
\end{corollary}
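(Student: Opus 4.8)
The plan is to exploit the fact that the $\tilde\Lambda$-update has a drift that turns strictly negative once $\tilde\Lambda_m$ rises above a uniform ceiling on the truncated costs, while the accumulated stochastic noise stays bounded. First I would set $\overline{\Lambda} \coloneqq \sup_{\theta\in\real^l}\Lambda_\theta$, which is finite by \cref{ass:2}. Since $\Lambda_\theta^{(M)} < \Lambda_\theta \le \overline{\Lambda}$ for every $\theta$ and $M>1$ (by \cref{cor:unifapprox_truncost} and the comparison $\Lambda_\theta^{(M)}<\Lambda_\theta$ noted in \cref{subsec:heuristic_approxnabla}), we have $\Lambda_{\theta_m}^{(M_m)} < \overline{\Lambda}$ for all $m$. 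Because $g^{(M)}(\theta,\cdot)$ is strictly decreasing in $\Lambda$ and $g^{(M)}(\theta,\Lambda_\theta^{(M)})=1$, it follows that whenever $\tilde\Lambda_m > \overline{\Lambda}$ we have $g_m(\theta_m,\tilde\Lambda_m) = g^{(M_m)}(\theta_m,\tilde\Lambda_m) < 1$, so the drift term $\eta\gamma_m\big(g_m(\theta_m,\tilde\Lambda_m)-1\big)$ is strictly negative. This negative-drift-above-a-ceiling mechanism is the crux.

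Next I would separate drift from noise and argue pathwise. By \cref{lem:sum_epsm} the partial sums $S_n \coloneqq \sum_{m=0}^{n-1}\epsilon_m(\Lambda)$ converge almost surely, so I would restrict attention to the probability-one event on which $\sup_n |S_n| \le B$ for some finite, sample-dependent $B$. On this event define the pure-drift process $V_n \coloneqq \tilde\Lambda_n - S_n$, so that $V_{n+1}-V_n = \eta\gamma_n\big(g_n(\theta_n,\tilde\Lambda_n)-1\big)$ and $\tilde\Lambda_n = V_n + S_n$. I would then bound the increment of $V_n$ by cases. If $\tilde\Lambda_n > \overline{\Lambda}$, the drift is negative by the previous paragraph, hence $V_{n+1} < V_n$. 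If instead $\tilde\Lambda_n \le \overline{\Lambda}$, then $V_n = \tilde\Lambda_n - S_n \le \overline{\Lambda} + B$, and the upward increment is controlled via \cref{cor:lambtild_lowerbound}: the $\Lambda$-component of $k_m$ is $\eta(g_m-1)$, so $V_{n+1}-V_n \le \gamma_n\|k_n(r_n)\| \le \textrm{Const}\,\gamma_n^{1-\beta} \le \textrm{Const}\,\gamma_{\max}^{1-\beta} \eqqcolon D$, which is finite because $\gamma_{\max}<1$.

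Combining the two cases yields $V_{n+1} \le \max\big(V_n,\ \overline{\Lambda}+B+D\big)$, so by induction $V_n \le \max\big(V_0,\ \overline{\Lambda}+B+D\big)$ for all $n$, a finite pathwise upper bound. Since $\tilde\Lambda_n = V_n + S_n \le V_n + B$, this shows $\tilde\Lambda_n$ is bounded above on the full-measure event, which is exactly the claim. The hard part will not be the negative-drift step itself but controlling the fact that the drift can be large and positive when $\tilde\Lambda_m$ sits well below the truncated cost; this is precisely why the single-step overshoot must be tamed by the $\gamma_m^{-\beta}$ growth bound of \cref{cor:lambtild_lowerbound} combined with the extra factor $\gamma_m$, producing a $\gamma_m^{1-\beta}$ increment that is uniformly bounded. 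A secondary subtlety is that the noise ceiling $B$ is random, so every estimate must be interpreted pathwise on the probability-one event furnished by \cref{lem:sum_epsm}.
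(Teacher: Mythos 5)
Your proof is correct, and it reaches the conclusion by a genuinely different mechanism than the paper's. The paper argues pathwise rather than in the mean: whenever $\tilde{\Lambda}_m \ge \alpha\overline{C}$, every summand $\alpha C_{\theta_m}(\Phi_i) - \tilde{\Lambda}_m$ is nonpositive, so the realized random variable satisfies $G_m(\theta_m,\tilde{\Lambda}_m) \le 1$ and the actual update, noise included, cannot increase; above $2\alpha\overline{C}$ the realized decrease is at least $\textrm{Const}\times\gamma_m$, so $\sum_m \gamma_m = \infty$ forces the iterate back below $2\alpha\overline{C}$, and the eventual per-step bounds $|\epsilon_m(\Lambda)| < 1$ (from \cref{lem:sum_epsm}) and $\eta\gamma_m\big(g_m(\theta_m,\tilde{\Lambda}_m) - 1\big) < 1$ (from \cref{cor:lambtild_lowerbound}) cap each upward jump by $2$, giving a ceiling $2\alpha\overline{C}+2$ valid after a random time. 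You instead locate the negative drift in the mean, via the fixed-point characterization ($g^{(M_m)}(\theta_m,\cdot)$ strictly decreasing with root $\Lambda_{\theta_m}^{(M_m)} < \Lambda_{\theta_m} \le \overline{\Lambda}$), and absorb all randomness at once into the almost-surely bounded partial sums $S_n$, running a one-line induction on $V_n = \tilde{\Lambda}_n - S_n$. Your route never needs the pathwise inequality $G_m \le 1$, never needs a quantitative decrease rate, and---notably---never uses $\sum_m \gamma_m = \infty$: your induction tolerates an arbitrary starting point, since $V_0 = \tilde{\Lambda}_0$ simply enters the final bound, whereas the paper needs divergent step sizes to pull a possibly large iterate back below $2\alpha\overline{C}$. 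What the paper's version buys in exchange is a deterministic ceiling (after a random index), while yours, $\max\big(\tilde{\Lambda}_0,\, \overline{\Lambda}+B+D\big)+B$, is random through $B = \sup_n |S_n|$; both are equally adequate for almost-sure boundedness from above. Finally, your dependencies---\cref{lem:sum_epsm}, \cref{cor:lambtild_lowerbound}, and hence \cref{lem:lambtild_lowerbound}---are exactly the ones the paper establishes before this corollary and none of them invokes the upper bound being proved, so there is no circularity.
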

\begin{proof}
	By \cref{lem:sum_epsm}, $\sum\epsilon_m(\Lambda)$ converges; hence, there exists $m_0 > 0$ such that $\left|\epsilon_m(\Lambda)\right| < 1$ for any $m>m_0$. Also, notice that by \cref{cor:lambtild_lowerbound}, $\gamma_m \left(g_m\left(\theta_m,\tilde{\Lambda}_m\right) - 1\right) \to 0$; hence, there exists $m_1 > m_0$ such that $\gamma_m \left(g_m\left(\theta_m,\tilde{\Lambda}_m\right) - 1\right) < \eta^{-1}$ for all $m>m_1$.

	Notice that if $\tilde{\Lambda}_{m} \geq \alpha \overline{C}$ then $\alpha C_\theta(x) - \tilde{\Lambda}_{m} \leq 0$ for all $x\in\samplespace$ and $G_m\left(\theta_m,\tilde{\Lambda}_m\right) \leq 1$, and hence $\tilde{\Lambda}_{m+1} \leq  \tilde{\Lambda}_{m}	$. In particular, if $\tilde{\Lambda}_{m} \geq 2\alpha \overline{C}$, then $\tilde{\Lambda}_{m} - \tilde{\Lambda}_{m+1} > \gamma_m \times \textrm{Const}$ for some fixed and deterministic constant $\textrm{Const} > 0$ independent of $m$. Since $\sum_i \gamma_i\to\infty$, there exists $m_2 \geq m_1$ such that $\tilde{\Lambda}_{m_2} < 2\alpha \overline{C}$. Notice that,
	\begin{align}
		\tilde{\Lambda}_{m+1} - \tilde{\Lambda}_m &= \eta \gamma_m \left(g_m\left(\theta_m,\tilde{\Lambda}_m\right) - 1\right) + \epsilon_m(\Lambda)
	\end{align}
	which is bounded by $2$ for all $m\geq m_2$. Hence, $\tilde{\Lambda}_{m} < 2\alpha \overline{C}+2$ for any $m>m_2$ and the result follows.
\end{proof}
\begin{remark}
	In case we use the update rule given by \cref{eq:algorithm_project}, the value of $\{\tilde{\Lambda}_m\}$ stays bounded. Since the error is negligible and $\gamma_m k_m(r_m) \to 0$, we can drop the projection after finitely many (random) iterations and the rest of the analysis is similar to the case when there is no projection.
\end{remark}
Next, we generalize the above argument to functions of $r_m$. Let $\psi_m(\cdot):\real^{l+1}\to \real$ be a family of functions indexed by $m\in\nplus$. We want to study the changes in $\psi_m(\cdot)$ based on the update rule of $r_m$. We call $\{\psi_m(\cdot)\}$ a ``family of Lyapunov functions'' if
$\{\psi_m(\cdot)\}$, $\{\nabla \psi_m(\cdot)\}$, and $\{\nabla^2 \psi_m(\cdot)\}$ are uniformly bounded over $(\theta,\Lambda)\in \real^l \times I$ for any interval $I = [a,b]$, and moreover,
\begin{align}
	\sum_{m=0}^\infty \sup_{r\in\real^l \times I} \psi_{m+1}(r) - \psi_m(r) < \infty
\end{align}

Given a family of Lyapunov functions $\{\psi_m(\cdot)\}$, let us rewrite $\psi_{m+1}(r_{m+1})$ as follows
\begin{align}
	\psi_{m+1}(r_{m+1}) &= \psi_{m+1}(r_{m+1}) - \psi_{m}(r_{m+1}) + \psi_{m}(r_{m+1})\\
	&=  \psi_{m+1}(r_{m+1}) - \psi_{m}(r_{m+1}) + \psi_m(r_m + \gamma_m k_m(r_m) + \epsilon_m)\\
	&= \psi_m(r_m) + \gamma_m \nabla\psi_m(r_m)\cdot k_m(r_m) + \Delta_m(\psi_m) + (\psi_{m+1}(r_{m+1}) - \psi_{m}(r_{m+1}))\label{eq:Lyapunovupdate},
\end{align}
where $\Delta_m(\psi_m) \coloneqq \psi_m(r_{m+1}) - \psi_m(r_m) - \gamma_m \nabla\psi_m(r_m)\cdot k_m(r_m)$. Notice that $\Delta_m(\psi_m)$ is related to the second term in Taylor expansion of $\psi_m(r_{m+1})$. We will show that for any family of Lyapunov functions $\{\psi_m(\cdot)\}$, the sequence $\{\Delta_m(\psi_m)\}$ is summable.
\begin{lemma}\label[lemma]{lem:epsneg_gen}
	For any family of Lyapunov functions $\{\psi_m(\cdot)\}$, $\sum_m \Delta_m(\psi_m)$ converges with probability $1$.
\end{lemma}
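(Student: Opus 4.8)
The plan is to Taylor-expand each $\psi_m$ to second order and split $\Delta_m(\psi_m)$ into a martingale-difference piece and an absolutely summable remainder. Writing the increment as $\delta_m \coloneqq r_{m+1} - r_m = \gamma_m k_m(r_m) + \epsilon_m$, a second-order Taylor expansion with Lagrange remainder gives, for some $\xi_m$ on the segment joining $r_m$ and $r_{m+1}$,
\begin{align}\label{eq:Deltasplit}
	\Delta_m(\psi_m) = \nabla\psi_m(r_m)\cdot\epsilon_m + \tfrac{1}{2}\,\delta_m^T \nabla^2\psi_m(\xi_m)\,\delta_m.
\end{align}
The key structural point to exploit is that $r_m = (\theta_m,\tilde{\Lambda}_m)$ is $\mathcal{F}_m$-measurable while $\expect{}{\theta_m}{\epsilon_m\mid\mathcal{F}_m} = 0$, so the first term is a martingale difference and the second is a remainder of size $O(\norm{\delta_m}^2)$. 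The one point that needs care throughout is that the uniform bounds on $\{\nabla\psi_m\}$ and $\{\nabla^2\psi_m\}$ only hold over $\real^l\times I$ for a bounded interval $I$, whereas $\{\tilde{\Lambda}_m\}$ is bounded only almost surely (by \cref{lem:lambtild_lowerbound,cor:lambtild_upperbound}); I will handle this by working with the a.s.-convergence criterion for martingales, which tolerates an a.s.-finite random bound.

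For the martingale piece, set $d_m \coloneqq \nabla\psi_m(r_m)\cdot\epsilon_m$. Since $\nabla\psi_m(r_m)$ is $\mathcal{F}_m$-measurable, the conditional second moment is bounded by $\expect{}{}{d_m^2\mid\mathcal{F}_m} \le \norm{\nabla\psi_m(r_m)}^2\,\expect{}{}{\norm{\epsilon_m}^2\mid\mathcal{F}_m}$. By \cref{lem:lambtild_lowerbound,cor:lambtild_upperbound}, on the probability-one event that $\{\tilde{\Lambda}_m\}$ is bounded, each $\tilde{\Lambda}_m$ lies in a fixed (random) bounded interval, so $\norm{\nabla\psi_m(r_m)}$ is bounded uniformly in $m$ by an a.s.-finite constant; combining this with the estimate $\expect{}{}{\norm{\epsilon_m}^2\mid\mathcal{F}_m}\le \textrm{Const}\times\gamma_m^{2(1-\beta)}$ established in the proof of \cref{lem:sum_epsm}, and using $\sum_m \gamma_m^{2(1-\beta)}<\infty$, we obtain $\sum_m \expect{}{}{d_m^2\mid\mathcal{F}_m} < \infty$ almost surely. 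The martingale-difference convergence theorem then yields that $\sum_m d_m$ converges almost surely.

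For the remainder, the uniform Hessian bound gives $\big|\tfrac12\delta_m^T\nabla^2\psi_m(\xi_m)\delta_m\big| \le \textrm{Const}\times\norm{\delta_m}^2$ (the $\Lambda$-coordinate of $\xi_m$ lies between $\tilde{\Lambda}_m$ and $\tilde{\Lambda}_{m+1}$, hence in the same bounded interval on the a.s. event above), and $\norm{\delta_m}^2 \le 2\gamma_m^2\norm{k_m(r_m)}^2 + 2\norm{\epsilon_m}^2$. By \cref{cor:lambtild_lowerbound}, $\gamma_m^2\norm{k_m(r_m)}^2 \le \textrm{Const}\times\gamma_m^{2(1-\beta)}$, which is summable deterministically; and since $\expect{}{}{\sum_m\norm{\epsilon_m}^2} \le \textrm{Const}\times\sum_m \gamma_m^{2(1-\beta)} < \infty$, the nonnegative series $\sum_m\norm{\epsilon_m}^2$ is finite almost surely. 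Hence $\sum_m \norm{\delta_m}^2 < \infty$ almost surely, so the remainder series in \cref{eq:Deltasplit} converges absolutely almost surely. Adding the two pieces proves the claim. I expect the main obstacle to be precisely the bookkeeping around the non-uniform derivative bounds: the control on $\nabla\psi_m$ and $\nabla^2\psi_m$ is only available on bounded $\Lambda$-intervals, so one must first invoke the almost-sure boundedness of $\{\tilde{\Lambda}_m\}$ to pin these quantities down, and then lean on the conditional-variance criterion rather than an $L^2$-bounded-martingale argument to avoid an explicit stopping-time localization.
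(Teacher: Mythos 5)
Your proof is correct and follows essentially the same route as the paper's: the same Taylor-expansion decomposition of $\Delta_m(\psi_m)$ into the martingale term $\nabla\psi_m(r_m)\cdot\epsilon_m$ and an $O(\norm{r_{m+1}-r_m}^2)$ remainder, the same summability bounds via \cref{cor:lambtild_lowerbound} and $\sum_m\gamma_m^{2(1-\beta)}<\infty$, and the same use of the almost-sure boundedness of $\{\tilde{\Lambda}_m\}$ to control the derivative bounds along the realized trajectory. The only difference is presentational: where you invoke the conditional-variance criterion for almost-sure convergence of the martingale part, the paper proves that step by hand with an explicit stopping-time localization (the random time $D(\overline{c})$), which is precisely the standard proof of the criterion you cite.
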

\begin{proof}
	Consider a sample path of $\{r_m\}$. By \cref{lem:lambtild_lowerbound} and \cref{cor:lambtild_upperbound}, with probability $1$ there exists $I\subset\real$ such that $\{r_m\} \subset \real^l \times I$. By Taylor expansion of $\psi_m(r_{m+1})$, we have
	\begin{align}
		\nabla\psi_m(r_m)\cdot \epsilon_m - D_m(I) \norm{r_{m+1} - r_m}^2 \leq \Delta_m(\psi_m) \leq \nabla\psi_m(r_m)\cdot \epsilon_m + D_m(I) \norm{r_{m+1} - r_m}^2
	\end{align}
	where $D_m(I)$ is the upper bound of $\{\nabla^2\psi_m(\cdot)\}$ over $\real^l\times I$. Notice that
	\begin{align}
		\norm{r_{m+1} - r_m}^2 \leq 2\gamma_m^2 \norm{k_m(r_m)}^2 + 2\norm{\epsilon_m}^2< \textrm{Const}\times \gamma_m^{2-2\beta}+2\norm{\epsilon_m}^2,
	\end{align}
	which is summable. Next, we show that $\{\nabla\psi_m(r_m)\cdot \epsilon_m\}$ is summable with probability $1$.

	For any fixed $\overline{c} \in\real_{>0}$, let us define $w_n(\overline{c}) \coloneqq \sum_{m=1}^{n\wedge D(\overline{c})} \nabla\psi_m(r_m)\cdot \epsilon_m$, where
	\begin{align}
		D(\overline{c}) \coloneq \inf\left\{m:\expect{}{}{\norm{\nabla\psi_m(r_m)\cdot \epsilon_m}^2|\mathcal{F}_m}> \overline{c}\gamma_m^{2(1-\beta)} \right\} - 1
	\end{align}
	is a constant that depends on the sequence $\{r_m\}$. Notice that $\{w_n(\overline{c})\}$ is a martingale difference sequence with bounded second moments; hence, it converges. Since $\{r_m\} \subset \real^l \times I$ for some $I\in\real$ with probability $1$, and $\expect{}{}{\norm{\epsilon_m}^2|\mathcal{F}_m}\leq \text{Const}\times \gamma_m^{2(1-\beta)}$, for almost every sample path there exists a large enough $\overline{c}\in\real_{>0}$ such that
	$D(\overline{c}) = \infty$. Hence, $\nabla\psi_m(r_m)\cdot \epsilon_m$ is summable and the result follows.
\end{proof}

For the rest of the proof, we consider a fixed sample path for which the stochastic noise is summable, including $\epsilon_m$ as well as $\Delta_m(\psi_m)$ for any family of Lyapunov functions $\{\psi_m\}$ that will be considered below. Notice that with probability $1$ all these errors are summable, hence, almost every sample path is considered.

Our analysis in the rest of the proof is based on the Taylor expansion of function $g_m(\theta,\cdot)$ and $f_m(\theta,\cdot)$ around the corresponding fixed point $\Lambda_\theta^{(M_m)}$. To ensure the error term of the Taylor expansion is bounded, we need the following result. This is the reason why we needed that many terms in the definition of $g^{M}(\theta,\Lambda)$.
\begin{corollary}\label[corollary]{cor:deltagmfm}
	There exists $\delta_1 < \delta_0$ for which $\frac{\dertwo{f^{(M)}}(\theta,\Lambda)}{\der{\Lambda^2}}$ and $\frac{\derthree{f^{(M)}}(\theta,\Lambda)}{\der{\Lambda^3}}$ are uniformly bounded over $\theta\in\real^l$, $\Lambda_\theta - \Lambda < \delta_1$, and $M>1$.
\end{corollary}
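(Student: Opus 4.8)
The plan is to push the $\Lambda$-derivatives inside the expectation $f^{(M)}(\theta,\Lambda) = \expect{\Phi_0=x^*}{\theta}{F^{(M)}(\theta,\Lambda)}$ — legitimate because the law of the path under $P_\theta$ is $\Lambda$-free — and then dominate $\partial_\Lambda^2 F^{(M)}$ and $\partial_\Lambda^3 F^{(M)}$ pathwise, uniformly in $M>1$ and $\theta$, by $\text{Const}\cdot \tau_{x^*}^4\,H(\theta,\Lambda)$. The argument rests on one structural observation. Writing $u \coloneqq \ln(M^{-1}H(\theta,\Lambda)) = \alpha\sum_{i=0}^{\tau_{x^*}-1} C_\theta(\Phi_i) - \tau_{x^*}\Lambda - \ln M$, the variable $u$ is affine in $\Lambda$ with $\partial_\Lambda u = -\tau_{x^*}$, while $\nabla_\theta u = \alpha\sum_i \nabla_\theta C_\theta(\Phi_i)$ is independent of $\Lambda$ and obeys $\norm{\nabla_\theta u}\le \text{Const}\cdot \tau_{x^*}$ by \cref{ass:2}. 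Hence each $\partial_\Lambda$ applied to $G^{(M)}$ or to $\nabla_\theta G^{(M)}$ simply lowers the degree of the retained Taylor polynomial by one and multiplies by $-\tau_{x^*}$.

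First I would record the pointwise derivatives. On $\{H\le M\}$ we have $G^{(M)}=H$ and $\partial_\Lambda^k H = (-\tau_{x^*})^k H$, while on $\{H>M\}$ we have $G^{(M)}=M\sum_{i=0}^4 u^i/i!$ and each $\partial_\Lambda$ peels off the lowest term, so $\partial_\Lambda^k G^{(M)} = (-\tau_{x^*})^k M\sum_{i=0}^{4-k} u^i/i!$ for $k\le 4$. Because $G^{(M)}$ matches $Me^u$ together with its first four $u$-derivatives at $u=0$, it is $C^4$ across $\{H=M\}$, so differentiating through the indicators produces no singular boundary term; this is precisely why five Taylor terms are kept, since $\partial_\Lambda^3\nabla_\theta G^{(M)}$ is a fourth-order mixed derivative of $G^{(M)}$ and needs $G^{(M)}\in C^4$ (four terms would leave it only $C^3$). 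Differentiating $F^{(M)} = G^{(M)}\sum_i L_\theta(\Phi_i,\Phi_{i+1}) + \nabla_\theta G^{(M)}$ then gives, for $k\in\{2,3\}$, the term $(\partial_\Lambda^k G^{(M)})\sum_i L_\theta(\Phi_i,\Phi_{i+1})$ together with $\partial_\Lambda^k \nabla_\theta G^{(M)} = (-\tau_{x^*})^k M\,(\nabla_\theta u)\sum_{i=0}^{3-k} u^i/i!$ on $\{H>M\}$ (and the analogous $(-\tau_{x^*})^k(\nabla_\theta u)H$ on $\{H\le M\}$), using that $\nabla_\theta u$ and $\sum_i L_\theta$ are $\Lambda$-free.

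The key estimate — the place where the explicit factor $M$ is tamed — is that on $\{H>M\}$ one has $u>0$, so every retained partial sum satisfies $\sum_{i=0}^{j} u^i/i! \le e^u$, whence $M\sum_{i=0}^{j} u^i/i! \le M e^u = H$. Combining this with $\big|\sum_i L_\theta(\Phi_i,\Phi_{i+1})\big|\le \text{Const}\cdot\tau_{x^*}$ (\cref{ass:3}(a)) and $\norm{\nabla_\theta u}\le \text{Const}\cdot\tau_{x^*}$ (\cref{ass:2}), both contributions are bounded by $\text{Const}\cdot\tau_{x^*}^4\,H(\theta,\Lambda)$, uniformly in $M>1$ and $\theta$; the same bound holds trivially on $\{H\le M\}$. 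This removes all dependence on $M$.

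It then remains to bound $\expect{\Phi_0=x^*}{\theta}{\tau_{x^*}^4\,H(\theta,\Lambda)}$ uniformly. Using the change of measure underlying \cref{eq:defg} (over an excursion, $H(\theta,\Lambda_\theta)$ is the Radon–Nikodym derivative of $\check P_\theta$ with respect to $P_\theta$, and $H(\theta,\Lambda)=H(\theta,\Lambda_\theta)\,e^{\tau_{x^*}(\Lambda_\theta-\Lambda)}$), we get $\expect{\Phi_0=x^*}{\theta}{\tau_{x^*}^4\,H(\theta,\Lambda)} = \checkexpect{\check{\Phi}_0=x^*}{\theta}{\tau_{x^*}^4\,R^{\tau_{x^*}}}$ with $R\coloneqq e^{\Lambda_\theta-\Lambda}$. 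On $\Lambda_\theta-\Lambda<\delta_1$ we have $R<e^{\delta_1}$, and absorbing the polynomial factor via $\tau_{x^*}^4\,R^{\tau_{x^*}}\le \text{Const}\,(R')^{\tau_{x^*}}$ for any $R'>R$, the expectation is at most $\text{Const}\cdot\checkexpect{\check{\Phi}_0=x^*}{\theta}{(R')^{\tau_{x^*}}}$, which is bounded uniformly over $\theta$ by \cref{cor:prelim} provided $R'<\overline{R}_{x^*}$. I would therefore fix $\delta_1<\delta_0$ small enough that $e^{\delta_1}$ (with the margin needed for the polynomial absorption) stays below $\overline{R}_{x^*}$. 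A final dominated-convergence argument — legitimate since $\text{Const}\cdot\tau_{x^*}^4 H$ has finite, uniformly bounded expectation — justifies interchanging $\der{}/\der{\Lambda}$ with the expectation and delivers the stated uniform bounds on $\dertwo{f^{(M)}}/\der{\Lambda^2}$ and $\derthree{f^{(M)}}/\der{\Lambda^3}$. The main obstacle is exactly the uncontrolled factor $M$, and the whole difficulty is resolved by the elementary exponential-tail inequality $M\sum_{i=0}^{j} u^i/i!\le H$ on $\{H>M\}$; this is why the smooth truncation is built from the truncated exponential series, and why three $\Lambda$-derivatives plus one $\theta$-gradient force exactly five retained terms.
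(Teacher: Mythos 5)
Your proposal is correct and takes essentially the same route as the paper: the paper's proof of \cref{cor:deltagmfm} simply invokes the argument of \cref{lem:gradgfgmfm} --- interchange derivative and expectation (justified by \cref{ass:2,ass:3} and \cref{lem:prelim}), bound the differentiated integrand pathwise by $\textrm{Const}\times\tau_{x^*}^4\, H(\theta,\Lambda)$, and absorb the polynomial factor in $\tau_{x^*}$ into a slightly enlarged exponential that \cref{cor:prelim} (via the twisted kernel) keeps uniformly bounded once $\delta_1$ is chosen small enough. Your write-up in fact supplies the details the paper dismisses as ``simple algebra'': the $C^4$ matching of $G^{(M)}$ across $\{H=M\}$ (the precise reason five Taylor terms are retained) and the inequality $M\sum_{i=0}^{j}\frac{1}{i!}\bigl(\ln(M^{-1}H)\bigr)^i\leq H$ on $\{H>M\}$, which is what removes the uncontrolled factor of $M$.
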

\begin{proof}
	The proof follows by \cref{lem:gradgfgmfm} and simple algebra. In particular, one can calculate $\frac{\dertwo{f^{(M)}}(\theta,\Lambda)}{\der{\Lambda^2}}$ and $\frac{\derthree{f^{(M)}}(\theta,\Lambda)}{\der{\Lambda^3}}$ interchanging expectation and derivative (governed by \cref{ass:2,ass:3}, and \cref{lem:prelim}) and follow a similar argument as in the proof of \cref{lem:gradgfgmfm}.
\end{proof}

Notice that by \cref{lem:gradgfgmfm} and \cref{cor:deltagmfm}, $\frac{\der{g_m}(r)}{\der{\Lambda}}$, $\frac{\dertwo{g_m}(r)}{\der{\Lambda^2}}$, and $\frac{\der{f_m}(r)}{\der{\Lambda}}$, $\frac{\dertwo{f_m}(r)}{\der{\Lambda^2}}$, $\frac{\derthree{f_m}(r)}{\der{\Lambda^3}}$ are all uniformly bounded for any $r=(\theta,\Lambda)$ such that $\Lambda_\theta - \Lambda < \delta_0$. Hence, for $\Lambda_\theta - \Lambda < \delta_0$ we can write
\begin{align}
	\begin{aligned}
		g_m(\theta,\Lambda) &= 1+\frac{\der{g_m}(\theta,\Lambda_\theta^{(M_m)})}{\der{\Lambda}}\left(\Lambda - \Lambda^{(M_m)}_\theta\right) + O\left(\left(\Lambda - \Lambda^{(M_m)}_\theta\right)^2\right)
		\\
		f_m(\theta,\Lambda) &= -\frac{\der{g_m(\theta,\Lambda_\theta^{(M_m)})}}{\der{\Lambda}} \nabla_\theta \Lambda_\theta^{(M_m)} +  \frac{\der{f_m}(\theta,\Lambda_\theta^{(M_m)})}{\der{\Lambda}}\left(\Lambda - \Lambda^{(M_m)}_\theta\right)\\
		&\qquad +  \frac{\dertwo{f_m}(\theta,\Lambda_\theta^{(M_m)})}{\der{\Lambda^2}}\left(\Lambda - \Lambda^{(M_m)}_\theta\right)^2 + O\left(\left(\Lambda - \Lambda^{(M_m)}_\theta\right)^3\right) \\
	\end{aligned}\label{eq:taylorexp}
\end{align}
Finally, recall that by \cref{cor:unfbound_lambdagrad}, $\nabla_\theta \Lambda^{(M_m)}_\theta$ and $\nabla^2_\theta \Lambda^{(M_m)}_\theta$ are uniformly bounded.

\subsection*{Step I, Convergence of $\Lambda_{\theta_m}$:} Let us first introduce the two family of Lyapunov functions that will be used to establish the convergence of $\Lambda_{\theta_m}$. Each family of Lyapunov functions is applied to a different region characterized by $(\tilde{\Lambda}_m,\Lambda^{(M_m)}_{\theta_m})$. The first family of Lyapunov functions is $\phi_m(\theta,\Lambda) = \Lambda-\Lambda^{(M_m)}_\theta$ which gives positive `drift' in the region where $0 < \Lambda^{(M_m)}_\theta - \Lambda < \zeta_1$ for sufficiently small $\zeta_1 > 0$. The second family of Lyapunov functions is $\psi_m(\theta,\Lambda) = \Lambda^{(M_m)}_\theta + (6\zeta_2)^{-1}(\Lambda - \Lambda^{(M_m)}_\theta)^2$ which gives negative drift in the region where $|\Lambda - \Lambda_\theta| \leq \zeta_2$ for sufficiently small $\zeta_2 > 0$ and all large enough $m$.

As the first step, we show that the value of $\Lambda_{\theta_m} - \tilde{\Lambda}_m$ gets close to $0$ infinitely many times.
\begin{lemma}\label[lemma]{lem:liminf_abs}
	$\liminf_{m\to\infty} \left|\Lambda_{\theta_m} - \tilde{\Lambda}_m\right| = 0$
\end{lemma}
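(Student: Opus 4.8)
The plan is to argue by contradiction. Suppose $\liminf_{m\to\infty}\big|\Lambda_{\theta_m}-\tilde\Lambda_m\big|=2\delta>0$, so that $|d_m|>\delta$ for all $m$ past some $m_0$, where $d_m\coloneqq\tilde\Lambda_m-\Lambda_{\theta_m}$. The first observation is that the one-step increments vanish: \cref{cor:lambtild_lowerbound} gives $\norm{k_m(r_m)}\le\mathrm{Const}\times\gamma_m^{-\beta}$, hence $\gamma_m\norm{k_m(r_m)}=O(\gamma_m^{1-\beta})\to0$, while $\norm{\epsilon_m}\to0$ because $\sum_m\epsilon_m$ converges (\cref{lem:sum_epsm}); combined with the uniform boundedness of $\nabla_\theta\Lambda_\theta$ (\cref{cor:unfbound_lambdagrad}) this yields $|d_{m+1}-d_m|\to0$. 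Since the admissible set $\{|d|>\delta\}$ splits into the two components $\{d>\delta\}$ and $\{d<-\delta\}$, separated by a gap of width $2\delta$, and $d_m$ remains in this set while its increments eventually drop below $2\delta$, the sign of $d_m$ is eventually constant. It therefore suffices to derive a contradiction in each of the two cases.

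In both cases I would control the drift of $\tilde\Lambda_m$ through its update $\tilde\Lambda_{m+1}-\tilde\Lambda_m=\eta\gamma_m\big(g_m(\theta_m,\tilde\Lambda_m)-1\big)+\epsilon_m(\Lambda)$, the key being to bound $g_m(\theta_m,\tilde\Lambda_m)-1$ away from $0$ with a definite sign, uniformly in $m$. Recall that $g_m=g^{(M_m)}$ is strictly decreasing in $\Lambda$ with fixed point $\Lambda^{(M_m)}_{\theta_m}$, and that $M_m\uparrow\infty$ forces $\Lambda^{(M_m)}_{\theta_m}\to\Lambda_{\theta_m}$ uniformly (\cref{cor:unifapprox_truncost}), so for large $m$ the quantity $\tilde\Lambda_m-\Lambda^{(M_m)}_{\theta_m}$ keeps the sign of $d_m$ and has magnitude at least $\delta/2$. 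If $d_m>\delta$, I would evaluate $g_m$ at the point $\Lambda^{(M_m)}_{\theta_m}+\tfrac12\min(\delta,\delta_0)$, which lies in the Taylor regime of \cref{eq:taylorexp}; the first-order term, made uniformly negative by \cref{lem:uniflowerbd_dg} with remainder controlled by the uniform bound on $\dertwo{g_m}/\der{\Lambda^2}$, gives a value $\le1-\kappa$ for some $\kappa>0$ independent of $m$, and monotonicity then yields $g_m(\theta_m,\tilde\Lambda_m)\le1-\kappa$. The case $d_m<-\delta$ is symmetric and gives $g_m(\theta_m,\tilde\Lambda_m)\ge1+\kappa'$.

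Summing the update from some $m_1$ onward and using $\sum_m\gamma_m=\infty$ together with the convergence of $\sum_m\epsilon_m(\Lambda)$, the case $d_m>\delta$ forces $\tilde\Lambda_m\to-\infty$, contradicting the deterministic lower bound $\underline\Lambda$ of \cref{lem:lambtild_lowerbound}, while the case $d_m<-\delta$ forces $\tilde\Lambda_m\to+\infty$, contradicting the upper bound of \cref{cor:lambtild_upperbound}. Either way we reach a contradiction, establishing $\liminf_{m\to\infty}|\Lambda_{\theta_m}-\tilde\Lambda_m|=0$. The step I expect to be the crux is the uniform sign-definite bound on $g_m-1$ in the second paragraph: the constants $\kappa,\kappa'$ must be valid simultaneously for every $\theta\in\real^l$ and every truncation level $M_m$, which is precisely where the uniform positivity of $-\der{g_m}/\der{\Lambda}$, the uniform second-derivative bound, and the uniform approximation $\Lambda^{(M_m)}_\theta\to\Lambda_\theta$ are indispensable—were $\kappa$ to degrade to $0$ along a subsequence of truncations, the divergence argument would collapse.
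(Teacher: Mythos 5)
Your proposal is correct and follows essentially the same route as the paper's own proof: argue by contradiction, note that the vanishing increments force the sign of $\tilde{\Lambda}_m-\Lambda_{\theta_m}$ to be eventually constant, bound $g_m(\theta_m,\tilde{\Lambda}_m)-1$ uniformly away from zero with a definite sign using \cref{lem:uniflowerbd_dg} together with the uniform approximation $\Lambda^{(M_m)}_{\theta}\to\Lambda_\theta$, and then let the telescoped update with $\sum_m\gamma_m=\infty$ and summable noise drive $\tilde{\Lambda}_m$ to $\pm\infty$, contradicting \cref{lem:lambtild_lowerbound} or \cref{cor:lambtild_upperbound}. The only cosmetic differences are that the paper obtains the uniform gap via the mean value theorem on the interval between $\tilde{\Lambda}_m$ and $\Lambda^{(M_m)}_{\theta_m}$ (which avoids your Taylor-plus-monotonicity step and the attendant need to take the evaluation offset small enough for the first-order term to dominate the second-order remainder), and that it treats only one sign case, leaving the other as symmetric, whereas you correctly spell out that the two cases contradict different boundedness results.
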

\begin{proof}
	For the sake of contradiction, suppose that $\liminf \left|\Lambda_{\theta_m} - \tilde{\Lambda}_m\right| > 0$. Hence, there exists $\varepsilon> 0$ such that $\left|\Lambda_{\theta_m} - \tilde{\Lambda}_m\right| > \varepsilon$ for all large $m$. Since $\left|\theta_{m+1} - \theta_m\right|\to0$ and $\nabla_\theta\Lambda_{\theta}$ is uniformly bounded, we have $\left|\Lambda_{\theta_{m+1}} - \Lambda_{\theta_m}\right| \to 0$. We also have $\left|\tilde{\Lambda}_{m+1} - \tilde{\Lambda}_m\right| \to 0$. Hence, either $\Lambda_{\theta_m} - \tilde{\Lambda}_m > \varepsilon$ for all large enough $m$, or $\Lambda_{\theta_m} - \tilde{\Lambda}_m < -\varepsilon$ for all large enough $m$. Without loss of generality suppose that $\Lambda_{\theta_m} - \tilde{\Lambda}_m > \varepsilon$ for all large enough $m$. By \cref{lem:unifapprox_truncost}, for all large $m$ we have $\Lambda_{\theta_m} - \Lambda^{(M_m)}_{\theta_m} < \varepsilon/2$. Hence, for all large enough $m$, we have $\Lambda^{(M_m)}_{\theta_m} - \tilde{\Lambda}_m > \varepsilon/2$.

	Notice that there exists $\Lambda \in [\tilde{\Lambda}_m,\Lambda^{(M_m)}_{\theta_m}]$ such that
	\begin{align}
		0 < g_m(\theta_m,\tilde{\Lambda}_m) - g_m(\theta_m,\Lambda^{(M_m)}_{\theta_m}) = \left(\Lambda^{(M_m)}_{\theta_m} - \tilde{\Lambda}_m\right) \times\left|\frac{\der{g_m(\theta_m,\Lambda)}}{\der{\Lambda}}\right|.
	\end{align}
	By \cref{lem:uniflowerbd_dg}, $\left|\frac{\der{g^{(M)}(\theta,\Lambda)}}{\der{\Lambda}}\right|$ is uniformly bounded away from $0$ over $\theta\in\real^l$, $M\in(1,\infty)$ and $\Lambda\in [\underline{\Lambda},2\alpha\overline{C}+2]$. Hence, $g_m(\theta_m,\tilde{\Lambda}_m)$ is uniformly bounded away from $1$ for all large values of $m$, and we have
	\begin{align}
		\liminf_{m\to\infty} g_m(\theta_m,\tilde{\Lambda}_m) - 1 > 0.
	\end{align}

	Contradiction follows by the fact that given the above inequality, the right-hand side of the following equality goes to $+\infty$ as $k\to\infty$ for all large values of $m$:
	\begin{align}
		\tilde{\Lambda}_{m+k} = \tilde{\Lambda}_m + \eta \sum_{i=m}^{m+k-1} \gamma_i \left(g_i\left(\theta_i,\tilde{\Lambda}_i\right) - 1\right) + \sum_{i=m}^{m+k-1} \epsilon_i(\Lambda).
	\end{align}
\end{proof}
Next, we show that $\liminf_{m\to\infty} \tilde{\Lambda}_m - \Lambda_{\theta_m} \geq 0$, using $\{\phi_m(\cdot)\}$ as the family of Lyapunov functions. We first show that $\phi_m(\theta,\Lambda)$ gives positive drift as long as $0 < \Lambda^{(M_m)}_\theta - \Lambda  < \zeta_1$ for sufficiently small $\zeta_1 > 0$. We then show that this drift will keep $\phi_m(r_m)$ above zero most often, \ie, $\liminf_{m\to\infty} \phi_m(r_m) \geq 0$.

\begin{lemma}\label[lemma]{lem:posdrift_phi}
	$\phi_m(\theta,\Lambda) =\Lambda - \Lambda^{(M_m)}_\theta$ is a family of Lyapunov functions. Moreover, there exists small enough $\zeta_1>0$ independent of $m$ such that for all  $0 < \Lambda^{(M_m)}_\theta - \Lambda  < \zeta_1$ we have
	\begin{align}
		\nabla \phi_m(\theta,\Lambda) \cdot k_m(\theta,\Lambda) > 0.
	\end{align}
\end{lemma}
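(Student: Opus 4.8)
The plan is to treat the claim in two parts: first that $\phi_m(\theta,\Lambda)=\Lambda-\Lambda^{(M_m)}_\theta$ is a family of Lyapunov functions, and then the drift inequality. Since $\nabla_\theta\phi_m=-\nabla_\theta\Lambda^{(M_m)}_\theta$ and $\partial_\Lambda\phi_m=1$, the uniform boundedness of $\{\phi_m\}$, $\{\nabla\phi_m\}$ and $\{\nabla^2\phi_m\}$ over $\real^l\times I$ follows at once from the boundedness of $\Lambda^{(M_m)}_\theta$ (it lies between the hard-truncated cost and $\Lambda_\theta$, both bounded) together with the uniform bounds on $\nabla_\theta\Lambda^{(M_m)}_\theta$ and $\nabla^2_\theta\Lambda^{(M_m)}_\theta$ from \cref{cor:unfbound_lambdagrad}. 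For the summability requirement, note that $\phi_{m+1}(r)-\phi_m(r)=\Lambda^{(M_m)}_\theta-\Lambda^{(M_{m+1})}_\theta$ does not depend on $\Lambda$; since $\Lambda^{(M)}_\theta$ is nondecreasing in $M$ and $M_m\uparrow$, each term is nonpositive, and $\sum_m\sup_\theta\big|\Lambda^{(M_{m+1})}_\theta-\Lambda^{(M_m)}_\theta\big|<\infty$ by \cref{ass:4}. Hence $\{\phi_m\}$ qualifies.

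For the drift, the key computation is $\nabla\phi_m(\theta,\Lambda)\cdot k_m(\theta,\Lambda)=\nabla_\theta\Lambda^{(M_m)}_\theta\cdot f_m(\theta,\Lambda)+\eta\,(g_m(\theta,\Lambda)-1)$. I would substitute the Taylor expansions of $g_m$ and $f_m$ around $\Lambda^{(M_m)}_\theta$ from \cref{eq:taylorexp}, using the sensitivity identity $f_m(\theta,\Lambda^{(M_m)}_\theta)=-\frac{\der g_m(\theta,\Lambda^{(M_m)}_\theta)}{\der\Lambda}\nabla_\theta\Lambda^{(M_m)}_\theta$. Writing $v\coloneqq\nabla_\theta\Lambda^{(M_m)}_\theta$, $d\coloneqq\Lambda-\Lambda^{(M_m)}_\theta$ (so that $-\zeta_1<d<0$), and $A\coloneqq\frac{\der g_m(\theta,\Lambda^{(M_m)}_\theta)}{\der\Lambda}$, this yields
\begin{align*}
  \nabla\phi_m\cdot k_m = -A\norm{v}^2 + \big(v\cdot\tfrac{\der f_m}{\der\Lambda}+\eta A\big)\,d + O(d^2),
\end{align*}
where the $O(d^2)$ coefficient is uniformly bounded thanks to \cref{cor:deltagmfm}.

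Positivity then rests on signs and uniform constants. By \cref{lem:uniflowerbd_dg}, $-A\geq c_0>0$ uniformly in $\theta$ and $M_m$, so $-A\norm{v}^2\geq c_0\norm{v}^2\geq0$ and $\eta A d=\eta|A|\,|d|\geq\eta c_0|d|>0$: both leading contributions are favorable. The only adverse terms are the cross term $(v\cdot\frac{\der f_m}{\der\Lambda})d$ and the $O(d^2)$ remainder. Bounding $\norm{\frac{\der f_m}{\der\Lambda}}\leq C_1$ uniformly and applying Young's inequality, $C_1\norm{v}|d|\leq\frac{c_0}{2}\norm{v}^2+\frac{C_1^2}{2c_0}d^2$, I would absorb the cross term to obtain
\begin{align*}
  \nabla\phi_m\cdot k_m \geq \tfrac{c_0}{2}\norm{v}^2 + |d|\big(\eta c_0 - C_3|d|\big),
\end{align*}
with $C_3$ a uniform constant collecting $\frac{C_1^2}{2c_0}$ and the $O(d^2)$ bound. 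Choosing $\zeta_1\coloneqq\min\{\delta_0,\eta c_0/(2C_3)\}$ makes the bracket at least $\eta c_0/2>0$, so the drift is $\geq\frac{c_0}{2}\norm{v}^2+\frac{\eta c_0}{2}|d|>0$ since $|d|>0$ in the given region, and $\zeta_1$ depends only on the uniform constants $c_0,C_1,C_3$ and $\eta$.

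The main obstacle is a domain issue: \cref{eq:taylorexp} is valid only where $\Lambda_\theta-\Lambda<\delta_0$, whereas the region under consideration is controlled only through $\Lambda^{(M_m)}_\theta-\Lambda<\zeta_1$, and since $\Lambda^{(M_m)}_\theta\leq\Lambda_\theta$ we have $\Lambda_\theta-\Lambda=(\Lambda_\theta-\Lambda^{(M_m)}_\theta)+d$. I would close this gap using $M_m\uparrow\infty$ together with the uniform approximation $\sup_\theta(\Lambda_\theta-\Lambda^{(M_m)}_\theta)\to0$ from \cref{cor:unifapprox_truncost}: picking $m_0$ with $\sup_\theta(\Lambda_\theta-\Lambda^{(M_m)}_\theta)<\delta_0/2$ for $m\geq m_0$ and $\zeta_1\leq\delta_0/2$ guarantees $\Lambda_\theta-\Lambda<\delta_0$ throughout the region, so the expansions apply for all large $m$ — which suffices, since the lemma feeds into a $\liminf$ argument where only large $m$ matter. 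The remaining work is pure bookkeeping: verifying that $c_0$, the bound on $\frac{\der f_m}{\der\Lambda}$, and the $O(d^2)$ coefficient are all uniform in $m$ \emph{and} $\theta$, which is exactly the content of \cref{lem:uniflowerbd_dg,cor:unfbound_lambdagrad,cor:deltagmfm}.
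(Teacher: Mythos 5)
Your proof is correct and follows essentially the same route as the paper's: expand $g_m$ and $f_m$ around $\Lambda^{(M_m)}_\theta$ via \cref{eq:taylorexp}, use the uniform bound $-\frac{\der{g_m}(\theta,\Lambda_\theta^{(M_m)})}{\der{\Lambda}} \geq c_0 > 0$ from \cref{lem:uniflowerbd_dg} so that both leading terms are favorable, and absorb the adverse cross term by a uniform-constant argument---your Young's-inequality split $C_1\norm{v}|d| \leq \tfrac{c_0}{2}\norm{v}^2 + \tfrac{C_1^2}{2c_0}d^2$ plays exactly the role of the paper's $|d|^{0.5}/|d|^{1.5}$ splitting, and the Lyapunov-family verification matches the paper's appeal to \cref{lem:gradgfgmfm}, \cref{cor:unfbound_lambdagrad}, and the assumption on $\{M_m\}$. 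The one point where you go beyond the paper is in flagging that \cref{eq:taylorexp} is only justified where $\Lambda_\theta - \Lambda < \delta_0$ and closing that gap with \cref{cor:unifapprox_truncost} by restricting to large $m$; the paper applies the expansion without comment, and your large-$m$ restriction is harmless since the lemma is only invoked along tails of sequences in \cref{lem:liminf_phi}.
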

\begin{proof}
	By \cref{lem:gradgfgmfm} and the assumption on$\{M_m\}_{m\geq0}$, $\{\phi_m(\cdot)\}$ is a family of Lyapunov functions. Using the Taylor expansion given by \cref{eq:taylorexp}, we have
	\begin{align}
		&\nabla \phi_m(\theta,\Lambda) \cdot k_m(\theta,\Lambda)\allowdisplaybreaks\\
		&\myquad[3]= -\frac{\der{g_m(\theta,\Lambda_\theta^{(M_m)})}}{\der{\Lambda}} \norm{\nabla_\theta \Lambda_\theta^{(M_m)}}^2 \\
		&\myquad[5]+ \nabla_\theta \Lambda_\theta^{(M_m)} \cdot \frac{\der{f_m}(\theta,\Lambda_\theta^{(M_m)})}{\der{\Lambda}}\left(\Lambda - \Lambda^{(M_m)}_\theta\right) + O\left(\left(\Lambda - \Lambda^{(M_m)}_\theta\right)^2\right)\\
		&\myquad[5]+\eta\frac{\der{g_m}(\theta,\Lambda_\theta^{(M_m)})}{\der{\Lambda}}\left(\Lambda - \Lambda^{(M_m)}_\theta\right) + O\left(\left(\Lambda - \Lambda^{(M_m)}_\theta\right)^2\right)\allowdisplaybreaks\\
		&\myquad[3]\geq -\frac{\der{g_m(\theta,\Lambda_\theta^{(M_m)})}}{\der{\Lambda}} \norm{\nabla_\theta \Lambda_\theta^{(M_m)}}^2 \allowdisplaybreaks\\
		&\myquad[5]- \left(\norm{\nabla_\theta \Lambda_\theta^{(M_m)}}^2\left(\Lambda^{(M_m)}_\theta - \Lambda \right)^{0.5} + \norm{\frac{\der{f_m}(\theta,\Lambda_\theta^{(M_m)})}{\der{\Lambda}}}^2\left(\Lambda^{(M_m)}_\theta - \Lambda \right)^{1.5}\right) \\
		&\myquad[5]-\eta\frac{\der{g_m}(\theta,\Lambda_\theta^{(M_m)})}{\der{\Lambda}}\left(\Lambda^{(M_m)}_\theta - \Lambda \right) + O\left(\left(\Lambda - \Lambda^{(M_m)}_\theta\right)^2\right)\allowdisplaybreaks\\
		&\myquad[3]= \norm{\nabla_\theta \Lambda_\theta^{(M_m)}}^2 \left(-\left(\Lambda^{(M_m)}_\theta - \Lambda \right)^{0.5}-\frac{\der{g_m(\theta,\Lambda_\theta^{(M_m)})}}{\der{\Lambda}} \right) + \left(\Lambda^{(M_m)}_\theta - \Lambda \right)\times \\
		&\myquad[5]\left(-\eta\frac{\der{g_m}(\theta,\Lambda_\theta^{(M_m)})}{\der{\Lambda}} - \norm{\frac{\der{f_m}(\theta,\Lambda_\theta^{(M_m)})}{\der{\Lambda}}}^2\left(\Lambda^{(M_m)}_\theta - \Lambda \right)^{0.5} + O\left(\Lambda - \Lambda^{(M_m)}_\theta\right)\right).
	\end{align}
	Notice that by \cref{lem:uniflowerbd_dg}, $- \frac{\der{g_m}(\theta,\Lambda_\theta^{(M_m)})}{\der{\Lambda}}$ is positive and is bounded away from zero, uniformly over $\theta\in\real^l$ and $m\in\nplus$. Hence, by \cref{lem:gradgfgmfm}, we can pick $\zeta_1$ small enough so that $\nabla \phi_m(\theta,\Lambda) \cdot k_m(\theta,\Lambda) > 0$ given $0 < \Lambda^{(M_m)}_\theta - \Lambda  < \zeta_1$.
\end{proof}
\begin{corollary}\label[corollary]{cor:posdrift_phi}
	Suppose that for all $i\in\{m,m+1,\cdots,n\}$, we have $0< \Lambda^{(M_i)}_{\theta_i} - \tilde{\Lambda}_i < \zeta_1$. Then, $\phi_{n}(r_{n}) \leq \phi_{m}(r_{m}) + \sum_{i=m}^{n-1}\Delta_{i}(\phi_i) + \sum_{i=m}^{n-1}\Lambda_{\theta_{i+1}}^{(M_{i+1})} - \Lambda_{\theta_{i+1}}^{(M_{i})}$.
\end{corollary}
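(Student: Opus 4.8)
The plan is to obtain the bound by applying the Lyapunov telescoping identity \eqref{eq:Lyapunovupdate} to the family $\{\phi_m\}$ and then discarding the drift term, whose sign the hypothesis pins down. The first thing I would record is that, by \cref{lem:posdrift_phi}, $\{\phi_m(\cdot)\}$ is a family of Lyapunov functions, so \eqref{eq:Lyapunovupdate} is available with $\psi_m = \phi_m$ and $\sum_i \Delta_i(\phi_i)$ converges by \cref{lem:epsneg_gen}. Thus for each index $i$,
\begin{align}
	\phi_{i+1}(r_{i+1}) = \phi_i(r_i) + \gamma_i \nabla\phi_i(r_i)\cdot k_i(r_i) + \Delta_i(\phi_i) + \big(\phi_{i+1}(r_{i+1}) - \phi_i(r_{i+1})\big).
\end{align}

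Next I would make the last parenthesis explicit. Since $\phi_m(\theta,\Lambda) = \Lambda - \Lambda^{(M_m)}_\theta$, the functions $\phi_{i+1}$ and $\phi_i$ differ only through the truncation index, so evaluating both at the common point $r_{i+1}$ cancels the $\tilde{\Lambda}_{i+1}$ contribution and leaves $\phi_{i+1}(r_{i+1}) - \phi_i(r_{i+1}) = \Lambda^{(M_i)}_{\theta_{i+1}} - \Lambda^{(M_{i+1})}_{\theta_{i+1}}$. Summing the displayed identity over $i = m,\dots,n-1$ telescopes the left-hand side to $\phi_n(r_n)$ and produces
\begin{align}
	\phi_n(r_n) = \phi_m(r_m) + \sum_{i=m}^{n-1}\gamma_i \nabla\phi_i(r_i)\cdot k_i(r_i) + \sum_{i=m}^{n-1}\Delta_i(\phi_i) + \sum_{i=m}^{n-1}\big(\Lambda^{(M_i)}_{\theta_{i+1}} - \Lambda^{(M_{i+1})}_{\theta_{i+1}}\big).
\end{align}

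The decisive step is to control the drift sum via the hypothesis. For every $i\in\{m,\dots,n-1\}$ the point $r_i=(\theta_i,\tilde{\Lambda}_i)$ satisfies $0<\Lambda^{(M_i)}_{\theta_i}-\tilde{\Lambda}_i<\zeta_1$, which is exactly the region in which \cref{lem:posdrift_phi} guarantees $\nabla\phi_i(r_i)\cdot k_i(r_i)>0$; since $\gamma_i>0$, the entire drift sum is nonnegative and may be discarded, leaving the asserted one-sided bound. I expect the only real obstacle to be bookkeeping rather than conceptual: one must verify that the drift hypothesis of \cref{lem:posdrift_phi} holds at \emph{every} intermediate index (which is precisely what the uniform-in-$i$ assumption $0<\Lambda^{(M_i)}_{\theta_i}-\tilde{\Lambda}_i<\zeta_1$ supplies), and one must keep careful track of the sign of the telescoped truncation increments $\Lambda^{(M_i)}_{\theta_{i+1}}-\Lambda^{(M_{i+1})}_{\theta_{i+1}}$ so that they line up with the $\Lambda^{(M_{i+1})}_{\theta_{i+1}}-\Lambda^{(M_i)}_{\theta_{i+1}}$ term appearing in the statement, using the monotonicity of $\Lambda^{(M)}_\theta$ in $M$ together with $M_{i+1}\ge M_i$. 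Their summability under \cref{ass:4} is what will make the resulting bound useful in the subsequent argument for $\liminf_m\phi_m(r_m)\ge 0$.
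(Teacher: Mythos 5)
Your setup coincides with the paper's intended one-line proof: $\{\phi_m\}$ is a family of Lyapunov functions by \cref{lem:posdrift_phi}, the identity \eqref{eq:Lyapunovupdate} applies with $\psi_m=\phi_m$, the index-shift term is $\phi_{i+1}(r_{i+1})-\phi_i(r_{i+1})=\Lambda^{(M_i)}_{\theta_{i+1}}-\Lambda^{(M_{i+1})}_{\theta_{i+1}}$, and telescoping gives the exact identity
\begin{align}
	\phi_n(r_n)=\phi_m(r_m)+\sum_{i=m}^{n-1}\gamma_i\,\nabla\phi_i(r_i)\cdot k_i(r_i)+\sum_{i=m}^{n-1}\Delta_i(\phi_i)-\sum_{i=m}^{n-1}\left(\Lambda^{(M_{i+1})}_{\theta_{i+1}}-\Lambda^{(M_{i})}_{\theta_{i+1}}\right),
\end{align}
which is exactly your displayed equation. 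The gap is your ``decisive step''. Under the hypothesis, \cref{lem:posdrift_phi} makes every drift summand $\gamma_i\,\nabla\phi_i(r_i)\cdot k_i(r_i)$ strictly \emph{positive}, and discarding a nonnegative quantity from the right-hand side of an equality produces a \emph{lower} bound on the left-hand side,
\begin{align}
	\phi_n(r_n)\;\geq\;\phi_m(r_m)+\sum_{i=m}^{n-1}\Delta_i(\phi_i)-\sum_{i=m}^{n-1}\left(\Lambda^{(M_{i+1})}_{\theta_{i+1}}-\Lambda^{(M_{i})}_{\theta_{i+1}}\right),
\end{align}
not the asserted ``$\leq$''. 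The sign bookkeeping you defer to at the end cannot rescue this: replacing $-\sum_i(\Lambda^{(M_{i+1})}_{\theta_{i+1}}-\Lambda^{(M_i)}_{\theta_{i+1}})$ by $+\sum_i(\Lambda^{(M_{i+1})}_{\theta_{i+1}}-\Lambda^{(M_i)}_{\theta_{i+1}})$ is legitimate only inside an upper bound, and an upper bound is precisely what the positive drift denies you. Concretely, the printed conclusion would require $\sum_{i=m}^{n-1}\gamma_i\,\nabla\phi_i(r_i)\cdot k_i(r_i)\leq 2\sum_{i=m}^{n-1}\left(\Lambda^{(M_{i+1})}_{\theta_{i+1}}-\Lambda^{(M_i)}_{\theta_{i+1}}\right)$, which fails in general: whenever $\Lambda^{(M_i)}_{\theta_i}-\tilde{\Lambda}_i$ stays bounded away from $0$ inside the band, the left side is bounded below by a constant times $\sum_{i=m}^{n-1}\gamma_i$, which grows without bound as $n-m$ grows, while the right side is the tail of a convergent series by the assumption on $\{M_m\}$.

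You should also know that this failure is not yours alone: the corollary as printed is the mirror image of what \eqref{eq:Lyapunovupdate} together with \cref{lem:posdrift_phi} can deliver, and the lower bound displayed above is what the downstream argument actually needs. The positive drift of $\phi_m$ on the band $0<\Lambda^{(M_i)}_{\theta_i}-\tilde{\Lambda}_i<\zeta_1$ precludes \emph{downcrossings}: on an interval whose endpoints satisfy the hypothesis and which starts near the top of the band and ends near the bottom, one has $\phi_n(r_n)-\phi_m(r_m)<-(\zeta_1\wedge\varepsilon)/3$, whereas the lower bound forces $\phi_n(r_n)-\phi_m(r_m)$ to exceed a quantity that tends to zero; this is the contradiction driving \cref{lem:liminf_phi} once its crossing intervals are oriented accordingly (the paper's intervals, like its inequality, have the orientation reversed). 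A correct write-up would therefore state and prove the ``$\geq$'' version; your attempt, by asserting that dropping a positive term yields the printed ``$\leq$'', fails at exactly that step.
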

\begin{proof}
	The proof follows by repetitive use of \cref{eq:Lyapunovupdate} and \cref{lem:posdrift_phi}.
\end{proof}
\begin{lemma}\label[lemma]{lem:liminf_phi}
	$\liminf_{m\to\infty} \tilde{\Lambda}_m - \Lambda^{(M_m)}_{\theta_m} \geq 0$.
\end{lemma}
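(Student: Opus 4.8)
\textbf{Proof plan for \cref{lem:liminf_phi}.}
Recall that $\phi_m(r_m) = \tilde{\Lambda}_m - \Lambda^{(M_m)}_{\theta_m}$, so the claim is exactly $\liminf_{m\to\infty}\phi_m(r_m)\geq 0$. The plan is to argue by contradiction. Suppose $\liminf_{m\to\infty}\phi_m(r_m) < 0$; then I can fix a constant $c>0$ with $3c<\zeta_1$ (with $\zeta_1$ from \cref{lem:posdrift_phi}) such that $\phi_m(r_m) < -2c$ for infinitely many $m$. On the other hand, \cref{lem:liminf_abs} gives $\liminf_{m\to\infty}\big|\tilde{\Lambda}_m-\Lambda_{\theta_m}\big|=0$, and since $\Lambda^{(M_m)}_{\theta_m}\to\Lambda_{\theta_m}$ uniformly by \cref{cor:unifapprox_truncost} as $M_m\uparrow\infty$, it follows that $\phi_m(r_m) > -c$ for infinitely many $m$ as well. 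Thus $\phi_m(r_m)$ oscillates, returning above $-c$ and dropping below $-2c$ infinitely often.

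The next step is to record that the per-step increments of $\phi_m(r_m)$ vanish. From the Lyapunov decomposition \cref{eq:Lyapunovupdate},
\begin{align}
	\phi_{m+1}(r_{m+1}) - \phi_m(r_m) = \gamma_m\nabla\phi_m(r_m)\cdot k_m(r_m) + \Delta_m(\phi_m) + \big(\Lambda^{(M_m)}_{\theta_{m+1}} - \Lambda^{(M_{m+1})}_{\theta_{m+1}}\big).
\end{align}
By \cref{cor:lambtild_lowerbound} we have $\norm{k_m(r_m)}=O(\gamma_m^{-\beta})$, and $\nabla\phi_m$ is uniformly bounded (it is a family of Lyapunov functions by \cref{lem:posdrift_phi}), so $\gamma_m\nabla\phi_m(r_m)\cdot k_m(r_m)=O(\gamma_m^{1-\beta})\to 0$; the term $\Delta_m(\phi_m)$ is summable by \cref{lem:epsneg_gen}, and the truncation-change term is summable in absolute value by \cref{ass:4}. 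Hence $\phi_{m+1}(r_{m+1})-\phi_m(r_m)\to 0$. Using this, I would isolate infinitely many disjoint ``descending excursions'': intervals $[m_k,n_k]$ where $m_k$ is the last index before $n_k$ with $\phi_{m_k}(r_{m_k})\geq -c$ and $n_k$ is the first subsequent index with $\phi_{n_k}(r_{n_k})\leq -2c$. Because the increments are eventually smaller than $c/2$ and $3c<\zeta_1$, for every $i\in(m_k,n_k]$ the value $\phi_i(r_i)$ stays in $(-3c,0)\subset(-\zeta_1,0)$, which is precisely the positive-drift region $0<\Lambda^{(M_i)}_{\theta_i}-\tilde{\Lambda}_i<\zeta_1$ of \cref{lem:posdrift_phi}.

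On each such excursion the positive drift applies, so telescoping \cref{eq:Lyapunovupdate} as in \cref{cor:posdrift_phi} and discarding the nonnegative drift contribution shows that the net decrease of $\phi$ across $[m_k,n_k]$ — which is at least $c/2$ by construction — must be supplied entirely by the error terms $\sum_{i}\Delta_i(\phi_i)$ and $\sum_i(\Lambda^{(M_i)}_{\theta_{i+1}}-\Lambda^{(M_{i+1})}_{\theta_{i+1}})$ restricted to that interval. Since the excursions are disjoint and each demands at least $c/2$ worth of error budget, summing over all of them would force $\sum_m\big(\abs{\Delta_m(\phi_m)}+\sup_{\theta}\abs{\Lambda^{(M_{m+1})}_{\theta}-\Lambda^{(M_m)}_{\theta}}\big)=\infty$, contradicting the summability guaranteed by \cref{lem:epsneg_gen} and \cref{ass:4}. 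This contradiction yields $\liminf_{m\to\infty}\phi_m(r_m)\geq 0$.

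The step I expect to be the main obstacle is the bookkeeping for the excursions: one must choose the endpoints $m_k,n_k$ so that $\phi_i(r_i)$ provably never leaves the drift window $(-\zeta_1,0)$ on the whole interval (this is where the quantitative choice $3c<\zeta_1$ together with the vanishing-increment estimate is essential), and so that distinct excursions are disjoint in order to add their error budgets. The conversion between $\tilde{\Lambda}_m-\Lambda_{\theta_m}$ supplied by \cref{lem:liminf_abs} and the quantity $\phi_m(r_m)=\tilde{\Lambda}_m-\Lambda^{(M_m)}_{\theta_m}$ via the uniform approximation \cref{cor:unifapprox_truncost} also has to be handled for all large $m$ simultaneously, but this is routine once the truncation is large enough.
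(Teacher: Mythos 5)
Your proposal follows, in substance, the same strategy as the paper's proof: argue by contradiction, combine \cref{lem:liminf_abs} with the uniform approximation \cref{cor:unifapprox_truncost} to show that $\phi_m(r_m)=\tilde{\Lambda}_m-\Lambda^{(M_m)}_{\theta_m}$ oscillates between a neighborhood of $0$ and a fixed negative level, extract infinitely many disjoint band-crossing excursions lying inside the drift region of \cref{lem:posdrift_phi}, and conclude that each crossing must be paid for by the error terms $\Delta_i(\phi_i)$ and the truncation increments, contradicting their summability. One structural point is worth recording: you orient the excursions downward (from above $-c$ to below $-2c$) and argue that the positive drift forbids net descent beyond the error budget, whereas the paper's printed proof takes upward excursions and invokes the upper bound stated in \cref{cor:posdrift_phi}. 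Your orientation is the one that is actually consistent with \cref{lem:posdrift_phi}: telescoping \cref{eq:Lyapunovupdate} and discarding the nonnegative drift term yields the lower bound $\phi_n(r_n)\geq\phi_m(r_m)+\sum_i\Delta_i(\phi_i)+\sum_i\big(\Lambda^{(M_i)}_{\theta_{i+1}}-\Lambda^{(M_{i+1})}_{\theta_{i+1}}\big)$, i.e., that $\phi$ cannot \emph{decrease} by more than the errors; the inequality printed in \cref{cor:posdrift_phi} (and correspondingly the roles of the excursion endpoints in the paper's proof) appears reversed relative to this. Your write-up is, in effect, the sign-consistent form of the same intended argument.

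There is, however, one step in your proposal that does not hold as written: the final contradiction. \cref{lem:epsneg_gen} guarantees only that $\sum_m\Delta_m(\phi_m)$ \emph{converges}, not that it converges absolutely (its martingale component need not be absolutely summable), so accumulating the budgets of infinitely many excursions to force $\sum_m\left|\Delta_m(\phi_m)\right|=\infty$ contradicts nothing that the lemma provides. The repair is immediate, and is exactly how the paper closes the argument: since $\sum_m\Delta_m(\phi_m)$ converges, the Cauchy criterion gives that windowed sums $\left|\sum_{i=p}^{q}\Delta_i(\phi_i)\right|$ tend to $0$ as $p\to\infty$, and the windowed sums of the nonnegative truncation increments tend to $0$ by \cref{ass:4}; hence a \emph{single} sufficiently late excursion, which by your own estimate demands at least $c/2$ of error, already produces the contradiction. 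With that one-line change your proof is complete and matches the paper's argument.
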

\begin{proof}
	For the sake of contradiction, suppose that $\liminf_{m\to\infty} \tilde{\Lambda}_m - \Lambda^{(M_m)}_{\theta_m} = -\varepsilon < 0$. By \cref{lem:liminf_abs} and
	the contradiction assumption, the value of $\tilde{\Lambda}_m - \Lambda^{(M_m)}_{\theta_m}$ bounces between values close to $0$ and $-\varepsilon$ infinitely many times.
	Hence, there are infinitely many disjoint subsequences $\{m,m+1,\cdots,n\}$ such that
	\begin{align}
		& -(\zeta_1\wedge\varepsilon) < \tilde{\Lambda}_{m} - \Lambda^{(M_{m})}_{\theta_{m}}  < - 2(\zeta_1\wedge\varepsilon)/3,\\
		&- (\zeta_1\wedge\varepsilon)/3 < \tilde{\Lambda}_{n} - \Lambda^{(M_{n})}_{\theta_{n}}  < 0, \text{ and }   \\
		& - 2(\zeta_1\wedge\varepsilon)/3 < \tilde{\Lambda}_{i} - \Lambda^{(M_{i})}_{\theta_{i}}  < - (\zeta_1\wedge\varepsilon)/3,~\forall i\in\{m+1,\cdots, n-1\}.
	\end{align}
	Notice that for any such subsequence, we have $\phi_n(r_n) - \phi_m(r_m) > (\zeta_1\wedge\varepsilon)/3 $. On the other hand, by \cref{cor:posdrift_phi}, for any such subsequence we have
	\begin{align}
		\phi_{n}(r_{n}) \leq \phi_{m}(r_{m}) + \sum_{i=m}^{n-1}\Delta_{i}(\phi_i) + \sum_{i=m}^{n-1}\Lambda_{\theta_{i+1}}^{(M_{i+1})} - \Lambda_{\theta_{i+1}}^{(M_{i})}.
	\end{align}
	Notice that the sequence $\{\Delta_{i}(\phi_i)\}$ is summable, hence, for all $m$ large enough we have $\sum_{i=m}^{n-1}\Delta_{i}(\phi_i) < (\zeta_1\wedge\varepsilon)/6$. Similarly, by the choice of $\{M_i\}$, the sequence $\Lambda_{\theta_{i+1}}^{(M_{i+1})} - \Lambda_{\theta_{i+1}}^{(M_{i})}$ is summable and for all large enough $m$ we have $ \sum_{i=m}^{n-1}\Lambda_{\theta_{i+1}}^{(M_{i+1})} - \Lambda_{\theta_{i+1}}^{(M_{i})} < (\zeta_1\wedge\varepsilon)/6$. Contradiction follows.
\end{proof}
\begin{corollary}\label[corollary]{cor:liminf}
	$\liminf_{m\to\infty} \tilde{\Lambda}_m - \Lambda_{\theta_m} \geq 0$.
\end{corollary}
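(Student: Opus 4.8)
The plan is to reduce the claim to \cref{lem:liminf_phi}, which already establishes the analogous statement with the truncated cost $\Lambda^{(M_m)}_{\theta_m}$ in place of the true cost $\Lambda_{\theta_m}$. The natural move is the decomposition
\begin{align}
	\tilde{\Lambda}_m - \Lambda_{\theta_m} = \left(\tilde{\Lambda}_m - \Lambda^{(M_m)}_{\theta_m}\right) + \left(\Lambda^{(M_m)}_{\theta_m} - \Lambda_{\theta_m}\right),
\end{align}
and to control the two pieces separately before recombining them with the subadditivity of $\liminf$.

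For the first piece, \cref{lem:liminf_phi} gives directly $\liminf_{m\to\infty}\left(\tilde{\Lambda}_m - \Lambda^{(M_m)}_{\theta_m}\right) \geq 0$, so no further work is needed there. For the second piece, I would argue that $\Lambda^{(M_m)}_{\theta_m} - \Lambda_{\theta_m} \to 0$. Recall from the construction of the smooth truncation that $\Lambda^{(M)}_\theta \leq \Lambda_\theta$, and that by \cref{cor:unifapprox_truncost} the truncated cost approximates $\Lambda_\theta$ \emph{uniformly} in $\theta$: for every $\epsilon > 0$ there exists $M_0$ such that $0 \leq \Lambda_\theta - \Lambda^{(M)}_\theta < \epsilon$ for all $\theta\in\real^l$ and all $M > M_0$. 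Under the hypotheses of \cref{thm:main}, the truncation sequence diverges, $M_m \uparrow \infty$: in case (i) this is trivial if we instead read the statement with fixed $M$ (where the two terms coincide), and in case (ii) it follows from \cref{ass:4} together with the fact that $M_m = \sup\{N_i : N_i \leq \gamma_m^{-\beta}\}$ with $\gamma_m \downarrow 0$. Hence, for all sufficiently large $m$ we have $M_m > M_0$, which yields $\left|\Lambda^{(M_m)}_{\theta_m} - \Lambda_{\theta_m}\right| < \epsilon$; since $\epsilon$ was arbitrary, the second piece converges to $0$.

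Combining the two, since $\Lambda^{(M_m)}_{\theta_m} - \Lambda_{\theta_m}$ converges (to $0$) we may use $\liminf_m (a_m + b_m) \geq \liminf_m a_m + \lim_m b_m$ to obtain
\begin{align}
	\liminf_{m\to\infty}\left(\tilde{\Lambda}_m - \Lambda_{\theta_m}\right) \geq \liminf_{m\to\infty}\left(\tilde{\Lambda}_m - \Lambda^{(M_m)}_{\theta_m}\right) + \lim_{m\to\infty}\left(\Lambda^{(M_m)}_{\theta_m} - \Lambda_{\theta_m}\right) \geq 0,
\end{align}
which is the desired conclusion.

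I do not expect a genuine obstacle here: the substance of the argument was already carried out in \cref{lem:liminf_phi}, and this corollary is essentially a transfer from the truncated cost to the true cost. The only point requiring care is that the approximation in \cref{cor:unifapprox_truncost} is \emph{uniform} over $\theta$, so that it can be invoked along the data-dependent and a priori unknown trajectory $\{\theta_m\}$, and that the chosen truncation sequence indeed diverges; both are guaranteed by the stated assumptions.
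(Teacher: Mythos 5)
Your proposal is correct and is essentially identical to the paper's argument: the paper's proof of \cref{cor:liminf} is exactly the one-line combination of \cref{lem:liminf_phi} with the uniform approximation in \cref{cor:unifapprox_truncost}, and your write-up simply fills in the routine details (the decomposition, the divergence $M_m\uparrow\infty$, and the $\liminf$ inequality) that the paper leaves implicit. The point you flag as needing care --- that the approximation must be uniform in $\theta$ to apply along the random trajectory $\{\theta_m\}$ --- is precisely why the paper cites \cref{cor:unifapprox_truncost} rather than pointwise convergence.
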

\begin{proof}
	Follows by \cref{cor:unifapprox_truncost} and \cref{lem:liminf_phi}.
\end{proof}

Next, we show that $\limsup_{m\to\infty} \tilde{\Lambda}_m - \Lambda_{\theta_m} \leq 0$, using both families $\{\phi_m(\theta,\Lambda)\}$ and $\{\psi_m(\theta,\Lambda)\}$. We first show that $\{\psi_m(\theta,\Lambda)\}$ has negative drift as long as $|\Lambda - \Lambda_\theta|$ is small enough. Hence, in a time interval in which the value of  $|\widetilde{\Lambda}_m - \Lambda_{\theta_m}|$ remains small, the value of $\psi_m(r_m)$ cannot increase by much. This implies that the value of $\widetilde{\Lambda}_m$ and $\Lambda_{\theta_m}$ cannot increase by much either. Using these observations, we then show that if $\limsup_{m\to\infty} \tilde{\Lambda}_m - \Lambda_{\theta_m} > 0$ then the value of $\tilde{\Lambda}_m$ diverges to $-\infty$. This is done by breaking down the time-steps to different cycles, where at `even' cycles the value of $|\widetilde{\Lambda}_m - \Lambda_{\theta_m}|$ remains small, but at `odd' cycles $\widetilde{\Lambda}_m - \Lambda_{\theta_m}$ is bounded away from zero and is positive. Applying $\{\psi_m(\cdot)\}$ as the family of Lyapunov functions to `even' cycles, we know that during any such cycle, the value of $\widetilde{\Lambda}_m$ cannot increase by much. On the contrary, applying $\phi(\cdot)$ as the Lyapunov function to `odd' cycles shows that during any such cycle, the value of $\widetilde{\Lambda}_m$ decreases by a constant. Hence, $\widetilde{\Lambda}_m$ cannot remain bounded and contradiction follows.
\begin{lemma}\label[lemma]{lem:negdrift_psi}
	For any fixed $\zeta_2 > 0$, $\psi_m(\theta,\Lambda) = \Lambda^{(M_m)}_\theta + (6\zeta_2)^{-1}\big(\Lambda - \Lambda^{(M_m)}_\theta\big)^2$ is a family of Lyapunov functions. Moreover, there exists a small enough $\zeta_2 > 0$ independent of $m$ such that for all $\big|\Lambda - \Lambda^{(M_m)}_\theta\big| < \zeta_2$,
	we have
	\begin{align}
		\nabla \psi_m(\theta,\Lambda) \cdot k_m(\theta,\Lambda) \leq 0.
	\end{align}
\end{lemma}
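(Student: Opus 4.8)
\emph{Plan.} The statement splits into two claims: that $\{\psi_m\}$ is a family of Lyapunov functions, and that it has nonpositive drift whenever $|\Lambda-\Lambda^{(M_m)}_\theta|<\zeta_2$. I would dispatch the first as a boundedness bookkeeping step resting on the uniform bounds already in hand, and treat the second---the substantive part---by expanding $g_m$ and $f_m$ around $\Lambda^{(M_m)}_\theta$ and completing the square, with the $\Lambda$-stabilizing drift coming from the quadratic penalty $(6\zeta_2)^{-1}(\Lambda-\Lambda^{(M_m)}_\theta)^2$.

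\textbf{Family-of-Lyapunov-functions property.} Fix an interval $I=[a,b]$ and abbreviate $\Delta:=\Lambda-\Lambda^{(M_m)}_\theta$. Differentiating gives $\nabla_\theta\psi_m=\big(1-\Delta/(3\zeta_2)\big)\nabla_\theta\Lambda^{(M_m)}_\theta$ and $\partial_\Lambda\psi_m=\Delta/(3\zeta_2)$, while the second derivatives are bounded algebraic combinations of $\nabla^2_\theta\Lambda^{(M_m)}_\theta$, $\nabla_\theta\Lambda^{(M_m)}_\theta\otimes\nabla_\theta\Lambda^{(M_m)}_\theta$, and $(3\zeta_2)^{-1}$. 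Since \cref{cor:unfbound_lambdagrad} bounds $\nabla_\theta\Lambda^{(M)}_\theta$ and $\nabla^2_\theta\Lambda^{(M)}_\theta$ uniformly, \cref{cor:unifapprox_truncost} bounds $\Lambda^{(M)}_\theta$ uniformly (it is uniformly close to the bounded $\Lambda_\theta$), and $\Delta$ is bounded on $\real^l\times I$, all of $\psi_m,\nabla\psi_m,\nabla^2\psi_m$ are uniformly bounded there. For the increments I would use
\begin{align}
	\psi_{m+1}(r)-\psi_m(r)=\big(\Lambda^{(M_{m+1})}_\theta-\Lambda^{(M_m)}_\theta\big)+\frac{\Lambda^{(M_m)}_\theta-\Lambda^{(M_{m+1})}_\theta}{6\zeta_2}\big(2\Lambda-\Lambda^{(M_m)}_\theta-\Lambda^{(M_{m+1})}_\theta\big),
\end{align}
and note that the last factor is uniformly bounded on $\real^l\times I$, so $\sup_r\big(\psi_{m+1}(r)-\psi_m(r)\big)\le\textrm{Const}\times\sup_\theta\big(\Lambda^{(M_{m+1})}_\theta-\Lambda^{(M_m)}_\theta\big)$, which is summable by \cref{ass:4}.

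\textbf{Nonpositive drift.} I would choose $\zeta_2$ small and restrict to $m$ large enough that $|\Delta|<\zeta_2$ forces $\Lambda_\theta-\Lambda<\delta_1$---possible since $M_m\uparrow\infty$ and $\Lambda^{(M_m)}_\theta\to\Lambda_\theta$ uniformly by \cref{cor:unifapprox_truncost}---placing $(\theta,\Lambda)$ in the region where \cref{eq:taylorexp} and the derivative bounds of \cref{cor:deltagmfm} apply. Writing $v:=\nabla_\theta\Lambda^{(M_m)}_\theta$ and $a:=-\der{g_m(\theta,\Lambda^{(M_m)}_\theta)}/\der{\Lambda}$ (with $a\ge a_0>0$ uniformly by \cref{lem:uniflowerbd_dg} and $\der{f_m}/\der{\Lambda},\dertwo{f_m}/\der{\Lambda^2}$ uniformly bounded by \cref{lem:gradgfgmfm,cor:deltagmfm}), the sensitivity formula $f_m(\theta,\Lambda^{(M_m)}_\theta)=av$ of \cref{lem:propgmfm} and \cref{eq:taylorexp} give $g_m(\theta,\Lambda)-1=-a\Delta+O(\Delta^2)$ and $f_m(\theta,\Lambda)=av+\tfrac{\der{f_m(\theta,\Lambda^{(M_m)}_\theta)}}{\der{\Lambda}}\Delta+O(\Delta^2)$ with uniform constants. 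Substituting into $\nabla\psi_m\cdot k_m=\nabla_\theta\psi_m\cdot(-f_m)+\eta\,\partial_\Lambda\psi_m\,(g_m-1)$ and collecting terms, I expect
\begin{align}
	\nabla\psi_m(\theta,\Lambda)\cdot k_m(\theta,\Lambda)=-a\norm{v}^2\Big(1-\tfrac{\Delta}{3\zeta_2}\Big)-\frac{\eta a}{3\zeta_2}\Delta^2+R,
\end{align}
where each term of $R$ carries a factor $\norm{v}\,|\Delta|$ or $\Delta^2$ with coefficient bounded independently of $\zeta_2$ (the $\Delta^3/\zeta_2$ pieces collapse to bounded-coefficient $\Delta^2$ terms via $|\Delta|<\zeta_2$). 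The weight $6$ is picked so that $1-\Delta/(3\zeta_2)\ge 2/3$ on $|\Delta|<\zeta_2$, making the first term $\le-\tfrac23 a_0\norm{v}^2$.

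\textbf{Main obstacle.} With $x=\norm{v}$, $y=|\Delta|$ and $|R|\le C_1 xy+C_2 y^2$ for uniform $C_1,C_2$, the drift is bounded by the quadratic form $-\tfrac23 a_0 x^2+C_1 xy-\big(\tfrac{\eta a_0}{3\zeta_2}-C_2\big)y^2$. The hard part, which I expect to be the crux, is that a naive bound on the cross term $C_1 xy$ cannot be absorbed by $-\tfrac23 a_0 x^2$ alone when $\norm{v}$ is small; the resolution is that the $-y^2$ coefficient grows like $1/\zeta_2$ while $C_1,C_2$ stay bounded, so for all sufficiently small $\zeta_2$ the discriminant $C_1^2-4\cdot\tfrac23 a_0\big(\tfrac{\eta a_0}{3\zeta_2}-C_2\big)$ turns negative, the form is negative semidefinite, and $\nabla\psi_m\cdot k_m\le 0$. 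This is precisely why the quadratic penalty with weight $(6\zeta_2)^{-1}$ is built into $\psi_m$.
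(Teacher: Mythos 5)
Your proposal is correct and takes essentially the same route as the paper's proof: the same algebraic identity and uniform bounds (\cref{cor:unfbound_lambdagrad}, \cref{ass:4}) for the Lyapunov-family property, and the same Taylor expansion \cref{eq:taylorexp} around $\Lambda^{(M_m)}_\theta$ combined with \cref{lem:propgmfm}, \cref{lem:uniflowerbd_dg}, and \cref{cor:deltagmfm} for the drift. Your discriminant test on the quadratic form in $\big(\norm{v},|\Delta|\big)$ is the paper's completion of the square in different clothing; both hinge on exactly the observation you flag as the crux, namely that the coefficient of $-\Delta^2$ scales like $1/\zeta_2$ while the cross-term coefficient stays bounded, and your restriction to large $m$ matches the paper's own qualification that the negative drift holds ``for all large enough $m$.''
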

\begin{proof}
	Notice that
	\begin{align}
		\left|\psi_{m+1}(r) - \psi_{m}(r)\right| \leq \left(\Lambda_\theta^{(M_{m+1})} - \Lambda_\theta^{(M_{m})}\right)\left(1 + \frac{1}{6\zeta_2}\left(\left|\Lambda - \Lambda_\theta^{(M_{m+1})}\right| + \left|\Lambda - \Lambda_\theta^{(M_{m})}\right|\right)\right).
	\end{align}
	Therefore, by \cref{ass:2}, \cref{lem:gradgfgmfm} and the assumption on$\{M_m\}_{m\geq0}$, $\{\psi_m(\cdot)\}$ is a family of Lyapunov functions.
	Using the Taylor expansion given by \cref{eq:taylorexp}, we have
	\begin{align}
		&\nabla \psi_m(\theta,\Lambda) \cdot k_m(\theta,\Lambda) \\
		&\myquad[4]=\frac{\der{g_m(\theta,\Lambda_\theta^{(M_m)})}}{\der{\Lambda}} \norm{\nabla_\theta \Lambda_\theta^{(M_m)}}^2 -   \left(\Lambda - \Lambda^{(M_m)}_\theta\right) \frac{\der{f_m}(\theta,\Lambda_\theta^{(M_m)})}{\der{\Lambda}}\cdot \nabla_\theta \Lambda_\theta^{(M_m)} \\
		&\myquad[6]- \left(\Lambda - \Lambda^{(M_m)}_\theta\right)^2 \frac{\dertwo{f_m}(\theta,\Lambda_\theta^{(M_m)})}{\der{\Lambda^2}}\cdot \nabla_\theta \Lambda_\theta^{(M_m)} +  O\left(\left(\Lambda - \Lambda^{(M_m)}_\theta\right)^3\right) \\
		&\myquad[6] + \frac{1}{3\zeta_2} \left(\Lambda - \Lambda^{(M_m)}_\theta\right) \frac{\der{g_m(\theta,\Lambda_\theta^{(M_m)})}}{\der{\Lambda}} \norm{\nabla_\theta \Lambda_\theta^{(M_m)}}^2 \\
		&\myquad[6] + \frac{1}{3\zeta_2}\left(\Lambda - \Lambda^{(M_m)}_\theta\right)^2 \frac{\der{f_m}(\theta,\Lambda_\theta^{(M_m)})}{\der{\Lambda}}\cdot \nabla_\theta \Lambda_\theta^{(M_m)} + O\left(\left(\Lambda - \Lambda^{(M_m)}_\theta\right)^3\right) \\
		&\myquad[6] + \frac{\eta}{3\zeta_2}\left(\Lambda - \Lambda^{(M_m)}_\theta\right)^2 \frac{\der{g_m}(\theta,\Lambda_\theta^{(M_m)})}{\der{\Lambda}} + O\left(\left(\Lambda - \Lambda^{(M_m)}_\theta\right)^3\right).
	\end{align}
	Using Cauchy–Schwarz inequality, we get
	\begin{align}
		&\nabla \psi_m(\theta,\Lambda) \cdot k_m(\theta,\Lambda)
		\\
		&\qquad\leq \frac{\der{g_m(\theta,\Lambda_\theta^{(M_m)})}}{\der{\Lambda}} \norm{\nabla_\theta \Lambda_\theta^{(M_m)}}^2 \left(1 - \frac{1}{3\zeta_2} \left|\Lambda - \Lambda^{(M_m)}_\theta\right|\right)\\
		&\qquad + \left|\Lambda - \Lambda^{(M_m)}_\theta\right| \norm{\nabla_\theta \Lambda_\theta^{(M_m)}}\norm{\frac{\der{f_m}(\theta,\Lambda_\theta^{(M_m)})}{\der{\Lambda}}}\left( 1 + \frac{1}{3\zeta_2}\left|\Lambda - \Lambda^{(M_m)}_\theta\right|\right)\\
		&\qquad + \left(\Lambda - \Lambda^{(M_m)}_\theta\right)^2 \norm{\nabla_\theta \Lambda_\theta^{(M_m)}}\norm{\frac{\dertwo{f_m}(\theta,\Lambda_\theta^{(M_m)})}{\der{\Lambda^2}}}+ \frac{\eta}{3\zeta_2}\left(\Lambda - \Lambda^{(M_m)}_\theta\right)^2 \frac{\der{g_m}(\theta,\Lambda_\theta^{(M_m)})}{\der{\Lambda}} \\
		&\qquad+ O\left(\left(\Lambda - \Lambda^{(M_m)}_\theta\right)^3\right).
	\end{align}
    Notice that by \cref{lem:uniflowerbd_dg},  $\frac{\der{g_m(\theta,\Lambda_\theta^{(M_m)})}}{\der{\Lambda}}$ is uniformly negative. Moreover, all the other derivative terms are uniformly bounded. Next, we show that by choosing $\zeta_2$ small enough, the first and the last terms dominate all the other terms. Specifically,
	assuming $\left|\Lambda - \Lambda^{(M_m)}_\theta\right| < \zeta_2$, where
	\begin{align}
		0< \zeta_2 < \frac{\eta}{4} \times \left[\inf_{\theta\in\real^l,M>1} \left|\frac{\der{g^{(M)}(\theta,\alpha\overline{C})}}{\der{\Lambda}}\right|^2\right] \bigg/ \left[\sup_{\theta\in\real^l,M>1}\norm{\frac{\der{f^{(M)}}(\theta,\Lambda_\theta^{(M)})}{\der{\Lambda}}}^2\right],
	\end{align}
	we have
	\begin{align}
		&\nabla \psi_m(\theta,\Lambda) \cdot k_m(\theta,\Lambda)	\\
		&\qquad\leq \frac{\der{g_m(\theta,\Lambda_\theta^{(M_m)})}}{\der{\Lambda}} \left( \sqrt{\frac{2}{3}} \norm{\nabla_\theta \Lambda_\theta^{(M_m)}} - \sqrt{\frac{\eta}{6\zeta_2}}\left|\Lambda - \Lambda^{(M_m)}_\theta\right| \right)^2\\
		&\qquad - \frac{1}{6\zeta_2}\left(\Lambda - \Lambda^{(M_m)}_\theta\right)^2 \Bigg(-\eta\frac{\der{g_m(\theta,\Lambda_\theta^{(M_m)})}}{\der{\Lambda}}\\
		&\myquad[10] - 6\zeta_2 \norm{\nabla_\theta \Lambda_\theta^{(M_m)}}\norm{\frac{\dertwo{f_m}(\theta,\Lambda_\theta^{(M_m)})}{\der{\Lambda^2}}} + O\left(\Lambda - \Lambda^{(M_m)}_\theta\right) \Bigg)
		&\leq0
	\end{align}
	where the last inequality follows by the small choice of $\zeta_2$. Notice that by \cref{lem:gradgfgmfm} and \cref{cor:unfbound_lambdagrad}, $\norm{\nabla_\theta \Lambda_\theta^{(M_m)}}$, $\norm{\frac{\der{f_m}(\theta,\Lambda_\theta^{(M_m)})}{\der{\Lambda}}}$, $\norm{\frac{\dertwo{f_m}(\theta,\Lambda_\theta^{(M_m)})}{\der{\Lambda^2}}}$, and the error term of the Taylor expansion over $\left|\Lambda-\Lambda_\theta^{(M_m)}\right|<\zeta_2$ are all uniformly bounded for small enough $\zeta_2$. Moreover, by \cref{lem:uniflowerbd_dg}, $ \frac{\der{g_m(\theta,\Lambda_\theta^{(M_m)})}}{\der{\Lambda}}<0$  and it is uniformly bounded away from zero over $\theta\in\real^l$ and $m\in\nplus$.
\end{proof}
\begin{corollary}\label[corollary]{cor:negdrift_psi}
	Fix a constant $\nu < \zeta_2$. Suppose that for some $m$ and $n$ ($m < n$), we have
	\begin{align}
		&|\widetilde{\Lambda}_i - \Lambda^{(M_i)}_{\theta_i}| < \zeta_2,\quad&&\forall i\in\{m+1,m+2,\cdots,n-1\},\\
		&|\widetilde{\Lambda}_i - \Lambda^{(M_i)}_{\theta_i}| \leq \nu,\quad&& i\in \{m,n\}.
	\end{align}
	Then,
	\begin{align}
		&\Lambda^{(M_n)}_{\theta_{n}} \leq \Lambda^{(M_m)}_{\theta_{m}} +\frac{\nu^2}{3\zeta_2}+ \sum_{i=m}^{n-1}\Delta_i(\psi_i) + \sum_{i=m}^{n-1}\left|\psi_{i+1}(r_{i+1}) - \psi_{i}(r_{i+1})\right|,\\
		&\tilde{\Lambda}_{n} \leq \tilde{\Lambda}_{m} +\frac{\nu^2}{3\zeta_2} +2\nu + \sum_{i=m}^{n-1}\Delta_i(\psi_i) + \sum_{i=m}^{n-1}\left|\psi_{i+1}(r_{i+1}) - \psi_{i}(r_{i+1})\right|.
	\end{align}
\end{corollary}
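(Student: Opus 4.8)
The plan is to telescope the Lyapunov update \eqref{eq:Lyapunovupdate} for the family $\{\psi_m(\cdot)\}$ across the window $i=m,m+1,\dots,n-1$ and to cancel the drift terms using the negative-drift estimate of \cref{lem:negdrift_psi}. Summing \eqref{eq:Lyapunovupdate} telescopically gives
\begin{align}
	\psi_n(r_n) = \psi_m(r_m) + \sum_{i=m}^{n-1}\gamma_i \nabla\psi_i(r_i)\cdot k_i(r_i) + \sum_{i=m}^{n-1}\Delta_i(\psi_i) + \sum_{i=m}^{n-1}\left(\psi_{i+1}(r_{i+1}) - \psi_{i}(r_{i+1})\right).
\end{align}
The first point to check is that the drift sum is non-positive. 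For $i\in\{m+1,\dots,n-1\}$ the hypothesis gives $|\tilde{\Lambda}_i - \Lambda^{(M_i)}_{\theta_i}| < \zeta_2$ directly, and for $i=m$ the bound $|\tilde{\Lambda}_m - \Lambda^{(M_m)}_{\theta_m}| \leq \nu < \zeta_2$ again places $r_m$ in the region where \cref{lem:negdrift_psi} applies; hence $\nabla\psi_i(r_i)\cdot k_i(r_i)\leq 0$ for every $i$ in the range, and since $\gamma_i>0$ the entire drift sum can be dropped. Bounding the last telescoping sum by its absolute value then yields
\begin{align}
	\psi_n(r_n) \leq \psi_m(r_m) + \sum_{i=m}^{n-1}\Delta_i(\psi_i) + \sum_{i=m}^{n-1}\left|\psi_{i+1}(r_{i+1}) - \psi_{i}(r_{i+1})\right|.
\end{align}

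Next I would translate the endpoints of this inequality back into the quantities $\Lambda^{(M)}_\theta$ and $\tilde{\Lambda}$. Writing out $\psi_m(r_m) = \Lambda^{(M_m)}_{\theta_m} + (6\zeta_2)^{-1}(\tilde{\Lambda}_m - \Lambda^{(M_m)}_{\theta_m})^2$ and using $|\tilde{\Lambda}_m - \Lambda^{(M_m)}_{\theta_m}|\leq\nu$ gives $\psi_m(r_m)\leq \Lambda^{(M_m)}_{\theta_m} + \nu^2/(6\zeta_2)$; on the other side, the quadratic term in $\psi_n(r_n)$ is non-negative, so $\Lambda^{(M_n)}_{\theta_n}\leq \psi_n(r_n)$. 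Combining these with the displayed inequality and enlarging $\nu^2/(6\zeta_2)$ to the stated $\nu^2/(3\zeta_2)$ proves the first claim. For the second claim I would start from $\tilde{\Lambda}_n \leq \Lambda^{(M_n)}_{\theta_n} + \nu$ (again by $|\tilde{\Lambda}_n - \Lambda^{(M_n)}_{\theta_n}|\leq\nu$), substitute the first claim for $\Lambda^{(M_n)}_{\theta_n}$, and finally replace $\Lambda^{(M_m)}_{\theta_m}$ by $\tilde{\Lambda}_m + \nu$ (once more from the endpoint bound on $r_m$), which accounts for the extra $2\nu$.

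This corollary is essentially bookkeeping: all the analytic content lives in \cref{lem:negdrift_psi} (the sign of the drift) and in \cref{lem:epsneg_gen} (summability of $\Delta_i(\psi_i)$), both of which I may invoke. The only points requiring care are (i) verifying that the endpoint index $i=m$ still lies in the negative-drift region, which is exactly why the hypothesis imposes the strictly tighter bound $\nu<\zeta_2$ there, and (ii) keeping the two time-varying families $\psi_i$ and $\psi_{i+1}$ straight in the telescoping, so that the term $\sum_i|\psi_{i+1}(r_{i+1})-\psi_i(r_{i+1})|$ is produced correctly rather than collapsing. I expect no genuine obstacle beyond this careful index management.
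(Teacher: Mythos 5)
Your proposal is correct and follows essentially the same route as the paper's proof: telescope the Lyapunov update, drop the drift sum using the negative-drift region of \cref{lem:negdrift_psi} (valid at $i=m$ because $\nu<\zeta_2$), and then convert the endpoints via $\left|\psi_i(r_i)-\Lambda^{(M_i)}_{\theta_i}\right|\leq \nu^2/(6\zeta_2)$ and $\left|\tilde{\Lambda}_i-\Lambda^{(M_i)}_{\theta_i}\right|\leq\nu$ for $i\in\{m,n\}$. The paper states these steps more tersely, but the content is identical.
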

\begin{proof}
	By \cref{lem:negdrift_psi}, using telescoping sum we have
	\begin{align}
		\psi_{n}(r_{n}) &\leq \psi_{m}(r_{m}) + \sum_{i=m}^{n-1}\Delta_i(\psi_i) + \sum_{i=m}^{n-1}\left|\psi_{i+1}(r_{i+1}) - \psi_{i}(r_{i+1})\right|.
	\end{align}
	The result follows by the fact that $\left|\psi_{i}(r_{i}) - \Lambda^{(M_i)}_{\theta_{i}}\right| < \frac{\nu^2}{6\zeta_2}$ and $\left|\tilde{\Lambda}_i - \Lambda^{(M_i)}_{\theta_{i}}\right| < \nu$ for $i\in\{m,n\}$.
\end{proof}

\begin{lemma}\label[lemma]{lem:limsup_psi}
	$\limsup_{m\to\infty} \tilde{\Lambda}_m - \Lambda^{(M_m)}_{\theta_m} \leq 0$
\end{lemma}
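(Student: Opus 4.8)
The plan is to argue by contradiction. Suppose $\limsup_{m\to\infty}\tilde\Lambda_m-\Lambda^{(M_m)}_{\theta_m}=\varepsilon>0$ and write $d_m:=\tilde\Lambda_m-\Lambda^{(M_m)}_{\theta_m}$; I will deduce that $\tilde\Lambda_m\to-\infty$, which is impossible since $\{\tilde\Lambda_m\}$ is bounded below by \cref{lem:lambtild_lowerbound} and above by \cref{cor:lambtild_upperbound}. By \cref{lem:liminf_phi} we have $\liminf_m d_m\geq0$, while \cref{lem:liminf_abs} together with \cref{cor:unifapprox_truncost} gives $\liminf_m|d_m|=0$. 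Since $\gamma_m k_m(r_m)=O(\gamma_m^{1-\beta})\to0$ and $\epsilon_m\to0$ (as $\sum_m\epsilon_m$ converges by \cref{lem:sum_epsm}), the increments $d_{m+1}-d_m$ vanish, so under the contradiction hypothesis $d_m$ oscillates: it returns to $[-\nu,\nu]$ and rises above $\zeta_2$ infinitely often. Because the negative-drift conclusion of \cref{lem:negdrift_psi} holds for every radius smaller than $\zeta_2$, I may shrink $\zeta_2$ so that $\zeta_2<\varepsilon$, and I fix $0<\nu<\zeta_2$ to be specified at the end. As in the rest of the proof I work on the fixed sample path on which all error series converge.

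For all large $m$ (so that $d_m\geq-\nu$ and the remaining noise tails are small) define, recursively, crossing times: $p_k$ the first index at which $d\leq\nu$, $q_k$ the first index after $p_k$ at which $d\geq\zeta_2$, and $p_{k+1}$ the first index after $q_k$ at which $d\leq\nu$; let $r_k$ be the last index in $[p_k,q_k]$ with $d\leq\nu$. Each period $[p_k,p_{k+1}]$ splits into an \emph{even cycle} $[p_k,r_k]$, on which $|d_i|<\zeta_2$ throughout with $|d_i|\leq\nu$ at both endpoints, and an \emph{odd excursion} $[r_k,p_{k+1}]$, on which $d_i>\nu>0$ throughout while $d_i$ climbs above $\zeta_2$ and comes back, so its total variation $\sum_i|d_{i+1}-d_i|$ is at least $2(\zeta_2-\nu)$. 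The periods are disjoint, cover $[p_1,\infty)$, and there are infinitely many of them by the oscillation.

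On the even cycle I apply \cref{cor:negdrift_psi} with the Lyapunov family $\psi_m$, obtaining $\tilde\Lambda_{r_k}\leq\tilde\Lambda_{p_k}+\nu^2/(3\zeta_2)+2\nu+\delta_k'$ with $\sum_k\delta_k'<\infty$. On the odd excursion I read the decrease directly off the $\tilde\Lambda$-update: since $g_m(\theta_m,\cdot)$ is strictly decreasing with $g_m(\theta_m,\Lambda^{(M_m)}_{\theta_m})=1$, monotonicity together with the uniform lower bound on $\left|\der{g_m}/\der{\Lambda}\right|$ from \cref{lem:uniflowerbd_dg} yields $g_i-1\leq-c<0$ whenever $d_i>\nu$; hence $\tilde\Lambda_{p_{k+1}}-\tilde\Lambda_{r_k}\leq-\eta c\sum_{i=r_k}^{p_{k+1}-1}\gamma_i+\delta_k''$ with $\sum_k\delta_k''<\infty$. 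The quantitative heart of the argument is a uniform lower bound $\sum_{i=r_k}^{p_{k+1}-1}\gamma_i\geq\textrm{Const}$: on $\{d_i>0\}$ the iterate lies in the good region $\Lambda>\Lambda_\theta-\delta_0$, where by \cref{lem:gradgfgmfm} and \cref{cor:unfbound_lambdagrad} the gradient $f_m$ is bounded by a genuine constant (rather than the crude $\gamma_m^{-\beta}$ bound), so $\|\theta_{i+1}-\theta_i\|=O(\gamma_i)+|\epsilon_i(\theta)|$ and therefore $|d_{i+1}-d_i|\leq\textrm{Const}\times\gamma_i+\delta_i$ with $\{\delta_i\}$ summable (the truncation differences being summable by \cref{ass:4}). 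Matching this against the total variation $\geq2(\zeta_2-\nu)$ forces $\sum_{i=r_k}^{p_{k+1}-1}\gamma_i\geq\textrm{Const}$, so each odd excursion lowers $\tilde\Lambda$ by a fixed $c''>0$ up to a summable term.

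Adding the two estimates, the net change of $\tilde\Lambda$ across period $k$ is at most $\nu^2/(3\zeta_2)+2\nu-c''+(\delta_k'+\delta_k'')$. Choosing $\nu$ small enough that $\nu^2/(3\zeta_2)+2\nu<c''/2$, and noting that $\sum_k(\delta_k'+\delta_k'')<\infty$ because these tails live on disjoint index blocks and sum to the finite total error controlled by \cref{lem:sum_epsm}, \cref{lem:epsneg_gen} and \cref{ass:4}, telescoping over $k$ gives $\tilde\Lambda_{p_{K+1}}\leq\tilde\Lambda_{p_1}-Kc''/2+\textrm{Const}\to-\infty$, the desired contradiction. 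I expect the main obstacle to be precisely the uniform lower bound $\sum_{\text{excursion}}\gamma_i\geq\textrm{Const}$: with only the worst-case bound $\|f_m\|=O(\gamma_m^{-\beta})$ a traversal of fixed width could be achieved with an arbitrarily small accumulated step size, and the per-excursion decrease of $\tilde\Lambda$ could vanish. The argument therefore rests on observing that throughout an odd excursion the iterate stays in the region where $f_m$ is honestly bounded, which is what turns a fixed variation $2(\zeta_2-\nu)$ into a constant-order decrease of $\tilde\Lambda$; balancing this decrease against the $\psi$-controlled even-cycle increase by a suitably small $\nu$ is then routine.
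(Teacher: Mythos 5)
Your overall architecture --- argue by contradiction, split time into even cycles controlled by the Lyapunov family $\psi_m$ via \cref{cor:negdrift_psi} and odd excursions on which $\tilde{\Lambda}$ is forced down, and lower-bound the accumulated step size $\sum\gamma_i$ over an excursion --- is the same as the paper's. But the final balancing step has a genuine gap. The only drift bound you can extract from \cref{lem:uniflowerbd_dg} on the set $\{d_i>\nu\}$ is $g_i-1\leq-\underline{c}\,\nu$, where $\underline{c}$ is the uniform lower bound on $|\der{g_m}/\der{\Lambda}|$: near the start and end of an excursion $d_i$ is barely above $\nu$, so no $\nu$-independent constant $c$ exists there. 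Consequently your per-excursion decrease is $c''=\eta\,\underline{c}\,\nu\times\textrm{Const}$, itself proportional to $\nu$, while the even-cycle increase from \cref{cor:negdrift_psi} is $\nu^2/(3\zeta_2)+2\nu$. Your requirement $\nu^2/(3\zeta_2)+2\nu<c''/2$ then reduces to $2+\nu/(3\zeta_2)<\eta\,\underline{c}\,\textrm{Const}/2$, a condition on fixed problem constants ($\eta$, $\underline{c}$, the step bound) that need not hold and cannot be enforced by shrinking $\nu$, since both sides of the comparison scale linearly in $\nu$. The paper avoids exactly this by using three levels $\nu<A/2<A$ (all below $\zeta$): the drift used for the decrease is measured at the fixed level $A/2$ (giving a $\nu$-independent $\varepsilon$ with $g_m\leq1-\varepsilon/2$ whenever $d_m\geq A/2$), and the step-size sum is lower-bounded over the climb from $A/2$ to $A$ via the $\phi_m$ family, yielding a per-excursion decrease $\frac{A\varepsilon\eta}{16C_1}$ that does not depend on $\nu$; only then does sending $\nu\to0$ close the argument. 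Your proof can be repaired the same way: measure the drift only on the portion of the excursion where $d_i\geq\zeta_2/2$ (noting $g_i\leq1$ everywhere on the excursion, so the remaining steps never push $\tilde{\Lambda}$ up), and lower-bound $\sum\gamma_i$ over the climb from $\zeta_2/2$ to $\zeta_2$.

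Two secondary flaws, both of the same type and both fixable. First, your bound $\sum_{\text{excursion}}\gamma_i\geq\textrm{Const}$ rests on bounding the total variation $\sum_i|d_{i+1}-d_i|$ above by $\textrm{Const}\sum_i\gamma_i+\sum_i|\delta_i|$ with ``$\{\delta_i\}$ summable''; but \cref{lem:sum_epsm} and \cref{lem:epsneg_gen} only give that the noise series converges, not that it converges absolutely, so $\sum_i|\epsilon_i(\Lambda)|$ over an excursion is not controlled (with $\sum_m\gamma_m^{2(1-\beta)}<\infty$ one cannot in general conclude $\sum_m\gamma_m^{1-\beta}<\infty$). The same objection applies to your claim $\sum_k(\delta_k'+\delta_k'')<\infty$: block sums of a conditionally convergent series over disjoint blocks need not be absolutely summable. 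The repair is to work with net changes rather than variation --- the climb gives $d_{q_k}-d_{r_k}\geq\zeta_2-\nu$, in which only partial sums of the noise appear, and tails of a convergent series are uniformly small beyond some $m_0$ --- and to replace ``summable across periods'' by ``uniformly small per period beyond $m_0$,'' which is all the telescoping needs; this is how the paper phrases the corresponding steps.
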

\begin{proof}
	For the sake of contradiction, suppose that $\limsup_{m\to\infty} \tilde{\Lambda}_m - \Lambda^{(M_m)}_{\theta_m} > 0$. Let $\zeta = \zeta_1\wedge\zeta_2\wedge\delta_1$, where the value of $\delta_1$ is given by \cref{cor:deltagmfm}. Let $A\in(0,\zeta)$ to be small enough such that
	\begin{align}
		&\limsup_{m\to\infty} \tilde{\Lambda}_m - \Lambda^{(M_m)}_{\theta_m} > A,
	\end{align}
	and suppose that $\limsup_{m\to\infty}\sup_{\theta\in\real^l} g_m(\theta,\Lambda^{(M_m)}_{\theta} + A/2) = 1-\varepsilon$ for some $\varepsilon>0$. The fact that $\varepsilon > 0$ follows by \cref{lem:uniflowerbd_dg}, using a similar argument as in the proof of \cref{lem:liminf_abs}. Fix $\nu \in (0,A/2)$.

	Let $\{m_0,m_1,m_2,\cdots\}$ be a sequence of integers such that for any $k\in\nzero$, during the $k^\text{th}$ even cycle (from time-step $m_{2k}$ to $m_{2k+1}-1$) we have
	\begin{align}
		&\left|\tilde{\Lambda}_{m_{2k}} - \Lambda^{(M_{m_{2k}})}_{\theta_{m_{2k}}}\right| \leq \nu ,\qquad 0 < \tilde{\Lambda}_{m_{2k+1}-1} - \Lambda^{(M_{m_{2k+1}-1})}_{\theta_{m_{2k+1}-1}} \leq \nu\\
		&\left|\tilde{\Lambda}_m - \Lambda^{(M_m)}_{\theta_{m}}\right| \leq A,\qquad\forall m\in\{m_{2k}+1,\cdots,m_{2k+1}-2\},
	\end{align}
	and during the $k^\text{th}$ odd cycle (from time-step $m_{2k+1}$ to $m_{2(k+1)}-1$), we have,
	\begin{align}
		&
		\begin{aligned}
			&\exists a_{2k+1},b_{2k+1}:&& \nu  \leq \tilde{\Lambda}_{a_{2k+1}} - \Lambda^{(M_{a_{2k+1}})}_{\theta_{a_{2k+1}}} < \frac{A}{2}, \qquad A < \tilde{\Lambda}_{b_{2k+1}} - \Lambda^{(M_{b_{2k+1}})}_{\theta_{b_{2k+1}}} ,\\
			&~&&\text{ and }\frac{A}{2} \leq \tilde{\Lambda}_m - \Lambda^{(M_m)}_{\theta_{m}} \leq A ,\qquad\forall m: a_{2k+1}<m<b_{2k+1},
		\end{aligned}\\
		& \tilde{\Lambda}_m - \Lambda^{(M_m)}_{\theta_{m}} \geq \nu,\qquad\forall m\in\{m_{2k+1},\cdots,m_{2(k+1)}-1\}.
	\end{align}
	In particular, at the beginning and the end of each even cycle, the value of $\left|\tilde{\Lambda}_m - \Lambda^{(M_m)}_{\theta_{m}}\right|$ is smaller than $\nu$. Moreover, $\left|\tilde{\Lambda}_m - \Lambda^{(M_m)}_{\theta_{m}}\right| < A$ during any such cycle. On contrary, $\left|\tilde{\Lambda}_m - \Lambda^{(M_m)}_{\theta_{m}}\right| \geq \nu$ during any even cycle, and the value of $\left|\tilde{\Lambda}_m - \Lambda^{(M_m)}_{\theta_{m}}\right|$ crosses $A$ at some point.

	Notice that by the contradiction assumption and \cref{lem:liminf_abs}, such a sequence $\{m_i\}\subset \nplus$ exists. We assume $m_0$ is large enough so that for all $m\geq m_0$, if $\Lambda - \Lambda^{(M_m)}_{\theta} \geq A/2$, then $g_m(\theta,{\Lambda}) \leq 1-\varepsilon/2$ for any $\theta\in\real^l$.

	Consider the $k^\text{th}$ even cycle. By the choice of $A$, $\nu$ and \cref{cor:negdrift_psi}, we have
	\begin{align}
		\tilde{\Lambda}_{m_{2k+1}-1} \leq \tilde{\Lambda}_{m_{2k}} +\frac{\nu^2}{3\zeta} +2\nu + \sum_{m=m_{2k}}^{m_{2k+1}-1}\Delta_m(\psi_m) + \sum_{m=m_{2k}}^{m_{2k+1}-1}\left|\psi_{m+1}(r_{m+1}) - \psi_{m}(r_{m+1})\right|.
	\end{align}
	Hence, by choosing $\nu$ to be small enough and $m_0$ to be large enough, we can make $\tilde{\Lambda}_{m_{2k+1}-1} - \tilde{\Lambda}_{m_{2k}}$ to be as small as we like. That is to say, the value of $\tilde{\Lambda}_{m}$ cannot increase that much at any even cycle.

	Next, let us consider the $k^\text{th}$ odd cycle. Using $\{\phi_m(\cdot)\}$ as the family of Lyapunov functions, we have
	\begin{align}
		\phi_{b_{2k+1}}(r_{b_{2k+1}}) &= \phi_{a_{2k+1}}(r_{a_{2k+1}}) + \sum_{m=a_{2k+1}}^{b_{2k+1}-1}\gamma_{m} \nabla \phi_m(r_m)\cdot k_m(r_m) \\
		&\myquad[4]+ \sum_{m=a_{2k+1}}^{b_{2k+1}-1}\Delta_{m}(\phi_m) + \sum_{m=a_{2k+1}}^{b_{2k+1}-1}\Lambda_{\theta_{m+1}}^{(M_{m+1})} - \Lambda_{\theta_{m+1}}^{(M_{m})}.
	\end{align}
	Let $C_1 > 0$ be a uniform upper bound on $\left|\nabla \phi_m(r_m)\cdot k_m(r_m)\right|$  independent of $m.$ The existence of such an upper bound is guaranteed by \cref{lem:gradgfgmfm} and the choice of $\zeta$. We have
	\begin{align}
		\frac{A}{2} &< \phi_{b_{2k+1}}(r_{b_{2k+1}}) - \phi_{a_{2k+1}}(r_{a_{2k+1}}) \\
		&\leq \sum_{m=a_{2k+1}}^{b_{2k+1}-1}\gamma_{m}C_1 + \sum_{m=a_{2k+1}}^{b_{2k+1}-1}\Delta_{m}(\phi_m) + \sum_{m=a_{2k+1}}^{b_{2k+1}-1}\Lambda_{\theta_{m+1}}^{(M_{m+1})} - \Lambda_{\theta_{m+1}}^{(M_{m})}.
	\end{align}
	Assuming $m_0$ is large enough (so that $\gamma_{a_{2k+1}}$ as well as the error term is small), we have
	\begin{align}
		\frac{A}{4C_1} < \sum_{m=a_{2k+1}+1}^{b_{2k+1}-1}\gamma_{m}. \label{eq:gamma_bound}
	\end{align}
	Now, using \cref{eq:gamma_bound}, the fact that $g_m(r_{m}) < 1-\varepsilon/2$ for all $a_{2k+1}<m<b_{2k+1}$, the fact that $g(r_{m}) \leq 1$ for all $m_{2k+1}-1\leq m<m_{2(k+1)}-1$, and assuming $m_0$ is large enough, we have
	\begin{align}
		\tilde{\Lambda}_{m_{2(k+1)}} &= \tilde{\Lambda}_{m_{2k+1}-1} + \sum_{m=m_{2k+1}-1}^{m_{2(k+1)}-1}\gamma_m \eta (g(r_m)-1) + \sum_{m=m_{2k+1}-1}^{m_{2(k+1)}-1}\epsilon_m(\Lambda) \\
		&\leq  \tilde{\Lambda}_{m_{2k+1}-1} + \sum_{m=a_{2k+1}+1}^{b_{2k+1}-1}\gamma_m \eta (g(r_m)-1)  + \sum_{m=m_{2k+1}-1}^{m_{2(k+1)}-1}\epsilon_m(\Lambda) \\
		&\leq  \tilde{\Lambda}_{m_{2k+1}-1} - \frac{A}{4C_1} \times \frac{\varepsilon}{2} \eta  + \sum_{m=m_{2k+1}-1}^{m_{2(k+1)}-1}\epsilon_m(\Lambda) \leq \tilde{\Lambda}_{m_{2k+1}-1} - \frac{A\epsilon}{16 C_1} \eta.
	\end{align}
	Notice that we can pick $m_0$ to be large enough such that the error terms are all sufficiently small. This follows by the assumption on$\{M_m\}_{m\geq0}$ and the fact that the error in the approximation of the family of Lyapunov functions is summable. Hence, at any odd cycle, the value of $\tilde{\Lambda}_{m}$ decreases by a constant independent of $\nu$ (for all sufficiently small choices of $\nu>0$). This implies that $\tilde{\Lambda}_m\to-\infty$ which is a contradiction.
\end{proof}

\begin{corollary}\label[corollary]{cor:limsup}
	$\limsup_{m\to\infty} \tilde{\Lambda}_m - \Lambda_\theta \leq 0$.
\end{corollary}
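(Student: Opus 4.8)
The plan is to mirror the proof of \cref{cor:liminf}, reducing the claim to \cref{lem:limsup_psi} by means of the uniform approximation guarantee of \cref{cor:unifapprox_truncost}. (I read the displayed statement as $\limsup_{m\to\infty}\big(\tilde{\Lambda}_m - \Lambda_{\theta_m}\big) \le 0$, the $\limsup$ counterpart of \cref{cor:liminf}, with the trajectory subscript on $\theta$.) The whole point is that the heavy drift-and-cycle analysis has already been completed in \cref{lem:limsup_psi}; all that is left is to trade the truncated cost $\Lambda^{(M_m)}_{\theta_m}$ for the true cost $\Lambda_{\theta_m}$.

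First I would write the elementary decomposition
\[
\tilde{\Lambda}_m - \Lambda_{\theta_m} = \big(\tilde{\Lambda}_m - \Lambda^{(M_m)}_{\theta_m}\big) + \big(\Lambda^{(M_m)}_{\theta_m} - \Lambda_{\theta_m}\big).
\]
By \cref{lem:limsup_psi}, the first bracket satisfies $\limsup_{m\to\infty}\big(\tilde{\Lambda}_m - \Lambda^{(M_m)}_{\theta_m}\big) \le 0$, so it suffices to show that the second bracket vanishes in the limit.

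For the second bracket I would invoke \cref{cor:unifapprox_truncost}: for every $\epsilon > 0$ there exists $M_0$ such that $0 \le \Lambda_\theta - \Lambda^{(M)}_\theta < \epsilon$ uniformly over $\theta \in \real^l$ whenever $M \ge M_0$. Since \cref{ass:4} guarantees $M_m \uparrow \infty$, every index $m$ beyond some threshold satisfies $M_m \ge M_0$, and hence $\big|\Lambda^{(M_m)}_{\theta_m} - \Lambda_{\theta_m}\big| < \epsilon$ for all such $m$. As $\epsilon$ is arbitrary, the second bracket tends to $0$. Adding the two contributions and using subadditivity of $\limsup$ yields $\limsup_{m\to\infty}\big(\tilde{\Lambda}_m - \Lambda_{\theta_m}\big) \le 0$, as claimed.

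There is no genuine obstacle at this stage. The only point that deserves to be stated explicitly is that the approximation bound must be applied along the \emph{random} sequence $\{\theta_m\}$, whose limiting behavior is unknown; this causes no difficulty precisely because \cref{cor:unifapprox_truncost} is uniform over all $\theta \in \real^l$, so the estimate holds pathwise with no a priori control on the trajectory. Thus the corollary is immediate once \cref{lem:limsup_psi} is in hand.
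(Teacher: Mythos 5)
Your proof is correct and takes essentially the same route as the paper, whose own proof is exactly this reduction: \cref{lem:limsup_psi} combined with the uniform approximation of \cref{cor:unifapprox_truncost} (and your reading of $\Lambda_\theta$ as $\Lambda_{\theta_m}$ is indeed the intended statement). One small remark: for this $\limsup$ direction the uniformity is not even needed, since $\Lambda^{(M_m)}_{\theta_m} \leq \Lambda_{\theta_m}$ makes your second bracket nonpositive, so the claim follows from \cref{lem:limsup_psi} alone.
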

\begin{proof}
	Follows by \cref{cor:unifapprox_truncost} and \cref{lem:limsup_psi}.
\end{proof}

\begin{corollary}\label[corollary]{cor:convergence}
	$\lim_{m\to\infty} \tilde{\Lambda}_m - \Lambda_{\theta_m} = 0$. Moreover, $\lim_{m\to\infty}\Lambda_{\theta_m}  = \Lambda^*$ for some constant $\Lambda^* \in[-\alpha\underline{C},\alpha\overline{C}]$.
\end{corollary}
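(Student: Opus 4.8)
The plan is to separate the statement into its two assertions and dispatch each by assembling results already in hand, so that no new drift computation is required. The first assertion, $\lim_{m\to\infty}(\tilde{\Lambda}_m - \Lambda_{\theta_m}) = 0$, is immediate: \cref{cor:liminf} gives $\liminf_{m\to\infty}(\tilde{\Lambda}_m - \Lambda_{\theta_m}) \geq 0$ and \cref{cor:limsup} gives $\limsup_{m\to\infty}(\tilde{\Lambda}_m - \Lambda_{\theta_m}) \leq 0$, so the limit exists and equals $0$.

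For the convergence of $\Lambda_{\theta_m}$ I would instead prove that $\Lambda^{(M_m)}_{\theta_m}$ converges and then transfer the conclusion through \cref{cor:unifapprox_truncost}. Since $M_m \uparrow \infty$, \cref{cor:unifapprox_truncost} yields $\Lambda^{(M_m)}_{\theta_m} - \Lambda_{\theta_m} \to 0$; combined with the first assertion this gives $\tilde{\Lambda}_m - \Lambda^{(M_m)}_{\theta_m} \to 0$. Hence for all large $m$ the iterate satisfies $\big|\tilde{\Lambda}_m - \Lambda^{(M_m)}_{\theta_m}\big| < \zeta_2$, which is precisely the region in which \cref{lem:negdrift_psi} guarantees the non-positive drift $\nabla \psi_m(r_m)\cdot k_m(r_m) \leq 0$ for the family $\psi_m(\theta,\Lambda) = \Lambda^{(M_m)}_\theta + (6\zeta_2)^{-1}\big(\Lambda - \Lambda^{(M_m)}_\theta\big)^2$. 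Substituting this into \cref{eq:Lyapunovupdate} gives, for all large $m$,
\[
	\psi_{m+1}(r_{m+1}) \leq \psi_m(r_m) + \Delta_m(\psi_m) + \big(\psi_{m+1}(r_{m+1}) - \psi_m(r_{m+1})\big),
\]
where the last two terms are summable by \cref{lem:epsneg_gen} and by the defining property of a family of Lyapunov functions (which holds here by \cref{ass:4}).

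From here I would run the same $\limsup$-from-the-left, $\liminf$-from-the-right argument used in \cref{subsec:heuristic_main}: writing $b_m$ for the summable perturbation above, telescoping gives $\psi_m(r_m) \leq \psi_n(r_n) + \sum_{k=n}^{m-1} b_k^+$ for $m > n$, and since $\sum_k b_k^+ < \infty$ the tail can be made smaller than any $\epsilon > 0$ by taking $n$ large. As $\psi_m(r_m) \geq \Lambda^{(M_m)}_{\theta_m}$ is bounded below, this forces $\limsup_m \psi_m(r_m) \leq \liminf_n \psi_n(r_n)$ and hence convergence of $\psi_m(r_m)$. Because the quadratic term $(6\zeta_2)^{-1}\big(\tilde{\Lambda}_m - \Lambda^{(M_m)}_{\theta_m}\big)^2 \to 0$, I conclude that $\Lambda^{(M_m)}_{\theta_m}$ converges, and \cref{cor:unifapprox_truncost} transfers this to $\lim_m \Lambda_{\theta_m} = \Lambda^*$. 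The membership of $\Lambda^*$ in the stated compact interval follows since each $\Lambda_{\theta_m}$ lies in it, by the uniform bounds on the one-step cost in \cref{ass:2}.

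The step that carries the real weight—although it is routine given the groundwork—is converting the non-positive drift together with summable perturbations into genuine convergence rather than mere boundedness, i.e.\ the deterministic almost-supermartingale argument. Its only delicate point is that the drift inequality only becomes available once the iterate enters $\big\{\big|\tilde{\Lambda}_m - \Lambda^{(M_m)}_{\theta_m}\big| < \zeta_2\big\}$, at a possibly large (sample-path dependent) index; but since convergence is a tail property, restricting the telescoping to that tail is harmless and suffices for the conclusion.
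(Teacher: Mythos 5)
Your proposal is correct, and it runs on the same engine as the paper's proof: the negative drift of the family $\psi_m(\theta,\Lambda) = \Lambda^{(M_m)}_\theta + (6\zeta_2)^{-1}\big(\Lambda-\Lambda^{(M_m)}_\theta\big)^2$ from \cref{lem:negdrift_psi}, which becomes available for all large $m$ because the first part together with \cref{cor:unifapprox_truncost} forces $\tilde{\Lambda}_m - \Lambda^{(M_m)}_{\theta_m}\to 0$. The difference is in the bookkeeping. The paper invokes the packaged \cref{cor:negdrift_psi}, which converts the drift into the bound $\tilde{\Lambda}_n \leq \tilde{\Lambda}_m + \nu^2/(3\zeta_2) + 3\nu + (\text{summable errors})$ on stretches whose endpoints are $\nu$-close, then takes $\limsup$/$\liminf$ and lets $\nu\downarrow 0$ to get convergence of $\tilde{\Lambda}_m$, transferring to $\Lambda_{\theta_m}$ via the first part. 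You instead telescope $\psi_m(r_m)$ itself through \cref{eq:Lyapunovupdate}, obtain convergence of $\psi_m(r_m)$, and peel off the vanishing quadratic term to get convergence of $\Lambda^{(M_m)}_{\theta_m}$, hence of $\Lambda_{\theta_m}$; this avoids the $\nu$-quantification entirely and is arguably cleaner. One small repair is needed: you cannot assert $\sum_k b_k^+ < \infty$, because \cref{lem:epsneg_gen} only gives convergence of the series $\sum_m \Delta_m(\psi_m)$, not absolute or positive-part summability. Nothing is lost, though: by the Cauchy criterion the tails $\sum_{k=n}^{m-1}\Delta_k(\psi_k)$ are uniformly small in $m$ once $n$ is large, and the replacement terms $\psi_{k+1}(r_{k+1})-\psi_k(r_{k+1})$ are bounded by $\sup_{r}\left|\psi_{k+1}(r)-\psi_k(r)\right|$, which is genuinely summable by the Lyapunov-family property (via \cref{ass:4} and the choice of $\{M_m\}$). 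With that substitution your $\limsup$--$\liminf$ telescoping goes through verbatim.
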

\begin{proof}
	The first part follows by \cref{cor:liminf} and \cref{cor:limsup}. Hence, for any $0< \nu < \zeta_2$ there exists $m_0(\nu)$ large enough so that for all $m>m_0(\nu)$, we have $\left|\tilde{\Lambda}_m - \Lambda_{\theta_m}\right| < \nu$. By \cref{cor:negdrift_psi}, for all $n>m>m_0(\nu)$, we have
	\begin{align}
		\tilde{\Lambda}_{n} &\leq \tilde{\Lambda}_{m} +\frac{\nu^2}{3\zeta_2} +2\nu + \sum_{i=m}^{n-1}\Delta_i(\psi_i) + \sum_{i=m}^{n-1}\left|\psi_{i+1}(r_{i+1}) - \psi_{i}(r_{i+1})\right| \\
		&\leq \tilde{\Lambda}_{m} +\frac{\nu^2}{3\zeta_2} +3\nu,
	\end{align}
	where the last inequality follows by assuming $m_0(\nu)$ is large enough. Hence, taking $\limsup$ from the left-hand side and then $\liminf$ from the right-hand side, we get
	\begin{align}
		\limsup_{n\to\infty}\tilde{\Lambda}_{n} \leq \liminf_{m\to\infty}\tilde{\Lambda}_{m} +\frac{\nu^2}{3\zeta_2} +3\nu.
	\end{align}
	Since the choice of $\nu > 0$ was arbitrary, $\lim_{n\to\infty}\tilde{\Lambda}_{n} = \Lambda^*$ for some constant $\Lambda^* \in[-\alpha\underline{C},\alpha\overline{C}]$ and the result follows.
\end{proof}

\subsection*{Step II, Convergence of $\nabla_\theta\Lambda_{\theta_m}$:}
Using a similar argument as the previous step, we first show that $\liminf_{m\to\infty}\norm{\nabla_\theta{\Lambda}_{\theta_m}} = 0$, and then $\limsup_{m\to\infty}\norm{\nabla_\theta{\Lambda}_{\theta_m}} = 0$. The idea is to use the fact that $\Lambda_{\theta_m}$ converges. Naturally, we use the function $\kappa_m(\theta,\Lambda) = \Lambda_{\theta}$ as Lyapunov function.
\begin{lemma}\label[lemma]{lem:liminf_nabla}
	$\liminf_{m\to\infty}\norm{\nabla_\theta{\Lambda}_{\theta_m}} = 0.$
\end{lemma}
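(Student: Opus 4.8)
The plan is to feed the $m$-independent family $\kappa_m(\theta,\Lambda)=\Lambda_\theta$ into the Lyapunov update \cref{eq:Lyapunovupdate} and then read off the $\liminf$ from the already-established convergence of $\Lambda_{\theta_m}$. Since $\kappa_m$ depends on neither $m$ nor $\Lambda$, it is trivially a family of Lyapunov functions: the increments $\psi_{m+1}-\psi_m$ vanish identically, while $\Lambda_\theta$, $\nabla_\theta\Lambda_\theta$ and $\nabla^2_\theta\Lambda_\theta$ are uniformly bounded by \cref{cor:unfbound_lambdagrad}. Its drift is $\nabla\kappa_m(r_m)\cdot k_m(r_m) = -\nabla_\theta\Lambda_{\theta_m}\cdot f_m(\theta_m,\tilde{\Lambda}_m)$, because the $\Lambda$-component of $\nabla\kappa_m$ is zero. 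Substituting into \cref{eq:Lyapunovupdate} and telescoping from a fixed $m_0$ to $n$,
\begin{align}
\Lambda_{\theta_n} - \Lambda_{\theta_{m_0}} = -\sum_{i=m_0}^{n-1}\gamma_i\,\nabla_\theta\Lambda_{\theta_i}\cdot f_i(\theta_i,\tilde{\Lambda}_i) + \sum_{i=m_0}^{n-1}\Delta_i(\kappa).
\end{align}
By \cref{cor:convergence} the left-hand side converges as $n\to\infty$, and by \cref{lem:epsneg_gen} the series $\sum_i\Delta_i(\kappa)$ converges; hence $\sum_i\gamma_i\,\nabla_\theta\Lambda_{\theta_i}\cdot f_i(\theta_i,\tilde{\Lambda}_i)$ converges to a finite limit.

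I would then argue by contradiction. Suppose $\liminf_{m\to\infty}\norm{\nabla_\theta\Lambda_{\theta_m}}=\rho>0$, so $\norm{\nabla_\theta\Lambda_{\theta_m}}>\rho/2$ for all large $m$; the goal is to force the convergent series above to diverge. Since $\tilde{\Lambda}_m-\Lambda_{\theta_m}\to0$ (\cref{cor:convergence}) and $\Lambda^{(M_m)}_{\theta_m}\to\Lambda_{\theta_m}$ (\cref{cor:unifapprox_truncost}), for all large $i$ the estimate $\tilde{\Lambda}_i$ lies in the region $\Lambda_{\theta_i}-\tilde{\Lambda}_i<\delta_0$ where the Taylor expansion \cref{eq:taylorexp} of $f_i$ about $\Lambda^{(M_i)}_{\theta_i}$ is valid. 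Writing $d_i\coloneqq -\der{g_i(\theta_i,\Lambda^{(M_i)}_{\theta_i})}/\der{\Lambda}$, which is uniformly bounded below by some $c>0$ thanks to \cref{lem:uniflowerbd_dg}, that expansion gives
\begin{align}
\nabla_\theta\Lambda_{\theta_i}\cdot f_i(\theta_i,\tilde{\Lambda}_i) = d_i\,\nabla_\theta\Lambda_{\theta_i}\cdot\nabla_\theta\Lambda^{(M_i)}_{\theta_i} + O\!\left(\norm{\nabla_\theta\Lambda_{\theta_i}}\,\big|\tilde{\Lambda}_i - \Lambda^{(M_i)}_{\theta_i}\big|\right),
\end{align}
where the error term uses the uniform boundedness of $\der{f_i}/\der{\Lambda}$ and of $\nabla_\theta\Lambda_{\theta_i}$.

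Next I would trade $\nabla_\theta\Lambda^{(M_i)}_{\theta_i}$ for $\nabla_\theta\Lambda_{\theta_i}$ via the uniform gradient approximation of \cref{cor:unifapprox_nablatheta}: as $M_i\uparrow\infty$, the quantity $\epsilon_i\coloneqq\norm{\nabla_\theta\Lambda^{(M_i)}_{\theta_i}-\nabla_\theta\Lambda_{\theta_i}}$ tends to $0$, so $\nabla_\theta\Lambda_{\theta_i}\cdot\nabla_\theta\Lambda^{(M_i)}_{\theta_i}\geq\norm{\nabla_\theta\Lambda_{\theta_i}}^2-\norm{\nabla_\theta\Lambda_{\theta_i}}\epsilon_i$. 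Collecting the two vanishing errors into a single $a_i\to0$ and using that $\norm{\nabla_\theta\Lambda_{\theta_i}}\leq B$ uniformly, I obtain $\nabla_\theta\Lambda_{\theta_i}\cdot f_i\geq c\norm{\nabla_\theta\Lambda_{\theta_i}}^2-Ba_i$ for all large $i$. Under the contradiction hypothesis $\norm{\nabla_\theta\Lambda_{\theta_i}}^2>(\rho/2)^2$, and once $i$ is large enough that $Ba_i<\tfrac{c}{2}(\rho/2)^2$ this yields $\nabla_\theta\Lambda_{\theta_i}\cdot f_i\geq\tfrac{c}{2}(\rho/2)^2>0$. Summing against $\gamma_i$ and invoking $\sum_i\gamma_i=\infty$ from \cref{ass:5} makes $\sum_i\gamma_i\,\nabla_\theta\Lambda_{\theta_i}\cdot f_i=+\infty$, contradicting the finiteness from the first paragraph. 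Hence $\liminf_{m\to\infty}\norm{\nabla_\theta\Lambda_{\theta_m}}=0$.

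The main obstacle I anticipate is the bookkeeping in the Taylor step: one must simultaneously control the first-order error coming from $\tilde{\Lambda}_i\neq\Lambda^{(M_i)}_{\theta_i}$ and the mismatch between $\nabla_\theta\Lambda^{(M_i)}_{\theta_i}$ and $\nabla_\theta\Lambda_{\theta_i}$, and confirm that both are $o(1)$ uniformly in $\theta$ (not merely pointwise) so that they are dominated by the positive quadratic drift $c\norm{\nabla_\theta\Lambda_{\theta_i}}^2$. The contradiction hypothesis is precisely what keeps this clean, since it pins $\norm{\nabla_\theta\Lambda_{\theta_i}}$ away from $0$ and lets the vanishing errors be absorbed for large $i$; without it one would instead have to invoke a Young-type inequality to argue $\sum_i\gamma_i\norm{\nabla_\theta\Lambda_{\theta_i}}^2<\infty$ directly, which is more delicate because $\sum_i\gamma_i a_i$ need not converge.
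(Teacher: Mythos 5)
Your proposal is correct and takes essentially the same route as the paper's proof: both use $\kappa_m(\theta,\Lambda)=\Lambda_\theta$ as the Lyapunov family in \cref{eq:Lyapunovupdate}, both invoke \cref{cor:convergence}, the Taylor expansion \cref{eq:taylorexp}, \cref{lem:uniflowerbd_dg}, and \cref{cor:unifapprox_nablatheta} to show that, under the contradiction hypothesis, the drift $-\nabla_\theta\Lambda_{\theta_m}\cdot f_m(\theta_m,\tilde{\Lambda}_m)$ is eventually bounded above by a strictly negative constant, and both conclude by telescoping against $\sum_m\gamma_m=\infty$ with summable $\Delta_m(\kappa_m)$. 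The only cosmetic difference is the phrasing of the final contradiction: the paper lets $\Lambda_{\theta_n}\to-\infty$, violating boundedness of $\Lambda_\theta$, while you observe that the drift series must simultaneously converge (since $\Lambda_{\theta_m}$ converges and the errors are summable) and diverge — the same telescoping identity either way.
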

\begin{proof}
	For the sake of contradiction, suppose that $\liminf_{m\to\infty}\norm{\nabla_\theta{\Lambda}_{\theta_m}} > \varepsilon$. Notice that
	\begin{align}
		\Lambda_{\theta_{m+1}} = \Lambda_{\theta_m} + \gamma_m \nabla_\theta \Lambda_{\theta_m} \cdot k_m(r_m) + \Delta_{m}(\kappa_m).
	\end{align}
	Since $\Lambda_{\theta_m}$ and $\tilde{\Lambda}_m$ both converges to the same constant, using \cref{lem:gradgfgmfm}, \cref{cor:unifapprox_truncost}, and \cref{cor:unifapprox_nablatheta}, for all large enough $m$, we have $\nabla_\theta \Lambda_{\theta_m} \cdot k_m(r_m) = -\norm{\nabla_\theta \Lambda_{\theta_m}}^2 + \nu_m$ where $|\nu_m| < \varepsilon^2/2$. Hence,
	\begin{align}
		\Lambda_{\theta_{m+1}} \leq \Lambda_{\theta_m} - \gamma_m (\norm{\nabla_\theta \Lambda_{\theta_m}}^2 - |\nu_m|)+ \Delta_{m}(\kappa_m).
	\end{align}
	Using a telescoping sum, we have $\Lambda_{\theta_n} \to -\infty$, which is a contradiction.
\end{proof}
\begin{lemma}\label[lemma]{lem:limsup_nabla}
	$\limsup_{m\to\infty}\norm{\nabla_\theta{\Lambda}_{\theta_m}} = 0$.
\end{lemma}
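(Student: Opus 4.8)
The plan is to run the up-crossing contradiction argument already used for the idealized algorithm in \cref{subsec:heuristic_main}, now leveraging that $\Lambda_{\theta_m}$ converges (\cref{cor:convergence}) and that $\liminf_{m\to\infty}\norm{\nabla_\theta\Lambda_{\theta_m}} = 0$ (\cref{lem:liminf_nabla}). As in \cref{lem:liminf_nabla}, I would track $\Lambda_{\theta_m}$ through the family of Lyapunov functions $\kappa_m(\theta,\Lambda) = \Lambda_\theta$, which is a valid family: it is $m$-independent (so $\sum_m \sup|\kappa_{m+1}-\kappa_m| = 0$) and has uniformly bounded first and second derivatives by \cref{cor:unfbound_lambdagrad}, whence $\sum_m \Delta_m(\kappa_m)$ converges by \cref{lem:epsneg_gen}. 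Suppose, for contradiction, that $\limsup_{m\to\infty}\norm{\nabla_\theta\Lambda_{\theta_m}} = \varepsilon > 0$.

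First I would extract the up-crossing ranges. Since $\norm{\theta_{m+1}-\theta_m}\to0$ and $\nabla_\theta\Lambda_\theta$ is Lipschitz (\cref{cor:unfbound_lambdagrad}), we have $\norm{\nabla_\theta\Lambda_{\theta_{m+1}}-\nabla_\theta\Lambda_{\theta_m}}\to0$; combined with $\liminf_{m\to\infty}\norm{\nabla_\theta\Lambda_{\theta_m}}=0$ and the contradiction hypothesis, there are infinitely many disjoint index ranges $\{m,\dots,n\}$ with $\norm{\nabla_\theta\Lambda_{\theta_m}}<\varepsilon/3$, $\norm{\nabla_\theta\Lambda_{\theta_n}}>2\varepsilon/3$, and $\norm{\nabla_\theta\Lambda_{\theta_i}}\in(\varepsilon/3,2\varepsilon/3)$ for all $m<i<n$. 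On such a range, writing $\theta_n-\theta_m = -\sum_{i=m}^{n-1}\gamma_i f_i(r_i) + \sum_{i=m}^{n-1}\epsilon_i(\theta)$ and using that on the fixed good sample path the noise tail $\norm{\sum_{i=m}^{n-1}\epsilon_i(\theta)}$ is negligible for large $m$, together with the uniform bound $\norm{f_i(r_i)}\le\text{Const}$ from \cref{lem:gradgfgmfm}, Lipschitz continuity gives $\varepsilon/3 \le \norm{\nabla_\theta\Lambda_{\theta_n}-\nabla_\theta\Lambda_{\theta_m}}\le \text{Const}\,\norm{\theta_n-\theta_m}$. This yields a lower bound $\sum_{i=m}^{n-1}\gamma_i \ge \text{Const}$ that is independent of $\varepsilon$, exactly as in the derivation following \cref{eq:accnabla_nablaLambdatheta}.

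Next I would quantify the descent of $\Lambda_{\theta_i}$ across the range. From the Lyapunov update \cref{eq:Lyapunovupdate} applied to $\kappa_m$ (whose $m$-difference term vanishes), we get $\Lambda_{\theta_{i+1}} = \Lambda_{\theta_i} - \gamma_i \nabla_\theta\Lambda_{\theta_i}\cdot f_i(r_i) + \Delta_i(\kappa_i)$. The crux is to show $\nabla_\theta\Lambda_{\theta_i}\cdot f_i(r_i) \ge \text{Const}\,\norm{\nabla_\theta\Lambda_{\theta_i}}^2 \ge \text{Const}\,\varepsilon^2$ for all large $i$ on the range: by \cref{cor:convergence} we have $\tilde\Lambda_i-\Lambda_{\theta_i}\to0$, and by \cref{cor:unifapprox_truncost,cor:unifapprox_nablatheta} we have $\Lambda_{\theta_i}-\Lambda^{(M_i)}_{\theta_i}\to0$ and $\nabla_\theta\Lambda^{(M_i)}_{\theta_i}\to\nabla_\theta\Lambda_{\theta_i}$ uniformly, so the Taylor expansion \cref{eq:taylorexp} gives $f_i(r_i) = -\frac{\der{g_i}(\theta_i,\Lambda_{\theta_i}^{(M_i)})}{\der{\Lambda}}\,\nabla_\theta\Lambda^{(M_i)}_{\theta_i} + o(1)$, with prefactor uniformly positive and bounded away from $0$ by \cref{lem:uniflowerbd_dg}. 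Summing over the range and using $\sum_{i=m}^{n-1}\gamma_i\ge\text{Const}$ yields $\Lambda_{\theta_n} \le \Lambda_{\theta_m} - \text{Const}\,\varepsilon^2 + \sum_{i=m}^{n-1}\Delta_i(\kappa_i) + o(1)$, so along infinitely many ranges $\Lambda_{\theta_m}$ drops by a fixed positive amount while the summable error tails vanish, contradicting $\Lambda_{\theta_m}\to\Lambda^*$. I expect the main obstacle to be precisely this alignment step: controlling the actually-followed direction $f_i(r_i)$ against the true gradient $\nabla_\theta\Lambda_{\theta_i}$, which, unlike the idealized setting of \cite{Marbach2001}, requires stacking the three uniform approximations above with the Taylor remainder bound from \cref{cor:deltagmfm} along the fixed good sample path.
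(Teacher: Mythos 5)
Your proposal is correct and follows essentially the same route as the paper's proof: the same up-crossing contradiction with thresholds $\varepsilon/3$ and $2\varepsilon/3$, the same Lyapunov family $\kappa_m(\theta,\Lambda)=\Lambda_\theta$, the same lower bound $\sum_{i=m}^{n-1}\gamma_i \geq \textrm{Const}$ via Lipschitz continuity of $\nabla_\theta\Lambda_\theta$ and boundedness of $f_i$, and the same alignment step (via \cref{cor:convergence}, \cref{cor:unifapprox_truncost}, \cref{cor:unifapprox_nablatheta}, and \cref{lem:uniflowerbd_dg}) to force a fixed descent of $\Lambda_{\theta_m}$ per range, contradicting \cref{cor:convergence}. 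Your handling of the prefactor $-\der{g_i}/\der{\Lambda}$ as a uniformly positive constant is, if anything, slightly more explicit than the paper's, which absorbs it into the error term $\nu_i$.
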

\begin{proof}
	For the sake of contradiction, suppose that $\limsup_{m\to\infty}\norm{\nabla_\theta{\Lambda}_{\theta_m}} =\varepsilon> 0$. By \cref{lem:liminf_nabla}, there are infinitely many sequences $\{m,m+1,\cdots,n\}$ such that $ \norm{\nabla_\theta{\Lambda}_{\theta_m}} < \varepsilon/3$, $\norm{\nabla_\theta{\Lambda}_{\theta_n}} > 2\varepsilon/3$ and $\norm{\nabla_\theta{\Lambda}_{\theta_i}} \in (\varepsilon/3,2\varepsilon/3)$ for all $i\in\{m+1,m+2,\cdots,n-1\}$. By \cref{lem:gradgfgmfm} and \cref{cor:convergence}, we have
	\begin{align}
		\varepsilon/3 &\leq \norm{\nabla_\theta{\Lambda}_{\theta_n}} - \norm{\nabla_\theta{\Lambda}_{\theta_m}} \leq \norm{\nabla_\theta{\Lambda}_{\theta_n} - \nabla_\theta{\Lambda}_{\theta_m}}\\
		&\leq \textrm{Const} \norm{\theta_n - \theta_m} = \textrm{Const} \norm{ - \sum_{i=m}^{n-1} \gamma_i f_i(r_i) + \sum_{i=m}^{n-1} \epsilon_i(\theta)}\\
		&\leq \textrm{Const} \sum_{i=m}^{n-1} \gamma_i \norm{f_i(r_i)} + \textrm{Const}\norm{\sum_{i=m}^{n-1} \epsilon_i(\theta)}
	\end{align}
	Assuming $m$ is large enough and using \cref{lem:gradgfgmfm}, we get $\sum_{i=m}^{n-1} \gamma_i > \varepsilon/(6 C_1  \textrm{Const})$, where $C_1$ is the uniform upperbound over $\norm{f_i(r_i)}$. Notice that by a similar argument as in the proof of \cref{lem:liminf_nabla}, for all large enough $i$ we have $\nabla_\theta \Lambda_{\theta_i} \cdot k_i(r_i) = -\norm{\nabla_\theta \Lambda_{\theta_i}}^2 + \nu_i$, where $|\nu_i| < \varepsilon^2/18$. Hence, we have
	\begin{align}
		\Lambda_{\theta_{n}} &\leq \Lambda_{\theta_{m+1}} - \sum_{i=m+1}^{n-1}\gamma_i (\norm{\nabla_\theta \Lambda_{\theta_i}}^2 - |\nu_i|)+ \sum_{i=m}^{n-1}\Delta_{i}(\kappa_i)\\
		&\leq \Lambda_{\theta_{m+1}} - \frac{\varepsilon}{6 C_1  \textrm{Const}} \times \frac{\varepsilon^2}{18} + \sum_{i=m}^{n-1}\Delta_{i}(\kappa_i)
	\end{align}
	which contradicts with the fact that $\Lambda_{\theta_{n}}$ converges, as the error term is arbitrary small for all $m$ large enough.
\end{proof}

	\section{Proof of \cref{prop:costest}}\label{proof:prop:costest}
	Let us rewrite the update equation \cref{eq:prop:costest_vartrunc} as follows:
\begin{align}
	\tilde{\Lambda}_{m+1} = \tilde{\Lambda}_{m} + \gamma_m\left(g^{(M_m)}(\theta,\tilde{\Lambda}_{m}) - 1\right) + \epsilon_m
\end{align}
where $\epsilon_m$ is the stochastic noise. Notice that $\left|\epsilon_m\right| \leq 2\gamma_m M_m = 2 \gamma_m ^ {1-\beta}$. Let $\mathcal{F}_m \coloneqq \{\Phi_0,\Phi_1,\cdots,\Phi_{t_m}\}$ denote the history of Markov chain up to the $m$th visit to the recurrent state $x^*$. It is easy to verify that $\{\epsilon_m \}_{m \geq 0}$ is a martingale difference sequence with respect to the filtration given by $\{\mathcal{F}_m\}\cref{ass:1,ass:2}$. In addition, we have $\expect{}{}{\left|\epsilon_m\right|^2} < 2 \gamma_m ^ {2(1-\beta)}$ and $\sum_{m} \gamma_m ^{2(1-\beta)} < \infty$. Hence, by the martingale convergence theorem $\sum_m \epsilon_m \to \epsilon$ with $\expect{}{}{|\epsilon|} < \infty$, and in particular, $\sum_m \epsilon_m < \infty$ almost surely.

Next, we show that for any constant $\delta > 0$, for almost every sample path, after some time, the value of $\tilde{\Lambda}_m$ remains in a $\delta$-neighborhood of $\Lambda_\theta$.
Fix a sample path $\omega$ for which $\sum_m \epsilon_m < \infty$, and a constant $\delta > 0$. Let $N_\omega\in\nplus $ be large enough so that for all $\ell_1,\ell_2 \geq N_\omega$, we have $\big|\sum_{m=\ell_1}^{\ell_2} \epsilon_m \big| < \delta/4$. Let $N_\delta$ be large enough so that for all $m \geq N_\delta$, we have $\gamma_m^{1-\beta} < {\delta}/{4}$ and $\Lambda_\theta - \Lambda^{(M_m)}_\theta < \delta/8$. Notice that by \cref{lem:unifapprox_truncost} and the fact that $M_m\uparrow \infty$, such an $N_\delta>0$ exists for any $\delta > 0$. Suppose that $N_{\delta,\omega} > \max(N_\delta,N_\omega)$ be the first index for which $\tilde{\Lambda}_{N_{\delta,\omega}} \in [\Lambda_\theta - \delta/4,\Lambda_\theta + \delta/4]$. We claim that $N_{\delta,\omega} < \infty$ , and that for all $m \geq N_{\delta,\omega}$ we have $\tilde{\Lambda}_m \in [\Lambda_\theta - \delta,\Lambda_\theta + \delta]$. The results follows by these two claims and the arbitrary choice of $\delta>0$.

\begin{claim}\label[claim]{claim_app:Nfinite}
	There exists $ N_{\delta,\omega} \in (\max(N_\delta,N_\omega),\infty)$ for which $\tilde{\Lambda}_{N_{\delta,\omega}} \in [\Lambda_\theta - \delta/4,\Lambda_\theta + \delta/4]$.
\end{claim}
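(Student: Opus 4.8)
The plan is to argue by contradiction: if no finite $N_{\delta,\omega}$ existed, then $\tilde{\Lambda}_m \notin [\Lambda_\theta - \delta/4, \Lambda_\theta + \delta/4]$ for every $m > \max(N_\delta,N_\omega)$, and I will show the iterate is then forced to diverge, contradicting its boundedness. First I would record that the one-step increments vanish along the fixed path: the deterministic part obeys $\gamma_m |g^{(M_m)}(\theta,\tilde{\Lambda}_m)-1| < \gamma_m^{1-\beta}$ because $0 < g^{(M_m)} \le M_m = \gamma_m^{-\beta}$, while the noise part $\epsilon_m \to 0$ since $\sum_m \epsilon_m$ converges on this path. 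Hence there is $N_1 > \max(N_\delta,N_\omega)$ with $|\tilde{\Lambda}_{m+1}-\tilde{\Lambda}_m| < \delta/2$ for all $m \ge N_1$. As the forbidden interval has width $\delta/2$, an iterate strictly above it cannot land strictly below it (or vice versa) in one step, so under the contradiction hypothesis $\{\tilde{\Lambda}_m\}_{m\ge N_1}$ is trapped on a single side: either $\tilde{\Lambda}_m > \Lambda_\theta + \delta/4$ for all $m \ge N_1$, or $\tilde{\Lambda}_m < \Lambda_\theta - \delta/4$ for all $m \ge N_1$.

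The heart of the argument is a sign and a magnitude for the drift $g^{(M_m)}(\theta,\tilde{\Lambda}_m)-1$ that are \emph{uniform in $m$} on each side, obtained from the monotonicity of $g^{(M_m)}(\theta,\cdot)$ together with the twisted-kernel identity $g(\theta,\Lambda) = \checkexpect{\check{\Phi}_0=x^*}{\theta}{\e^{\tau_{x^*}(\Lambda_\theta-\Lambda)}}$ of \cref{eq:defg} and the fact that $\tau_{x^*}\ge 1$. On the upper side, using $g^{(M_m)} \le g$ and monotonicity in $\Lambda$, I get $g^{(M_m)}(\theta,\tilde{\Lambda}_m) \le g(\theta,\Lambda_\theta+\delta/4) = \checkexpect{\check{\Phi}_0=x^*}{\theta}{\e^{-\tau_{x^*}\delta/4}} \le \e^{-\delta/4} < 1$, so the drift is at most $\e^{-\delta/4}-1 < 0$ uniformly. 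On the lower side I instead use monotonicity in the truncation level: $H\wedge M \uparrow H$, so by monotone convergence $g^{(M^*)}(\theta,\Lambda_\theta-\delta/4) \uparrow g(\theta,\Lambda_\theta-\delta/4) = \checkexpect{\check{\Phi}_0=x^*}{\theta}{\e^{\tau_{x^*}\delta/4}} \ge \e^{\delta/4} > 1$; fixing $M^*$ with $g^{(M^*)}(\theta,\Lambda_\theta-\delta/4) > 1$ and restricting to $m$ large enough that $M_m \ge M^*$ (possible since $M_m\uparrow\infty$) gives $g^{(M_m)}(\theta,\tilde{\Lambda}_m) \ge g^{(M^*)}(\theta,\tilde{\Lambda}_m) \ge g^{(M^*)}(\theta,\Lambda_\theta-\delta/4) > 1$, a uniformly positive drift.

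Finally I would sum the update from $N_1$, using $\big|\sum_i \epsilon_i\big| < \delta/4$ over index ranges past $N_\omega$. In the upper case $\tilde{\Lambda}_{N_1+k} \le \tilde{\Lambda}_{N_1} - (1-\e^{-\delta/4})\sum_{i=N_1}^{N_1+k-1}\gamma_i + \delta/4 \to -\infty$ as $k\to\infty$, since $\sum_m \gamma_m = \infty$, contradicting $\tilde{\Lambda}_m > \Lambda_\theta + \delta/4$; symmetrically the lower case forces $\tilde{\Lambda}_m \to +\infty$, contradicting $\tilde{\Lambda}_m < \Lambda_\theta - \delta/4$. Either way a contradiction arises, so a finite $N_{\delta,\omega}$ exists. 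I expect the only genuine obstacle to be making the drift bounds uniform despite the moving truncation level $M_m$; the resolution above sidesteps any dependence on the moving fixed point $\Lambda^{(M_m)}_\theta$ by sandwiching $g^{(M_m)}$ between the single untruncated $g$ (from above) and one fixed-level $g^{(M^*)}$ (from below), evaluated at the \emph{fixed} thresholds $\Lambda_\theta \pm \delta/4$, where the twisted-kernel representation and $\tau_{x^*}\ge 1$ yield strict separation from $1$.
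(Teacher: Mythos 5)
Your proof is correct, and although it shares the paper's skeleton---argue by contradiction, trap $\tilde{\Lambda}_m$ on one side of the interval because the one-step increments vanish, establish a drift bound uniform in $m$, and sum against $\sum_m\gamma_m=\infty$---the mechanism behind the uniform drift bound is genuinely different. The paper anchors the comparison at the moving truncated fixed points: the choice of $N_\delta$ encodes \cref{lem:unifapprox_truncost} via $\Lambda_\theta-\Lambda_\theta^{(M_m)}<\delta/8$, so that $\tilde{\Lambda}_m<\Lambda_\theta-\delta/4$ implies $\tilde{\Lambda}_m<\Lambda_\theta^{(M_m)}-\delta/8$, and monotonicity of $g^{(M_m)}(\theta,\cdot)$ then bounds the drift below by $g^{(M_m)}(\theta,\Lambda_\theta^{(M_m)}-\delta/8)-1$; the uniform-in-$m$ positivity of this quantity is asserted rather than spelled out, and the opposite side is dismissed as ``without loss of generality''. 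You instead sandwich the $m$-dependent function between two \emph{fixed} functions evaluated at the \emph{fixed} thresholds $\Lambda_\theta\pm\delta/4$: above, $g^{(M_m)}\le g$ together with $g(\theta,\Lambda_\theta+\delta/4)=\checkexpect{\check{\Phi}_0=x^*}{\theta}{\e^{-\tau_{x^*}\delta/4}}\le\e^{-\delta/4}$ (using $\tau_{x^*}\ge1$) gives the explicit negative drift constant $\e^{-\delta/4}-1$; below, monotone convergence supplies a single level $M^*$ with $g^{(M^*)}(\theta,\Lambda_\theta-\delta/4)>1$, and $g^{(M_m)}\ge g^{(M^*)}$ once $M_m\ge M^*$. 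This buys explicit, manifestly $m$-uniform constants, removes any reliance on \cref{lem:unifapprox_truncost} or on the fixed points $\Lambda_\theta^{(M_m)}$ altogether, and treats both cases honestly---worth doing, since they are not symmetric: truncation pulls $g^{(M_m)}$ down, which helps on the upper side and hurts on the lower side (the paper's reduction does keep the harder lower side, so its argument is sound, but the symmetry it invokes is loose). The paper's route, in exchange, stays in terms of exactly the quantities ($\Lambda_\theta^{(M_m)}$ and the $\delta/8$ margin) that its proof of the proposition's second claim reuses.
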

\begin{proof}[proof of claim]
	For the sake of contradiction, suppose that for all $m>\max(N_\delta,N_\omega)$, we have $\big|\tilde{\Lambda}_m - \Lambda_\theta\big| > \delta/4$. Since $\big|\tilde{\Lambda}_{m+1} - \tilde{\Lambda}_{m}\big| \leq \gamma_m^{1-\beta}\to 0$, after some $m_0$, we have either $\tilde{\Lambda}_m - \Lambda_\theta > \delta/4$ for all $m\geq m_0$, or $\Lambda_\theta - \tilde{\Lambda}_m > \delta/4$ for all $m\geq m_0$. Without loss of generality, let us consider the latter case, i.e., $\tilde{\Lambda}_m - \Lambda_\theta < -\delta/4$ for all $m \geq m_0$. Notice that for all $\ell\in\nplus$, we have
	\begin{align}
		\tilde{\Lambda}_{m_0 + \ell} -\tilde{\Lambda}_{m_0} &=  \sum_{m = m_0}^{m_0+\ell-1} \gamma_m\left(g^{(M_m)}(\theta,\tilde{\Lambda}_{m}) - 1\right) +  \sum_{m = m_0}^{m_0+\ell-1} \epsilon_m  \\
		&\geq  \left(g^{(M_m)}(\theta,\Lambda^{(M_m)}_\theta - \delta/8) - 1\right) \sum_{m = m_0}^{m_0+\ell-1} \gamma_m - \delta/4
	\end{align}
	where the inequality follows by the fact that for any fixed $\theta\in\real^l$, $g^{(M_m)}(\theta,\cdot)$ is a decreasing function, and that $\tilde{\Lambda}_m - \Lambda^{(M_m)}_\theta < -\delta/8$ for all $m \geq m_0$. Notice that the right-hand side of the above inequality goes to $+\infty$ as $\ell\to\infty$, which contradicts with the assumption that $\tilde{\Lambda}_m < \Lambda_\theta-\delta/4$ for all $m \geq m_0$.
\end{proof}

\begin{claim}
	For all $m > N_{\delta,\omega}$, we have $\tilde{\Lambda}_{m} \in [\Lambda_\theta - \delta,\Lambda_\theta + \delta]$.
\end{claim}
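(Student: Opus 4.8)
The plan is to prove confinement by a first-exit argument, exploiting the fact that once $\tilde{\Lambda}_m$ lies near $\Lambda_\theta$ the deterministic part of the recursion acts as a restoring force pointed at $\Lambda^{(M_m)}_\theta$. Suppose, for contradiction, that the claim fails, and let $m^* > N_{\delta,\omega}$ be the first index with $\tilde{\Lambda}_{m^*}\notin[\Lambda_\theta-\delta,\Lambda_\theta+\delta]$. By symmetry I treat only the case $\tilde{\Lambda}_{m^*}>\Lambda_\theta+\delta$, the case $\tilde{\Lambda}_{m^*}<\Lambda_\theta-\delta$ being identical with the inequalities reversed and the drift acting in the opposite direction. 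By minimality of $m^*$ and the entry property $\tilde{\Lambda}_{N_{\delta,\omega}}\le\Lambda_\theta+\delta/4$, every iterate $\tilde{\Lambda}_j$ with $N_{\delta,\omega}\le j\le m^*-1$ lies in $[\Lambda_\theta-\delta,\Lambda_\theta+\delta]$.

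Next I would locate the last re-entry below the half-band: let $k^*$ be the largest index in $\{N_{\delta,\omega},\dots,m^*-1\}$ with $\tilde{\Lambda}_{k^*}\le\Lambda_\theta+\delta/2$, which exists since $\tilde{\Lambda}_{N_{\delta,\omega}}\le\Lambda_\theta+\delta/4$. By construction $\tilde{\Lambda}_j>\Lambda_\theta+\delta/2$ for every $j\in\{k^*+1,\dots,m^*\}$. The key monotonicity input is that for $m\ge N_\delta$ one has $\Lambda^{(M_m)}_\theta\le\Lambda_\theta$ and $\Lambda_\theta-\Lambda^{(M_m)}_\theta<\delta/8$ (the first from $g^{(M_m)}\le g$ together with $g(\theta,\Lambda_\theta)=1$, the second from \cref{lem:unifapprox_truncost} and $M_m\uparrow\infty$), while $g^{(M_m)}(\theta,\cdot)$ is strictly decreasing with fixed point $\Lambda^{(M_m)}_\theta$. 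Hence for each $j$ in the excursion, $\tilde{\Lambda}_j>\Lambda_\theta+\delta/2\ge\Lambda^{(M_j)}_\theta$ forces $g^{(M_j)}(\theta,\tilde{\Lambda}_j)-1<0$, so every drift term $\gamma_j\big(g^{(M_j)}(\theta,\tilde{\Lambda}_j)-1\big)$ is negative.

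Finally I would telescope the recursion from $k^*+1$ to $m^*$:
\begin{align}
\tilde{\Lambda}_{m^*}-\tilde{\Lambda}_{k^*+1}=\sum_{j=k^*+1}^{m^*-1}\gamma_j\big(g^{(M_j)}(\theta,\tilde{\Lambda}_j)-1\big)+\sum_{j=k^*+1}^{m^*-1}\epsilon_j.
\end{align}
The drift sum is $\le 0$ by the previous paragraph, and $\big|\sum_{j=k^*+1}^{m^*-1}\epsilon_j\big|<\delta/4$ since both endpoints exceed $N_\omega$. Moreover $\tilde{\Lambda}_{k^*}\le\Lambda_\theta+\delta/2$ together with the single-step bound $|\tilde{\Lambda}_{k^*+1}-\tilde{\Lambda}_{k^*}|\le\gamma_{k^*}^{1-\beta}<\delta/4$ (valid for $k^*\ge N_\delta$) gives $\tilde{\Lambda}_{k^*+1}\le\Lambda_\theta+3\delta/4$. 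Combining these, $\tilde{\Lambda}_{m^*}\le\Lambda_\theta+3\delta/4+\delta/4=\Lambda_\theta+\delta$, which contradicts $\tilde{\Lambda}_{m^*}>\Lambda_\theta+\delta$.

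The main obstacle will be the bookkeeping at the entry step $k^*\to k^*+1$, i.e. controlling the overshoot when the iterate first climbs above $\Lambda_\theta+\delta/2$, and securing the correct sign of the drift uniformly across the whole excursion. Both reduce to the strict monotonicity of $g^{(M_m)}(\theta,\cdot)$ and the uniform sandwich $\Lambda^{(M_m)}_\theta\le\Lambda_\theta<\Lambda^{(M_m)}_\theta+\delta/8$, so no estimate beyond \cref{lem:unifapprox_truncost}, the single-step increment bound $\gamma_m^{1-\beta}$, and the already-established summability of $\{\epsilon_m\}$ is required; the only care is to split $\delta$ into the three slacks (entry overshoot, noise, and drift) so that they sum to exactly $\delta$.
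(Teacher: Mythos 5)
Your proof is correct and follows essentially the same route as the paper's: both arguments isolate an excursion of $\tilde{\Lambda}_m$ on one side of a reference level near $\Lambda_\theta$, use the monotonicity of $g^{(M_m)}(\theta,\cdot)$ together with the sandwich $\Lambda_\theta^{(M_m)}\le\Lambda_\theta<\Lambda_\theta^{(M_m)}+\delta/8$ to give every drift term a favorable sign, and budget $\delta$ between the one-step entry overshoot ($<\delta/4$, from $\gamma_m^{1-\beta}<\delta/4$) and the accumulated noise ($<\delta/4$, from the choice of $N_\omega$). The only bookkeeping difference is that you argue by contradiction backward from the first violation of the outer band via the last crossing of $\Lambda_\theta+\delta/2$, which spares you the paper's step of showing the excursion re-enters the inner band $[\Lambda_\theta-\delta/4,\Lambda_\theta+\delta/4]$ in finite time, whereas the paper tracks the excursion forward from its first exit until re-entry.
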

\begin{proof}
	Let $\underline{\ell} > N_{\delta,\omega}$ denote the first iteration after $N_{\delta,\omega}$ at which $\tilde{\Lambda}_{\underline{\ell}} \notin [\Lambda_\theta - \delta/4,\Lambda_\theta + \delta/4]$. Without loss of generality, suppose that $\tilde{\Lambda}_{\underline{\ell}} < \Lambda_\theta - \delta/4$. Let $\overline{\ell} > \underline{\ell}$ denote the first iteration after $\underline{\ell}$ at which $\tilde{\Lambda}_{\overline{\ell}} \in [\Lambda_\theta - \delta/4,\Lambda_\theta + \delta/4]$. Notice that $\overline{\ell}<\infty$ by the same argument as in \cref{claim_app:Nfinite}. Also, notice that by the choice of $N_{\delta,\omega}$ and the fact that $\left|\tilde{\Lambda}_{m+1} - \tilde{\Lambda}_{m}\right| < \delta_m^{1-\beta}$ for any $m\in\nplus$, we have $\tilde{\Lambda}_{\ell} < \Lambda_\theta - \delta/4$ for all $\ell \in \{\underline{\ell},\underline{\ell}+1,\cdots,\overline{\ell}-1\}$. Hence, for any such $\ell$, we have
	\begin{align}
		\tilde{\Lambda}_{\ell} &= \tilde{\Lambda}_{\underline{\ell}} + \sum_{m=\underline{\ell}}^{\ell-1} \gamma_m\left(g^{(M_m)}(\theta,\tilde{\Lambda}_{m}) - 1\right) +  \sum_{m=\underline{\ell}}^{\ell-1} \epsilon_m \\
		&\geq \tilde{\Lambda}_{\underline{\ell}} + \sum_{m=\underline{\ell}}^{\ell-1} \epsilon_m \geq \Lambda_\theta - \delta/4 + \delta_{\underline{\ell}}^{1-\beta} - \delta/4 \geq \Lambda_\theta - 3\delta/4.
	\end{align}
	In particular, $\tilde{\Lambda}_{\ell} \in [\Lambda_\theta - \delta,\Lambda_\theta + \delta]$ for all $\ell \in \{\underline{\ell},\underline{\ell}+1,\cdots,\overline{\ell}-1\}$.
\end{proof}

	\section{Risk-Sensitive Markov Decision Processes}\label{sec:riskmdp}
	In the previous section, we considered an abstract problem where the state transition probabilities of a Markov chain are parameterized by a parameter $\theta\in\real^l$, and we derived the gradient of the cost with respect to $\theta.$ In this section, we show how to apply the analysis in the previous sections to an MDP, where action is chosen according to some paramertized (according to parameter vector $\theta\in\real^l$) probability distribution. We present an algorithm that updates the policy at each visit to the regeneration state $s^*$. The analysis of \cref{sec:algorithm} applies trivially to the new setting.

Consider a discrete-time Markov decision process $\{\Phi_i,\U_i\}_{i\geq 0}$ with finite state space $\statespace$ and finite action space $\actionspace$. For any state-action pair $(s,a)\in\statespace\times\actionspace$, the transition probability is given by
\begin{align}
	P(s,a,s') \coloneqq \prob{\Phi_{i+1} = s' \vert \Phi_i = s, \U_i = a} \qquad\forall s'\in \statespace,\text{ and }i\in \nzero.
\end{align}
Suppose that we have access to a set of policies, parameterized by $\theta\in\real^l$, such that
\begin{align}
	\mu_\theta(s,a) \coloneqq \prob{\U_{i} = a \vert \Phi_i = s} \qquad\forall i\in \nzero.
\end{align}
Let $C:\statespace\times \actionspace\to \real$ denote the one-step cost function. The objective is to find the policy $\theta$ that optimizes the risk-sensitive cost
\begin{align}\label{eq:deflambdamdp}
	\Lambda_\theta \coloneqq \lim_{n\to\infty}\frac{1}{n}\ln\expect{\Phi_0=s}{\theta}{\expon{\alpha \sum_{i=0}^{n-1}C(\Phi_i,\U_i)}},\qquad \forall s\in \statespace,
\end{align}
where $\alpha >0$ is the risk factor and $\expect{\Phi_0=s}{\theta}{\cdot}$ denotes the expectation with respect to the policy $\theta$. Assuming the resulting chain is aperiodic and recurrent, the convergence of the above limit follows by the multiplicative ergodic theorem~\cite[Theorem 1.2]{Balaji2000}. The same also follows by a similar argument as in \cref{sec_app:back_risk}. In particular, let $\hat{P}_\theta(s,s') = \sum_{a\in\actionspace}\expon{\alpha C(s,a)}\mu_\theta(s,a) P(s,a,s')$, for any $s,s'\in\statespace;$ assuming $\hat{P}_\theta(s,s')$ is primitive, $\lambda_\theta\coloneqq\expon{\Lambda_\theta}$ is the largest eigenvalue of $\hat{P}_\theta$ with multiplicity $1$. Moreover, it follows that the relative risk-sensitive value function is uniquely determined by the equation
\begin{align}
	h_\theta(s) &= \expect{\Phi_0=s}{\theta}{\expon{\sum_{i=0}^{\tau_{s^*}-1}\left(\alpha C(\Phi_i,\U_i)- \Lambda_\theta\right)}}=\checkexpect{\check{\Phi}_0=s}{\theta}{\expon{\tau_{s^*}(\Lambda_\theta-\Lambda)}}
\end{align}
up to a constant factor, where $\checkexpect{\check{\Phi}_0=s}{\theta}{\cdot}$ denotes the expectation with respect to the twisted kernel $\check{P}_\theta$, and $s^*$ is a recurrent state. All the analysis and assumptions of the previous sections extend naturally to the case of MDP with proper rewording.

Let $P_\theta(s,s')\coloneqq \sum_{a\in\actionspace}\mu_\theta(s,a) P(s,a,s')$, and define $\probspace= \left\{ P_\theta:\theta\in\mathbb{R}^l \right\}$. Let $\xbar{\probspace}$ denote the closure of $\probspace$ in the space of $|\statespace|\times|\statespace|$ matrices. \cref{ass:1,ass:2,ass:3} translate into the following assumption for MDPs.
\begin{assumption}\label[assumption]{ass:6}
	{\normalfont (i)} For each $P\in \xbar{\probspace}$, the Markov chain with transition probability $P$ is aperiodic and irreducible with a common recurrent state $s^*$. {\normalfont (ii)} For any $s,a\in\statespace\times\actionspace$, $\mu_\theta(s,a)$ is bounded, twice differentiable, and has bounded first and second derivatives. {\normalfont (iii)} For any $(s,a)\in\statespace\times\actionspace$, there exist bounded functions $L_\theta(s,a)$ and $L^{(2)}_\theta(s,a)$ such that $({\normalfont a})~\nabla_\theta \mu_\theta(s,a) = \mu_\theta(s,a) L_\theta(s,a)$ and $({\normalfont b})~ \nabla^2_\theta \mu_\theta(s,a) = \mu_\theta(s,a) L^{(2)}_\theta(s,a).$
\end{assumption}

Given \cref{ass:6} and following a similar argument as in \cref{sec:policygrad}, we get the following risk-sensitive formula for $\nabla_\theta \Lambda_\theta$ in terms of visits to the recurrent state $s^*$:
\begin{align}
	\checkexpect{\check{\Phi}_0=s^*}{\theta}{\tau_{s^*}} \nabla_\theta \Lambda_\theta &= \expect{\Phi_0 = s^*}{\theta}{\left(\sum_{i=0}^{\tau_{s^*}-1}L_\theta\left({\Phi}_i,{\U}_{i}\right)\right)\expon{\sum_{i=0}^{\tau_{s^*}-1}\left(\alpha C\left({\Phi}_i,{\U}_{i}\right)- \Lambda_\theta\right)}}\label{eq:gradlambda_regeneration_mdp_extended}
\end{align}
The same issues as we discussed in \cref{sec:costest} and \cref{sec:heuristic} arise in the case of risk-sensitive MDPs, and using the vanilla form of the above risk-sensitive formula to develop a trajectory-based algorithm using stochastic approximation will not work because the stochastic noise may not be summable. Hence, as before, we focus on a truncated and smooth approximation of the risk-sensitive cost.

Following the same steps as in \cref{sec:algorithm}, our trajectory-based gradient algorithm for the risk-sensitive MDP is same as the one given by \cref{eq:algorithm} (or \cref{eq:algorithm_project}), with the only difference being the definition of $H_m(\theta,\Lambda)$ which should be replaced with the following:
\begin{align}
	H_m(\theta,\Lambda) \coloneqq \expon{\sum_{i=t_m}^{t_{m+1}-1} \left(\alpha C_\theta({\Phi}_i,\U_i) - \Lambda\right)},
\end{align}
where $t_m$ is the $m$th visit to the recurrent state $s^*$ and $t_0 = 0$. \cref{thm:main} extends naturally to the case of risk-sensitive MDPs.

	\section{Supporting Results}\label{proof:supporting}
	\subsection{Supporting Results of \cref{sec:review}}
	\begin{lemma}\label[lemma]{lem:hbounded}
		Let \cref{ass:1,ass:2} hold. For all $\theta\in\real^l$, consider the version of the risk-sensitive value function $h_\theta(\cdot)$ given by \eqref{eq:recversion}. Then, $\exists \underline{h}, \overline{h} \in \real_{++}$ such that $h_\theta(x) \in [\underline{h},\overline{h}]$ for all $x\in\samplespace$ and $\theta\in\real^l$.
	\end{lemma}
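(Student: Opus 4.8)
The plan is to establish the upper and lower bounds separately, both anchored on the first-passage representation \cref{eq:recversion} and the uniform recurrence supplied by \cref{lem:prelim}. Throughout, let $\underline C \le C_\theta(x)\le \overline C$ be the uniform cost bounds from \cref{ass:2}. A first observation, used repeatedly, is that $\Lambda_\theta\in[\alpha\underline C,\alpha\overline C]$ for every $\theta$: this is immediate from \cref{eq:deflambda}, since the per-step cost lies in $[\underline C,\overline C]$ forces $e^{\alpha\underline C n}\le \mathbb{E}[\exp(\alpha\sum_{i<n}C_\theta(\Phi_i))]\le e^{\alpha\overline C n}$. For the lower bound I would use that $\alpha C_\theta(\Phi_i)-\Lambda_\theta\ge \alpha\underline C-\Lambda_\theta=:-b$ with $b\in[0,\alpha(\overline C-\underline C)]$, so that $h_\theta(x)\ge \mathbb{E}_x[e^{-b\tau_{x^*}}]$; convexity of $t\mapsto e^{-bt}$ and Jensen give $\mathbb{E}_x[e^{-b\tau_{x^*}}]\ge e^{-b\,\mathbb{E}_x[\tau_{x^*}]}$, and \cref{lem:prelim} (together with $R^{\tau}\ge 1+\tau\ln R$) bounds $\sup_{\theta,x}\mathbb{E}_x^\theta[\tau_{x^*}]\le T<\infty$. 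Hence $h_\theta(x)\ge e^{-\alpha(\overline C-\underline C)T}=:\underline h>0$.

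The upper bound is the main obstacle, because the crude estimate $h_\theta(x)\le \mathbb{E}_x[R^{\tau_{x^*}}]$ with $R=e^{\alpha\overline C-\Lambda_\theta}$ is useless once $R\ge\overline R_{x^*}$; the cancellation carried by $\Lambda_\theta$ must be retained. I would decompose the first passage one step at a time: for $x\neq x^*$, conditioning on $\Phi_1$ and using $h_\theta(x^*)=1$ from \cref{eq:fixedpoint} reproduces the multiplicative Poisson equation \cref{eq:multpoiss}, namely $h_\theta(x)=e^{\alpha C_\theta(x)-\Lambda_\theta}\sum_y P_\theta(x,y)h_\theta(y)$. Writing $h'=(h_\theta(x))_{x\neq x^*}$, $D_\theta=\mathrm{diag}(e^{\alpha C_\theta(x)-\Lambda_\theta})_{x\neq x^*}$, $Q_\theta$ the taboo (sub-stochastic) transition matrix on $\samplespace\setminus\{x^*\}$, and $b_\theta$ the vector with entries $e^{\alpha C_\theta(x)-\Lambda_\theta}P_\theta(x,x^*)$, this reads $h'=D_\theta Q_\theta\,h'+b_\theta$. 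The crucial algebraic point is that $D_\theta Q_\theta=\lambda_\theta^{-1}\hat Q_\theta$, where $\hat Q_\theta$ is the principal submatrix of $\hat P_\theta$ obtained by deleting the row and column indexed by $x^*$. Since $\hat P_\theta$ is primitive with Perron root $\lambda_\theta$, strict monotonicity of the Perron root under deletion of a row and its corresponding column of an irreducible nonnegative matrix gives $\rho(\hat Q_\theta)<\lambda_\theta$; hence $\rho(D_\theta Q_\theta)<1$, the matrix $I-D_\theta Q_\theta$ is invertible, and $h'=(I-D_\theta Q_\theta)^{-1}b_\theta\ge 0$.

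Uniformizing this bound over $\theta$ is where compactness enters. The entries of $D_\theta Q_\theta$ and $b_\theta$ are bounded uniformly in $\theta$ (because $C_\theta\in[\underline C,\overline C]$ and $\Lambda_\theta\in[\alpha\underline C,\alpha\overline C]$), so the closure $\mathcal S$ of $\{D_\theta Q_\theta:\theta\in\real^l\}$ in matrix space is compact. At any limit point $A\in\mathcal S$ I would pass to a subsequence along which $P_{\theta_n}\to P\in\xbar{\probspace}$ and $C_{\theta_n}\to C^\infty$; then $A=\lambda^{-1}\hat Q^\infty$ with $\hat P^\infty=\mathrm{diag}(e^{\alpha C^\infty})P$ primitive (by \cref{ass:1} for $\xbar{\probspace}$ and $e^{\alpha C^\infty}>0$) and $\lambda=\lambda(\hat P^\infty)$ its Perron root, using continuity of the Perron root. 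The same submatrix inequality gives $\rho(A)=\rho(\hat Q^\infty)/\lambda<1$, so $I-A$ is invertible for every $A\in\mathcal S$. Since matrix inversion is continuous on the set of invertible matrices, $A\mapsto(I-A)^{-1}$ is continuous, hence bounded, on the compact set $\mathcal S$; combined with the uniform bound on $b_\theta$ this yields $\sup_{\theta,x}h_\theta(x)\le \overline h<\infty$.

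I expect the spectral/compactness step to be the delicate part: one must justify that every limit point of $\{\hat P_\theta\}$ is primitive, so that the strict submatrix inequality $\rho(\hat Q)<\lambda$ persists in the limit, and then invoke continuity of the Perron root and of matrix inversion to convert the pointwise strict inequality into a uniform norm bound on $(I-D_\theta Q_\theta)^{-1}$. The remaining ingredients—the one-step first-passage identity and the uniform moment bound $\mathbb{E}_x^\theta[\tau_{x^*}]\le T$—are routine consequences of \cref{eq:multpoiss} and \cref{lem:prelim}, respectively.
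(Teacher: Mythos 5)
Your proof is correct, but it follows a genuinely different route from the paper's. The paper handles both bounds by contradiction: if $h_{\theta_i}(x)\to 0$ (resp.\ $\to\infty$) along some sequence, compactness yields a subsequence with $P_{\theta_j}\to P\in\xbar{\probspace}$, $C_{\theta_j}\to C$, $\Lambda_{\theta_j}\to\Lambda$, and the degeneracy is then propagated state-by-state through the multiplicative Poisson equation \cref{eq:multpoiss} using irreducibility from \cref{ass:1}: vanishing at one state forces vanishing at every state, contradicting $h_\theta(x^*)=1$, while explosion at one state is incompatible with the normalization at $x^*$ because each propagation step costs at most the factor $\exp\left(\alpha(\overline{C}-\underline{C})\right)\max_{w,z:P(w,z)>0}P(w,z)^{-1}$. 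You instead argue constructively: the lower bound comes from Jensen's inequality plus the uniform return-time moment bound extracted from \cref{lem:prelim} (legitimate here, since the proof of \cref{lem:prelim} uses only \cref{ass:1} and compactness of $\xbar{\probspace}$, so there is no circular dependence on \cref{lem:hbounded}); the upper bound comes from solving the taboo system $h'=(I-\hat{Q}_\theta/\lambda_\theta)^{-1}b_\theta$, where invertibility rests on the strict decrease of the Perron root on proper principal submatrices of irreducible nonnegative matrices, and uniformity over $\theta$ follows from compactness together with continuity of the spectral radius and of matrix inversion — including, as you correctly note, the verification that every limit of $\hat{P}_{\theta_n}$ remains primitive, which holds by \cref{ass:1} for $\xbar{\probspace}$ and positivity of the exponential cost factor. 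The paper's route is more economical in tools (only the Poisson equation, irreducibility, and compactness, with one template serving both bounds), whereas your route yields quantitative information: an explicit $\underline{h}=e^{-\alpha(\overline{C}-\underline{C})T}$, and an upper bound that is explicit up to the single non-quantitative compactness step bounding $\sup_\theta\norm{(I-\hat{Q}_\theta/\lambda_\theta)^{-1}}$.
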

	\begin{proof}\label{proof:lem:hbounded}
		Notice that, by \cref{ass:2}, $\exists \underline{C},\overline{C} \in \real$ such that $C_\theta(x) \in [\underline{C},\overline{C}]$ for all $x\in\samplespace$ and $\theta\in\real^l$. This together with \eqref{eq:deflambda}, implies that $\Lambda_\theta \in [\alpha \underline{C},\alpha \overline{C}]$. Notice that by \eqref{eq:fixedpoint}, \eqref{eq:recversion} and \cref{ass:1}, we have $h_\theta(x^*) = 1$ for all $\theta \in \real^l$.

For the sake of contradiction, suppose that there exists $x\in\samplespace$ and a sequence $\{\theta_i\}$ such that $h_{\theta_i}(x) \to 0$. Using a compactness argument, we can choose a subsequence $\{\theta_j\}$ such that $P_{\theta_j} \to P\in\xbar{\probspace}$, $\Lambda_{\theta_j}\to\Lambda\in [\alpha \underline{C},\alpha \overline{C}]$, and $C_{\theta_j} \to C$ for some $C:\samplespace \to [\underline{C},\overline{C}]$. Notice that
\begin{align}
	h_{\theta_j}(x) = \frac{\exp(\alpha C_{\theta_j}(x))}{\lambda_{\theta_j}} 	\sum_{y\in\samplespace}P_{\theta_j}(x,y)h_{\theta_j}(y)\to 0,
\end{align}
which implies that for any $y\in\samplespace$ with $P(x,y) > 0$, we have $h_{\theta_j}(y)\to 0$. Repetitive use of the same argument together with \cref{ass:1}, implies that $h_{\theta_j}(y)\to 0$ for all $y\in\samplespace$. Contradiction follows by the fact that $h_{\theta}(x^*) = 1$ for all $\theta\in\real^l$. Hence, $\exists \underline{h} > 0$ such that $h_\theta(x) \geq \underline{h}$ for all $x\in\samplespace$ and $\theta\in\real^l$.

The proof of the existence of a uniform upper bound follows the same logic. For the sake of contradiction, suppose that there exists $x\in\samplespace$ and a sequence $\{\theta_i\}$ such that $h_{\theta_i}(x) \to \infty$. Consider a similar subsequence $\{\theta_j\}$ as above. Notice that
\begin{align}
	h_{\theta_j}(x^*) = \frac{\exp(\alpha C_{\theta_j}(x^*))}{\lambda_{\theta_j}} 	\sum_{y\in\samplespace}P_{\theta_j}(x^*,y)h_{\theta_j}(y) = 1,
\end{align}
which implies that for any $y\in\samplespace$ with $P(x^*,y) > 0$,  $h_{\theta_j}(y)$ stays bounded; in particular, for any such $y$, we have
\begin{align}
	\limsup_{j\to\infty} h_{\theta_j}(y) &\leq \exp\left(\alpha (\overline{C} - \underline{C})\right) \frac{1}{P(x^*,y)}\\
	&\leq \exp\left(\alpha (\overline{C} - \underline{C})\right) \max_{w,z:P(w,z)>0} \frac{1}{P(w,z)}
\end{align}
Notice that $\max_{w,z:P(w,z)>0} {\left(P(w,z)\right)}^{-1} < \infty$.
Repetitive use of the same argument together with \cref{ass:1}, implies that
\begin{align}
	\sup_{y\in\samplespace}\limsup_{j\to\infty} h_{\theta_j}(y) \leq \left(\exp\left(\alpha (\overline{C} - \underline{C})\right) \max_{w,z:P(w,z)>0} \frac{1}{P(w,z)}\right)^{|\samplespace|},
\end{align}
which contradicts with the assumption that $h_{\theta_j}(x)\to\infty$. Hence, $\exists \overline{h} > 0$ such that $h_\theta(x) \leq \overline{h}$ for all $x\in\samplespace$ and $\theta\in\real^l$.
	\end{proof}

	\begin{corollary}\label[corollary]{cor:Phat_commonrec}
		Let \cref{ass:1,ass:2} hold. Then, \cref{ass:1} holds for $\check{\probspace}$, \ie, elements of $\,\xbarcheck{\probspace}$ are aperiodic and recurrent with the common recurrent state $x^*$.
	\end{corollary}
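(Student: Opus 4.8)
The plan is to exploit the fact that the twisted kernel $\check{P}_\theta$ has exactly the same zero/nonzero pattern (support) as $P_\theta$, and that this support structure is preserved under the closure thanks to the uniform bounds on $h_\theta$ from \cref{lem:hbounded}. Since irreducibility, aperiodicity, and recurrence of a finite-state Markov chain are determined solely by the support of its transition matrix, the claim will then follow from \cref{ass:1} applied to $\xbar{\probspace}$.

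First I would fix $\theta\in\real^l$ and observe that, since $\exp(\alpha C_\theta(x)) > 0$, $\lambda_\theta = \exp(\Lambda_\theta) > 0$, and $h_\theta(x),h_\theta(y) > 0$, the formula $\check{P}_\theta(x,y) = \exp(\alpha C_\theta(x)) P_\theta(x,y) h_\theta(y) / (\lambda_\theta h_\theta(x))$ gives $\check{P}_\theta(x,y) > 0$ if and only if $P_\theta(x,y) > 0$. Hence each $\check{P}_\theta$ shares the support of $P_\theta$, and is therefore aperiodic and irreducible with recurrent state $x^*$ by \cref{ass:1}.

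The heart of the argument is the closure. I would take an arbitrary $\check{P}\in\xbarcheck{\probspace}$, realized as a limit $\check{P}_{\theta_i}\to\check{P}$. Using compactness of the stochastic-matrix space, of the interval $[\alpha\underline{C},\alpha\overline{C}]$ (the range of $\Lambda_\theta$, hence of $\lambda_\theta$), and of the range of $C_\theta$, together with the uniform bounds $h_\theta(x)\in[\underline{h},\overline{h}]$ from \cref{lem:hbounded}, I can pass to a subsequence $\{\theta_j\}$ along which $P_{\theta_j}\to P\in\xbar{\probspace}$, $\lambda_{\theta_j}\to\lambda\in[\exp(\alpha\underline{C}),\exp(\alpha\overline{C})]$, $C_{\theta_j}\to C$, and $h_{\theta_j}\to h$ with $h(x)\in[\underline{h},\overline{h}]$ for every $x\in\samplespace$. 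Taking the limit in the defining formula yields $\check{P}(x,y) = \exp(\alpha C(x)) P(x,y) h(y) / (\lambda h(x))$, where the prefactor $\exp(\alpha C(x)) h(y)/(\lambda h(x))$ is bounded away from both $0$ and $\infty$ uniformly in $x,y$. Consequently $\check{P}(x,y) > 0$ if and only if $P(x,y) > 0$, so $\check{P}$ and $P$ have identical support.

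Finally I would invoke \cref{ass:1} for $P\in\xbar{\probspace}$, which is aperiodic and irreducible with recurrent state $x^*$. As irreducibility (existence of positive-probability paths between all state pairs), aperiodicity (the gcd of return-time lengths), and recurrence of every state (automatic for finite irreducible chains) depend only on the support of the transition matrix, the matching support transfers all three properties verbatim to $\check{P}$. I expect the main obstacle to be this closure step: the genuine risk is that entries of $\check{P}_{\theta_j}$ degenerate to $0$ in the limit and destroy irreducibility, and it is precisely the uniform \emph{lower} bound $\underline{h} > 0$ in \cref{lem:hbounded} that prevents the prefactor $h(y)/h(x)$ from vanishing or blowing up.
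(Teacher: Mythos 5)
Your proposal is correct and follows essentially the same route as the paper's proof: extract a convergent subsequence by compactness, use the uniform bounds $h_\theta(x)\in[\underline{h},\overline{h}]$ from \cref{lem:hbounded} to pass to the limit in the twisted-kernel formula, and conclude via \cref{ass:1} for the limiting $P\in\xbar{\probspace}$. The only cosmetic difference is that you make the support-preservation step explicit (the paper leaves it implicit after noting that $\check{P}$ is the twisted kernel of $\hat{P}=\e^{\alpha C}P$ with $h$ bounded away from $0$ and $\infty$), which is a fine and slightly more self-contained way to finish.
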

	\begin{proof}\label{proof:cor:Phat_commonrec}
		Let $\check{P} \in \xbarcheck{\probspace}$, and suppose that $\check{P}_{\theta_i} \to \check{P}$ where $\{\check{P}_{\theta_i}\}\subset \check{\probspace}$. Using a compactness argument, we can choose a subsequence $\{{\theta_j}\}$ such that $P_{\theta_j} \to P\in\xbar{\probspace}$, $\Lambda_{\theta_j}\to\Lambda \in [\alpha \underline{C},\alpha \overline{C}]$, $h_{\theta_j} \to h$ for some $h:\samplespace\to [\underline{h},\overline{h}]$, and $C_{\theta_j} \to C$ for some $C:\samplespace\to [\underline{C},\overline{C}]$.
It is easy to see that $\Lambda$ and $h(\cdot)$ are the risk-sensitive cost and the risk-sensitive relative value function of $P$, respectively, with one-step cost function $C(\cdot)$. Moreover, $\check{P}$ is the twisted kernel of $\hat{P} \coloneqq \expon{\alpha C} P$. Now, the result follows by \cref{ass:1} and the fact that $h(x) \in [\underline{h},\overline{h}]$ for all $x\in\samplespace$.
	\end{proof}

	\subsection{Supporting Results of \cref{sec:costest}}
	\begin{lemma}\label[lemma]{lem:unifapprox_truncost}
		Let \cref{ass:1,ass:2} hold. For any $\delta > 0$ there exists $M_\delta$ large enough such that for all $\theta\in\real^l$, we have $0 \leq \Lambda_\theta  - \Lambda_\theta^{(M_\delta)} < \delta$.
	\end{lemma}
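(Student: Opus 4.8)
The plan is to prove the two inequalities separately, the lower bound being immediate and the uniform upper bound being the substance. Since $x\wedge M\le x$ we have $g^{(M)}(\theta,\Lambda)\le g(\theta,\Lambda)$ for every $\Lambda$, so at $\Lambda=\Lambda_\theta$ this gives $g^{(M)}(\theta,\Lambda_\theta)\le g(\theta,\Lambda_\theta)=1$; as $g^{(M)}(\theta,\cdot)$ is non-increasing with unique root $\Lambda_\theta^{(M)}$, it follows that $\Lambda_\theta^{(M)}\le\Lambda_\theta$, i.e.\ $\Lambda_\theta-\Lambda_\theta^{(M)}\ge 0$. For the upper bound, note that because $g^{(M)}(\theta,\cdot)$ is decreasing with root $\Lambda_\theta^{(M)}$, it suffices to exhibit a single $M_\delta$ (independent of $\theta$) for which
\begin{align}
	g^{(M_\delta)}(\theta,\Lambda_\theta-\delta)>1\qquad\text{for all }\theta\in\real^l,
\end{align}
since this forces $\Lambda_\theta^{(M_\delta)}>\Lambda_\theta-\delta$. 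It further suffices to treat small $\delta$: once the claim holds for all $\delta\in(0,\delta_*)$, any larger $\delta$ is covered by reusing $M_{\delta_*/2}$. I may therefore assume $\e^{2\delta}<\overline{R}_{x^*}$, where $\overline{R}_{x^*}>1$ is the uniform radius furnished for the twisted kernels by \cref{cor:prelim}.

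Writing $H\coloneqq\expon{\sum_{i=0}^{\tau_{x^*}-1}(\alpha C_\theta(\Phi_i)-\Lambda_\theta)}$, so that $H\e^{\delta\tau_{x^*}}=\expon{\sum_{i=0}^{\tau_{x^*}-1}(\alpha C_\theta(\Phi_i)-(\Lambda_\theta-\delta))}$ and $g^{(M)}(\theta,\Lambda_\theta-\delta)=\expect{\Phi_0=x^*}{\theta}{(H\e^{\delta\tau_{x^*}})\wedge M}$, I would lower-bound the truncated expectation by dropping the cap on the small part:
\begin{align}
	g^{(M)}(\theta,\Lambda_\theta-\delta)\ge \expect{\Phi_0=x^*}{\theta}{H\e^{\delta\tau_{x^*}}} - \expect{\Phi_0=x^*}{\theta}{H\e^{\delta\tau_{x^*}}\mathbbm{1}\{H\e^{\delta\tau_{x^*}}>M\}}.
\end{align}
By the change-of-measure identity \cref{eq:defg}, the first term equals $\checkexpect{\check{\Phi}_0=x^*}{\theta}{\e^{\delta\tau_{x^*}}}\ge\e^{\delta}\ge 1+\delta$ (using $\tau_{x^*}\ge1$), a bound uniform in $\theta$. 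The same identity turns the second term into $\checkexpect{\check{\Phi}_0=x^*}{\theta}{\e^{\delta\tau_{x^*}}\mathbbm{1}\{H\e^{\delta\tau_{x^*}}>M\}}$.

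It remains to control this tail uniformly. Since $C_\theta\le\overline{C}$ and $\Lambda_\theta\ge\alpha\underline{C}$ (both from \cref{ass:2}), we have $H\le\expon{\alpha(\overline{C}-\underline{C})\tau_{x^*}}$, whence $\{H\e^{\delta\tau_{x^*}}>M\}\subseteq\{\tau_{x^*}>\ln M/c\}$ with $c\coloneqq\alpha(\overline{C}-\underline{C})+\delta$. By Cauchy–Schwarz,
\begin{align}
	\checkexpect{\check{\Phi}_0=x^*}{\theta}{\e^{\delta\tau_{x^*}}\mathbbm{1}\{\tau_{x^*}>\tfrac{\ln M}{c}\}}\le \left(\checkexpect{\check{\Phi}_0=x^*}{\theta}{\e^{2\delta\tau_{x^*}}}\right)^{1/2}\left(\check{\mathbb{P}}\Big(\tau_{x^*}>\tfrac{\ln M}{c}\Big)\right)^{1/2},
\end{align}
and \cref{cor:prelim} bounds $\checkexpect{}{\theta}{\e^{2\delta\tau_{x^*}}}$ uniformly (here is where $\e^{2\delta}<\overline{R}_{x^*}$ is used) while giving, via Markov's inequality, for any fixed $R\in(1,\overline{R}_{x^*})$, the uniform decay $\check{\mathbb{P}}(\tau_{x^*}>T)\le M(R)R^{-T}\to0$. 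Thus the tail tends to $0$ uniformly in $\theta$ as $M\to\infty$, so for $M_\delta$ large enough $g^{(M_\delta)}(\theta,\Lambda_\theta-\delta)\ge(1+\delta)-\delta/2>1$ for all $\theta$, completing the proof. The one genuine obstacle is the uniformity over the non-compact parameter set $\real^l$; it is resolved precisely by the uniform return-time moment bounds of \cref{lem:prelim,cor:prelim}, which replace what would otherwise be a $\theta$-by-$\theta$ (or compactness-plus-subsequence) argument.
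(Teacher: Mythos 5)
Your proposal is correct, but it follows a genuinely different route from the paper. The paper first proves existence/uniqueness of $\Lambda_\theta^{(M)}$ and pointwise monotone convergence $\Lambda_\theta^{(M)}\uparrow\Lambda_\theta$, and then upgrades to uniformity by contradiction: it extracts a convergent subsequence $P_{\theta_j}\to P\in\xbar{\probspace}$, $C_{\theta_j}\to C$, identifies the limits as the (truncated) risk-sensitive costs of the limiting chain, and uses convergence in distribution of the bounded truncated functionals to reach a contradiction. You instead give a direct quantitative argument: reduce to showing $g^{(M)}(\theta,\Lambda_\theta-\delta)>1$ uniformly in $\theta$, pass to the twisted kernel via the change-of-measure identity behind \cref{eq:defg}, split off the truncation tail, and control it uniformly by Cauchy--Schwarz together with the uniform geometric return-time bounds of \cref{lem:prelim,cor:prelim}. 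Both arguments ultimately rest on \cref{lem:prelim} (the paper's proof also invokes it for the uniform tail bound $\sup_P P(\tau_{x^*}>N_\delta)<\delta$), but your version avoids subsequence extraction entirely and yields an explicit rate: the tail is $O\!\left(M^{-\ln R/c}\right)$ uniformly in $\theta$, so $M_\delta$ can in principle be written down, whereas the paper's compactness argument is purely qualitative. Two minor points you should make explicit: (i) you take the existence and uniqueness of the root $\Lambda_\theta^{(M)}$ as given (the paper re-derives it from $g^{(M)}(\theta,\cdot)$ being continuous, decreasing, with limits $M>1$ and $0$ at $\mp\infty$); and (ii) the decomposition $\expect{\Phi_0=x^*}{\theta}{X\wedge M}\geq \expect{\Phi_0=x^*}{\theta}{X}-\expect{\Phi_0=x^*}{\theta}{X\mathbbm{1}\{X>M\}}$ with $X=H\e^{\delta\tau_{x^*}}$ requires $\expect{\Phi_0=x^*}{\theta}{X}=\checkexpect{\check{\Phi}_0=x^*}{\theta}{\e^{\delta\tau_{x^*}}}<\infty$; this does hold, but only because your small-$\delta$ reduction guarantees $\e^{\delta}<\e^{2\delta}<\overline{R}_{x^*}$, so it is worth stating that the reduction is doing this work as well.
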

	\begin{proof}\label{proof:lem:unifapprox_truncost}
		Let us first show the existence of a unique $\Lambda_\theta^{(M)}$ for all $\theta\in\real^l$ and $M\in(1,\infty)$. This follows from the following two facts: $g^{(M)}(\theta,\cdot)$ is a decreasing continuous function; and, $\lim_{\Lambda\to-\infty}g^{(M)}(\theta,\Lambda) = M > 1$ and $\lim_{\Lambda\to+\infty}g^{(M)}(\theta,\Lambda) = 0$ which are both true due to the monotone convergence theorem. Notice that $\Lambda_\theta^{(M)} < \Lambda_\theta$ for all $M\in(1,\infty)$.

Next, we show that for any fixed $\theta\in\real^l$, $\Lambda_\theta^{(M)}\uparrow \Lambda_\theta$ as $M\uparrow \infty$. Consider a sequence $M_i\uparrow\infty$. Notice that for any $i<j$, we have $\Lambda_\theta^{(M_i)} < \Lambda_\theta^{(M_j)}  < \Lambda_\theta$. Hence, the sequence $\{\Lambda_\theta^{(M_i)}\}$ is an increasing and bounded sequence, which implies that $\Lambda_\theta^{(M_i)}\uparrow\Lambda_\theta^{\infty} \leq \Lambda_\theta$. We want to show that $\Lambda_\theta^{\infty} = \Lambda_\theta$. By the monotone convergence theorem, we have $g^{(M_i)}(\theta,\Lambda_\theta^{\infty}) \uparrow g(\theta,\Lambda_\theta^{\infty})$. Since $g^{(M_i)}(\theta,\Lambda_\theta^{\infty}) < 1$ for all $i$, we have $g(\theta,\Lambda_\theta^{\infty})\leq 1$ which together with $g(\theta,\cdot)$ being strictly decreasing implies that $\Lambda_\theta^{\infty} \geq \Lambda_\theta$. Hence, we have $\Lambda_\theta^{\infty} = \Lambda_\theta$. Notice that the exact same argument applies to the risk-sensitive cost associated with any transition probability $P\in\xbar{\probspace}$ and any cost function $C:\samplespace \to [\underline{C},\overline{C}]$.

Finally, we prove $\Lambda_\theta^{(M)}\uparrow \Lambda_\theta$ as $M\uparrow \infty$ uniformly over $\theta\in\real^l$. For the sake of contradiction, suppose that there exist $\delta > 0$ and sequences $M_i\uparrow\infty$ and $\{\theta_i\}$ for which $\Lambda_{\theta_i}  - \Lambda_{\theta_i}^{(M_i)} > \delta$ for all $i$. Using a compactness argument together with \cref{lem:hbounded}, we can choose a subsequence $\{{\theta_j}\}$ such that $P_{\theta_j} \to P\in\xbar{\probspace}$, $\Lambda_{\theta_j}\to\Lambda \in [\alpha \underline{C},\alpha \overline{C}]$, $h_{\theta_j} \to h$ for some $h:\samplespace\to [\underline{h},\overline{h}]$, and $C_{\theta_j} \to C$ for some $C:\samplespace\to [\underline{C},\overline{C}]$. Notice that $\Lambda$ is the risk-sensitive cost associated with the transition probability $P\in\xbar{\probspace}$ and cost function $C:\samplespace\to [\underline{C},\overline{C}]$.

Since $M_j\uparrow \infty$, we have $\Lambda^{(M_j)}\uparrow \Lambda$. Next, we claim that for any fixed $k\in\nplus$, $\Lambda^{(M_k)}_{\theta_j} \to \Lambda^{(M_k)}$.
For the sake of contradiction, suppose there exists a further subsequence $\{\theta_l\}$ such that $\Lambda^{(M_k)}_{\theta_l} \to \hat{\Lambda}\neq\Lambda^{(M_k)}$. Since $C_{\theta_l} \to C$ and $P_{\theta_l}\to P$, we have the following convergence in distribution:
\begin{align}
	\expon{\sum_{i=0}^{\tau_{x^*}-1}\left(\alpha C_{\theta_l}(\Phi_i)- \Lambda^{(M_k)}_{\theta_l}\right)}\wedge M_k \to \expon{\sum_{i=0}^{\tau_{x^*}-1}\left(\alpha C(\Phi_i)- \hat{\Lambda}\right)}\wedge M_k.
\end{align}
Notice that in the above relation, $\tau_{x^*}$ depends on the transition probability which is not explicitly written. The above convergence follows by the fact that for any $\delta>0$, there exists $N_\delta>0$ for which $\sup_{P\in\xbar{\probspace}} P(\tau_{x^*} > N_\delta) < \delta$ (see the proof of \cref{lem:prelim}). Since these random variables are all bounded by $M_k$, we have
\begin{align}
	&\expect{\Phi_0=x^*}{\theta}{\expon{\sum_{i=0}^{\tau_{x^*}-1}\left(\alpha C_{\theta_l}(\Phi_i)- \Lambda^{(M_k)}_{\theta_l}\right)}\wedge M_k}\\
	&\myquad[8]\to \expect{\Phi_0=x^*}{P}{\expon{\sum_{i=0}^{\tau_{x^*}-1}\left(\alpha C(\Phi_i)- \hat{\Lambda}\right)}\wedge M_k},
\end{align}
which in turn, implies that $\expect{\Phi_0=x^*}{P}{\expon{\sum_{i=0}^{\tau_{x^*}-1}\left(\alpha C(\Phi_i)- \hat{\Lambda}\right)}\wedge M_k} = 1$. Hence,  $\hat{\Lambda}=\Lambda^{(M_k)}$ and contradiction follows.

Now, based on the above argument, there exists $k_0$ large enough such that $0<\Lambda - \Lambda^{(M_{k_0})}<\delta/6$, and $j_0$ large enough such that for all $j\geq j_0$, we have $|\Lambda^{(M_{k_0})}_{\theta_j} - \Lambda^{(M_{k_0})}| < \delta/6$  and $|\Lambda - \Lambda_{\theta_{j}}|<\delta/6$. This implies that for all $j\geq j_0$, we have $|\Lambda^{(M_{k_0})}_{\theta_j} - \Lambda_{\theta_{j}}| =  \Lambda_{\theta_{j}} - \Lambda^{(M_{k_0})}_{\theta_j} < \delta/2$. Contradiction follows by the fact that
$ \Lambda_{\theta_{j}} - \Lambda^{(M_{k_0\vee j})}_{\theta_j} <  \Lambda_{\theta_{j}} - \Lambda^{(M_{k_0})}_{\theta_j}$.

	\end{proof}

	\subsection{Supporting Results of \cref{sec:policygrad}}
	\begin{lemma}\label[lemma]{lem:propgf}
		If $g(\theta,\Lambda) < \infty$ for some $(\theta,\Lambda)\in\real^{l+1}$, then $\nabla g(\theta,\Lambda)$ exists and in particular, we have
		\begin{align}
			&\begin{aligned}
				\nabla_\theta g(\theta,\Lambda) &= \expect{\Phi_{0} = x^*}{\theta}{\sum_{i=0}^{\tau_{x^*}-1} \big(\alpha \nabla_\theta C_\theta({\Phi}_i) + L_\theta(\Phi_i,\Phi_{i+1}) \big)\expon{\sum_{i=0}^{\tau_{x^*}-1} \left(\alpha C_\theta({\Phi}_i) - \Lambda\right)}}\\
				&= \checkexpect{\check{\Phi}_{0} = x^*}{\theta}{\sum_{i=0}^{\tau_{x^*}-1} \big(\alpha \nabla_\theta C_\theta(\check{\Phi}_i) + L_\theta(\check{\Phi}_i,\check{\Phi}_{i+1}) \big)},
			\end{aligned}\\
			&\frac{\der{g(\theta,\Lambda)}}{\der{\Lambda}} = -\expect{\Phi_{0} = x^*}{\theta}{\tau_{x^*}\expon{\sum_{i=0}^{\tau_{x^*}-1} \left(\alpha C_\theta({\Phi}_i) - \Lambda\right)}} = -\checkexpect{\check{\Phi}_{0} = x^*}{\theta}{\tau_{x^*}}.
		\end{align}
	\end{lemma}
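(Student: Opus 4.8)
The plan is to expand $g$ as an absolutely convergent sum over first‑return excursions and differentiate it term by term, then justify the interchange of differentiation and summation by dominated convergence. Writing $\tau_{x^*}=n$ and summing over excursions $\omega=(x^*=x_0,x_1,\dots,x_{n-1},x_n=x^*)$ with $x_i\neq x^*$ for $1\le i\le n-1$, we have
\[
g(\theta,\Lambda)=\sum_{n=1}^{\infty}\sum_{\omega:\tau_{x^*}=n}\Big(\prod_{i=0}^{n-1}P_\theta(x_i,x_{i+1})\Big)\exp\Big(\sum_{i=0}^{n-1}\big(\alpha C_\theta(x_i)-\Lambda\big)\Big),
\]
and each summand is smooth in $(\theta,\Lambda)$ by \cref{ass:2,ass:3}. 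First I would compute the termwise derivatives. Differentiating in $\Lambda$ pulls down a factor $-n=-\tau_{x^*}$ from the $e^{-n\Lambda}$ term, which reproduces the claimed integrand $-\tau_{x^*}\exp(\sum_i(\alpha C_\theta(\Phi_i)-\Lambda))$. Differentiating in $\theta$, the product rule applied to $\prod_i P_\theta(x_i,x_{i+1})$ together with \cref{ass:3}(a) (so that $\nabla_\theta P_\theta=P_\theta L_\theta$) yields the factor $\sum_{i=0}^{n-1}L_\theta(x_i,x_{i+1})$, while the chain rule on the cost exponential yields $\alpha\sum_{i=0}^{n-1}\nabla_\theta C_\theta(x_i)$; collecting these gives exactly the stated formula for $\nabla_\theta g$.

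The main work, and the step I expect to be the principal obstacle, is justifying the interchange of the gradient with the infinite sum. Using the boundedness of $L_\theta$ and $\nabla_\theta C_\theta$ from \cref{ass:2,ass:3}, each differentiated summand is bounded in norm by $\mathrm{Const}\times\tau_{x^*}$ times the original summand, so a dominating function exists provided $\expect{\Phi_0=x^*}{\theta}{\tau_{x^*}\exp(\sum_{i=0}^{\tau_{x^*}-1}(\alpha C_\theta(\Phi_i)-\Lambda))}<\infty$. In the twisted representation this is $\checkexpect{\check\Phi_0=x^*}{\theta}{\tau_{x^*}R^{\tau_{x^*}}}$ with $R=e^{\Lambda_\theta-\Lambda}$, and the hypothesis $g(\theta,\Lambda)<\infty$ is exactly $\checkexpect{}{\theta}{R^{\tau_{x^*}}}<\infty$. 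Since $\check P_\theta$ is a finite, irreducible, aperiodic chain (\cref{cor:Phat_commonrec}), the generating function $z\mapsto\checkexpect{}{\theta}{z^{\tau_{x^*}}}$ is rational with a radius of convergence $1/\rho(\check Q_\theta)>1$ (where $\check Q_\theta$ is the taboo matrix on $\samplespace\setminus\{x^*\}$), so finiteness at $R$ forces $R$ strictly inside that radius; consequently every moment $\checkexpect{}{\theta}{\tau_{x^*}^{\,k}R^{\tau_{x^*}}}$ is finite as well (alternatively one may invoke \cref{lem:prelim,cor:prelim} to obtain a uniform window $(1,\overline{R}_{x^*})$ of finite exponential moments). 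This supplies the dominating function for the $\Lambda$‑derivative, and it also handles the $\theta$‑derivative after controlling the nearby path probabilities: by \cref{ass:3}(a) one has $P_{\theta'}(x,y)\le P_\theta(x,y)\,e^{\mathrm{Const}\,|\theta'-\theta|}$ along each step, so for $\theta'$ in a small ball the perturbed excursion weights are bounded by the original weights times $e^{\mathrm{Const}\,\tau_{x^*}|\theta'-\theta|}$, which is absorbed into a slightly larger $R$ still inside the radius of convergence. Thus dominated convergence applies, the termwise differentiation is legitimate, and both $\expect{}{}{\cdot}$‑forms are established.

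Finally, to obtain the twisted‑kernel expressions I would use the change‑of‑measure identity $\checkexpect{\check\Phi_0=x^*}{\theta}{\Psi}=\expect{\Phi_0=x^*}{\theta}{\Psi\,\exp(\sum_{i=0}^{\tau_{x^*}-1}(\alpha C_\theta(\Phi_i)-\Lambda_\theta))}$, valid for any functional $\Psi$ of the excursion; this is the Radon--Nikodym derivative of the twisted excursion law with respect to the original one, obtained by telescoping the $h_\theta$‑ratios exactly as in \cref{cor_app:fixedpoint}. Applying it with $\Psi=\tau_{x^*}$ and with $\Psi=\sum_i(\alpha\nabla_\theta C_\theta+L_\theta)$ converts the $\expect{}{}{\cdot}$‑forms into the $\checkexpect{}{}{\cdot}$‑forms; in particular, evaluating at $\Lambda=\Lambda_\theta$ makes the exponential weight equal to $1$ and produces $\der{g(\theta,\Lambda_\theta)}/\der{\Lambda}=-\checkexpect{}{\theta}{\tau_{x^*}}$, as claimed.
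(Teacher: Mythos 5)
Your proposal is correct and follows essentially the same route as the paper's proof: expand $g$ as an absolutely convergent sum over first-return excursions, differentiate term by term, and justify the interchange by dominated convergence, using the fact that finiteness of the exponential moment $\checkexpect{\check{\Phi}_0=x^*}{\theta}{\e^{(\Lambda_\theta-\Lambda)\tau_{x^*}}}$ at one point yields finiteness with a little slack $\epsilon$, so that the extra factor $\tau_{x^*}$ is absorbed; the twisted-kernel forms then follow from the same change-of-measure identity the paper uses. In fact you are somewhat more careful than the paper on two points it asserts without detail: the radius-of-convergence argument justifying the $\epsilon$-slack, and the uniform domination over a neighborhood of $\theta$ needed to differentiate in $\theta$ under the sum.
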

	\begin{proof}\label{proof:lem:propgf}
		Notice that if $g(\theta,\Lambda) < \infty$, then we have
\begin{align}
	g(\theta,\Lambda) &=\! \sum_{k=1}^\infty\,\, \sum_{\substack{x_1,x_2,\cdots,x_{k-1}\\ x_i \neq x^*}} \!\!\!P_\theta\left(\Phi_i=x_i,\,\forall i<k\text{ and } \Phi_{k}=x^*| \Phi_0=x^*\right) \, \e^{\sum_{i=0}^{k-1} \left(\alpha C_\theta(x_i) - \Lambda\right)}\label{eq_app:extendedg}\\
	&= \expect{\Phi_0 = x^*}{\theta}{\expon{\sum_{i=0}^{\tau_{x^*}-1} \left(\alpha C_\theta({\Phi}_i) - \Lambda\right)}} = \checkexpect{\check{\Phi}_0 = x^*}{\theta}{\e^{(\Lambda_\theta - \Lambda)\tau_{x^*}}} < \infty,
\end{align}
i.e., moment generating function of $\tau_{x^*}$, the first return time to state $x^*$ of the Markov chain with transition probability $\check{P}_\theta$, exists and is finite at $\Lambda_\theta - \Lambda$. Hence, there exists a small enough $\epsilon > 0$ such that $g(\theta,\Lambda-\epsilon) < \infty$. Notice that, by \cref{ass:1,ass:2,ass:3}, we have
\begin{align}
	&\expect{\Phi_{0} = x^*}{\theta}{\left\Vert \sum_{i=0}^{\tau_{x^*}-1} \big(\alpha \nabla_\theta C_\theta({\Phi}_i) + L_\theta(\Phi_i,\Phi_{i+1}) \big)\expon{\sum_{i=0}^{\tau_{x^*}-1} \left(\alpha C_\theta({\Phi}_i) - \Lambda\right)}\right\Vert_2} \\
	&\myquad[6]\leq \textrm{Const}\times \expect{\Phi_{0} = x^*}{\theta}{\tau_{x^*}\expon{\sum_{i=0}^{\tau_{x^*}-1} \left(\alpha C_\theta({\Phi}_i) - \Lambda\right)}} \\
	&\myquad[6] = \textrm{Const}\times \checkexpect{\check{\Phi}_0 = x^*}{\theta}{\tau_{x^*} \e^{(\Lambda_\theta - \Lambda)\tau_{x^*}}}\\
	&\myquad[6] \leq \textrm{Const}\times \checkexpect{\check{\Phi}_0 = x^*}{\theta}{\e^{(\Lambda_\theta - \Lambda+\epsilon)\tau_{x^*}}} < \infty.
\end{align}
Hence, the interchange of differentiation and infinite summation is allowed in \cref{eq_app:extendedg} by the dominant convergence theorem, and we have
\begin{align}
	\nabla_\theta g(\theta,\Lambda) &= \sum_{k=1}^\infty\,\,\sum_{\substack{x_1,\cdots,x_{k-1}\\ x_i \neq x^*}} \nabla_\theta\Bigg( P_\theta\left(\Phi_i=x_i,\,\forall i<k\text{ and } \Phi_{k}=x^*| \Phi_0=x^*\right) \e^{\sum_{i=0}^{k-1} \left(\alpha C_\theta(x_i) - \Lambda\right)}\Bigg)\\
	&=\expect{\Phi_{0} = x^*}{\theta}{ \sum_{i=0}^{\tau_{x^*}-1} \big(\alpha \nabla_\theta C_\theta({\Phi}_i) + L_\theta(\Phi_i,\Phi_{i+1}) \big)\expon{\sum_{i=0}^{\tau_{x^*}-1} \left(\alpha C_\theta({\Phi}_i) - \Lambda\right)}}.
\end{align}
The relation for ${\der{g(\theta,\Lambda)}}\mathbin{/}{\der{\Lambda}}$ follows by the same argument.
	\end{proof}

	\subsection{Supporting Results of \cref{sec:heuristic}}
	\begin{corollary}\label[corollary]{cor:unifapprox_truncost}
		$\Lambda_\theta^{(M)}\uparrow \Lambda_\theta$ as $M\uparrow \infty$ uniformly over $\theta\in\real^l$.
	\end{corollary}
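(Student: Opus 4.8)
The plan is to deduce the result from the already-established uniform approximation for the hard truncation (\cref{lem:unifapprox_truncost}) by sandwiching the smoothly-truncated cost between the hard-truncated cost and the untruncated cost. The key structural observation is a pointwise (per-sample) comparison of the three truncation maps.

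First I would record the elementary inequalities at the level of the random functions. Write $\Psi_M(h)$ for the smooth truncation map, so that $G^{(M)}(\theta,\Lambda) = \Psi_M\big(H(\theta,\Lambda)\big)$. For $h \le M$ all three truncations equal $h$. For $h > M$, setting $u \coloneqq \ln(h/M) > 0$ and using the identity $h = M\sum_{i\ge 0} u^i/i!$, we get $h \wedge M = M \le M\sum_{i=0}^4 u^i/i! = \Psi_M(h) < h$, the last inequality because the discarded tail $\sum_{i\ge 5} u^i/i!$ is strictly positive. Thus $h\wedge M \le \Psi_M(h) \le h$ for every $h$, and taking expectations yields, for all $\theta$ and $\Lambda$,
\begin{align}
	g^{(M)}_{\mathrm{hard}}(\theta,\Lambda) \le g^{(M)}(\theta,\Lambda) \le g(\theta,\Lambda),
\end{align}
where $g^{(M)}_{\mathrm{hard}}$ denotes the hard-truncated function of \cref{sec:costest} and $g^{(M)}$ the smooth one redefined in \cref{sec:heuristic}.

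Next I would transfer this ordering to the fixed points. Each of the three functions is continuous and strictly decreasing in $\Lambda$ (noted for $G^{(M)}$ in the text, and true for the others by the monotone-convergence argument of \cref{lem:unifapprox_truncost}), so each equals $1$ at a unique point. For two decreasing functions $g_1 \le g_2$ with fixed points $\Lambda_1,\Lambda_2$, the evaluation $g_2(\Lambda_1) \ge g_1(\Lambda_1) = 1$ together with monotonicity of $g_2$ forces $\Lambda_1 \le \Lambda_2$; applying this twice gives
\begin{align}
	\Lambda^{(M),\mathrm{hard}}_\theta \le \Lambda^{(M)}_\theta \le \Lambda_\theta, \qquad \forall\, \theta\in\real^l,\ M > 1,
\end{align}
which in particular reproves the two orderings asserted in the text. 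Uniform convergence is then immediate: given $\delta > 0$, \cref{lem:unifapprox_truncost} supplies $M_\delta$ with $0 \le \Lambda_\theta - \Lambda^{(M_\delta),\mathrm{hard}}_\theta < \delta$ for every $\theta$, and the sandwich gives $0 \le \Lambda_\theta - \Lambda^{(M)}_\theta \le \Lambda_\theta - \Lambda^{(M),\mathrm{hard}}_\theta < \delta$ for all $\theta$ and all $M \ge M_\delta$.

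Finally, to justify the symbol $\uparrow$ I would verify that $\Psi_M(h)$ is nondecreasing in $M$ for fixed $h$. For $h \le M$ it equals $h$; for $h > M$, writing it as $h\,e^{-u}p(u)$ with $p(u) = \sum_{i=0}^4 u^i/i!$ and differentiating gives $\tfrac{\der{}}{\der{u}}\big(e^{-u}p(u)\big) = -e^{-u}u^4/4! \le 0$, so $e^{-u}p(u)$ decreases in $u$; since $u = \ln(h/M)$ decreases as $M$ grows, $\Psi_M(h)$ increases in $M$. Hence $g^{(M)}(\theta,\cdot)$ is nondecreasing in $M$, and the same fixed-point comparison makes $\Lambda^{(M)}_\theta$ nondecreasing in $M$. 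The only genuinely nontrivial input is \cref{lem:unifapprox_truncost}; everything else is a pointwise comparison of the truncation maps plus the monotone transfer to fixed points, so I expect the main care to lie in the per-sample inequalities $h\wedge M \le \Psi_M(h) \le h$ and the monotonicity in $M$, both of which reduce to positivity of the exponential-series tail.
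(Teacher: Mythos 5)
Your proposal is correct and takes essentially the same route as the paper: the paper's proof is exactly the sandwich $\Lambda_\theta^{(M),\mathrm{hard}} \le \Lambda_\theta^{(M)} \le \Lambda_\theta$ combined with \cref{lem:unifapprox_truncost}. The only difference is that you explicitly verify what the paper merely asserts, namely the per-sample inequalities $h\wedge M \le \Psi_M(h) \le h$, the transfer of this ordering to the fixed points, and the monotonicity of $\Lambda_\theta^{(M)}$ in $M$.
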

	\begin{proof}
		The proof follows by \cref{lem:unifapprox_truncost} and the fact that $\Lambda_\theta^{(M)}$ given here is larger than the one given in \cref{lem:unifapprox_truncost} and smaller than $\Lambda_\theta$.
	\end{proof}

	\begin{lemma}\label[lemma]{lem:propgmfm}
		$f^{(M)} (\theta,\Lambda) = \nabla_\theta g^{(M)} (\theta,\Lambda)$ for any $(\theta,\Lambda)\in\real^{l+1}$, and in particular,
		\begin{align}
			f^{(M)} (\theta,\Lambda_\theta^{(M)}) = -\frac{\der{g^{(M)}(\theta,\Lambda)}}{\der{\Lambda}}\bigg|_{\Lambda = \Lambda_\theta^{(M)}} \nabla_\theta \Lambda_\theta^{(M)}.
		\end{align}
	\end{lemma}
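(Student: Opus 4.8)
The plan is to follow the blueprint of \cref{lem:propgf} almost verbatim, the only new ingredients being the chain rule through the smoothing function and the fact that $g^{(M)}$ is finite \emph{everywhere}. First I would expand $g^{(M)}(\theta,\Lambda) = \expect{\Phi_0=x^*}{\theta}{G^{(M)}(\theta,\Lambda)}$ as the absolutely convergent path sum
\begin{align}
	g^{(M)}(\theta,\Lambda) = \sum_{k=1}^\infty\,\,\sum_{\substack{x_1,\cdots,x_{k-1}\\ x_i\neq x^*}} P_\theta\!\left(\Phi_i = x_i,\,\forall i<k \text{ and } \Phi_k = x^*\mid \Phi_0 = x^*\right)\,\tilde{G}^{(M)}\!\left(H\right),
\end{align}
where $\tilde{G}^{(M)}(h)$ is the deterministic function of which $G^{(M)}(\theta,\Lambda)$ is the evaluation at $h = H(\theta,\Lambda)$, the latter depending on the path only through $\sum_{i=0}^{k-1}(\alpha C_\theta(x_i)-\Lambda)$. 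Unlike $g$, here $g^{(M)}$ is finite for \emph{every} $(\theta,\Lambda)$: since $\tilde{G}^{(M)}(h)$ grows only like $(\ln h)^4$, one has $G^{(M)}(\theta,\Lambda) = O(\tau_{x^*}^4)$, which is integrable because $\tau_{x^*}$ has all polynomial moments by \cref{lem:prelim}. This removes the finiteness caveat that was present in \cref{lem:propgf} and explains why the identity may be asserted for all $(\theta,\Lambda)\in\real^{l+1}$.

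Next I would differentiate a single path term. The parameter $\theta$ enters in two places. Through the transition kernel, \cref{ass:3}(a) gives $\nabla_\theta\prod_i P_\theta(x_i,x_{i+1}) = \big(\prod_i P_\theta(x_i,x_{i+1})\big)\sum_{i} L_\theta(x_i,x_{i+1})$, which, multiplied by $G^{(M)}$, yields the first term of $F^{(M)}$. Through $H$, I would use the chain rule $\nabla_\theta G^{(M)} = \tilde{G}^{(M)\prime}(H)\,\nabla_\theta H$. Here it is essential to note that the Taylor coefficients in the definition of $G^{(M)}$ are chosen so that $\tilde{G}^{(M)}$ and its derivative agree from both sides at the threshold $h = M$ (indeed $\tilde{G}^{(M)}$ is $C^4$ there; $C^1$ is all that is needed for this lemma), so $\tilde{G}^{(M)\prime}$ is well-defined across the truncation and equals $\mathbbm{1}\{h\le M\} + \tfrac{M}{h}\big[\sum_{j=0}^3 \tfrac{1}{j!}(\ln(M^{-1}h))^j\big]\mathbbm{1}\{h>M\}$. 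A short computation then shows $\tilde{G}^{(M)\prime}(H)\,\nabla_\theta H$ reproduces exactly the last two (indicator-form) terms in the definition of $F^{(M)}$. Combining both contributions via the product rule gives $\nabla_\theta\big[P_\theta(\text{path})\,G^{(M)}\big] = P_\theta(\text{path})\,F^{(M)}$ term by term.

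The main obstacle is the same as in \cref{lem:propgf}: justifying the interchange of $\nabla_\theta$ with the infinite path sum, equivalently differentiation under the expectation. I would do this by dominated convergence, bounding $\|F^{(M)}(\theta,\Lambda)\|$ (and its analogue for $\theta'$ in a small neighborhood) by an integrable function. By \cref{ass:2} the cost derivatives are bounded and $\ln(M^{-1}H)\le(\alpha\overline{C}-\Lambda)\tau_{x^*}$, while $\nabla_\theta H / H = \alpha\sum_i\nabla_\theta C_\theta(\Phi_i)$ and $\sum_i L_\theta(\Phi_i,\Phi_{i+1})$ are each $O(\tau_{x^*})$ by \cref{ass:2,ass:3}. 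Hence every piece of $F^{(M)}$ is dominated by a fixed polynomial in $\tau_{x^*}$, and the uniform finiteness of all polynomial moments of $\tau_{x^*}$ from \cref{lem:prelim} supplies the integrable dominating function. This establishes $f^{(M)}(\theta,\Lambda) = \nabla_\theta g^{(M)}(\theta,\Lambda)$.

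Finally, for the sensitivity formula I would invoke the implicit function theorem on the identity $g^{(M)}(\theta,\Lambda_\theta^{(M)}) = 1$. Since $g^{(M)}(\theta,\cdot)$ is strictly decreasing (as already noted in the text, with $\der{g^{(M)}}/\der{\Lambda}<0$ computable from $\der H/\der\Lambda = -\tau_{x^*}H$) and $g^{(M)}$ is $C^1$ by the first part together with the analogous argument for $\der{g^{(M)}}/\der{\Lambda}$, the theorem guarantees $\theta\mapsto\Lambda_\theta^{(M)}$ is differentiable. Taking the total derivative of the identity in $\theta$ and using $\nabla_\theta g^{(M)} = f^{(M)}$ yields
\begin{align}
	f^{(M)}(\theta,\Lambda_\theta^{(M)}) + \frac{\der{g^{(M)}(\theta,\Lambda)}}{\der{\Lambda}}\bigg|_{\Lambda=\Lambda_\theta^{(M)}}\nabla_\theta\Lambda_\theta^{(M)} = 0,
\end{align}
which is exactly the claimed relation after rearranging.
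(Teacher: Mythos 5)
Your proposal is correct and follows essentially the same route as the paper's proof: expand $g^{(M)}$ as the path sum over first-return trajectories, justify term-by-term differentiation via dominated convergence using the boundedness in \cref{ass:2,ass:3} and the uniform polynomial moments of $\tau_{x^*}$ from \cref{lem:prelim}, and conclude with the implicit function theorem applied to $g^{(M)}(\theta,\Lambda_\theta^{(M)})=1$. Your added verification that the smoothed truncation $\tilde{G}^{(M)}$ is $C^1$ across the threshold $h=M$ (so that the term-by-term derivative really is $F^{(M)}$) is a detail the paper leaves implicit, but it does not change the argument.
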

	\begin{proof}\label{proof:lem:propgmfm}
		Notice that
\begin{align}
	g^{(M)} (\theta,\Lambda) = \sum_{k=1}^\infty\,\, \sum_{\substack{x_1,x_2,\cdots,x_{k-1}\\ x_i \neq x^*}} P_\theta\left(\Phi_i=x_i,\,\forall i<k\text{ and } \Phi_{k}=x^*| \Phi_0=x^*\right) \,G^{(M)} (\theta,\Lambda),
\end{align}
and that
\begin{align}
	&\expect{\Phi_{0} = x^*}{\theta}{\Vert F^{(M)} (\theta,\Lambda)\Vert_2} = \\
	&\myquad[5]\sum_{k=1}^\infty \,\, \sum_{\substack{x_1,x_2,\cdots,x_{k-1}\\ x_i \neq x^*}} P_\theta\left(\Phi_i=x_i,\,\forall i<k\text{ and } \Phi_{k}=x^*| \Phi_0=x^*\right) \times \\
	&\myquad[11]\left\Vert \nabla_\theta G^{(M)} (\theta,\Lambda) + G^{(M)} (\theta,\Lambda) \sum_{i=0}^{\tau_{x^*}-1} L_\theta(\Phi_i,\Phi_{i+1}) \right\Vert_2 \\
	&\myquad[5]\sum_{k=1}^\infty \,\, \sum_{\substack{x_1,x_2,\cdots,x_{k-1}\\ x_i \neq x^*}}  \left\Vert \nabla_\theta\left( P_\theta\left(\Phi_i=x_i,\,\forall i<k\text{ and } \Phi_{k}=x^*| \Phi_0=x^*\right) \,G^{(M)} (\theta,\Lambda)\right) \right\Vert_2
\end{align}
which are uniformly bounded over $\theta\in\real^l$ by \cref{ass:2,ass:3}, \cref{lem:prelim} (\cref{lem:prelim} implies that for any fixed $i\in\nplus$, $\expect{\Phi_0 = x^*}{\theta}{(\tau_{x^*})^i}$ is uniformly bounded) and the definition of the function $G^{(M)} (\theta,\Lambda)$. Hence, the interchange of gradient and infinite sum is allowed by the dominant convergence theorem and we have $f^{(M)} (\theta,\Lambda) = \nabla_\theta g^{(M)} (\theta,\Lambda)$. Finally, the implicit function theorem, implies that
\begin{align}
	\nabla_\theta \Lambda_\theta^{(M)} = -\left(\frac{\der{g^{(M)}(\theta,\Lambda)}}{\der{\Lambda}}\bigg|_{\Lambda = \Lambda_\theta^{(M)}}\right)^{-1} f^{(M)} (\theta,\Lambda_\theta^{(M)}).
\end{align}
	\end{proof}

	\begin{lemma}\label[lemma]{lem:uniflowerbd_dg}
		For any $I = [a,b] \subset \mathbb{R}$, the value of $-{\der{g^{(M)}(\theta,\Lambda)}}\mathbin{/}{\der{\Lambda}}$ is uniformly bounded away from $0$ over $\theta\in\real^l$, $\Lambda\in I$ and $M>1$.
	\end{lemma}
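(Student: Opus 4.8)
The plan is to compute $\der{g^{(M)}(\theta,\Lambda)}\mathbin{/}\der{\Lambda}$ explicitly, exhibit its sign together with a clean, $M$-independent lower bound on the integrand, and then extract uniformity in $\theta$ from the return-time moment estimates of \cref{lem:prelim}. First I would differentiate the two branches of $G^{(M)}(\theta,\Lambda)$ in $\Lambda$. Writing $H=H(\theta,\Lambda)$ and using $\der{H}\mathbin{/}\der{\Lambda}=-\tau_{x^*}H$ together with $\der{\ln H}\mathbin{/}\der{\Lambda}=-\tau_{x^*}$, the five-term branch drops by one degree, and one checks that the two branches agree at the threshold $H=M$ (so $G^{(M)}$ is $C^1$ in $\Lambda$), giving
\begin{align}
	\frac{\der{G^{(M)}(\theta,\Lambda)}}{\der{\Lambda}} = -\tau_{x^*} W,\qquad W \coloneqq H\mathbbm{1}\{H\leq M\} + M\sum_{j=0}^{3}\frac{1}{j!}\big(\ln(M^{-1}H)\big)^{j}\,\mathbbm{1}\{H>M\}.
\end{align}
This $W$ is exactly the weight appearing in $F^{(M)}$, which is why the definition of $G^{(M)}$ retains five terms. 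As in the proof of \cref{lem:propgmfm} (using \cref{ass:2}, \cref{ass:3}(a) and the polynomial growth of $W$ in $\tau_{x^*}$ together with the uniform moment bounds of \cref{lem:prelim}), differentiation and expectation may be interchanged for $\Lambda$ ranging over the compact interval $I$, so that $-\der{g^{(M)}(\theta,\Lambda)}\mathbin{/}\der{\Lambda} = \expect{\Phi_0=x^*}{\theta}{\tau_{x^*}W}$, which is manifestly nonnegative.

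The crucial point is a lower bound on $W$ that is independent of $M$. Since $\ln(M^{-1}H)\geq 0$ on $\{H>M\}$ and $M>1$, the degree-three polynomial there is at least $1$, so $W\geq H\wedge M\geq H\wedge 1 = \min(H,1)$ on both branches. Combined with $\tau_{x^*}\geq 1$, this yields
\begin{align}
	-\frac{\der{g^{(M)}(\theta,\Lambda)}}{\der{\Lambda}} \;\geq\; \expect{\Phi_0=x^*}{\theta}{\min(H,1)}.
\end{align}
Next I would bound $H$ from below pathwise: with $\rho\coloneqq \alpha\underline{C}-b$ and $\Lambda\leq b$, every summand in the exponent defining $H$ is at least $\rho$, so $H\geq \e^{\rho\tau_{x^*}}$ and hence $\min(H,1)\geq \min(\e^{\rho\tau_{x^*}},1)$.

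Two cases remain. If $\rho\geq 0$, then $H\geq 1$ surely and the right-hand side is at least $1$. If $\rho<0$, then $\min(H,1)\geq \e^{\rho\tau_{x^*}}$, and by Jensen's inequality applied to the convex map $t\mapsto\e^{\rho t}$,
\begin{align}
	\expect{\Phi_0=x^*}{\theta}{\e^{\rho\tau_{x^*}}} \;\geq\; \e^{\rho\,\expect{\Phi_0=x^*}{\theta}{\tau_{x^*}}} \;\geq\; \e^{\rho\,\overline{\tau}},
\end{align}
where $\overline{\tau}\coloneqq \sup_{P\in\xbar{\probspace},\,x}\expect{\Phi_0=x}{}{\tau_{x^*}}<\infty$ is the uniform first-moment bound furnished by \cref{lem:prelim}, and the last inequality uses $\rho<0$. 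In either case $-\der{g^{(M)}(\theta,\Lambda)}\mathbin{/}\der{\Lambda}\geq \min(1,\e^{\rho\overline{\tau}})>0$, a bound independent of $\theta\in\real^l$, $\Lambda\in I$ and $M>1$, which is the claim.

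The main obstacle is securing uniformity over all $\theta$ (equivalently over $\xbar{\probspace}$) rather than for a single fixed chain. A naive attempt to bound the integral below by the one-step contribution $\{\tau_{x^*}=1\}$ fails, because $P_\theta(x^*,x^*)$ may vanish for some $\theta$. The resolution is to pair the $M$-free pointwise estimate $W\geq\min(H,1)$ with Jensen's inequality and the uniform return-time moment bound of \cref{lem:prelim}; retaining enough terms in $G^{(M)}$ so that the differentiated weight $W$ remains nonnegative (and $\geq 1$ beyond the threshold) is precisely what makes this pointwise bound independent of $M$.
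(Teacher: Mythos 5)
Your proof is correct, but it resolves the crucial step by a genuinely different route than the paper. The opening computation is forced and coincides with the paper's: both obtain $-{\der{g^{(M)}(\theta,\Lambda)}}\mathbin{/}{\der{\Lambda}} = \expect{\Phi_0=x^*}{\theta}{\tau_{x^*}\, W^{(M)}(\theta,\Lambda)}$ with the same truncated weight (your $W$ is the paper's $W^{(M)}(\theta,\Lambda)$), and both justify the interchange of derivative and expectation by the argument of \cref{lem:propgmfm}. Where you diverge is the extraction of uniformity in $(\theta,M)$. The paper uses that $W^{(M)}(\theta,\cdot)$ is decreasing, evaluates at $d=\max(\alpha\overline{C},b)$ where $H(\theta,d)\leq 1<M$ so the truncation is inactive and $W^{(M)}(\theta,d)=H(\theta,d)$, rewrites $\expect{\Phi_0=x^*}{\theta}{H(\theta,d)}=\checkexpect{\check{\Phi}_0=x^*}{\theta}{\e^{(\Lambda_\theta-d)\tau_{x^*}}}$ via the twisted kernel, and then obtains a uniform positive lower bound from \cref{cor:prelim} through a compactness/contradiction argument over the closure $\xbarcheck{\probspace}$. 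You instead use the $M$-free pointwise minorant $W^{(M)}\geq H\wedge M\geq \min(H,1)$, the pathwise bound $H\geq \e^{\rho\tau_{x^*}}$ with $\rho=\alpha\underline{C}-b$, and Jensen's inequality together with a uniform first-moment bound $\overline{\tau}$ on $\tau_{x^*}$ under the \emph{original} kernel, arriving at the explicit constant $\min\big(1,\e^{\rho\overline{\tau}}\big)$. Your route is more elementary and self-contained: it never leaves the original kernel, and it needs neither \cref{cor:prelim} nor the compactness argument (hence none of the supporting machinery of \cref{lem:hbounded} and \cref{cor:Phat_commonrec} behind it), while also yielding an explicit bound; the paper's route buys only the reuse of the twisted-kernel apparatus it has already built for other results. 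One small point to smooth over: \cref{lem:prelim} furnishes uniform \emph{exponential} moments rather than first moments, so you should record the one-line deduction $\expect{\Phi_0=x^*}{\theta}{\tau_{x^*}}\leq M(R)/\ln R$ (from $t\ln R\leq R^{t}$ for $t\geq 0$, $R>1$) before declaring $\overline{\tau}<\infty$; the paper itself uses this consequence of \cref{lem:prelim} elsewhere, e.g.\ in \cref{cor:lambtild_lowerbound}, so this is a cosmetic rather than substantive gap.
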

	\begin{proof}\label{proof:lem:uniflowerbd_dg}
		Following the same argument as in the proof of \cref{lem:propgmfm}, we have
\begin{align}
	\frac{\der{g^{(M)}(\theta,\Lambda)}}{\der{\Lambda}} &= \expect{\Phi_{0} = x^*}{\theta}{\frac{\der{G^{(M)}(\theta,\Lambda)}}{\der{\Lambda}}}\\
	&= \expect{\Phi_{0} = x^*}{\theta}{(-\tau_{x^*})\times H(\theta,\Lambda) \mathbbm{1}\left\{H(\theta,\Lambda) \leq M\right\}}\\
	&\myquad[2]+ \expect{\Phi_{0} = x^*}{\theta}{M\times (-\tau_{x^*})\times  \left[\sum_{i=0}^3 \frac{1}{i!}\left(\ln\left(M^{-1}H(\theta,\Lambda)\right)\right)^i\right]\mathbbm{1}\left\{H(\theta,\Lambda) > M\right\}}\\
	&= -\expect{\Phi_{0} = x^*}{\theta}{\tau_{x^*}\times W^{(M)}(\theta,\Lambda)},
\end{align}
where the random function $W^{(M)}(\theta,\Lambda)$ is defined as follows:
\begin{align}
	W^{(M)} (\theta,\Lambda) \coloneqq  H(\theta,\Lambda) \mathbbm{1}\left\{H(\theta,\Lambda) \leq M\right\} +M \left[\sum_{i=0}^3 \frac{1}{i!}\left(\ln\left(M^{-1}H(\theta,\Lambda)\right)\right)^i\right]\mathbbm{1}\left\{H(\theta,\Lambda) > M\right\}.
\end{align}
Notice that $0 < W^{(M)} (\theta,\Lambda) < G^{(M)}(\theta,\Lambda)$, and that $W^{(M)} (\theta,\Lambda)$ is strictly decreasing in $\Lambda$ for any fixed $\theta\in\real^l$ and $M>1$. Hence, we have
\begin{align}
	\left|\frac{\der{g^{(M)}(\theta,\Lambda)}}{\der{\Lambda}}\right| > \inf_{\theta\in\real^l,M>1} \expect{\Phi_{0} = x^*}{\theta}{ W^{(M)}(\theta,b)},
\end{align}
Let $d = \max(\alpha\overline{C},b)$. Notice that for all $M>1$, $W^{(M)} (\theta,d) = H(\theta,d)$. Hence, we have
\begin{align}
	\inf_{\theta\in\real^l,M>1} \expect{\Phi_{0} = x^*}{\theta}{ W^{(M)}(\theta,b)} &\geq \inf_{\theta\in\real^l,M>1} \expect{\Phi_{0} = x^*}{\theta}{ W^{(M)}(\theta,d)}\\
	&=\inf_{\theta\in\real^l} \expect{\Phi_{0} = x^*}{\theta}{ H(\theta,d)} \\
	&=\inf_{\theta\in\real^l} \checkexpect{\Phi_{0} = x^*}{\theta}{\expon{\left(\Lambda_\theta- d\right)\tau_{x^*}}}\\
	&\geq\inf_{\theta\in\real^l} \checkexpect{\Phi_{0} = x^*}{\theta}{\expon{\left(\alpha\underline{C}- d\right)\tau_{x^*}}}.
\end{align}
By \cref{cor:prelim} and following the same argument as in the proof of the lower bound in \cref{lem:prelim}, we have $\inf_{\theta\in\real^l} \checkexpect{\Phi_{0} = x^*}{\theta}{\expon{\left(\alpha\underline{C}- d\right)\tau_{x^*}}} > 0$.

	\end{proof}

	\begin{lemma}\label[lemma]{lem:gradgfgmfm}
		There exists $\delta_0 > 0$ for which the functions $g(\theta,\Lambda)$, $\nabla g(\theta,\Lambda)$, $\nabla^2 g(\theta,\Lambda)$, $g^{(M)}(\theta,\Lambda)$, $\nabla g^{(M)}(\theta,\Lambda)$, $\nabla^2 g^{(M)}(\theta,\Lambda)$ are all uniformly bounded over $\theta\in\real^l$, $\Lambda_\theta - \Lambda < \delta_0$, and $M>1$. Moreover, $g^{(M)}(\theta,\Lambda) \uparrow g(\theta,\Lambda)$ and $\nabla g^{(M)}(\theta,\Lambda) \uparrow \nabla g(\theta,\Lambda)$ as $M\uparrow\infty$ uniformly over $\theta \in \real^l$ and $\Lambda_\theta - \Lambda < \delta_0$.
	\end{lemma}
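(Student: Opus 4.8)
The plan is to reduce every one of these quantities to a moment of the return time $\tau_{x^*}$ under the twisted kernel and then invoke the uniform bounds of \cref{cor:prelim}. Recall from \cref{lem:hbounded} that $\Lambda_\theta\in[\alpha\underline C,\alpha\overline C]$, so on the strip $\Lambda_\theta-\Lambda<\delta_0$ the exponent satisfies $\alpha C_\theta(\Phi_i)-\Lambda\le\alpha\overline C-\alpha\underline C+\delta_0$ uniformly, and the identity $g(\theta,\Lambda)=\checkexpect{\check{\Phi}_0=x^*}{\theta}{\e^{(\Lambda_\theta-\Lambda)\tau_{x^*}}}$ holds. First I would fix $\delta_0>0$ and an auxiliary $\epsilon>0$ small enough that $\e^{\delta_0+\epsilon}<\overline R_{x^*}$, the radius furnished by \cref{cor:prelim} for $\xbarcheck{\probspace}$. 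Then, using the elementary domination $\tau_{x^*}^k\le\textrm{Const}\,\e^{\epsilon\tau_{x^*}}$ valid for each fixed $k$, every quantity of the form $\checkexpect{}{\theta}{\tau_{x^*}^k\,\e^{(\Lambda_\theta-\Lambda)\tau_{x^*}}}$ with $\Lambda_\theta-\Lambda<\delta_0$ is dominated by $\textrm{Const}\times\checkexpect{}{\theta}{(\e^{\delta_0+\epsilon})^{\tau_{x^*}}}$, which is uniformly bounded over $\theta\in\real^l$ and over the whole strip by \cref{cor:prelim}.

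Second, I would establish the bounds on $g,\nabla g,\nabla^2 g$. Differentiating under the expectation, justified exactly as in \cref{lem:propgf} by the finite higher moments just obtained, each of $\nabla_\theta g$ and $\nabla^2_\theta g$ is an expectation of a random integrand controlled, via the boundedness of $\nabla_\theta C_\theta$, $\nabla^2_\theta C_\theta$, $L_\theta$, $L^{(2)}_\theta$ in \cref{ass:2,ass:3}, by $\textrm{Const}\times\tau_{x^*}^j H(\theta,\Lambda)$ with $j\in\{0,1,2\}$. Since $\expect{}{\theta}{\tau_{x^*}^j H}=\checkexpect{}{\theta}{\tau_{x^*}^j\,\e^{(\Lambda_\theta-\Lambda)\tau_{x^*}}}$, Step 1 gives the uniform bound. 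For the truncated objects the crucial structural observation is that on $\{H>M\}$ every partial Taylor sum is dominated by the full series: writing $\ell\coloneqq\ln(M^{-1}H)>0$, one has $M\sum_{i=0}^{j}\ell^i/i!<M\sum_{i=0}^{\infty}\ell^i/i!=H$ for every $j$. Consequently $G^{(M)}$ and all the $\theta$-derivatives appearing in $\nabla g^{(M)}=f^{(M)}$ (see \cref{lem:propgmfm}) and in $\nabla^2 g^{(M)}$ — which merely lower the Taylor degree while keeping all coefficients positive — are again pointwise dominated by $\textrm{Const}\times\tau_{x^*}^j H$, so the identical expectation bound applies. This is precisely where carrying five terms is needed: after two $\theta$-differentiations the surviving partial sum still has degree at least two and remains controlled by $H$.

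Finally, for the uniform monotone convergence I would compute the remainder explicitly. Monotonicity in $M$ is a one-line check: on $\{H>M\}$, a direct differentiation gives $\partial_M G^{(M)}=\ell^4/4!\ge0$ (the boundary $H=M$ being continuous), so $G^{(M)}\uparrow H$; hence $g^{(M)}\uparrow g$, and the same applied to the gradient integrand gives $\nabla g^{(M)}\uparrow\nabla g$. For the uniformity, on $\{H>M\}$ the remainders satisfy $H-G^{(M)}=M\sum_{i\ge5}\ell^i/i!\le\textrm{Const}\times\tau_{x^*}^5 H$ and likewise $\norm{\nabla_\theta H-\nabla_\theta G^{(M)}}\le\textrm{Const}\times\tau_{x^*}^5 H$ (using $\ell\le\textrm{Const}\times\tau_{x^*}$ and $\nabla_\theta H/H=\alpha\sum_i\nabla_\theta C_\theta(\Phi_i)$). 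It then suffices to show $\expect{}{\theta}{\tau_{x^*}^5 H\,\mathbbm{1}\{H>M\}}\to0$ uniformly in $\theta$: choosing $K$ with $\sup_\theta\expect{}{\theta}{\tau_{x^*}^5 H\,\mathbbm{1}\{\tau_{x^*}>K\}}<\varepsilon$, which is possible because $\checkexpect{}{\theta}{\tau_{x^*}^6\,\e^{(\Lambda_\theta-\Lambda)\tau_{x^*}}}$ is uniformly bounded from Step 1 and hence the family is uniformly integrable, the event $\{H>M\}\cap\{\tau_{x^*}\le K\}$ is empty once $M>\e^{(\alpha\overline C-\Lambda)K}$, and the claim follows.

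I expect the main obstacle to be the uniformity over the unbounded parameter set $\theta\in\real^l$: all the estimates above must hold with a single constant, which is exactly what \cref{cor:prelim} buys by transferring the bounds to the compact closure $\xbarcheck{\probspace}$. The delicate point is organising the dominating functions so that one fixed choice of $\delta_0$ and $\epsilon$ simultaneously controls $g,\nabla g,\nabla^2 g$, their truncated analogues, and the convergence remainders, together with verifying that the partial-sum-dominated-by-$H$ inequality survives every derivative we take.
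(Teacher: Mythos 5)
Your proposal is correct and follows essentially the same route as the paper's proof: both reduce every quantity to twisted-kernel moments of the form $\checkexpect{\check{\Phi}_0=x^*}{\theta}{\tau_{x^*}^k\,\e^{(\Lambda_\theta-\Lambda)\tau_{x^*}}}$, absorb the polynomial factors of $\tau_{x^*}$ into a slightly enlarged exponential (the paper via $g(\theta,\Lambda-\delta_0/2)$, you via $\tau_{x^*}^k\leq\textrm{Const}\,\e^{\epsilon\tau_{x^*}}$), dominate the truncated integrands by $H$ using positivity of the partial Taylor sums, and invoke \cref{cor:prelim} for uniformity over $\theta\in\real^l$. Your tail argument for uniform convergence (splitting on $\{\tau_{x^*}>K\}$ and noting $\{H>M\}\cap\{\tau_{x^*}\leq K\}=\emptyset$ for large $M$) is just a hands-on version of the paper's uniform-integrability argument via bounded $(1+\epsilon)$-moments of $H$, so the two proofs are equivalent in substance.
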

	\begin{proof}\label{proof:lem:gradgfgmfm}
		Notice that $g(\theta,\Lambda) = \checkexpect{\check{\Phi}_0 = x^*}{\theta}{\e^{(\Lambda_\theta - \Lambda)\tau_{x^*}}}$. By \cref{cor:prelim}, there exists $\delta_0 > 0$ small enough such that for all $\theta\in\real^l$ and all $\Lambda$ with $\Lambda_\theta - \Lambda \leq 2 \delta_0$, we have $g(\theta,\Lambda) < \text{Const}$. Notice that $g^{(M)}(\theta,\Lambda) < g(\theta,\Lambda)$ for all $M$. Next, we show that all the remaining functions are uniformly bounded over $\theta\in\real$ and $\Lambda_\theta - \Lambda < \delta_0$.

Following the same argument as in the proof of \cref{lem:propgf}, for any $(\theta,\Lambda)$ with $\Lambda_\theta - \Lambda < \delta_0$ we have
\begin{align}
	\left\Vert \nabla g(\theta,\Lambda)\right\Vert &\leq \textrm{Const}\times \expect{\Phi_{0} = x^*}{\theta}{\tau_{x^*}\expon{\sum_{i=0}^{\tau_{x^*}-1} \left(\alpha C_\theta({\Phi}_i) - \Lambda\right)}}\\
	&\leq \textrm{Const}\times \expect{\Phi_{0} = x^*}{\theta}{\expon{\sum_{i=0}^{\tau_{x^*}-1} \left(\alpha C_\theta({\Phi}_i) - \Lambda + \delta_0/2\right)}}\\
	&= \textrm{Const} \times g(\theta,\Lambda-\delta_0/2) \leq \textrm{Const},
\end{align}
where the first inequality follows by \cref{ass:2,ass:3}, and the last inequality follows by the choice of $\delta_0$. It is easy to see that the first inequality holds even if we replace $\nabla g(\theta,\Lambda)$ with $\nabla g^{(M)}(\theta,\Lambda)$. Following a similar argument, we deduce that $\nabla^2 g(\theta,\Lambda)$ and $\nabla^2 g^{(M)}(\theta,\Lambda)$ are all uniformly bounded over $\theta\in\real^l$, $\Lambda_\theta - \Lambda < \delta_0$, and $M\in(1,\infty)$. Notice that the second part of \cref{ass:3} is required to ensure this.
Finally, to show the uniform convergence, notice that
\begin{align}
	&\left|g(\theta,\Lambda) - g^{(M)}(\theta,\Lambda)\right| \leq \expect{\Phi_0 = x^*}{\theta}{H(\theta,\Lambda)\mathbbm{1}\left\{H(\theta,\Lambda) > M\right\}}, \\
	&
	\begin{aligned}
		\left\Vert\nabla g(\theta,\Lambda) - \nabla g^{(M)}(\theta,\Lambda)\right\Vert &\leq \textrm{Const}\times \expect{\Phi_0 = x^*}{\theta}{H(\theta,\Lambda-\delta_0/2)\mathbbm{1}\left\{H(\theta,\Lambda) > M\right\}} \\
		&\leq \textrm{Const}\times \expect{\Phi_0 = x^*}{\theta}{H(\theta,\Lambda-\delta_0/2)\mathbbm{1}\left\{H(\theta,\Lambda-\delta_0/2) > M\right\}}.
	\end{aligned}
\end{align}
Hence, to prove the uniform convergence, it is sufficient to show that the class of random variables
\begin{align}
	\mathcal{Y} \coloneqq\left\{H(\theta,\Lambda): \theta\in\real^l,\Lambda_\theta - \Lambda < 3\delta_0/2 \right\}
\end{align}
is uniformly integrable. We claim that $\sup_{X\in\mathcal{Y}} \mathbb{E}\left[X^{1+\epsilon}\right] < \infty$ for some small enough $\epsilon > 0$. Notice that for any $\epsilon > 0$,
\begin{align}
	\expect{\Phi_0 = x^*}{\theta}{H(\theta,\Lambda)^{1+\epsilon}} &= \checkexpect{\check{\Phi}_0 = x^*}{\theta}{\e^{(\Lambda_\theta - \Lambda)\tau_{x^*} }\times \expon{\epsilon\sum_{i=0}^{\tau_{x^*}-1} \left(\alpha C_\theta(\check{\Phi}_i) - \Lambda\right)}}\\
	&\leq \checkexpect{\check{\Phi}_0 = x^*}{\theta}{\e^{3\delta_0\tau_{x^*}/2 } \times \e^{\tau_{x^*}\left(2\alpha\overline{C}+3\delta_0/2\right)\epsilon }},
\end{align}
which is uniformly bounded if $\epsilon < \left(2\alpha\overline{C}+3\delta_0/2\right)^{-1}\times \delta_0/2$. Hence, the family of random variables $\mathcal{Y}$ are uniformly integrable and the result follows.
	\end{proof}

	\begin{corollary}\label[corollary]{cor:unfbound_lambdagrad}
		$\nabla_\theta\Lambda_\theta^{(M)}$, $\nabla^2_\theta\Lambda_\theta^{(M)},$ $\nabla_\theta\Lambda_\theta$, and $\nabla^2_\theta\Lambda_\theta$ are uniformly bounded over $\theta\in\real^l$ and $M>1$.
	\end{corollary}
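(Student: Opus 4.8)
The plan is to bound $\nabla_\theta\Lambda_\theta^{(M)}$ through the implicit-function formula from \cref{lem:propgmfm},
\begin{align}
	\nabla_\theta \Lambda_\theta^{(M)} = -\left(\frac{\der{g^{(M)}(\theta,\Lambda)}}{\der{\Lambda}}\bigg|_{\Lambda = \Lambda_\theta^{(M)}}\right)^{-1} f^{(M)}(\theta,\Lambda_\theta^{(M)}),
\end{align}
and to control the two factors separately, uniformly over $\theta\in\real^l$ and $M>1$. First I would confine the fixed point to a fixed compact interval: since $G^{(M)}(\theta,\alpha\underline{C})\geq 1$ almost surely (the integrand is $\geq 1$ because $\alpha C_\theta(\Phi_i)-\alpha\underline{C}\geq 0$ forces $H\geq1$, so $G^{(M)}=H\geq1$ on $\{H\leq M\}$ and $G^{(M)}\geq M\geq1$ on $\{H>M\}$), monotonicity of $g^{(M)}(\theta,\cdot)$ gives $\Lambda_\theta^{(M)}\geq \alpha\underline{C}$, while $\Lambda_\theta^{(M)}<\Lambda_\theta\leq\alpha\overline{C}$; hence $\Lambda_\theta^{(M)}\in[\alpha\underline{C},\alpha\overline{C}]$. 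Applying \cref{lem:uniflowerbd_dg} with $I=[\alpha\underline{C},\alpha\overline{C}]$ then shows the reciprocal factor $-(\der{g^{(M)}}/\der{\Lambda})^{-1}$ is uniformly bounded.

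The heart of the argument is to bound the numerator $f^{(M)}(\theta,\Lambda_\theta^{(M)})=\nabla_\theta g^{(M)}(\theta,\Lambda_\theta^{(M)})$ uniformly, and here I would split on the size of $M$. For $M$ large, \cref{cor:unifapprox_truncost} supplies $M^\star$ with $\Lambda_\theta-\Lambda_\theta^{(M)}<\delta_0$ for all $\theta$ whenever $M>M^\star$, so \cref{lem:gradgfgmfm} applies verbatim at $\Lambda=\Lambda_\theta^{(M)}$ and bounds $\nabla g^{(M)}$ (and later $\nabla^2 g^{(M)}$). For $M\in(1,M^\star]$ one cannot use that lemma since $\Lambda_\theta^{(M)}$ may sit more than $\delta_0$ below $\Lambda_\theta$; instead I would use the elementary pointwise estimate $\|F^{(M)}(\theta,\Lambda)\|\leq \textrm{Const}\times\tau_{x^*}\, G^{(M)}(\theta,\Lambda)$ (each of the three pieces of $F^{(M)}$ is $G^{(M)}$ times a factor bounded by $\textrm{Const}\times\tau_{x^*}$, using $\nabla_\theta H/H=\alpha\sum\nabla_\theta C_\theta$ and $M[\sum_{i=0}^{3}\cdots]\leq G^{(M)}$ on $\{H>M\}$). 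Because $M$ is now bounded and $\ln H(\theta,\Lambda_\theta^{(M)})\leq \tau_{x^*}(\alpha\overline{C}-\alpha\underline{C})$, the smooth truncation yields $G^{(M)}(\theta,\Lambda_\theta^{(M)})\leq \textrm{Const}\,(1+\tau_{x^*}^4)$, whence $\|F^{(M)}\|\leq \textrm{Const}\,(1+\tau_{x^*}^5)$ and the bound follows after taking expectations, since \cref{lem:prelim} makes every polynomial moment of $\tau_{x^*}$ uniformly finite.

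Combining the two factors gives the uniform bound on $\nabla_\theta\Lambda_\theta^{(M)}$. For $\nabla^2_\theta\Lambda_\theta^{(M)}$ I would differentiate the implicit relation once more, producing terms built from $\nabla^2_\theta g^{(M)}$, $\nabla_\theta(\der{g^{(M)}}/\der{\Lambda})$, $\der{f^{(M)}}/\der{\Lambda}$, and the already-bounded $\nabla_\theta\Lambda_\theta^{(M)}$; each new derivative object is handled by exactly the same large-$M$/bounded-$M$ dichotomy (\cref{lem:gradgfgmfm} and \cref{cor:deltagmfm} for large $M$, polynomial-in-$\tau_{x^*}$ estimates for bounded $M$), together with the uniform lower bound of \cref{lem:uniflowerbd_dg}. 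Finally, $\nabla_\theta\Lambda_\theta$ and $\nabla^2_\theta\Lambda_\theta$ are the untruncated ($M=\infty$) case: the same formula applies with $-\der{g}/\der{\Lambda}|_{\Lambda_\theta}=\checkexpect{\check{\Phi}_0=x^*}{\theta}{\tau_{x^*}}\geq 1$ bounded away from zero, and \cref{lem:gradgfgmfm} evaluated at $\Lambda=\Lambda_\theta$ (where trivially $\Lambda_\theta-\Lambda=0<\delta_0$) bounds $\nabla g,\nabla^2 g$. The main obstacle is reconciling the two $M$-regimes: the crude bound $G^{(M)}\leq H$ fails for small $M$ because $\checkexpect{}{\theta}{\tau_{x^*}e^{(\Lambda_\theta-\Lambda_\theta^{(M)})\tau_{x^*}}}$ can diverge once $\Lambda_\theta-\Lambda_\theta^{(M)}$ exceeds the radius in \cref{lem:prelim}, so one genuinely needs the polynomial growth of the smooth truncation there, while for large $M$ one needs the uniform integrability behind \cref{lem:gradgfgmfm}; ensuring these two ranges overlap—precisely what \cref{cor:unifapprox_truncost} guarantees—is the crux.
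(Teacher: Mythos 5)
Your proposal is correct, and its skeleton is exactly the paper's: the sensitivity formula of \cref{lem:propgmfm} (and \cref{lem:propgf} for the untruncated case), the uniform lower bound on $-{\der{g^{(M)}(\theta,\Lambda)}}/{\der{\Lambda}}$ from \cref{lem:uniflowerbd_dg}, and the uniform bounds on $\nabla g^{(M)}$, $\nabla^2 g^{(M)}$ from \cref{lem:gradgfgmfm} --- the paper's entire proof is this citation list plus ``simple algebra.'' Where you differ is that you make explicit two steps the paper leaves implicit, and both are genuinely needed. First, the localization $\Lambda_\theta^{(M)}\in[\alpha\underline{C},\alpha\overline{C}]$ (your argument that $G^{(M)}(\theta,\alpha\underline{C})\geq 1$ almost surely is right), without which \cref{lem:uniflowerbd_dg} cannot be applied on a fixed interval $I$. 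Second, and more substantively, the split into the regimes $M>M^\star$ and $M\in(1,M^\star]$: \cref{lem:gradgfgmfm} bounds $\nabla g^{(M)}(\theta,\Lambda)$ only on the region $\Lambda_\theta-\Lambda<\delta_0$, and nothing guarantees $\Lambda_\theta-\Lambda_\theta^{(M)}<\delta_0$ uniformly for $M$ near $1$, since $\delta_0$ is dictated by the radius $\overline{R}_{x^*}$ of \cref{lem:prelim}/\cref{cor:prelim}, a quantity unrelated to how far a heavily biting truncation pulls the fixed point below $\Lambda_\theta$; only \cref{cor:unifapprox_truncost} supplies that inclusion, and only for $M$ large. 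Your estimate $\norm{F^{(M)}(\theta,\Lambda)}\leq \textrm{Const}\times\tau_{x^*}G^{(M)}(\theta,\Lambda)\leq \textrm{Const}\times M\left(1+\tau_{x^*}^5\right)$ on the bounded-$\Lambda$ strip, combined with the uniform polynomial moments of $\tau_{x^*}$ from \cref{lem:prelim}, is precisely the right way to cover the bounded-$M$ range --- it exploits the polynomial (rather than exponential) growth of the smooth truncation, which is the whole point of its construction. So: same approach as the paper, but your version closes a small gap that the paper's one-line proof (which never invokes \cref{cor:unifapprox_truncost}) quietly absorbs into ``simple algebra,'' and you correctly identified the reconciliation of the two $M$-regimes as the crux.
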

	\begin{proof}
		The proof follows by \cref{lem:propgf,lem:propgmfm,lem:uniflowerbd_dg,lem:gradgfgmfm}, and simple algebra.
	\end{proof}

	\begin{corollary}\label[corollary]{cor:unifapprox_nablatheta}
		$\nabla_\theta\Lambda_\theta^{(M)}\uparrow \nabla_\theta\Lambda_\theta$ as $M\uparrow \infty$ uniformly over $\theta\in\real^l$.
	\end{corollary}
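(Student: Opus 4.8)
The plan is to reduce the statement to the uniform convergence of the numerator and the scalar denominator appearing in the implicit-function-theorem representation of the two gradients. By \cref{lem:propgmfm} and the corresponding identity for $g$ obtained in \cref{sec:policygrad}, we have
\begin{align}
	\nabla_\theta\Lambda_\theta^{(M)} = -\left(\frac{\der{g^{(M)}(\theta,\Lambda)}}{\der{\Lambda}}\bigg|_{\Lambda=\Lambda_\theta^{(M)}}\right)^{-1}\!\! \nabla_\theta g^{(M)}(\theta,\Lambda_\theta^{(M)}), \qquad \nabla_\theta\Lambda_\theta = -\left(\frac{\der{g(\theta,\Lambda)}}{\der{\Lambda}}\bigg|_{\Lambda=\Lambda_\theta}\right)^{-1}\!\! \nabla_\theta g(\theta,\Lambda_\theta).
\end{align}
Writing these as $a_M/b_M$ and $a/b$, where $a_M,a$ are the (vector) numerators and $b_M,b$ the positive scalar factors, the elementary estimate $\norm{a_M/b_M - a/b} \leq (\norm{a_M-a}\,|b| + \norm{a}\,|b_M-b|)/(|b_M|\,|b|)$ reduces the claim to showing that $a_M\to a$ and $b_M\to b$ uniformly over $\theta$, together with uniform upper bounds on $\norm{a_M},\norm{a}$ and a uniform lower bound on $|b_M|,|b|$.

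First I would fix $M$ large enough that $\sup_\theta(\Lambda_\theta-\Lambda_\theta^{(M)}) < \delta_0$, which is possible by the uniform convergence in \cref{cor:unifapprox_truncost}; this keeps every evaluation at $\Lambda_\theta^{(M)}$ inside the region $\Lambda_\theta - \Lambda < \delta_0$ on which the uniform bounds and the uniform convergences $g^{(M)}\uparrow g$, $\nabla g^{(M)}\uparrow\nabla g$ of \cref{lem:gradgfgmfm} hold. The uniform upper bounds on $\norm{a_M},\norm{a}$ are then immediate from \cref{lem:gradgfgmfm}, the uniform lower bound on $|b_M|$ is exactly \cref{lem:uniflowerbd_dg}, and the uniform lower bound on $|b|$ follows directly from $-\der{g(\theta,\Lambda)}/\der{\Lambda}\big|_{\Lambda=\Lambda_\theta} = \checkexpect{\check{\Phi}_0=x^*}{\theta}{\tau_{x^*}} \geq 1$ (\cref{lem:propgf}).

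For $a_M\to a$ uniformly I would use the two-term split
\begin{align}
	\nabla_\theta g^{(M)}(\theta,\Lambda_\theta^{(M)}) - \nabla_\theta g(\theta,\Lambda_\theta) &= \big(\nabla_\theta g^{(M)}(\theta,\Lambda_\theta^{(M)}) - \nabla_\theta g(\theta,\Lambda_\theta^{(M)})\big) \\
	&\quad + \big(\nabla_\theta g(\theta,\Lambda_\theta^{(M)}) - \nabla_\theta g(\theta,\Lambda_\theta)\big).
\end{align}
The first bracket vanishes uniformly by the uniform convergence $\nabla g^{(M)}\uparrow\nabla g$ of \cref{lem:gradgfgmfm}, and the second is bounded by the mean value theorem by $\textrm{Const}\times(\Lambda_\theta-\Lambda_\theta^{(M)})$, using the uniform bound on $\nabla^2 g$ (which dominates $\der(\nabla_\theta g)/\der{\Lambda}$) and the uniform smallness of $\Lambda_\theta-\Lambda_\theta^{(M)}$ from \cref{cor:unifapprox_truncost}. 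The identical split, applied to the $\Lambda$-derivative component of $\nabla g^{(M)}$ and $\nabla g$, yields $b_M\to b$ uniformly. Substituting into the ratio estimate finishes the proof.

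The only delicate point — and the step I would treat most carefully — is guaranteeing that the moving evaluation point $\Lambda_\theta^{(M)}$ stays in $\{\Lambda_\theta-\Lambda<\delta_0\}$ simultaneously for all $\theta$, since this is precisely what licenses the uniform estimates of \cref{lem:gradgfgmfm,lem:uniflowerbd_dg}; this is where the uniform (rather than merely pointwise) convergence $\Lambda_\theta^{(M)}\uparrow\Lambda_\theta$ of \cref{cor:unifapprox_truncost} is indispensable. Everything else is routine $\varepsilon$-chasing.
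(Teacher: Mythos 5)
Your proposal is correct and follows essentially the same route as the paper's own proof: both rest on the implicit-function-theorem representations, a triangle-inequality split of the numerator (one term handled by the mean value theorem via the uniform bound on $\nabla^2 g^{(M)}$ from \cref{lem:gradgfgmfm} together with \cref{cor:unifapprox_truncost}, the other by the uniform convergence $\nabla g^{(M)}\uparrow\nabla g$ from the same lemma), and the uniform denominator lower bound from \cref{lem:uniflowerbd_dg}. The only cosmetic differences are that you pick the intermediate point $f(\theta,\Lambda_\theta^{(M)})$ where the paper uses $f^{(M)}(\theta,\Lambda_\theta)$, and that you spell out the ratio estimate and the denominator convergence that the paper compresses into ``simple algebra'' with lemma citations.
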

	\begin{proof}
		This follows by \cref{lem:propgf,lem:propgmfm,lem:uniflowerbd_dg,lem:gradgfgmfm}, and \cref{cor:unifapprox_truncost}. Notice that for any $\theta \in \real^l$ and $\Lambda_\theta - \Lambda < \delta_0$, we have
\begin{align}
	\left\Vert f^{(M)}(\theta,\Lambda_\theta^{(M)}) - f(\theta,\Lambda_\theta)\right\Vert \leq \left\Vert f^{(M)}(\theta,\Lambda_\theta^{(M)}) - f^{(M)}(\theta,\Lambda_\theta)\right\Vert + \left\Vert f^{(M)}(\theta,\Lambda_\theta) - f(\theta,\Lambda_\theta)\right \Vert.
\end{align}
By \cref{lem:gradgfgmfm}, we can pick $M_0$ large enough such that $\left\Vert f^{(M)}(\theta,\Lambda_\theta) - f(\theta,\Lambda_\theta)\right\Vert$ is sufficiently small,  for all $M>M_0$. By \cref{cor:unifapprox_truncost}, we can pick $M_1$ large enough such that for all $M>M_1$, we have $\Lambda_\theta - \Lambda_\theta^{(M)} < \delta_0$. Notice that $\left\Vert f^{(M)}(\theta,\Lambda_\theta^{(M)}) - f^{(M)}(\theta,\Lambda_\theta)\right\Vert \leq \textrm{Const}\times \left|\Lambda_\theta^{(M)} - \Lambda_\theta\right|$ for all $M>M_1$, where in here, $\textrm{Const}>0$ is the uniform upper-bound of the set
\begin{align}
	\left\{\left\Vert\frac{\der{f^{(M)}(\theta,\Lambda)}}{\der{\Lambda}} \right\Vert\bigg| \theta\in\real^l, \Lambda_\theta - \Lambda < \delta_0, M\in(1,\infty) \right\}.
\end{align}
The existence of such upper-bound follows by \cref{lem:gradgfgmfm}; in particular, by the fact that $\nabla^2 g^{(M)}(\theta,\Lambda)$ is uniformly bounded over $\theta\in\real^l$, $\Lambda_\theta - \Lambda < \delta_0$, and $M\in(1,\infty)$. Using \cref{cor:unifapprox_truncost}, there exists $M_2 > M_1$ such that for all $M>M_2$, $\left\Vert f^{(M)}(\theta,\Lambda_\theta^{(M)}) - f^{(M)}(\theta,\Lambda_\theta)\right\Vert$ is sufficiently small.
	\end{proof}

\end{document}